\documentclass[11pt,lettersize,onecolumn]{IEEEtran}

\usepackage[utf8]{inputenc}
\usepackage[T1]{fontenc}
\usepackage{lmodern}
\usepackage{amsmath,amssymb,amsfonts,amsthm,mathtools}
\usepackage{bm}
\usepackage{booktabs}
\usepackage{array}
\usepackage{enumitem}
\usepackage{graphicx}
\usepackage{tikz}
\usetikzlibrary{positioning,arrows.meta,calc,fit,shapes.geometric}
\usepackage{pgfplots}
\pgfplotsset{compat=1.17}
\usepackage{xcolor}
\usepackage{cite}
\usepackage{url}
\usepackage[colorlinks=true,linkcolor=blue!50!black,
            citecolor=blue!50!black,urlcolor=blue!50!black]{hyperref}
\usepackage{cleveref}
\usepackage{setspace}

\theoremstyle{plain}
\newtheorem{theorem}{Theorem}
\newtheorem{proposition}{Proposition}
\newtheorem{lemma}{Lemma}
\newtheorem{corollary}{Corollary}

\theoremstyle{definition}
\newtheorem{definition}{Definition}

\newtheorem{example}{Example}

\theoremstyle{remark}
\newtheorem{remark}{Remark}

\crefname{theorem}{Theorem}{Theorems}
\crefname{proposition}{Proposition}{Propositions}
\crefname{lemma}{Lemma}{Lemmas}
\crefname{corollary}{Corollary}{Corollaries}
\crefname{definition}{Definition}{Definitions}
\crefname{remark}{Remark}{Remarks}
\crefname{assumption}{Assumption}{Assumptions}
\crefname{example}{Example}{Examples}
\Crefname{theorem}{Theorem}{Theorems}
\Crefname{proposition}{Proposition}{Propositions}
\Crefname{lemma}{Lemma}{Lemmas}
\Crefname{corollary}{Corollary}{Corollaries}
\Crefname{definition}{Definition}{Definitions}
\Crefname{remark}{Remark}{Remarks}
\Crefname{assumption}{Assumption}{Assumptions}
\Crefname{example}{Example}{Examples}
\crefname{section}{Section}{Sections}
\Crefname{section}{Section}{Sections}
\crefname{appendix}{Appendix}{Appendices}
\Crefname{appendix}{Appendix}{Appendices}

\newcommand{\R}{\mathbb{R}}
\newcommand{\N}{\mathbb{N}}
\newcommand{\Z}{\mathbb{Z}}
\newcommand{\E}{\mathbb{E}}
\newcommand{\PR}{\mathbb{P}}
\newcommand{\one}{\mathbf{1}}
\newcommand{\zero}{\mathbf{0}}

\newcommand{\defeq}{\mathrel{\vcentcolon=}}

\DeclareMathOperator*{\minimize}{minimize}

\newcommand{\Vset}{\mathcal{V}}
\newcommand{\Cset}{\mathcal{C}}
\newcommand{\Eset}{\mathcal{E}}
\newcommand{\Gn}{G_n}

\newcommand{\RolesV}{\mathcal{R}_{\mathrm V}}
\newcommand{\RolesC}{\mathcal{R}_{\mathrm C}}

\newcommand{\rolemap}{\rho}

\newcommand{\Templates}{\Theta}
\newcommand{\Sockets}{\mathbb{T}}
\newcommand{\socketrole}[1]{r({#1})}
\newcommand{\socketcheckrole}[1]{s({#1})}

\newcommand{\Xstar}{X^{*}}
\newcommand{\Aobs}{A}      
\newcommand{\Zobs}{Z}      
\newcommand{\Mset}{\mathcal{M}}
\newcommand{\Uset}{\mathsf{U}}        
\newcommand{\Vbias}{\beta}            

\newcommand{\epsV}[1][]{\epsilon^{\mathrm V}\ifx\relax#1\relax\else_{\,#1}\fi}
\newcommand{\epsC}[1][]{\epsilon^{\mathrm C}\ifx\relax#1\relax\else_{\,#1}\fi}

\newcommand{\Qchan}{Q}      

\newcommand{\force}{\Gamma}  

\newcommand{\pmsg}{p}        
\newcommand{\hmsg}{h}        
\newcommand{\pbar}{\bar p}   
\newcommand{\forceprob}{\varphi}  
\newcommand{\DEmap}{\Phi}    
\newcommand{\DEjac}{D\DEmap} 
\newcommand{\Pdee}{P_{\mathrm{DE}}}
\newcommand{\Pbit}{P_{\mathrm{bit}}}

\newcommand{\Designspace}{\mathcal{D}}
\newcommand{\Cost}{\mathrm{Cost}}

\newcommand{\rhead}[1]{\par\smallskip\noindent\textit{#1}\enspace\ignorespaces}

\title{On the Reliability of Networks of AI Agents:\\
Density Evolution, Stopping Sets, and Architecture Optimization}

\author{Ehsan~Aghazadeh%
\thanks{E.~Aghazadeh is with the Department of Computer Science,
University of Massachusetts Amherst, Amherst, MA 01003, USA
(e-mail: \texttt{eaghazadeh@umass.edu}).}
and Hossein~Pishro-Nik,~\IEEEmembership{Senior~Member,~IEEE}%
\thanks{H.~Pishro-Nik is with the Department of Electrical and
Computer Engineering, University of Massachusetts Amherst, Amherst,
MA 01003, USA (e-mail: \texttt{pishro@umass.edu}).}%
\thanks{This work was supported in part by the U.S.\ National Science
Foundation under Grants CNS-2528914 and CNS-2150832.}}

\begin{document}
\maketitle

\begin{abstract}
Modern AI systems increasingly solve a task not with a single model
call but with several imperfect agents working together: some propose
pieces of a solution, others verify them, and the results are
combined.  These systems often outperform any single model, yet it is
rarely clear why they succeed or when they will fail.  We model such
a system as message passing on a sparse graph, the structure that
underlies low-density parity-check (LDPC) codes, and extend the
density-evolution machinery of coding theory to this richer setting.

In our model a task is a set of coupled binary subclaims, and an agent
architecture is a sparse, role-typed factor graph whose check nodes
are noisy \emph{Boolean verifier nodes}, each computing a local Boolean
function of the subclaims it touches.  Three distinct failure
modes, all modeled as erasures (an agent abstaining, a verifier
returning no usable output, and a message lost between two agents),
propagate as the agents exchange set-valued messages.  The check agents combine these
messages by a single \emph{logical-forcing} rule that specializes to
XOR, AND, OR, implication, and Horn constraints.  This is more than a
relabeling of LDPC theory: the verifier functions are nonlinear and
value-asymmetric, and the three failure modes do not reduce to a
single effective channel, so they require new threshold, finite-length,
and converse results rather than a direct reuse of parity-check
density evolution.

We prove a density-evolution theorem that predicts the asymptotic
fraction of unresolved subclaims on random role-typed architectures,
with an extension to deterministic, locally tree-like graph
sequences.  The XOR case recovers the classical
LDPC recursion on the binary erasure channel (BEC); the AND case
exposes an asymmetry between positive and negative verifier
certificates.  We then
establish a recovery threshold and show that the three failure modes
are separate design knobs, not a single effective noise
level.  We characterize finite-length failures by
\emph{certificate-stopping sets} and give augmentation conditions that
remove small failure patterns.  Finally, we formulate cost-constrained
architecture optimization whose shadow prices identify the most
valuable knob to improve, and prove a converse showing that the
logical-forcing rule is asymptotically optimal among sound local
rules.  The theory is first-order in scope: surviving certificates are
assumed correct, so confidently wrong messages and correlated failures
lie outside it and require non-erasure extensions.
\end{abstract}

\begin{IEEEkeywords}
Agent networks, Boolean factor graphs, density evolution, erasure
channels, large language models (LLMs), local message-passing,
low-density parity-check (LDPC) codes, multi-agent systems,
reliability, stopping sets.
\end{IEEEkeywords}

\section{Introduction}\label{sec:intro}
\IEEEPARstart{C}{omplex} cognitive tasks are increasingly executed
by \emph{networks} of imperfect artificial-intelligence agents
rather than by single isolated model calls.  Examples include
formal theorem proving, verified software generation, multi-step
action planning in structured environments, and target classification
from noisy multi-modality observations.  A modern system may decompose a problem into
subgoals, route each subgoal to a specialized model or tool, invoke
verifiers and tests on intermediate artifacts, refactor messages
between roles, and aggregate the surviving evidence into a final
answer.  These systems can substantially outperform their single-component
baselines, sometimes by a wide margin.  Hilbert reaches $99.2\%$ on
miniF2F by orchestrating four cooperating roles around a Lean
type-checker \cite{varambally2025hilbert}.  CodeR reaches $28.33\%$ on
SWE-bench-lite, well above single-LLM baselines, using a task graph of
code generators, test runners, and patch aggregators
\cite{chen2024coder}.  In debate-based oversight, having persuasive
LLMs argue opposing sides helps a weaker judge identify the truthful
answer more reliably \cite{khan2024debate}.

Yet the reason these gains occur is often unclear.  Is the
improvement due to ensembling many independent samples?  To a stronger
verifier?  To the topology of the agent communication graph?  To the
way roles are allocated and how compute is distributed across them?
Practitioners design these systems by trial and error, tuning agent
counts, verifier placements, and communication patterns until a
benchmark goes up.  Recent work argues that agent reliability deserves
study as a discipline in its own right, and finds that rapid capability
gains have so far produced only modest reliability gains
\cite{rabanser2026science}.  We lack the analogue, for networks of AI agents,
of what density evolution gave the LDPC community: a
sharp asymptotic prediction of when a sparse network will be
reliable, which architectural choices push it across the
reliability threshold, and what structural patterns cause it to
fail at finite size.

This paper takes a step toward such a theory.  The starting point is
simple: a broad class of agent networks can be modeled as message
passing on a sparse graph, with agents at the nodes and partial
results sent along the edges.  We track how those messages evolve from
round to round and characterize when the network resolves its task.
The main tool is the density-evolution machinery of LDPC codes, which
we generalize to a setting with distinct agent roles, general Boolean
verifiers that can themselves fail, and messages that can be lost
between particular role pairs.  The classical LDPC erasure recursion is
the special case in which every verifier is a parity check.  On
top of this asymptotic theory we add three things.  First, a
finite-length failure analysis built on a generalization of stopping
sets we call \emph{certificate-stopping sets}, together with an
augmentation theorem that removes all small failure patterns by
targeted cross-verification.  Second, a cost-constrained design method
whose shadow prices say which reliability knob is most valuable to
improve.  Third, a converse showing that the message-passing rule is
optimal among sound local rules.

\subsection{Hard tasks decompose into coupled subclaims}
\label{sec:intro-coupled-subclaims}

A natural first attempt treats every agent as observing a noisy
version of a single global ground truth $y^* \in \{1, \ldots, K\}$.
If agent errors are conditionally independent given $y^*$ and each
agent does a little better than chance, then majority or plurality
voting drives the error to zero exponentially fast in the number of
agents.  This regime is real and important; it covers calibration
questions on multiple-choice benchmarks.  But the communication
topology barely matters here: when every agent sees the same label and
the aggregator only needs a majority vote, the graph does almost no
work.

Hard cognitive tasks have internal structure.  A formal proof
decomposes into a sequence of lemma applications, definitional
unfoldings, and case splits.  A program decomposes into modules,
functions, and tests, with local consistency constraints between them.
A structured plan decomposes into actions and dependencies, each of
which can be checked against its neighbors but not on its own.  We model such a
task by a hidden vector of coupled binary \emph{subclaims}
\begin{equation}
\label{eq:intro-Xstar}
   \Xstar = (\Xstar_1, \ldots, \Xstar_n) \in \{0, 1\}^n.
\end{equation}
Here $\Xstar_i = 1$ means that subclaim $i$ is correct (the proof step
type-checks, the test passes, or the local rule is satisfied), and
$\Xstar_i = 0$ means it is not.  $\Xstar$ is a fixed ground truth and
does not depend on the graph's check operations: whether a check
computes AND, XOR, or another Boolean relation, $\Xstar$ is the same.
The one convention we adopt is to read ``correct'' in the
\emph{validator-relative} sense, by the standard the system applies to
each subclaim rather than by an abstract notion.  This keeps the
verifiers sound: a verifier can abstain, but never certifies a wrong
value.  Sometimes the verifier is only an approximate test of true
correctness: a fixed set of tests can all pass even though the code is
still wrong.  We place this confidently-wrong case outside our
first-order theory (\Cref{sec:applications}).  Each local check tests
whether a small subset of subclaims is jointly consistent.  The
coupling between subclaims is therefore relational: it lives in these
checks, not in the prior on $\Xstar$ itself.
We take the entries of $\Xstar$ to be independent given their roles,
though the same density-evolution limits hold more generally
(\Cref{sec:model-observations}).

\emph{In words.} Replacing one global label by a vector of coupled
subclaims is what makes the topology of agent communication matter.
Some unresolved subclaims can be filled in by neighboring checks; some
form residual clusters that no amount of additional rounds will fix.
The reliability of the overall system depends on which structural
patterns the agent graph creates and which it avoids.

The contrast across the three regimes that any agent-system theory
must distinguish is shown in Table~\ref{tab:three-regimes}.

\begin{table}[t]
\centering
\renewcommand{\arraystretch}{1.25}
\caption{Three regimes for multi-agent reasoning, distinguished by what
the hidden truth looks like and where the analytical action is.  This
paper develops the theory of the second and third rows.}
\label{tab:three-regimes}
\small
\begin{tabular}{p{0.27\linewidth}p{0.27\linewidth}p{0.36\linewidth}}
\toprule
\textbf{Setting} & \textbf{Natural method} & \textbf{Why it matters here} \\
\midrule
One hidden answer, independent agents
  & Majority vote, self-consistency
  & Topology and roles barely matter; LLN dominates. \\
Coupled subclaims with local checks
  & Message passing on a sparse graph
  & Topology creates recovery and failure patterns. \\
Noisy verifiers and channels on top of a coupled-subclaim task
  & Message passing with noisy, role-typed verifiers
  & Role-typed density evolution and certificate-stopping sets show which bottleneck binds; different ones need different architectural fixes. \\
\bottomrule
\end{tabular}
\end{table}

\subsection{The model in one picture}
\label{sec:intro-one-picture}

The mathematical object is a sparse bipartite factor graph
\begin{equation}
\label{eq:intro-Gn}
   \Gn = (\Vset_n \cup \Cset_n,\, \Eset_n),
\end{equation}
drawn from a role-typed bounded-degree configuration model; the
resulting ensemble is locally tree-like in the limit $n \to \infty$
(\Cref{lem:tree-like}).  Two kinds of nodes live on it.

\emph{Variable agents} $i \in \Vset_n$ each hold one subclaim
$X_i \in \{0,1\}$.  A variable agent's role $r \in \RolesV$ says
what kind of agent it is, for example ``junior prover,''
``code-fragment proposer,'' or ``retrieval agent.''  Roles are how the
model captures systematic differences in reliability across agents.

\emph{Check agents} $a \in \Cset_n$ each evaluate a local Boolean
function $C_a : \{0,1\}^{|\partial a|} \to \{0,1\}$ on a small
neighborhood $\partial a$ of subclaims.  In the framework of this
paper, $C_a$ is an arbitrary bounded-arity Boolean function.  Two
analytically tractable specializations carry most of the work.  The
\emph{XOR specialization} sets $C_a(x_{i_1}, \ldots, x_{i_d}) =
x_{i_1} \oplus \cdots \oplus x_{i_d}$ and recovers the LDPC-clean
linear case.  The \emph{AND-monotone specialization} sets $C_a =
x_{i_1} \wedge \cdots \wedge x_{i_d}$ and is the realistic verifier
case.  These are the questions real verifiers ask: ``do these proof
steps type-check together?'' (a Lean kernel call); ``do these unit
tests all pass?'' (a CodeR test runner); ``does this configuration
satisfy every local consistency rule?''  Horn clauses, OR factors, and
implications fall under the same framework.

Throughout the paper, $r \in \RolesV$ denotes a variable-agent role
and $s \in \RolesC$ a check-agent role.

Three sources of noise enter the model, all at initialization $t = 0$.

\begin{enumerate}[leftmargin=2em]
\item \emph{Variable-side erasure $\epsV[r]$.}  A variable agent of role
$r$ abstains, returns ``$?$'', and otherwise reports the true value
$\Xstar_v$.  In practice, $\epsV[r]$ is the rate at which a step checker
times out, fails to commit, or reports ``not enough context.''
\item \emph{Verifier-side erasure $\epsC[s]$.}  A check agent of role
$s$ fails to produce a verdict on its constraint and otherwise reports
$T_a = C_a(\Xstar_{\partial a})$.  In practice, $\epsC[s]$ is the rate at
which a Lean kernel times out, a sandbox fails, a test cannot be
invoked, or a judge declines to rule.
\item \emph{Reasoning-channel erasure $1 - \eta_{r,s}$.}  When a
message produced by an agent of role $r$ is sent to an agent of role
$s$, the message is replaced by ``$?$'' (lost or unusable on receipt)
with probability $1 - \eta_{r,s}$.  In practice, this is the rate at
which a refactor between two roles fails, a chain-of-thought becomes
unparseable, or an artifact format mismatch makes the receiver unable
to use what the sender computed.
\end{enumerate}

After this single noise injection, message passing is deterministic.
We use \emph{extrinsic} edge-specific updates: the message from $u$
to $w$ is computed without using the previous message from $w$ to
$u$.  This extrinsic property is what keeps local neighborhoods
asymptotically tree-like, and is what makes density evolution
rigorous in the configuration-model ensemble.  Real agent systems
typically \emph{broadcast}: each round-$t$ output is sent to every
neighbor, including the agent that produced the incoming message.  The
extrinsic update drops this self-reuse; this is the standard
idealization in message-passing analysis \cite{richardson2008modern}.
A message can turn ``$?$'' into a definite value but never the
reverse, so reusing one's own earlier message changes nothing on a
tree.  The difference between broadcasting and the extrinsic rule
therefore appears only through short cycles.  This is the same kind of finite-cycle
effect that produces the error floor measured in \Cref{sec:numerical}
(\Cref{fig:det-errorfloor}); a cycle-aware version of the recursion is
left to future work.  The iteration index $T$
is how many rounds the system runs: how many times the agents exchange
and update their outputs before stopping, a finite budget in real
systems.  \Cref{sec:applications} traces this concretely for
formal-proof, code-generation, and debate systems.

\emph{The unifying check-to-variable update is the logical-forcing
rule.}  A check $a$ is joined to a few variables, one per edge, and each
variable holds one subclaim (each point where a variable attaches to
the check is a \emph{socket}).  Its \emph{template} $\theta$ fixes the
Boolean function $f_\theta$ that $a$ computes over those $d_\theta$
inputs.  We write $z \in \{0,1\}$ for the check's observed
output, for example $z = 1$ if a test passes and $z = 0$ if it fails.

Messages travel both ways along each edge.  From a variable to a
check, the message is a candidate set
$M_j \in \Mset \defeq \{\{0\}, \{1\}, \Uset\}$: a singleton $\{b\}$ if
the sender certifies that its subclaim equals $b$, the unresolved
$\Uset = \{0,1\}$ otherwise.  From a check back to the subclaim at a
\emph{target} socket $j$, the check combines its observed output $z$
with the incoming messages $M_{-j} = \{M_k\}_{k \neq j}$ from the other
sockets and returns the target values that remain consistent with what
it knows:
\begin{equation}
\label{eq:intro-forcing}
\force_{\theta, j}(z;\, M_{-j})
   \defeq
\big\{ b \in \{0, 1\} :\;
   \exists\, x_k \in M_k,\, k \neq j,\;
   f_\theta(x_1, \ldots, x_{j-1}, b, x_{j+1}, \ldots, x_{d_\theta}) = z\big\}.
\end{equation}
In words, a value $b$ survives if the other inputs can be filled in
from their candidate sets so that the function outputs the observed
$z$.  If exactly one value survives, the check has forced the target
and sends the singleton $\{b\}$; otherwise it sends $\Uset$
(\Cref{fig:forcing}).  This single rule specializes
correctly to every Boolean primitive of interest.  Under XOR, the
target is forced exactly when every other input is known, recovering
the classical BEC singleton-neighbor rule.  Under AND, a positive
verifier output ($z = 1$) certifies every input as $1$ at once even
when several were previously unresolved, while a negative verifier
output ($z = 0$) certifies a target as $0$ only when every other input
is already known to be $1$.  This positive--negative asymmetry is
absent from XOR but present in real verifier semantics, which is why
Boolean verifier nodes, not parity checks alone, are a faithful
abstraction for the certifying layer of agent systems.

\begin{figure}[t]
\centering
\begin{tikzpicture}[>=Latex, every node/.style={font=\small},
  vnode/.style={circle, draw, minimum size=0.9cm, inner sep=0pt},
  vtarget/.style={vnode, fill=gray!25},
  cnode/.style={draw, minimum size=0.9cm, inner sep=0pt}]
  \node[vnode]   (u1) at (-2.4, 1.8) {$u_1$};
  \node[vnode]   (u2) at (-0.6, 1.8) {$u_2$};
  \node[vtarget] (w)  at ( 2.4, 1.8) {$w$};
  \node[cnode]   (a)  at ( 0.0, 0.0) {$a$};
  \node (zsrc) at (0.0, -1.5) {observed output $z = f_\theta(\cdot)$};
  \draw[->] (zsrc) -- (a);
  \draw[->] (u1) -- node[above left=-3pt] {$M_1$} (a);
  \draw[->] (u2) -- node[left=-1pt] {$M_2$} (a);
  \draw[->] (a) -- node[below right=-2pt] {$\force_{\theta,j}(z;\,M_{-j})$} (w);
  \node[font=\footnotesize, text=black!70] at (-1.5, 2.5) {other sockets $k \neq j$};
  \node[font=\footnotesize, text=black!70] at ( 2.4, 2.5) {target socket $j$};
\end{tikzpicture}
\caption{The logical-forcing rule at a single check.  Check $a$ has
template $\theta$, so it computes the Boolean function $f_\theta$ over
the subclaims at its sockets, and its output is observed as
$z \in \{0,1\}$ (for instance a test reporting pass or fail).  To
update the target subclaim $w$ at socket $j$, the check combines $z$
with the incoming candidate sets $M_{-j} = \{M_k\}_{k \neq j}$ from the
other sockets and keeps the target values consistent with both
(\Cref{eq:intro-forcing}).  If exactly one value is consistent the
check forces $w$ to it and sends a singleton; otherwise the target
stays unresolved ($\Uset$).  XOR forces the target once every other
input is known; AND with $z = 1$ forces all inputs to $1$ at once.}
\label{fig:forcing}
\end{figure}

Two global objects anchor this picture, and it is worth naming them in
coding terms.  The hidden vector $\Xstar$ is the validator-relative
ground truth defined above, the analog of a transmitted codeword that
no agent observes directly.  The set-valued messages are the running
estimate of $\Xstar$: a singleton $\{b\}$ is a coordinate resolved to
$b$, and $\Uset$ is a coordinate still unresolved, the erasure
``$?$'' of a binary-erasure-channel decoder.  The message vector
therefore plays the role of a decoder state such as
$(0, \Uset, 1, \Uset, 1)$, and soundness guarantees that each resolved
coordinate equals the true $\Xstar_j$, so an iteration can only turn an
$\Uset$ into a singleton, never into a wrong value.

The correspondence to coding theory is summarized by the dictionary
in Table~\ref{tab:dictionary}.  Each row pairs an LDPC concept with
its agent-network counterpart in a form that lets the technical
content transfer in both directions.

\begin{table}[t]
\centering
\renewcommand{\arraystretch}{1.25}
\caption{The translation between LDPC and agent-network language used
throughout the paper.  The analogy is structural: parity-check
constraints are engineered into the code, whereas verifier functions
are inherited semantic constraints on the task
(see \Cref{sec:intro-ldpc-bridge}).}
\label{tab:dictionary}
\small
\begin{tabular}{p{0.40\linewidth}p{0.55\linewidth}}
\toprule
\textbf{LDPC / BEC concept} & \textbf{Agent-network meaning} \\
\midrule
Code bit / variable node
   & Hidden subclaim or task component held by a variable agent \\
Parity check / factor node
   & Local Boolean verifier node (XOR specialization is parity) \\
Received erasure
   & Variable agent lacks usable local evidence (abstains, times out) \\
Check evaluation
   & Verifier, test, proof-checker, or judge observation \\
Message-passing iteration
   & One round of inter-agent communication \\
Message erasure
   & Failed or unusable reasoning artifact (refactor failure) \\
Stopping set
   & Residual cluster the local agents cannot resolve \\
Two-edge-connected augmentation
   & Cross-verification layer that frees small clusters \\
Degree distribution / role mix
   & Agent architecture and communication pattern \\
BP / recovery threshold
   & Reliability-recovery threshold surface \\
\bottomrule
\end{tabular}
\end{table}

\rhead{Scope of the erasure-only model.}  The framework targets
the \emph{certifying layer} of verifier-driven agent systems: proof
kernels, test runners, validators, abstaining subclaim generators,
and parseable/unparseable role transitions.  In this layer a
non-erased certificate is assumed sound, while missing, timed-out,
or unusable outputs are treated as erasures.  Confident wrong
messages (hallucinated subclaims asserted with the same surface form
as correct ones) fall outside the present first-order theory and
require a hybrid erasure-and-flip extension that we discuss as
a follow-on direction in \Cref{sec:outlook-bsc}.  Our positioning
is that the erasure-only theory captures the failure modes that
dominate in certifying agent stacks, not that LLM agents never
produce wrong messages.

\rhead{What the message-passing model assumes.}  The graph and its
update rule describe how local certificates combine; they are not a
claim that the agents run this decoder inside themselves.  What we
follow is a reliability state: for each subclaim, whether it has been
resolved and to which value, and how that state changes as evidence
passes between agents.  Three points follow.  First, the round count
$T$ is a finite depth, not a run to convergence: a single forward
pass, in which each agent produces its output once, is the case
$T = 1$, and letting verifiers send their results back for another
round adds depth.  Second, the only structural assumption is that the
task-dependency graph is locally tree-like, so the certificates
reaching a check are nearly independent; this is a property of the
task graph, not of any agent's internal reasoning, and where it fails,
through hubs or short cycles, the cost is the error floor measured in
\Cref{sec:numerical}.  Third, logical forcing is the sound certifying
rule, and \Cref{thm:converse} shows it is the best such rule, so the
reliability limits we prove bound any sound local certification scheme,
whatever algorithm a deployed system runs.

\subsection{A toy example: a team of agents checking a 4-step proof}
\label{sec:intro-toy}

To make the model tangible, we walk through a tiny scenario that a
contemporary multi-agent proof-checking system could plausibly face.
The checks here are AND tests, which is what real verifiers compute:
each asks whether a group of steps composes into a valid
sub-derivation.  The linear special case, in which each check is
instead a parity constraint, is what reproduces the classical LDPC
erasure recursion exactly and connects the model to coding theory
(\Cref{sec:intro-ldpc-bridge}); the formal theorems cover the general
Boolean case (\Cref{sec:stopping}).

\emph{The task.}  A junior model has produced a candidate proof of the
elementary lemma
\begin{quote}
\textit{If $n$ and $m$ are even integers, then $n + m$ is even.}
\end{quote}
The proof attempt is broken into four steps:
\begin{enumerate}[leftmargin=2em]
\item[(1)] \emph{unfold ``even'' on $n$:} ``there exists an integer $k$
with $n = 2k$,''
\item[(2)] \emph{unfold ``even'' on $m$:} ``there exists an integer $j$
with $m = 2j$,''
\item[(3)] \emph{substitute and combine:} ``then $n + m = 2k + 2j =
2(k+j+1)$,''
\item[(4)] \emph{apply ``even'' to conclude:} ``since $k + j + 1$ is an
integer, $2(k + j + 1)$ is even, so $n + m$ is even.''
\end{enumerate}
Each step has a hidden truth label $\Xstar_i \in \{0,1\}$ that equals
$1$ if the step is logically valid and $0$ if it contains a flaw.
Here step~3 hides an arithmetic slip, $2k + 2j = 2(k+j+1)$ where the
correct value is $2(k+j)$, so the true labels are
$\Xstar = (1, 1, 0, 1)$.  The team of agents does not know this yet,
and its goal is not just a global ``proof correct/incorrect'' verdict
but a per-step verdict $\hat X_i \in \{0,1\}$ for every $i$.  A global
reject tells a downstream user nothing about which step to fix; per-step
verdicts let an editor or another agent target the actual flaw, and
this is what verifier-centered proof systems such as Hilbert
\cite{varambally2025hilbert} report.

\emph{The agents.}  Variable agents $v_1, v_2, v_3, v_4$ are LLM
\textbf{step checkers}, one per step.  Each $v_i$ is given step $i$ in
isolation, together with its hypotheses, and asked the single local
question ``is this one inference valid?''  For instance, $v_1$ checks
that ``there exists an integer $k$ with $n = 2k$'' correctly unfolds
``$n$ is even,'' $v_3$ checks the algebra in step~3, and $v_4$ checks
that ``$n + m$ is even'' follows from ``$n + m = 2(k+j+1)$.''  Its
private observation is
\begin{equation*}
   \Aobs_i = \begin{cases}
   \{\Xstar_i\}, & \text{if the step checker produced a confident
   verdict,} \\
   \Uset, & \text{abstention (timeout, ambiguous notation, missing
   context).}
   \end{cases}
\end{equation*}
A step checker may abstain, but by soundness it never reports a
confident verdict that disagrees with $\Xstar_i$.

Check agents $a_1, a_2, a_3$ are higher-level \textbf{consistency
checkers}.  Each $a_j$ takes a small group of steps and re-derives
whether they compose into a coherent sub-derivation, reporting
$\Zobs_j = 1$ if the group re-checks and $\Zobs_j = 0$ if it does not.
This is the AND of the group's step labels,
$\Zobs_j = C_{a_j}(\Xstar_{\partial a_j})$, which is what a proof
assistant computes when it re-runs a block of reasoning.  The
observation is erased, $\Zobs_j = {?}$, when the check times out.  We
take three consistency checks with the connectivity
\[
   a_1 \leftrightarrow \{v_1, v_2, v_3\}, \quad
   a_2 \leftrightarrow \{v_2, v_3, v_4\}, \quad
   a_3 \leftrightarrow \{v_1, v_4\},
\]
so $a_1$ and $a_2$ re-derive the front and back halves of the proof
(both touch the flawed step~3 and fail), while $a_3$ checks that the
opening and closing definitional moves agree (it does not touch
step~3 and passes).  A consistency check is informative in two
asymmetric ways.  A \emph{passing} block ($\Zobs_j = 1$) is a
\textbf{positive certificate}: the whole sub-derivation re-checks, so
every step it touches is valid at once.  A \emph{failing} block
($\Zobs_j = 0$) is a \textbf{negative certificate}: it shows at least
one step is broken, and pins down which one only when every other step
it touches is already known valid, in which case the lone remaining
step must be the flaw.  The bipartite factor graph is drawn in
\Cref{fig:toy-graph}; shaded nodes mark the channel realization
analyzed below.

\begin{figure}[t]
\centering
\begin{tikzpicture}[scale=1.0,
  every node/.style={font=\small},
  vnode/.style={circle, draw, minimum size=0.95cm, inner sep=0pt},
  cnode/.style={draw, minimum size=0.85cm, inner sep=0pt},
  vshade/.style={vnode, fill=gray!35},
  cshade/.style={cnode, fill=gray!35}]
  \node[vnode] (v1) at (0, 1.6) {$v_1$};
  \node[vnode] (v2) at (1.8, 1.6) {$v_2$};
  \node[vshade] (v3) at (3.6, 1.6) {$v_3$};
  \node[vshade] (v4) at (5.4, 1.6) {$v_4$};
  \node[cnode] (a1) at (0.9, -0.4) {$a_1$};
  \node[cshade] (a2) at (2.7, -0.4) {$a_2$};
  \node[cnode] (a3) at (4.5, -0.4) {$a_3$};
  \draw (v1) -- (a1); \draw (v2) -- (a1); \draw (v3) -- (a1);
  \draw (v2) -- (a2); \draw (v3) -- (a2); \draw (v4) -- (a2);
  \draw (v1) -- (a3); \draw (v4) -- (a3);
\end{tikzpicture}
\caption{Bipartite factor graph for the 4-step proof scenario in
\Cref{sec:intro-toy}.  Variable agents $v_1, \ldots, v_4$ (circles, top
row) are step checkers holding the per-step verdicts; check agents
$a_1, a_2, a_3$ (squares, bottom row) are consistency checks that
re-derive whether a group of steps composes.  The hidden flaw is in
step~3, so the two checks touching it, $a_1$ and $a_2$, fail, while
$a_3$ passes.  The shaded nodes $v_3, v_4, a_2$ denote the channel
realization analyzed in the text: step checkers $v_3, v_4$ abstain
(variable-side erasure) and check $a_2$ times out (verifier-side
erasure); unshaded nodes deliver cleanly.}
\label{fig:toy-graph}
\end{figure}

\emph{Noise enters once, at $t = 0$.}  A representative realization
of all three noise tiers:
\begin{itemize}[leftmargin=2em]
\item \emph{Variable-side:} $v_1, v_2$ run cleanly and report
$\Aobs_1 = \Aobs_2 = \{1\}$.  $v_3, v_4$ abstain and report
$\Aobs_3 = \Aobs_4 = \Uset$; in particular the step checker for the
flawed step gives no verdict of its own.
\item \emph{Verifier-side:} $a_1$ re-derives the front half and reports
failure, $\Zobs_1 = 0$; $a_2$ exceeds its compute budget and reports
$\Zobs_2 = {?}$; $a_3$ re-derives the definitional bracket and reports
success, $\Zobs_3 = 1$.
\item \emph{Reasoning channels:} every directed edge delivers cleanly
in this first pass.
\end{itemize}

After this one-shot noise injection the agents exchange messages
deterministically.  In round 1, each step checker forwards its
observation to its consistency-check neighbors, the singleton
$\{1\}$ if it has one and the unresolved $\Uset$ otherwise.  In round
2, each consistency check tries to resolve the steps it touches, using
its own pass/fail observation and the messages just received from the
other steps in its group:
\begin{itemize}[leftmargin=2em]
\item $a_3 \to v_4$: $a_3$ passed, $\Zobs_3 = 1$, a positive
certificate, so it forces both of its steps to $\{1\}$ and recovers
the abstaining $v_4$.  \textbf{Success.}
\item $a_1 \to v_3$: $a_1$ failed, $\Zobs_1 = 0$, with inbound
singletons $\{1\}$ from $v_1$ and $\{1\}$ from $v_2$.  Every other step
it touches is confirmed valid, so the negative certificate forces the
lone remaining step, $\Xstar_3 = 0$, and $a_1$ sends this verdict to
$v_3$.  \textbf{Success: the flaw is localized.}
\item $a_2 \to v_3$: $a_2$ timed out, $\Zobs_2 = {?}$, so the forcing
operator returns $\Uset$ and $a_2$ contributes nothing.
\textbf{Fails (verifier-side erasure), but harmlessly: $a_1$ has
already localized step~3.}
\end{itemize}

\noindent
\fbox{\parbox{\dimexpr\linewidth-2\fboxsep-2\fboxrule\relax}{%
\textbf{Takeaway.} Even with two abstaining step checkers ($v_3, v_4$)
and one timed-out consistency check ($a_2$), the team produces
$\hat X = (1, 1, 0, 1)$: steps 1, 2, and 4 confirmed valid and step 3
correctly flagged as the flaw.  One passing block gave a positive
certificate and one failing block gave a negative certificate; the
communication structure recovered, and localized, what no single agent
could.}}\medskip

\emph{Now break the example three ways.}  Small changes to the
channel realization illustrate the three failure modes of the
stopping-set theorem (\Cref{thm:certstop}).

\textbf{(M1) The verifier fails.}  If $a_1$ also times out, both
checks touching step~3 ($a_1, a_2$) have erased verdicts and $a_3$
does not touch step~3, so $\hat X_3$ stays $\Uset$: every check that
could have caught the flaw gave up.

\textbf{(M2) Too many unknowns in one group.}  Restore $a_1$ but let
$v_2$ also abstain.  Then $a_1$ fails with inbound messages
$\{1\}, \Uset, \Uset$ from $v_1, v_2, v_3$: a failing check with two
unresolved steps cannot tell which one is broken, so the negative
certificate does not fire and both $v_2$ and $v_3$ stay $\Uset$.  A
check can localize at most one unknown step at a time.

\textbf{(M3) The reasoning channel fails.}  Restore the original
channel but drop the directed edge $a_1 \to v_3$ (a failed format
conversion between $a_1$'s output and $v_3$'s expected input).  Now
$a_1$ correctly determines $\Xstar_3 = 0$, but the verdict never
arrives in usable form and $v_3$ is left with $\hat X_3 = \Uset$: the
flaw was found but not communicated.

These three modes (verifier-erased, multi-step combinatorial,
channel-erased) are the three modes of the stopping-set theorem
(\Cref{thm:certstop}, \Cref{cor:and-stopping}), and each calls for a
different fix: replace or add a verifier (M1), restructure the
consistency-check connectivity (M2), or harden the communication
channel (M3).  Making these three knobs separately tunable, with
adjoint-derived shadow prices saying which one binds, is one of the
main contributions of the theory (\Cref{thm:optimization}).

\emph{The linear special case.}  If each check were a parity
constraint, $\bigoplus_{i \in \partial a_j} X_i = T_j$, rather than an
AND, the same machinery would apply but recovery would become
symmetric: a check missing exactly one of its inputs recovers it by
parity, whatever the values, and two missing inputs leave it
unresolved.  This linear case is the one that reproduces the classical
LDPC erasure-decoding recursion exactly, and it is the bridge to coding
theory developed next (\Cref{sec:intro-ldpc-bridge},
\Cref{cor:xor-stopping}).

\subsection{Why this is the LDPC story, and why it is not}
\label{sec:intro-ldpc-bridge}

\rhead{Where the algebraic constraint comes from.}
In a parity-check code every codeword satisfies $H\bm x = \bm 0$, a
constraint the designer engineers into the system, and that engineered
constraint is what drives iterative decoding.  In an agent network the
hidden values $X^*_i$ are facts about the world, not bits of a
codeword, so it is worth asking what plays the role of the parity
equation.  The answer is one level up.  Each verifier function
$C_a:\{0,1\}^{d_a}\to\{0,1\}$, together with its true output
$T_a^* \defeq C_a(X^*_{\partial a})$, asserts a Boolean equation
$C_a(X^*_{\partial a}) = T_a^*$ (literally a parity check under XOR;
``all inputs are $1$'' or ``some input is $0$'' under AND).  The
constraint set
$\mathcal{C} = \{X \in \{0,1\}^n : C_a(X_{\partial a}) = T_a^*
\;\forall a\}$
plays the role of the codebook, and iterative recovery resolves the
noisy observations $(\Aobs, \Zobs)$ to the element of $\mathcal{C}$ they
identify, with each verifier function locally restricting the candidate
set as in LDPC.

The one difference is where the constraints come from.  A code
designer engineers $H$ for minimum distance or threshold; an
agent-network designer inherits $\{C_a\}$ from the task domain:
logical inference rules in proof checking, module-correctness tests
in code generation, biological-consistency relations in medical
diagnosis.  These constraints are real features
of the task, present whether or not an agent network is deployed on
them.  The remaining design freedom is which constraints to deploy and how
to set the role and degree distributions; this is exactly what the
architecture-optimization theorem (\Cref{thm:optimization}) operates
on.  This sharpens the LDPC analogy rather than weakening it.

\rhead{Where the random-graph hypothesis comes from.}
Density evolution as usually presented relies on a randomly
constructed graph, and the concentration argument of \Cref{thm:de}
needs that randomness.  Four points justify or relax this in the
agent-network setting.  First, the random graph is a design variable,
not a constraint imposed by the world.  The architect chooses which
constraints to deploy, then samples the role-typed matching uniformly
over \emph{compatible} sockets (those sharing role, template, and
socket-type semantics), not over arbitrary attachments.  Second, when
deliberate randomization is unavailable, real task graphs at large $n$
often have configuration-model-like degree statistics; this is a
modeling assumption, the analog of ``real channels are approximately
memoryless.''  Third, the configuration-model hypothesis is sufficient
but not necessary for the DE prediction itself: any locally-tree-like
graph sequence drives the same recursion in expectation, since each
depth-$L$ message depends only on a local tree neighborhood.  The
random construction additionally supplies the concentration bound, via
a Doob martingale over the matching.  \Cref{thm:de-deterministic}
proves a deterministic-graph counterpart under Benjamini--Schramm
local convergence.  Fourth, the
framework breaks when the graph has heavy hubs or systematic short
cycles, where the locally-tree-like approximation fails on a
non-vanishing fraction of vertices; cluster-aware DE with local-MAP
correction is the natural follow-on.  The scope of the present theory
is locally-tree-like task graphs, deterministic or random.

In the XOR specialization the recovery structure is a standard LDPC
pattern: density evolution predicts the asymptotic erasure
probability \cite{luby1998analysis,richardson2001capacity,%
richardson2001design,richardson2008modern}, stopping sets characterize
finite-length BEC failures \cite{di2002finite}, and irregular and
unequal-error-protection LDPC design links the degree distribution to
threshold and protection tradeoffs \cite{pishro2007nonuniform,%
richardson2008modern}.  We use the same toolkit:
concentration on locally tree-like neighborhoods, spectral-radius
stability of the DE Jacobian, peeling-decoder fixed-point
characterization, and two-edge-connected augmentation.

Two things change in the agent-network setting.  First, three
independent erasure tiers (variable-side, verifier-side,
reasoning-channel) enter the density-evolution Jacobian in
structurally different positions.  The Non-Interchangeability
Proposition (\Cref{prop:noninterchange}) makes this precise: no change
of variables collapses the three into a single effective
$\tilde\epsilon$, so improving a verifier, adding proposer redundancy,
and improving a communication channel are separate design knobs whose
marginal values must be computed independently.  This is the
structural distinction from multi-edge-type LDPC
\cite{richardson2008modern} and the noisy-message-passing-decoder line
\cite{tarighati2015noisy,dupraz2021noisy}, neither of which carries
such a statement (\Cref{tab:related-comparison}).

Second, the check primitive need not be linear.  Every bounded-degree
Boolean factor fits the framework, and the value-conditioned recursion
of \Cref{thm:de} tracks separate per-value erasure probabilities
$p_{\ell,\tau}^{(0)}$ and $p_{\ell,\tau}^{(1)}$ when the factor is not
value-symmetric.  Under XOR these collapse and we recover standard
LDPC-BEC analysis (\Cref{cor:xor-de}); under AND the recursion exposes
a positive-versus-negative certificate asymmetry.  This asymmetry is
the formal version of a practitioner intuition: a passing test is more
informative than a failing one.  A passing Lean-kernel call certifies
every step in its scope at once, while a failing one localizes the
error only once the rest of the scope is known good.
\Cref{prop:and-de} quantifies this.

\subsection{Concrete applications}
\label{sec:intro-applications}

Each primitive of the framework maps onto an operational quantity
of a deployed multi-agent system, and each parameter is in principle
measurable from agent traces.  Concretely,
$\epsV[r]$ is the abstention or timeout rate of role-$r$ variable
agents; $\epsC[s]$ is the no-verdict rate of role-$s$ verifiers;
$\eta_{r,s}$ is the success probability of transmitting or
translating a usable artifact from role $r$ to role $s$; and the
template proportions and role-degree laws are read off the deployed
task graph.  After calibration, the DE recursion of \Cref{thm:de}
predicts the asymptotic residual erasure rate, the
stopping-set/augmentation theorems
(\Cref{thm:certstop}, \Cref{thm:augmentation}) identify which finite
patterns cause failures, and the adjoint sensitivities
(\Cref{thm:optimization}\,(d)) pinpoint which reliability tier is
most valuable to improve at the margin.

Five deployed systems illustrate this pipeline, and
\Cref{sec:applications} traces each one in detail.  In multi-agent
formal theorem proving, Hilbert \cite{varambally2025hilbert} reaches
$99.2\%$ on miniF2F with Lean-kernel checks of AND-monotone form.  In
multi-agent code generation, CodeR \cite{chen2024coder} and the
SWE-bench line \cite{jimenez2024swebench,yang2024sweagent} use test
runners, and the verified-code benchmarks CLEVER and FVAPPS
\cite{thakur2025clever,dougherty2025proving} use Lean type-checking,
as AND-monotone checks.  MAKER \cite{meyerson2025maker} runs a
one-million-step task by maximal decomposition with per-subtask
voting, a cross-verification structure of the kind
\Cref{thm:augmentation} formalizes.  LLM debate
\cite{du2024debate,khan2024debate,kenton2024oversight,choi2025debate}
maps each debate round to one message-passing iteration in
structured-output settings.  Sensor and drone classification fits
only after a non-erasure extension (\Cref{sec:outlook}), since soft
detections are not erasures.  For each system, \Cref{sec:applications}
maps the agent roles to variable and check nodes and ties $\epsV$,
$\epsC$, and $\eta$ to quantities already logged in agent traces.

\subsection{Main contributions}
\label{sec:intro-contributions}

The paper makes the following main contributions.

\begin{enumerate}[leftmargin=2.5em, label=\textbf{C\arabic*.},
                itemsep=0.5ex, ref={C\arabic*}]
\item \textbf{A role-typed Boolean-verifier-node model.}  We define
a sparse role-typed bipartite factor-graph ensemble in which variable
nodes are subclaims, check nodes are noisy Boolean verifiers of
bounded degree, and messages are set-valued certificates.  Three
erasure tiers (variable-side $\epsV[r]$, verifier-side $\epsC[s]$,
reasoning-channel $\eta_{r,s}$) appear as separate channel layers.
A single logical-forcing rule $\force_{\theta,j}$ on candidate sets
specializes to XOR, AND, OR, Horn, and other monotone Boolean
primitives.

\item \textbf{Density evolution and concentration (Theorems
\ref{thm:de} and \ref{thm:de-deterministic}).}  For every fixed $L$, the empirical value-conditioned
message erasure rates on a bounded-degree role-typed configuration
ensemble converge in probability to a deterministic recursion.  The
recursion is value-conditioned because general Boolean factors need
not be value-symmetric.  The XOR specialization (\Cref{cor:xor-de})
recovers the classical LDPC-BEC map; the AND specialization
(\Cref{prop:and-de}) exposes a positive-versus-negative-certificate
asymmetry of realistic verifiers.  The McDiarmid concentration
argument is channel-agnostic; the present recursion is its
specialization to the certifying-layer logical-forcing rule.  A
companion result (\Cref{thm:de-deterministic}) shows that the same DE
prediction holds on \emph{deterministic} locally-tree-like graphs,
with no appeal to graph randomness, so the theory also covers deployed
systems whose task graph is fixed rather than sampled.

\item \textbf{Threshold, stability, and non-interchangeability of the
three erasure tiers (Theorem \ref{thm:threshold}, Corollary
\ref{cor:zero-stability}, Propositions \ref{prop:noninterchange} and
\ref{prop:noninterchange-roles}).}  We
prove existence of a residual fixed point of the value-conditioned DE
recursion and a monotone target-reliability threshold along any
monotone reliability path.  In regimes where the zero-erasure state
is a fixed point (e.g., the noiseless-verifier noiseless-channel
limit recovering LDPC-BEC), local spectral stability holds with
Gelfand-rate contraction.  The Non-Interchangeability Proposition
shows that the three erasure tiers cannot be collapsed into a single
effective scalar: the parameter Jacobian has differential rank at
least two on a generic open parameter set.  With heterogeneous roles
the separation is sharper: a single check role serving two variable
roles gives rank at least three (\Cref{prop:noninterchange-roles}),
which distinguishes verifier-side erasure from the return reasoning
channel.  This non-interchangeability is the structural distinction
from multi-edge-type LDPC and the noisy-message-passing-decoder line
(\Cref{tab:related-comparison}).

\item \textbf{Certificate-stopping sets (Theorem \ref{thm:certstop}).}
We define certificate-stopping sets for general Boolean verifier
factors and prove that the terminal unresolved set of the deterministic
peeling decoder is the unique maximal certificate-stopping set.  The
XOR specialization (\Cref{cor:xor-stopping}) recovers the classical
stopping-set characterization; the AND specialization
(\Cref{cor:and-stopping}) gives a positive-versus-negative certificate
condition.

\item \textbf{Separating augmentation (Theorem \ref{thm:augmentation}).}
We define $k$-separating augmentations of the verifier graph and
prove that any $k$-separating augmentation eliminates all
certificate-stopping sets of size at most $k$.  A noisy-augmentation
corollary (\Cref{cor:noisy-aug}) quantifies fault tolerance via a
union bound; the XOR specialization recovers the classical
two-edge-connected freeing-set construction.

\item \textbf{Cost-constrained architecture optimization (Theorem
\ref{thm:optimization}).}  We formulate role-and-degree optimization
under a budget on agent counts, verifier invocations, and
communication edges.  We give existence (Weierstrass), asymptotic
optimality of DE-optimized designs, finite-to-infinite-round
consistency under uniform convergence, backward-mode adjoint
sensitivity equations for $\nabla_\lambda \Pdee^{(L)}(\lambda)$, and
KKT necessary conditions whose multiplier is the shadow price of
additional budget.  A budget-monotonicity corollary
(\Cref{cor:budget}) follows.  The adjoint/KKT framework is
channel-agnostic; the same shadow-price interpretation carries over to
non-erasure DE recursions when those are derived.

\item \textbf{Calibration protocol and applications.}  We give a
calibration protocol mapping each model parameter
($\epsV[r], \epsC[s], \eta_{r,s}$, template proportions, value
priors) to an operational quantity already logged by deployed agent
systems (\Cref{sec:calibration}).  The protocol is traced through
the Hilbert proof-agent architecture, multi-agent code generation
(CodeR, SWE-bench), and structured-output debate
(\Cref{sec:applications}).

\item \textbf{Local-soundness converse on the computation tree
(Theorem \ref{thm:converse}).}  Within the class of $T$-round
\emph{sound} (certifying) local message-passing protocols on the
role-typed configuration ensemble, no protocol (including
non-extrinsic, randomized, or soft-information sound variants) can
asymptotically leave fewer variables unresolved at its terminal
output than the value-conditioned logical-forcing decoder of
\Cref{thm:de}.  The argument localizes
each per-variable decision to the depth-$2T$ computation tree and
identifies the unique sound Bayes-optimal estimator there with the
logical-forcing decoder.  A stronger Fano-cut-set converse against
unbounded-alphabet local protocols without the soundness restriction
is sketched in \Cref{sec:outlook}.
\end{enumerate}

\subsection{Organization of the paper}
\label{sec:intro-organization}

\Cref{sec:related} reviews related work; \Cref{sec:notation} fixes
notation and the configuration model.  \Cref{sec:model} defines the
role-typed Boolean-verifier-node model.  \Cref{sec:de} states and
proves the density-evolution and concentration theorem
(\Cref{thm:de}) with its XOR and AND specializations.
\Cref{sec:threshold} treats residual fixed points, thresholds,
stability, and the Non-Interchangeability Proposition;
\Cref{sec:stopping} characterizes finite-length failures by
certificate-stopping sets; \Cref{sec:augmentation} proves the
separating-augmentation theorem.  \Cref{sec:optimization} formulates
cost-constrained architecture optimization, and \Cref{sec:converse}
proves the local-soundness converse.  \Cref{sec:calibration},
\Cref{sec:numerical}, and \Cref{sec:applications} give the calibration
protocol, the numerical validation, and the worked applications;
\Cref{sec:outlook} discusses extensions, including the
deterministic-graph density-evolution theorem
(\Cref{thm:de-deterministic}) for fixed task graphs and a
Fano-cut-set converse, and \Cref{sec:conclusion} concludes.  Appendices
collect auxiliary monotonicity facts and the detailed concentration
bounds.

\rhead{Reader map.}  Because the paper targets both the
information-theory and the AI-systems communities, two reading
paths are useful.  Readers primarily interested in the sparse-graph
theory may focus on \Cref{sec:notation,sec:model,sec:de,%
sec:threshold,sec:stopping,sec:augmentation,sec:optimization,%
sec:converse}.  Readers primarily interested in AI-agent
implications may first read \Cref{sec:intro},
\Cref{sec:calibration}, \Cref{sec:applications}, and
\Cref{sec:outlook}, then return to the formal model.  The toy
example in \Cref{sec:intro-toy} is intended as a common
entry point for both audiences.

\section{Related Work}\label{sec:related}

\subsection{Factor graphs, LDPC codes, and density evolution}

Factor graphs provide a general language for representing global
functions as products of local functions and for deriving local
message-passing algorithms \cite{kschischang2001factor}.  LDPC codes,
introduced by Gallager \cite{gallager1962ldpc}, are the canonical
sparse-graph coding example.  Density evolution
\cite{luby1998analysis,richardson2001capacity,richardson2001design,%
richardson2008modern} gives sharp asymptotic predictions for
message-passing decoding over memoryless channels and has been central
to the design of capacity-approaching irregular LDPC codes.  On the
binary erasure channel (BEC), the recursion is especially transparent
because messages are either known bits or erasures.

Finite-length BEC performance is governed by stopping sets
\cite{di2002finite}.  Related work has studied improved decoding and
puncturing for LDPC codes \cite{pishro2004decoding,pishro2007punctured},
unequal error protection \cite{pishro2007nonuniform}, generalized LDPC
(GLDPC) codes \cite{paolini2010gldpc}, noisy message-passing decoders
\cite{tarighati2015noisy,dupraz2021noisy}, and absorbing sets for
non-erasure channels \cite{dolecek2010absorbing}.  Multi-edge-type LDPC
codes \cite{richardson2008modern} extend the analysis to ensembles in
which different message types follow different update rules.

The present paper uses the same broad toolkit, but the object being
analyzed is not a communication code.  The graph describes an agent
architecture, the check nodes are noisy verifier agents, the Boolean
factor semantics need not be linear, and three independent erasure
tiers act on three structurally different positions in the message
flow.

Density evolution has also been extended beyond symmetric channels.
For asymmetric memoryless channels, such as the Z-channel, the
all-zero-codeword reduction is no longer valid, and the analysis must
track message distributions conditioned on the transmitted value, or
restore a usable symmetry through an equivalent coset construction
\cite{wang2005asymmetric}.  Our value-conditioned recursion is similar
in spirit, tracking separate states for target values $0$ and $1$, but
the source of the asymmetry is reversed.  In asymmetric-channel density
evolution the parity checks remain linear and value-symmetric while the
channel law is asymmetric; here the observation channel is a sound,
value-independent erasure channel and the asymmetry is induced by the
Boolean verifier functions themselves.  Moreover, the coset construction
that restores symmetry for linear codes does not apply to non-linear
factors such as AND.  \Cref{thm:de} should therefore be read as an
adaptation of standard density-evolution concentration methods to
role-typed Boolean verifier nodes, not as a claim that
value-conditioning is itself new.

\subsection{Multi-agent reasoning, verifiers, and agentic workflows}

Multi-agent language-model systems have been proposed for debate,
critique, self-refinement, search, and tool use
\cite{du2024debate,khan2024debate,kenton2024oversight,%
shinn2023reflexion,yao2023react,yao2023tree,wang2023selfconsistency}.
Verifiers and process supervision play an important role in
mathematical reasoning and step-by-step problem solving
\cite{cobbe2021verifiers,lightman2023step}.  Choi \emph{et al.}\
\cite{choi2025debate} analyze LLM debate as a martingale belief process
on structured tasks.

Formal theorem proving and verified code generation provide especially
clean testbeds because proof assistants and compilers can act as
strong local verifiers \cite{zheng2021minif2f,azerbayev2023proofnet,%
lin2024leanstar,varambally2025hilbert,thakur2025clever,%
dougherty2025proving}.  Software-engineering agents and benchmarks
such as SWE-bench and SWE-agent \cite{jimenez2024swebench,%
yang2024sweagent,chen2024coder} show that realistic code tasks already
have graph-like decompositions and tool-mediated verification.

Most of these works are empirical or algorithmic, asking which
protocol performs well on a benchmark; theoretical analyses include
the martingale framing of debate \cite{choi2025debate} and the
decision-theoretic Bayes-dominance bound for delegated multi-agent
DAGs \cite{ao2026reliability}.  This paper asks a complementary
information-theoretic question: after abstracting an agent system
into roles, local Boolean verifiers, and inter-agent communication
channels, what does the architecture imply about asymptotic
recoverability and finite-length failure patterns?

\subsection{What is new}

The framework is related to multi-edge-type LDPC, generalized LDPC,
and noisy-message-passing-decoder analyses, but differs in five
ways.

\emph{First}, roles are semantic objects: they represent proposer,
verifier, retriever, test-runner, or judge populations with different
reliabilities and costs and a built-in calibration path to logged
trace data.  \emph{Second}, verifier-side erasure is modeled separately
from variable-side erasure; this is essential for agent systems where
the verifier itself is an imperfect computational object.  \emph{Third},
communication fidelity is role-pair dependent and acts on reasoning
artifacts rather than channel symbols; the third tier $\eta_{r,s}$
is not separated as a distinct erasure tier in classical LDPC.  \emph{Fourth}, check nodes compute arbitrary Boolean verifier functions
and the recovery rule is value-conditioned logical forcing on
set-valued messages.  GLDPC handles non-parity local constraints,
but its component codes are linear, hence value-symmetric; the
AND-monotone case here is non-symmetric and exposes a
positive-versus-negative certificate asymmetry not central to the
GLDPC formulation.  \emph{Fifth}, the design payoff is the agent-operational
analog of multi-knob sparse-graph design.  LDPC design is itself
multi-knob (irregular degree distributions, MET edge types,
protograph parameters, UEP-LDPC reliability classes, GLDPC sub-code
choice); what is new here is that the knobs have agent-system
semantics (role mix, role-typed degrees, verifier reliability,
role-pair communication fidelity).  The Non-Interchangeability
Proposition formalizes that no change of variable absorbs them into
one effective scalar, and adjoint-derived KKT shadow prices identify
which resource constraint is binding at any operating point.

\Cref{tab:related-comparison} summarizes which of these axes are
present in adjacent literatures and which are new here.

\begin{table}[t]
\centering
\caption{Where the present framework sits relative to adjacent
sparse-graph literatures.  ``$\checkmark$'' indicates that the axis
is centrally treated in the cited line; ``$-$'' indicates that the
axis is not separated with the operational meaning used here, even
when related machinery exists; ``partial'' indicates that the axis
is treated but in a different form.  The aim is to locate the
distinctive combination of axes, not to claim invention of any one
of them in isolation.}
\label{tab:related-comparison}
\renewcommand{\arraystretch}{1.18}
\small
\begin{tabular}{@{}lcccc@{}}
\toprule
\textbf{Axis} & \textbf{MET-LDPC} & \textbf{GLDPC} &
\textbf{Noisy-MP} & \textbf{This paper} \\
& \cite{richardson2008modern} & \cite{paolini2010gldpc} &
\cite{tarighati2015noisy,dupraz2021noisy} & \\
\midrule
Role-typed channels & $\checkmark$ & $-$ & $-$ & $\checkmark$ \\
Boolean verifier nodes with logical-forcing semantics
   & $-$ & partial & $-$ & $\checkmark$ \\
Value-conditioned DE under non-symmetric checks & $-$ & $-$ & $-$ & $\checkmark$ \\
Verifier-side erasure as a separate tier & $-$ & $-$ & partial & $\checkmark$ \\
Role-pair reasoning channels as a separate tier & $-$ & $-$ & $-$ & $\checkmark$ \\
Three reliability tiers as separate design knobs & $-$ & $-$ & $-$ & $\checkmark$ \\
Protocol for estimating model parameters from agent traces & $-$ & $-$ & $-$ & $\checkmark$ \\
\bottomrule
\end{tabular}

\vspace{2pt}
\begin{minipage}{0.92\linewidth}
\footnotesize \emph{Note.} The ``value-conditioned DE'' row refers to
conditioning induced by non-symmetric Boolean verifier functions.
Value- or codeword-conditioned density evolution is itself known for
asymmetric memoryless channels \cite{wang2005asymmetric}; the
distinction is discussed in \Cref{sec:related} and
\Cref{rem:value-cond}.
\end{minipage}
\end{table}

\section{Notation and Preliminaries}\label{sec:notation}

\subsection{Sets, vectors, and indicators}

We use $\R, \N, \Z$ for the reals, naturals, and integers; $\PR$ and
$\E$ for probability and expectation under whatever measure is in
context.  For a positive integer $K$, $[K] \defeq \{1, \ldots, K\}$.
Boldface lowercase letters denote vectors; calligraphic uppercase
letters denote sets.  Indicator functions are written $\one\{\cdot\}$.
For a function $f$ of several arguments, we sometimes write $f(\cdot)$
to indicate a slot to be filled.

\subsection{Alphabets and the unresolved symbol}

Three alphabets recur throughout the paper.  Hidden subclaim values
lie in $\mathcal{X} \defeq \{0, 1\}$.  Set-valued messages and
variable-side observations take values in the set-valued message
alphabet
\begin{equation}
\label{eq:msg-alphabet}
   \Mset \defeq \big\{ \{0\}, \{1\}, \Uset \big\},
   \qquad
   \Uset \defeq \{0, 1\},
\end{equation}
where a singleton $\{b\}$ encodes a sender's certification of value
$b$ and $\Uset$ encodes ``unresolved.''  Verifier outputs are
scalar, taking values in
\begin{equation}
\label{eq:verifier-alphabet}
   \mathcal{Z} \defeq \{0, 1, *\},
\end{equation}
with $*$ denoting an erased verifier output.  The two erasure
markers are kept distinct: a message-side $\Uset$ encodes ``both
target values remain possible'' on a candidate-set update, whereas a
scalar verifier-side $*$ encodes ``the verifier produced no usable
Boolean observation.''  In informal narrative passages we
occasionally write ``$?$'' for ``erased'' when the type of erasure
is clear from context; the formal model uses $\Uset$ and $*$ as
defined above.  The reader more comfortable with scalar erasure
messages should mentally identify $\{0\} \leftrightarrow 0$,
$\{1\} \leftrightarrow 1$, $\Uset \leftrightarrow {?}$; the
set-valued representation is more natural for general Boolean
factors because the logical-forcing operator
(\Cref{eq:intro-forcing}) acts directly on candidate sets.

\rhead{Closure of $\Mset$ under intersection.}  The variable-side
update intersects incoming candidate sets.  In principle two
contradictory singletons $\{0\}$ and $\{1\}$ would intersect to
$\emptyset \notin \Mset$, but on \emph{sound} message configurations
this never occurs: \Cref{lem:soundness} establishes the invariant
that every singleton message contains the true value, so
contradictory singletons cannot co-occur and the update remains in
$\Mset$.

\subsection{Configuration-model background}

Given finite multisets of variable-node degrees and check-node
degrees with matching socket counts, a uniformly random pairing of
sockets defines a bipartite multigraph; this is the standard
configuration model \cite{richardson2008modern,mezard2009info}.
Under bounded-degree assumptions, conditioning on the graph being
simple does not change the bounded-radius local neighborhood law, and
hence does not change the density-evolution recursion for any fixed
number of iterations $L$.  This holds even though the two ensemble
laws are not, in general, globally close in total variation.  We accordingly state the asymptotic results in this paper
on the role-typed configuration-model multigraph ensemble; the
simple-graph-conditioned version has the same local limit and the
same DE prediction.  We work with a role-typed and template-typed
extension of the configuration model introduced formally in
\Cref{sec:model}.  The key structural fact is that for any bounded
radius $R$, the neighborhood of a uniformly chosen directed edge in
such a graph converges in distribution to the corresponding typed
Galton-Watson computation tree as $n \to \infty$.  This is the
\emph{locally tree-like} property; standard references are
\cite{richardson2008modern} and \cite{mezard2009info}.

\section{The Role-Typed Boolean-Verifier-Node Model}\label{sec:model}

\subsection{Graphs, roles, and Boolean verifier templates}
\label{sec:model-graph}

For each problem size $n \in \N$, let
\begin{equation}
\label{eq:model-Gn}
   \Gn = (\Vset_n \cup \Cset_n,\, \Eset_n)
\end{equation}
be a bipartite graph.  Variable nodes $i \in \Vset_n$ represent
subclaims; check nodes $a \in \Cset_n$ carry local Boolean
verifier functions.  We use \emph{check node}, \emph{verifier node},
and \emph{Boolean verifier factor} interchangeably for these nodes,
and \emph{verifier function} for the local map they compute.  We assume
$|\Vset_n| = n$ and $|\Cset_n| = \lfloor \alpha\, n \rfloor$ for a
fixed scaling parameter $\alpha > 0$.

There are finite role sets $\RolesV$ for variable agents and $\RolesC$
for check agents.  A variable role represents a population with
distinguishable observation-noise statistics, proposer, retriever,
lemma generator, code writer, sensor.  A check role represents a
population with distinguishable verifier statistics, proof checker,
test runner, judge, type checker, plausibility checker.  Let
$\rolemap(i) \in \RolesV$ be the role of variable $i$ and
$\rolemap(a) \in \RolesC$ be the role of check $a$.

The local verifier semantics are specified by a finite set $\Templates$
of \emph{templates}.  A template is a tuple
\begin{equation}
\label{eq:template}
   \theta = \big(s_\theta,\, d_\theta,\, r_{\theta,1}, \ldots,
   r_{\theta, d_\theta},\, f_\theta\big),
\end{equation}
in which $s_\theta \in \RolesC$ is a check role, $d_\theta \in \N$ is
the arity, $r_{\theta, j} \in \RolesV$ is the role of the variable
expected at socket $j$, and
\begin{equation}
\label{eq:template-fn}
   f_\theta : \{0, 1\}^{d_\theta} \to \{0, 1\}
\end{equation}
is the local Boolean verifier function.  A check node with template
$\theta$ has ordered sockets $j = 1, \ldots, d_\theta$, and socket $j$
must connect to a variable of role $r_{\theta, j}$.  We write
\begin{equation}
\label{eq:socket-type}
   \tau = (\theta, j) \in \Sockets, \qquad
   \socketrole{\tau} = r_{\theta, j}, \qquad
   \socketcheckrole{\tau} = s_\theta,
\end{equation}
for the socket type, the variable role at the socket, and the check
role of the host template.

\begin{example}[Common verifier semantics]
\label{ex:common-verifiers}
The same framework covers several common local verifiers as templates
$\theta$ with different choices of $f_\theta$.
\begin{itemize}[leftmargin=2em]
\item \emph{XOR / parity:} $f_\theta(x_1, \ldots, x_d) =
x_1 \oplus \cdots \oplus x_d$.  This is the LDPC-clean linear
baseline; a check is satisfied iff the parity of the inputs equals
the observed value.
\item \emph{AND / monotone conjunction:} $f_\theta(x_1, \ldots, x_d) =
x_1 \wedge \cdots \wedge x_d$.  This models a passing local test, a
proof-step type-check, or a composition check: the check is satisfied
iff every input is correct.
\item \emph{OR:} $f_\theta(x_1, \ldots, x_d) = x_1 \vee \cdots \vee
x_d$.  Symmetric to AND for negated inputs.
\item \emph{Implication:} $f_\theta(x_1, x_2) = \one\{x_1 \le x_2\}$,
or equivalently $x_1 \Rightarrow x_2$.  This models a local
dependency rule.
\item \emph{Horn clause:} $(x_1 \wedge \cdots \wedge x_m) \Rightarrow
y$.  This models proof-rule semantics in formal systems.
\end{itemize}
\end{example}

\subsection{Hidden values and erasure-only observations}
\label{sec:model-observations}

Each variable has a hidden truth value $\Xstar_i \in \{0, 1\}$.  In the
general (non-XOR) case we assume a role-dependent product prior with
hidden values drawn independently of the degree sequence conditional
on role,
\begin{equation}
\label{eq:prior}
   \PR\big\{\Xstar_i = 1 \,\big|\, \rolemap(i) = r\big\}
   = \Vbias_r, \qquad r \in \RolesV,
   \qquad
   \Xstar_i \perp\!\!\!\perp D_i \mid \rolemap(i).
\end{equation}
The independence between $\Xstar_i$ and the degree vector $D_i$ at
fixed role is required because density evolution samples variables
through sockets, not uniformly through nodes; without value-degree
independence, the socket-level value distribution at a role-$r$
neighbor would differ from the node-level prior $\Vbias_r$ and the
recursion would carry an additional bias.  The same density-evolution
limits also apply to deterministic sequences of hidden vectors
provided the \emph{socket-conditional} empirical value frequencies
converge to $\{\Vbias_r\}$ (a strictly stronger condition than
role-conditional convergence).  The prior is unnecessary for the XOR
specialization, because XOR erasure dynamics do not depend on the
actual hidden vector (one can analyze the all-zero codeword without
loss of generality; see \Cref{cor:xor-de}).  It is needed for
non-symmetric Boolean factors because the probability that a check
can logically force a target depends on the values of the
neighboring true subclaims.  The product form carries one further
consequence used in \Cref{prop:and-de}: it makes the true inputs
$(\Xstar_i)_{i \in \partial a}$ at a check mutually independent, which
is what yields the product forcing probability
$\forceprob_{\ell, \theta, j}^{(1)} = \prod_{k \neq j} \Vbias_{r_{\theta, k}}$
there.  Local tree-likeness makes the \emph{messages} arriving at a
check independent, but not the co-incident ground-truth values, so the
product prior is a mean-field, first-order approximation of the joint
task distribution: a correlated task law with the same
socket-conditional marginals $\{\Vbias_r\}$ preserves the
density-evolution marginals stated here, while replacing this product
by the corresponding conditional joint at each check.

A variable node $i$ of role $r$ has private observation
\begin{equation}
\label{eq:var-obs}
   \Aobs_i = \begin{cases}
   \{\Xstar_i\}, & \text{with probability } 1 - \epsV[r], \\
   \Uset, & \text{with probability } \epsV[r].
   \end{cases}
\end{equation}
A check node $a$ with template $\theta$ has true verifier output
\begin{equation}
\label{eq:check-true}
   T_a = f_\theta\big((\Xstar_i)_{i \in \partial a}\big),
\end{equation}
where the variables in $\partial a$ are ordered according to the
template sockets.  It observes
\begin{equation}
\label{eq:check-obs}
   \Zobs_a = \begin{cases}
   T_a, & \text{with probability } 1 - \epsC[s_\theta], \\
   *, & \text{with probability } \epsC[s_\theta],
   \end{cases}
\end{equation}
where $*$ denotes an erased verifier output.  All variable-side and
verifier-side erasures are independent conditional on the graph and
the hidden vector.  This is a first-order, conditionally memoryless
approximation: correlated agent failures (for example, two proposer
agents backed by the same foundation model failing on the same
subclaim) sit outside the present setting and are deferred to the
limitations and dependence-relaxation discussion in
\Cref{sec:outlook-limitations,sec:outlook-belief}.

\subsection{Reasoning channels}
\label{sec:model-reasoning-channel}

For each ordered role pair $(r, s) \in (\RolesV \cup \RolesC)^2$, a
\emph{reasoning-channel fidelity} $\eta_{r, s} \in [0, 1]$ specifies
the probability that a singleton message produced by an agent of role
$r$ is delivered intact to an agent of role $s$.  The role-pair
channel kernel is
\begin{equation}
\label{eq:channel}
   \Qchan_{r, s}(\widetilde M \mid M)
   = \begin{cases}
   \eta_{r, s}, & M = \{b\},\, \widetilde M = \{b\},\, b \in \{0,1\}, \\
   1 - \eta_{r, s}, & M = \{b\},\, \widetilde M = \Uset,\, b \in \{0,1\}, \\
   1, & M = \Uset,\, \widetilde M = \Uset, \\
   0, & \text{otherwise}.
   \end{cases}
\end{equation}
Thus the base model has erasures but no flips.  This is the analogue
of the BEC in classical coding theory.  Confident wrong messages would
lead to a different, non-erasure theory analogous to absorbing-set
analysis on the BSC \cite{dolecek2010absorbing}; we discuss this
extension briefly in \Cref{sec:outlook}.  Each directed edge has its
own independent channel-erasure variable drawn once at $t = 0$ in
the persistent-edge convention used throughout the paper; an
alternative i.i.d.-per-round channel would change the analysis (each
round draws fresh gates) and is outside the present setting.

\subsection{Logical forcing: the unifying check-to-variable update}
\label{sec:model-forcing}

Suppose check $a$ has template $\theta$ and target socket $j$.
Suppose $\Zobs_a = z \in \{0, 1\}$ is available (not erased) and the
incoming variable-to-check messages are candidate sets
$M_k \in \Mset$, $k \neq j$.  The \emph{logical-forcing operator} is
\begin{equation}
\label{eq:forcing}
\force_{\theta, j}(z;\, M_{-j})
   \defeq \big\{ b \in \{0, 1\} :\;
   \exists\, x_k \in M_k,\, k \neq j,\;
   f_\theta(x_1, \ldots, x_{j-1}, b, x_{j+1}, \ldots, x_{d_\theta}) = z \big\}.
\end{equation}
If $\Zobs_a = *$, we set $\force_{\theta, j} = \Uset$ by convention.
This operator is the produced check-to-variable message at edge
$(a, j)$ (the formal update is \Cref{eq:check-update} below): a
singleton when the observed verifier value and the unknown-input
candidate sets logically force the target, otherwise $\Uset$.  The
reasoning channel can further replace this message by $\Uset$ on the
directed edge $s_\theta \to r_{\theta, j}$ before it reaches the
variable.  This is local Boolean reasoning with unknown inputs
represented by candidate sets; it is the natural generalization of
the parity-inversion rule of LDPC-BEC decoding to arbitrary Boolean
verifier functions.

\emph{Concrete examples.}  For an AND check that reports
$\Zobs_a=1$, every input must be $1$ for the conjunction to hold,
so $\force_{\theta, j}(1; M_{-j}) = \{1\}$ regardless of $M_{-j}$.
For the same AND check reporting $\Zobs_a = 0$, the target is forced
to $0$ only if every other input has already been resolved to $1$:
$\force_{\theta, j}(0; M_{-j}) = \{0\}$ when $M_{i'} = \{1\}$ for all
$i' \neq j$, and $\force_{\theta, j}(0; M_{-j}) = \Uset$ otherwise.
This is the positive-versus-negative-certificate asymmetry that
\Cref{prop:and-de} quantifies in DE form.  For an XOR check, the
operator is value-symmetric: $\force_{\theta, j}(z; M_{-j})$ is a
singleton if and only if every $M_{i'}$ for $i' \neq j$ is itself a
singleton, in which case it returns the unique $b$ satisfying
$b \oplus \bigoplus_{i' \neq j} x_{i'} = z$.

Let $\widetilde V_{i \to a}^{(\ell)}$ be the variable-to-check message
\emph{received} by check $a$ after the role-pair channel acts, and let
$\widetilde C_{a \to i}^{(\ell)}$ be the check-to-variable message
received by variable $i$.  Extrinsic edge-specific updates are
\begin{align}
\label{eq:check-update}
   C_{a \to i}^{(\ell)}
   &= \force_{\theta, j}\Big( \Zobs_a;\,
   \big(\widetilde V_{k \to a}^{(\ell)}\big)_{k \in \partial a \setminus \{i\}}\Big), \\
\label{eq:var-update}
   V_{i \to a}^{(\ell+1)}
   &= \Aobs_i \cap \bigcap_{c \in \partial i \setminus \{a\}}
      \widetilde C_{c \to i}^{(\ell)},
\end{align}
where socket $j$ on $a$ corresponds to neighbor $i$.  The final
estimate after $L$ rounds uses all neighboring checks:
\begin{equation}
\label{eq:final-est}
   \widehat M_i^{(L)}
   = \Aobs_i \cap \bigcap_{a \in \partial i}
      \widetilde C_{a \to i}^{(L)}.
\end{equation}
The bit-erasure rate is
\begin{equation}
\label{eq:empirical-bit}
   \Pbit^{(L)}(\Gn)
   = \frac{1}{n} \sum_{i \in \Vset_n} \one\big\{\widehat M_i^{(L)} = \Uset\big\}.
\end{equation}

\begin{lemma}[Soundness invariant]
\label{lem:soundness}
On every finite graph and for every erasure-only realization of the
observations and reasoning channels, every message about variable $i$
at every iteration $\ell \ge 0$ contains the true value $\Xstar_i$.
Hence every singleton message about $i$ equals $\{\Xstar_i\}$, and no
message is ever empty.
\end{lemma}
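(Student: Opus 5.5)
We argue by induction on the iteration index $\ell$, proving the invariant jointly for all message types at each round: produced and received variable-to-check messages $V_{i\to a}^{(\ell)}$, $\widetilde V_{i\to a}^{(\ell)}$; produced and received check-to-variable messages $C_{a\to i}^{(\ell)}$, $\widetilde C_{a\to i}^{(\ell)}$; and the final estimate $\widehat M_i^{(L)}$. The claim for each is that it contains $\Xstar_i$. Once this holds, the remaining statements follow at once. Every message lies in $\Mset$ or arises as an intersection of such sets, so a singleton containing $\Xstar_i$ must equal $\{\Xstar_i\}$. Nonemptiness is immediate from containing $\Xstar_i$.

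For the base case we use the observation model \eqref{eq:var-obs}: $\Aobs_i$ is either $\{\Xstar_i\}$ or $\Uset$, and both contain $\Xstar_i$. The initial variable-to-check message $V^{(0)}_{i\to a}=\Aobs_i$ (the $\ell=0$ instance of \eqref{eq:var-update} with no incoming check messages) therefore satisfies the invariant. We then show the channel preserves it. By \eqref{eq:channel}, $\Qchan_{r,s}$ maps $\{b\}$ to $\{b\}$ or $\Uset$ and maps $\Uset$ to $\Uset$. Hence the received set always contains the sent set, and a received message contains $\Xstar_i$ whenever the sent one does. The same argument applies in both directions along an edge.

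The key step is the check update \eqref{eq:check-update}. Suppose every received message $\widetilde V_{k\to a}^{(\ell)}$ with $k\neq i$ contains $\Xstar_k$. If $\Zobs_a=*$, the output is $\Uset\ni\Xstar_i$. Otherwise $\Zobs_a=T_a=f_\theta(\Xstar_{\partial a})$ by \eqref{eq:check-obs}, because the verifier never flips. To show $b=\Xstar_i$ lies in $\force_{\theta,j}(\Zobs_a;\cdot)$, we exhibit a witness in \eqref{eq:forcing}: take $x_k=\Xstar_k\in\widetilde V_{k\to a}^{(\ell)}$ for $k\neq j$. Then $f_\theta(\Xstar_{\partial a})=\Zobs_a$, so $\Xstar_i$ survives and $C_{a\to i}^{(\ell)}$ contains $\Xstar_i$; the channel then preserves this. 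For the variable update \eqref{eq:var-update} and the final estimate \eqref{eq:final-est}, each is an intersection of $\Aobs_i$ with received check messages that all contain $\Xstar_i$, so the intersection contains $\Xstar_i$. This also shows the result is never $\emptyset$, which justifies the closure-under-intersection remark in \Cref{sec:notation}: the result stays in $\Mset$.

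We expect no real obstacle. The only point needing care is ordering the induction so that the hypothesis used at the check covers exactly the extrinsic inputs $k\neq i$ at the same round $\ell$. We would also note explicitly that the argument is pathwise: it holds on any finite graph, including graphs with cycles, and for every realization. It uses only the erasure-only (no-flip) structure of \eqref{eq:var-obs}, \eqref{eq:check-obs}, and \eqref{eq:channel}, and no tree-likeness or probabilistic assumption.
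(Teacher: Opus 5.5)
Your proof is correct and follows the paper's route: an induction on $\ell$ through the erasure-only observation model, the no-flip channel, the check update, and the intersection in the variable update. Your explicit witness $x_k=\Xstar_k$ in \eqref{eq:forcing} spells out the step the paper states in one line, that $\Xstar_i$ is consistent with the true verifier output $T_a$.
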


\begin{proof}
Induction on $\ell$.  Private observations are $\{\Xstar_i\}$ or
$\Uset$; a non-erased check output $\Zobs_a = T_a$ forces only values
consistent with the true verifier output, and $\Xstar_i$ is consistent
by \eqref{eq:check-true}; reasoning channels either preserve a
singleton or replace it by $\Uset$; and the variable update
\eqref{eq:var-update} intersects sets that all contain $\Xstar_i$.
\end{proof}

\Cref{lem:soundness} is the basic soundness invariant used
throughout the analysis: messages can be erased, never wrong.  This
is the agent-network analog of erasure-channel soundness in
classical coding, and it is what makes the BEC-style analysis
possible in this setting.

\subsection{Role-typed configuration ensemble}
\label{sec:model-ensemble}

We use a bounded-degree role-typed configuration model.  Variable
nodes of role $r$ have a random socket-count vector
\begin{equation}
\label{eq:degree-law}
   D^{(r)} = \big(D_\tau^{(r)}\big)_{\tau \in \Sockets_r},
   \qquad
   \Sockets_r \defeq \{\tau \in \Sockets : \socketrole{\tau} = r\},
\end{equation}
with bounded support $D_\tau^{(r)} \le D_{\max}$ for some fixed
$D_{\max} \in \N$ independent of $n$, and minimum-degree assumption
$\sum_\tau D_\tau^{(r)} \ge 2$ almost surely.  The minimum-degree
assumption ensures that every variable is touched by at least two
checks; it
is needed for the threshold theorem (\Cref{thm:threshold}) and the
zero-erasure stability corollary, but the basic
density-evolution recursion (\Cref{thm:de}) holds without it.
Variable nodes are partitioned into roles with asymptotic
proportions $\{\pi_r^V\}_{r \in \RolesV}$, and check nodes are
assigned templates in $\Templates$ with asymptotic proportions
$\{\pi_\theta^C\}_{\theta \in \Templates}$.  The ensemble parameters
$\big(\{\pi_r^V\}, \{\mathsf{P}_D^{(r)}\}, \{\pi_\theta^C\}, \alpha\big)$
are \emph{admissible} if, for every socket type
$\tau = (\theta, j)$,
\begin{equation}
\label{eq:socket-balance}
   \pi^V_{\socketrole{\tau}} \cdot
   \E\!\big[D^{(\socketrole{\tau})}_\tau\big]
   = \alpha \cdot \pi^C_\theta,
\end{equation}
where $\alpha \defeq |\Cset_n| / |\Vset_n|$ is the asymptotic
check-to-variable ratio.  We assume admissibility throughout.  At
finite $n$, socket balances are enforced exactly by a
rounding-and-matching step (with $O(\sqrt{n})$ rounding
fluctuations), and sockets of each type are then paired uniformly
at random.

For a variable reached by following a uniformly chosen socket of type
$\tau$, let $D^{(r), \tau, \mathrm{ex}}$ be the size-biased excess
degree vector of the variable, after deleting the arrival socket.  For
a uniformly chosen variable node of role $r$, let $D^{(r), \mathrm{node}}$
be its full degree vector.

\begin{lemma}[Locally tree-like]
\label{lem:tree-like}
Let $R > 0$ be a fixed integer.  As $n \to \infty$, the depth-$R$
neighborhood of a uniformly chosen directed socket of type $\tau$ in
$\Gn$ converges in total variation to the corresponding typed
Galton-Watson computation tree.  At a variable node entered through
a socket of type $\tau'$, the remaining socket-degree vector has law
$D^{(\socketrole{\tau'}), \tau', \mathrm{ex}}$ (and the root variable
has full vector $D^{(\socketrole{\tau}), \mathrm{node}}$).  At a
check node entered through socket $(\theta, j)$, the template is
deterministically $\theta$ and the remaining socket types are
deterministic $\{(\theta, k) : k \neq j\}$; the template proportions
$\{\pi_\theta^C\}$ enter only through the law of which socket type a
uniformly chosen variable-side socket connects to, via the
socket-balance equation \eqref{eq:socket-balance}.
\end{lemma}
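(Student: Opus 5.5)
The plan is the standard exploration (breadth-first revealing) argument for configuration models, adapted to the role- and template-typed matching. Fix $R$ and a uniformly chosen directed socket of type $\tau$. I would reveal the depth-$R$ neighborhood one half-edge at a time. At each step an unmatched socket of type $(\theta,j)$ on a check is paired with a uniformly random still-unmatched variable-side socket of the same type, or vice versa. Because all degrees are bounded by $D_{\max}$ and template arities are bounded, the neighborhood contains at most $K_R \defeq (D_{\max}|\Sockets| \cdot \max_\theta d_\theta)^{R}$ sockets, a constant independent of $n$. So only $O(1)$ pairings are ever revealed. The proof then splits into two estimates: (i) the probability that any revealed pairing closes a cycle (lands on an already-explored node) is $O(1/n)$; and (ii) conditioned on no collision, the law of the revealed tree is within $o(1)$ in total variation of the typed Galton--Watson tree described in the statement.

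For (i), the key observation is that the admissibility condition \eqref{eq:socket-balance}, together with the rounding step, gives, for every socket type $\tau'$ with $\pi^V_{\socketrole{\tau'}}\E[D_{\tau'}]>0$, a number of sockets of type $\tau'$ that is $\Theta(n)$ on both sides. Socket types of zero mass never appear and can be discarded. Each revealed pairing picks uniformly among $\Theta(n)-O(1)$ free sockets of the required type, and at most $K_R$ of these belong to already-explored nodes. Each pairing therefore collides with probability $O(K_R/n)$, and a union bound over at most $K_R$ pairings gives $O(K_R^2/n)\to 0$.

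For (ii), I would handle the two node types separately.

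\emph{Check side.} When a variable socket of type $(\theta,j)$ is matched, the partner is a check of template exactly $\theta$, since sockets are type-matched. Its remaining socket types are deterministically $\{(\theta,k):k\neq j\}$, as claimed.

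\emph{Variable side.} When a check socket of type $\tau'$ is matched, the partner variable is chosen with probability proportional to its number of free $\tau'$-sockets. Its remaining degree vector therefore has the empirical size-biased law: the probability of the vector $d-e_{\tau'}$ is proportional to $d_{\tau'}\cdot \#\{\text{role-}\socketrole{\tau'}\text{ nodes with degree } d\}$. By the law of large numbers for the degree sequence (or by exact empirical convergence under rounding), this converges to $D^{(\socketrole{\tau'}),\tau',\mathrm{ex}}$. Depletion by the $O(1)$ explored sockets perturbs each conditional probability by only $O(1/n)$.

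The root case is handled the same way, using the node law $D^{(r),\mathrm{node}}$ for a uniformly chosen node. Template proportions enter only through the relative counts of variable sockets of each type, which is exactly the socket-balance statement. Chaining the $O(1)$ conditional total-variation errors, each $o(1)$, gives total-variation convergence of the whole depth-$R$ neighborhood.

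The main obstacle is making (ii) uniform. The empirical degree distribution must converge in a sense strong enough that size-biasing preserves it; this is fine here because bounded support makes the first moments converge whenever the frequencies do. One must also keep track of the $O(\sqrt n)$ rounding fluctuations, which are absorbed since they contribute $O(n^{-1/2})$ to each ratio. Finally, I would note that conditioning on simplicity is not needed, per the preliminaries, since the statement concerns the multigraph ensemble.
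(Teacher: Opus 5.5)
Your proposal is correct and follows essentially the same route as the paper's quantitative version (\Cref{lem:tree-tv} in the appendix). That proof likewise couples a breadth-first exploration with the typed Galton--Watson tree, bounds the collision error by a union bound over $O(1)$ pairings (the $C_L/n$ term), and separately controls the gap $\Delta_{n,L}$ between the finite-$n$ empirical typed degree law and its limit, which correspond to your steps (i) and (ii). Your explicit accounting of the $O(1/n)$ depletion effect of sampling without replacement is slightly more careful than the paper's coupling, which attributes all coupling failure to collisions.
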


\Cref{lem:tree-like} is standard for bounded-degree configuration
models; we record it here only to fix notation.  The role-typed
extension is mechanical, with conditioning on type at every step.

\subsection{Summary of notation}
\label{sec:model-notation-summary}

The symbols used throughout the analysis are collected in
\Cref{tab:notation-summary}.  Two conventions are worth flagging.
First, $p$-prefixed symbols are erasure probabilities (larger
$p$ means worse reliability); $\eta$-prefixed symbols are
delivery probabilities (larger $\eta$ means better reliability).  Second, the message alphabet
$\Mset = \{\{0\},\{1\},\Uset\}$ uses $\Uset = \{0,1\}$ for set-valued
unresolvedness, while the scalar verifier alphabet
$\mathcal{Z} = \{0,1,*\}$ uses $*$ for an erased verifier output;
the two erasure markers encode operationally different states
(see \Cref{sec:notation}).

\begin{table}[t]
\centering
\caption{Summary of notation used in the model and analysis.}
\label{tab:notation-summary}
\renewcommand{\arraystretch}{1.18}
\begin{tabular}{@{}lp{0.62\linewidth}@{}}
\toprule
\textbf{Symbol} & \textbf{Meaning} \\
\midrule
$\Xstar_i \in \{0, 1\}$ & Hidden truth value of subclaim $i$. \\
$\Aobs_i \in \{\{\Xstar_i\}, \Uset\}$ & Variable-side observation
(soundness preserved). \\
$\Zobs_a \in \{f_\theta(\Xstar_{\partial a}), *\}$ & Verifier
observation (soundness preserved). \\
$f_\theta : \{0,1\}^{d_\theta} \to \{0, 1\}$ & Boolean verifier
template function. \\
$\force_{\theta, j}(z; M_{-j}) \subseteq \{0, 1\}$ & Logical-forcing
operator at socket $j$ given observed output $z$. \\
$\epsV[r] \in [0, 1)$ & Variable-side erasure probability (role $r$
abstention rate). \\
$\epsC[s] \in [0, 1)$ & Verifier-side erasure probability (check role
$s$ failure rate). \\
$\eta_{r, s} \in (0, 1]$ & Reasoning-channel delivery probability
(directed; $r \to s$). \\
$\Vbias_r \in (0, 1)$ & Role-$r$ value prior $\PR\{\Xstar_i = 1\}$. \\
$\pmsg_{\ell, \tau}^{(b)} \in [0, 1]$ & Variable-to-check erasure
probability at iteration $\ell$ at socket type $\tau$, conditional
on $X = b$. \\
$\hmsg_{\ell, \tau}^{(b)} \in [0, 1]$ & Received check-to-variable
erasure probability (post-channel). \\
$\forceprob_{\ell, \theta, j}^{(b)} \in [0, 1]$ & Forcing
probability at template $\theta$, socket $j$, value $b$. \\
$\DEmap_\lambda$ & Density-evolution map; $\bm p_{\ell+1} =
\DEmap_\lambda(\bm p_\ell)$. \\
$\DEjac_\lambda(\bm p)$ & Jacobian of $\DEmap_\lambda$ at $\bm p$
(state Jacobian). \\
$D_{\mathrm{par}}[\DEmap_\lambda]$ & Parameter Jacobian of
$\DEmap_\lambda$, with respect to the erasure-tier parameters. \\
$\Pdee^{(L)}(\lambda)$ & DE prediction of bit-erasure rate after $L$
rounds. \\
$\Pdee^{(\infty)}(\lambda)$ & Residual bit-erasure rate (limit of
$\Pdee^{(L)}$). \\
$\Pbit^{(L)}(\Gn)$ & Empirical bit-erasure rate on graph $\Gn$ after
$L$ rounds. \\
$\Sockets, \Templates$ & Socket-type set, template set. \\
$\RolesV, \RolesC$ & Variable-role set, check-role set. \\
$D_\tau^{\mathrm{ex}}$ & Excess-degree socket law. \\
\bottomrule
\end{tabular}
\end{table}

\section{Density Evolution and Concentration}\label{sec:de}

\subsection{The value-conditioned recursion}
\label{sec:de-recursion}

\emph{Scalar anchor.}  Before the formal recursion, here is the
calculation in scalar form.  The DE state at round $\ell$ is
$\pmsg_{\ell, \tau}^{(b)}$, the value-conditioned probability that a
variable-to-check message of socket type $\tau$ is unresolved, given
target value $b$.  The recursion has three composed steps: (i) a
\emph{persistent} role-pair channel (drawn once at $t = 0$, not
redrawn per round) may erase a message in transit; (ii) a check
fails to certify a target if the verifier output is erased, the
return channel is erased, or the Boolean forcing rule cannot
isolate the target value from the inbound messages; (iii) the target
variable remains unresolved at the next round iff its private
observation is erased and all extrinsic incoming check messages are
erased.  The persistent-channel convention models failures
that do not go away on retry, such as a stable format incompatibility
between two roles; an i.i.d.-per-round channel would give a different
DE map.  The equations below assemble these three steps with the
role-typed degree laws and value-dependent forcing probabilities of
general Boolean factors; symbols are tabulated in
\Cref{sec:model-notation-summary}.

The density-evolution state tracks both socket type and the
underlying hidden value.  For each socket type $\tau \in \Sockets$ and
value $b \in \{0, 1\}$, define
\begin{equation}
\label{eq:p-def}
   \pmsg_{\ell, \tau}^{(b)}
   \defeq \PR\Big\{
   V_{i \to a}^{(\ell)} = \Uset \,\Big|\,
   \Xstar_i = b,\, (i, a) \text{ has socket type } \tau\Big\},
\end{equation}
the extrinsic variable-to-check erasure probability conditional on the
underlying value.  Define $\hmsg_{\ell, \tau}^{(b)}$ analogously for
the received check-to-variable message:
\begin{equation}
\label{eq:h-def}
   \hmsg_{\ell, \tau}^{(b)}
   \defeq \PR\Big\{
   \widetilde C_{a \to i}^{(\ell)} = \Uset \,\Big|\,
   \Xstar_i = b,\, (a, i) \text{ has socket type } \tau\Big\}.
\end{equation}

Consider a socket $\tau = (\theta, j)$.  For another socket $k \neq j$
of the same template $\theta$, conditional on the true value $x_k$,
the incoming variable-to-check message at socket $k$ is the singleton
$\{x_k\}$ with probability
$\eta_{r_{\theta, k}, s_\theta}\big(1 - \pmsg_{\ell, (\theta, k)}^{(x_k)}\big)$
and is $\Uset$ otherwise.  Define
\begin{equation}
\label{eq:pbar}
   \pbar_{\ell, \theta, k}^{(x_k)}
   \defeq 1 - \eta_{r_{\theta, k}, s_\theta}
   \big(1 - \pmsg_{\ell, (\theta, k)}^{(x_k)}\big),
\end{equation}
the effective inbound erasure probability after the role-pair
channel.  The factorization in \eqref{eq:pbar} relies on the
extrinsic update: $V^{(\ell)}_{i \to a}$ is computed from the
depth-$2\ell+1$ computation tree rooted at $i$ \emph{with the
directed edge $i \to a$ removed}, hence is measurable with respect
to a sub-tree disjoint from the persistent channel variable on
edge $i \to a$.  The pre-channel message and the persistent channel
variable are therefore independent, even though the channel is
fixed at $t = 0$.  The same extrinsic argument applies to the
return channel on edge $a \to i$ used in \eqref{eq:hupdate} below.

Conditional on $X_j = b$, the \emph{forcing probability} is
\begin{equation}
\label{eq:forcing-prob}
   \forceprob_{\ell, \theta, j}^{(b)}
   \defeq \PR\Big\{
   \force_{\theta, j}\big(f_\theta(X_1, \ldots, X_{d_\theta});\,
   M_{-j}\big) = \{b\} \,\Big|\, X_j = b\Big\},
\end{equation}
where each $X_k$ ($k \neq j$) is drawn from the role-prior $\Vbias_{r_{\theta, k}}$
independently, and each $M_k$ is the singleton $\{X_k\}$ or $\Uset$
according to \eqref{eq:pbar}.  The received check-to-variable erasure
probability is
\begin{equation}
\label{eq:hupdate}
   \hmsg_{\ell, (\theta, j)}^{(b)}
   = 1 - \big(1 - \epsC[s_\theta]\big)\,
   \eta_{s_\theta, r_{\theta, j}}\,
   \forceprob_{\ell, \theta, j}^{(b)}.
\end{equation}
The variable update is
\begin{equation}
\label{eq:pupdate}
   \pmsg_{\ell+1, \tau}^{(b)}
   = \epsV[\socketrole{\tau}]\,
   \E\!\left[ \prod_{\tau' \in \Sockets_{\socketrole{\tau}}}
   \big(\hmsg_{\ell, \tau'}^{(b)}\big)^{D_{\tau'}^{(\socketrole{\tau}),
   \tau, \mathrm{ex}}} \right],
\end{equation}
the expectation taken over the size-biased excess degree law.  The
initial condition is
\begin{equation}
\label{eq:pinit}
   \pmsg_{0, \tau}^{(b)} = \epsV[\socketrole{\tau}],
\end{equation}
because at iteration zero a variable sends $\Uset$ iff its private
observation is erased.  After $L$ rounds, the density-evolution
prediction for the empirical bit-erasure rate is
\begin{equation}
\label{eq:terminal-de}
   \Pdee^{(L)}
   = \sum_{r \in \RolesV} \pi_r^{\mathrm V}
   \sum_{b \in \{0,1\}} \PR(X = b \mid r)\,
   \epsV[r]\,
   \E\!\left[ \prod_{\tau \in \Sockets_r}
   \big(\hmsg_{L, \tau}^{(b)}\big)^{D_\tau^{(r), \mathrm{node}}} \right],
\end{equation}
with the expectation taken over the full (non-size-biased) degree law.

\emph{In words.}  Equation \eqref{eq:pbar} is the per-edge effective
erasure rate seen by a check from one of its variable neighbors after
the role-pair channel acts.  Equation \eqref{eq:forcing-prob} is the
probability that the logical-forcing operator returns a singleton
$\{b\}$ given that the target value is $b$; this is the step that
depends on the Boolean primitive (XOR vs.\ AND vs.\ Horn).  Equation
\eqref{eq:hupdate} composes verifier erasure, return-channel erasure,
and forcing failure into the received check-to-variable erasure rate.
Equation \eqref{eq:pupdate} is the standard variable update: the
outgoing extrinsic erasure rate equals the variable's own
observation-erasure probability times the probability that every
other incoming
check message is erased, averaged over the excess-degree law.
Equation \eqref{eq:terminal-de} is the prediction for the bit error
rate one would measure on a held-out instance.

\subsection{Concentration}
\label{sec:de-concentration}

\rhead{Empirical statistics.}  Let $E_\tau$ denote the set of
directed variable-to-check edges of socket type $\tau$ in $\Gn$.  For
each socket type $\tau \in \Sockets$ and value $b \in \{0,1\}$,
define the empirical conditional erasure fraction
\begin{equation}
\label{eq:phat-def}
   \widehat\pmsg_{L, \tau}^{(b)}(\Gn)
   \defeq \frac{
   \sum_{(i,a) \in E_\tau} \one\{\Xstar_i = b\}\,
                          \one\{V^{(L)}_{i \to a} = \Uset\}
   }{
   \sum_{(i,a) \in E_\tau} \one\{\Xstar_i = b\}
   }
\end{equation}
on the event that the denominator is positive, and undefined
otherwise.  Let
\begin{equation}
\label{eq:qmass-def}
   q_{\tau, b}
   \defeq \lim_{n \to \infty} \frac{1}{n}\,
   \big|\{(i,a) \in E_\tau : \Xstar_i = b\}\big|,
\end{equation}
the limiting fraction of socket-type-$\tau$ edges incident to
value-$b$ variables.  Under the admissible ensemble parameters of
\Cref{sec:model-ensemble} and the value-degree independence
assumption \eqref{eq:prior}, $q_{\tau, b}$ exists and is positive
for every $(\tau, b)$ with $\Vbias_{\socketrole{\tau}} \in (0, 1)$
and $\E[D_\tau^{(\socketrole{\tau})}] > 0$.  We assume positive
limiting socket-value mass throughout the per-type statements below.

\rhead{Standing assumptions for \Cref{thm:de}.}  For ease of
auditing, the assumptions invoked by the theorem and its proof are:
\begin{enumerate}[label=(A\arabic*),leftmargin=2.2em]
  \item finite role sets $\RolesV, \RolesC$ and finite template set $\Templates$;
  \item bounded degrees $D_{\max} < \infty$ independent of $n$;
  \item exact or asymptotic socket-balance \eqref{eq:socket-balance};
  \item bounded-degree role-typed configuration ensemble
        (\Cref{sec:model-ensemble}), giving the locally tree-like
        property of \Cref{lem:tree-tv};
  \item value-degree independence within roles \eqref{eq:prior}
        (\Cref{thm:de-deterministic} draws hidden values from the
        same prior, independently of the fixed graph);
  \item independent variable-side erasures \eqref{eq:var-obs},
        verifier-side erasures \eqref{eq:check-obs}, and persistent
        role-pair channel erasures \eqref{eq:channel} (each drawn
        once at initialization);
  \item fixed iteration count $L$ as $n \to \infty$.
\end{enumerate}
Each assumption is stated in the corresponding paragraph of
\Cref{sec:model}; we collect them here for clarity, not as new
hypotheses.

\begin{theorem}[Boolean-verifier density evolution and concentration]
\label{thm:de}
Fix $L \in \N$.  Consider the bounded-degree role-typed configuration
ensemble of \Cref{sec:model-ensemble}, with independent
role-dependent hidden values \eqref{eq:prior}, erasure-only variable
observations \eqref{eq:var-obs}, erasure-only verifier observations
\eqref{eq:check-obs}, and persistent role-pair erasure reasoning
channels \eqref{eq:channel}.  Then for every socket type
$\tau \in \Sockets$ and value $b \in \{0, 1\}$ with $q_{\tau, b} > 0$,
\begin{equation}
\label{eq:phat-conv}
   \widehat\pmsg_{L, \tau}^{(b)}(\Gn)
   \xrightarrow[n \to \infty]{\PR}\, \pmsg_{L, \tau}^{(b)},
\end{equation}
where $\pmsg_{L, \tau}^{(b)}$ is defined by
\eqref{eq:pbar}--\eqref{eq:pinit}.  Moreover, for the global
bit-erasure rate,
\begin{equation}
\label{eq:bit-conv}
   \Pbit^{(L)}(\Gn) \xrightarrow[n \to \infty]{\PR}\, \Pdee^{(L)}.
\end{equation}
Quantitatively, the empirical fraction concentrates exponentially
around its own finite-$n$ mean, which sits within a vanishing bias
$b_{n, L}$ of the DE prediction.  Here
$b_{n, L} \defeq C_L / n + \Delta_{n, L}$ is the tree-approximation
bias of \Cref{lem:tree-tv} (with $C_L$ and $\Delta_{n, L}$ as in
\eqref{eq:Delta-n}).  Under the standard $O(\sqrt n)$-rounded
socket-balance construction $\Delta_{n, L} = O(\mathrm{poly}(N_L)/\sqrt n)$,
while $\Delta_{n, L} = 0$ under a deterministic socket-balance
construction, so $b_{n, L} = O(1/\sqrt n)$ in general and $O(1/n)$ in
the deterministic case.  There is then a constant $a_L > 0$ depending only on
$L$, the bounded-degree constant $D_{\max}$, and the fixed ensemble
parameters $(\{\pi_r^V\}, \{\mathsf{P}_D^{(r)}\}, \{\pi_\theta^C\},
\alpha, \{\Vbias_r\})$ such that, for every $t > 0$ and every $n$,
\begin{equation}
\label{eq:bit-tail}
   \PR\!\left\{
   \big| \Pbit^{(L)}(\Gn) - \Pdee^{(L)} \big| > b_{n, L} + t\right\}
   \le 2 \exp\!\left( -a_L\, n\, t^2 \right).
\end{equation}
Since $b_{n, L} \to 0$, the offset is eventually negligible against
any fixed resolution: for every fixed $\delta > 0$ there is a
constant $c_L > 0$ (depending on the same quantities, not on
$\delta$) such that
$\PR\{ | \Pbit^{(L)}(\Gn) - \Pdee^{(L)} | > \delta\}
\le 2 \exp(-c_L\, n\, \delta^2)$
for all $n$ sufficiently large.
\end{theorem}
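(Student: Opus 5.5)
The proof will follow the classical Richardson--Urbanke template: first a bias estimate (expectation close to DE), then concentration (empirical close to expectation). Both are specialized to the role-typed, value-conditioned, three-tier erasure setting.

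\textbf{Step 1 (locality).} Fix $L$. By the extrinsic updates \eqref{eq:check-update}--\eqref{eq:var-update}, the message $V^{(L)}_{i\to a}$ is a deterministic function of the depth-$2L$ directed neighborhood of the edge $i\to a$ (with the edge $i\to a$ removed). The same holds for $\widehat M_i^{(L)}$ and the neighborhood of $i$. The function's inputs are the graph, the hidden values $\Xstar$, the variable observations $\Aobs$, the verifier observations $\Zobs$, and the persistent channel gates on the edges in that neighborhood. Under bounded degrees, the neighborhood has at most $N_L=O(D_{\max}^{2L})$ nodes and edges.

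\textbf{Step 2 (tree computation).} On the event that this neighborhood is a tree, I will verify by induction on $\ell$ that the conditional law of the messages, given $\Xstar_i=b$ and socket type $\tau$, is exactly that prescribed by \eqref{eq:pbar}--\eqref{eq:pinit}.
\begin{itemize}
\item \emph{Check side.} The subtrees hanging off the other sockets $k\neq j$ of a check are disjoint. They carry independent hidden values drawn from $\Vbias_{r_{\theta,k}}$; this uses the product prior together with the value--degree independence \eqref{eq:prior}, which is what lets socket sampling see the node prior. Each such message is $\{x_k\}$ with probability $1-\pmsg^{(x_k)}$; by the soundness invariant of \Cref{lem:soundness} the only alternative is $\Uset$. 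Each is then passed through an independent gate on its own edge, which gives \eqref{eq:pbar}. Applying $\force_{\theta,j}$ with $z=f_\theta(X)$ gives \eqref{eq:forcing-prob}. The verifier erasure and the return gate are independent of everything in those subtrees, which yields \eqref{eq:hupdate}.
\item \emph{Variable side.} Since $V=\Aobs\cap\bigcap\widetilde C$ and all sets contain $\Xstar_i$, the output is $\Uset$ iff every factor is $\Uset$. Conditional independence across the disjoint child subtrees, combined with the excess-degree law from \Cref{lem:tree-like}, gives \eqref{eq:pupdate}. Using the full degree law at the root gives \eqref{eq:terminal-de}.
\end{itemize}
\textbf{Bias bound.} I then bound the probability that the neighborhood is not a tree, or that its law deviates from the Galton--Watson law, by $C_L/n+\Delta_{n,L}$ (\Cref{lem:tree-tv}). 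Since indicators are bounded, this gives $|\E\,\Pbit^{(L)}-\Pdee^{(L)}|\le b_{n,L}$, and the analogous bound for numerator and denominator of \eqref{eq:phat-def} separately.

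\textbf{Step 3 (concentration).} Expose the randomness in $O(n)$ independent or sequentially revealed pieces:
\begin{itemize}
\item the hidden value together with $\Aobs_i$ for each variable;
\item $\Zobs_a$ for each check;
\item the gates on each edge;
\item the uniform socket matching, revealed one pairing at a time within each socket type.
\end{itemize}
Changing a single node or edge variable alters the messages only inside its radius-$2L$ ball. It therefore changes $n\Pbit^{(L)}$ by at most $N_L$. For the matching, I use the standard switching argument: coupling two conditional completions by a single swap changes at most $2$ edges, hence at most $O(N_L)$ outputs. Azuma/McDiarmid applied to the Doob martingale over $O(n)$ steps then gives $\PR\{|\Pbit-\E\Pbit|>t\}\le 2\exp(-a_Lnt^2)$ with $a_L\asymp 1/N_L^2$. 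Combining this with Step 2 yields \eqref{eq:bit-tail}.

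\textbf{Per-type convergence.} The same bound applies to the numerator and denominator of $\widehat\pmsg^{(b)}_{L,\tau}$. The denominator concentrates at $n\,q_{\tau,b}>0$, so the ratio converges in probability to the ratio of limits, which gives \eqref{eq:phat-conv}. The final $\delta$-version follows by taking $t=\delta/2$ once $b_{n,L}<\delta/2$.

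\textbf{Main obstacle.} I expect the main difficulty to be the matching-martingale step. The role/template-typed uniform pairing must be revealed so that the bounded-difference constant stays $O(N_L)$ uniformly. This requires the switching coupling to respect socket compatibility, and it must be combined with the rounding of socket counts, which is where $\Delta_{n,L}$ enters. I would handle it by revealing the matching separately within each socket type. Each type is an independent uniform bijection, so the classical per-type switching argument applies verbatim.
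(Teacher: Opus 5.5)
Your proposal is correct and takes essentially the same route as the paper's proof (\Cref{app:de-proof}): a tree-coupling bias bound from \Cref{lem:tree-tv} combined with the inductive computation on the typed tree, then a Doob martingale over a type-preserving switching exposure of the matching plus McDiarmid on the remaining independent ingredients, and finally the numerator/denominator ratio argument for \eqref{eq:phat-conv}. One correction, which the paper makes explicitly in \Cref{app:de-mcdiarmid}: expose the matching first and then the primitive verifier-erasure indicators, not $\Zobs_a$ itself, because an unerased $\Zobs_a = f_\theta(\Xstar_{\partial a})$ depends on the hidden values and the matching, so a hidden-value change propagates one extra hop through $T_a$ and enlarges the radius in your bounded-difference constant by one.
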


\rhead{Proof outline.}
A message after $L$ iterations is a deterministic function of the
depth-$2L+1$ directed computation neighborhood around its edge,
together with the hidden values, initial observations, verifier
observations, and channel-erasure variables in that neighborhood.
Bounded degree implies this neighborhood contains at most a constant
number of items depending on $L$ and $D_{\max}$ but not on $n$.
\Cref{lem:tree-like} gives total-variation convergence of the
neighborhood to the typed computation tree; on the tree, incoming
messages from distinct descendants are conditionally independent given
the target value and roles, and \Cref{lem:soundness} characterizes
each message as singleton-true or unresolved.  Evaluating the check
node on the tree gives \eqref{eq:pbar}--\eqref{eq:hupdate}; evaluating
the variable node gives \eqref{eq:pupdate}.

The convergence \eqref{eq:bit-conv} is obtained by triangle
decomposition,
\begin{equation*}
   \big|\Pbit^{(L)}(\Gn) - \Pdee^{(L)}\big|
   \le \big|\Pbit^{(L)}(\Gn) - \E\,\Pbit^{(L)}(\Gn)\big|
   + \big|\E\,\Pbit^{(L)}(\Gn) - \Pdee^{(L)}\big|.
\end{equation*}
The second term is $o_n(1)$ by the quantitative tree-convergence
of \Cref{lem:tree-tv} (and its variable-rooted form
\Cref{cor:tree-tv-node}): the expected per-variable bit-erasure
rate equals its tree-limit value up to a $C_L/n + \Delta_{n,L}$
correction, where the $C_L/n$ term comes from the breadth-first
collision bound and $\Delta_{n,L}$ measures the discrepancy between
the finite-$n$ empirical typed degree/socket law and the limiting
law (\eqref{eq:Delta-n}).  Under the standard
$O(\sqrt n)$-rounded socket-balance construction $\Delta_{n,L} =
O(\mathrm{poly}(N_L)/\sqrt n)$, so the empirical-law term dominates
the collision term in the typical regime; under a deterministic
socket-balance construction $\Delta_{n,L} = 0$ and the rate is the
sharper $C_L/n$.  For the first
term: changing one socket pairing, one hidden value, or one local
erasure variable can affect only messages whose depth-$2L+1$
computation neighborhoods intersect the changed item.  Bounded degree
caps the number of such messages by a constant $K_L$ depending on $L$
and $D_{\max}$ alone.  The empirical fraction is therefore a
bounded-differences function with Lipschitz constant $K_L / n$ per
independent random ingredient.  McDiarmid's inequality
\cite{mcdiarmid1989} gives the exponential tail
$\PR\{|\Pbit^{(L)}(\Gn) - \E\,\Pbit^{(L)}(\Gn)| > t\} \le
2\exp(-a_L n t^2)$ around the finite-$n$ mean; combining this with
the $b_{n, L}$ bound on the second (bias) term through the triangle
decomposition above yields \eqref{eq:bit-tail}, and the fixed-$\delta$
asymptotic form follows once $n$ is large enough that
$b_{n, L} \le \delta / 2$.  The full proof,
including the typed-Galton-Watson computation in
\Cref{lem:tree-like} and the bounded-difference constant tracking
through the value-conditioned recursion, is in \Cref{app:de-proof}.

The per-type ratio-statistic convergence \eqref{eq:phat-conv} is a
ratio of two empirical sums; both numerator and denominator are
bounded-differences functionals of the independent socket-pairing,
hidden-value, and erasure-variable ingredients, with Lipschitz
constants $K_L/n$ and $K_0/n$ respectively (the denominator depends
only on hidden values, $K_0 = 1$).  McDiarmid + the tree-convergence
expectation calculation give
$n^{-1} \sum_{(i,a) \in E_\tau} \one\{\Xstar_i = b\}
\xrightarrow{\PR} q_{\tau, b}$ and
$n^{-1} \sum_{(i,a) \in E_\tau} \one\{\Xstar_i = b\}
\one\{V^{(L)}_{i \to a} = \Uset\} \xrightarrow{\PR}
q_{\tau, b}\, \pmsg_{L, \tau}^{(b)}$.  Since $q_{\tau, b} > 0$ by
hypothesis, the continuous-mapping theorem applied to the ratio
function $(x, y) \mapsto x/y$ at $(q_{\tau, b}\,
\pmsg_{L, \tau}^{(b)},\, q_{\tau, b})$ gives \eqref{eq:phat-conv}.

\begin{remark}[Fixed-$L$ scope]
\label{rem:fixed-L-scope}
\Cref{thm:de} is a fixed-$L$ statement: the iteration count $L$ is
fixed before taking $n \to \infty$, and all constants $a_L, c_L, C_L,
K_L, \Delta_{n, L}$ in the bound depend on $L$.  Statements about
the $L \to \infty$ residual erasure rate or threshold behaviour on
finite-$n$ graphs do not follow directly from \Cref{thm:de}; they
require either a double-limit convention (taking $n \to \infty$
first for each $L$, then $L \to \infty$, as the residual-fixed-point
discussion below \Cref{thm:threshold} adopts) or an additional
uniform-in-$L$ concentration argument outside the scope of the
present finite-iteration result.
\end{remark}

\begin{remark}[Conservativeness of the concentration constants]
\label{rem:conservative-constants}
The constant $a_L$ in \eqref{eq:bit-tail} produced by the
bounded-differences method is highly conservative.  Tracking the
Lipschitz constants through the value-conditioned recursion gives
$a_L = 1 / (C\, D_{\max}^{4L+7})$ (\Cref{app:de-proof}): the
\emph{rate} it certifies is the stated $O(1/\sqrt n)$, but the sample
sizes for which the tail \eqref{eq:bit-tail} is numerically meaningful
grow rapidly in $L$ and $D_{\max}$.  This reflects the worst-case
per-ingredient bounded-differences accounting rather than the typical
fluctuations: the finite-graph Monte-Carlo overlays of
\Cref{sec:numerical} (\Cref{fig:and-concentration}) concentrate on the
DE prediction at far smaller $n$ than these constants would require.
We make no claim that $a_L$ is order-optimal.
\end{remark}

\begin{remark}[Channel-agnostic concentration step]
\label{rem:channel-agnostic-de}
The McDiarmid bounded-difference argument in the proof depends only
on the bounded message alphabet and bounded local degree, not on the
channel being erasure-only; it concentrates any fixed-round,
bounded-degree, finite-alphabet local message-passing scheme on a
sparse role-typed graph around its mean.  Deriving the DE recursion is
algorithm- and channel-specific, however: \Cref{thm:de} carries this
out only for the erasure-only logical-forcing decoder.  An
LLR-quantization DE under sum-product would inherit the concentration
step, but its recursion form and fixed-point setup are separate work.
\end{remark}

\begin{remark}[Why the recursion is value-conditioned]
\label{rem:value-cond}
For XOR checks the erasure process is value-symmetric: one may
analyze the all-zero codeword, and \eqref{eq:p-def} collapses to a
single recursion $\pmsg_{\ell, \tau}$.  For AND, OR, implication,
Horn, and other non-symmetric Boolean factors the symmetry is absent,
since a positive and a negative check carry different certificate
information, so the recursion must track $\pmsg_{\ell, \tau}^{(0)}$
and $\pmsg_{\ell, \tau}^{(1)}$ separately.  Averaging under the
role-prior \eqref{eq:prior} recovers a marginal recursion
$\pmsg_{\ell, \tau} = (1 - \Vbias_{\socketrole{\tau}})\, \pmsg_{\ell, \tau}^{(0)}
+ \Vbias_{\socketrole{\tau}}\, \pmsg_{\ell, \tau}^{(1)}$, but this is not
faithful to the dynamics; the value-conditioned recursion is the
primitive object.  Conditioning DE on the true value is familiar from
LDPC analysis over asymmetric memoryless channels
\cite{wang2005asymmetric}; what is specific here is that the asymmetry
comes from the verifier function $f_\theta$, not from the observation
channel, which stays a sound, value-independent erasure channel
(\Cref{sec:related}).
\end{remark}

\subsection{XOR specialization: recovering the LDPC-BEC baseline}
\label{sec:de-xor}

\emph{What collapses in the notation.}  Under XOR the value
superscript $(b)$ disappears entirely: the per-socket erasure
probability $p^{(b)}_{\ell, \tau}$ becomes a single quantity
$p_{\ell, \tau}$ independent of $b\in\{0,1\}$, and the forcing
probability $\varphi^{(b)}_{\ell, \theta, j}$ likewise loses its
value index.  The double-superscripted, template-indexed general
recursion of \Cref{thm:de} then collapses into a familiar
single-state recursion.  In the additional single-role case it
reduces to the textbook scalar recursion of LDPC-BEC density
evolution (Example~\ref{ex:regular-ldpc}).

When every check template is an XOR factor,
\begin{equation}
\label{eq:xor-fn}
   f_\theta(x_1, \ldots, x_d) = x_1 \oplus \cdots \oplus x_d,
\end{equation}
the observed check value determines the target if and only if every
other input is known.  The forcing probability is therefore
value-independent:
\begin{equation}
\label{eq:xor-forcing}
   \forceprob_{\ell, \theta, j}^{\mathrm{XOR}}
   = \prod_{k \neq j} \eta_{r_{\theta, k}, s_\theta}
   \big(1 - \pmsg_{\ell, (\theta, k)}\big),
\end{equation}
and the recursion collapses to a single (non-value-conditioned) state.

\begin{corollary}[XOR / parity-factor density evolution]
\label{cor:xor-de}
For XOR templates, \Cref{thm:de} reduces to the value-independent
recursion
\begin{align}
\label{eq:xor-h}
   \hmsg_{\ell, (\theta, j)}
   &= 1 - \big(1 - \epsC[s_\theta]\big)\,
   \eta_{s_\theta, r_{\theta, j}}
   \prod_{k \neq j} \eta_{r_{\theta, k}, s_\theta}
   \big(1 - \pmsg_{\ell, (\theta, k)}\big), \\
\label{eq:xor-p}
   \pmsg_{\ell+1, \tau}
   &= \epsV[\socketrole{\tau}]\,
   \E\!\left[ \prod_{\tau' \in \Sockets_{\socketrole{\tau}}}
   \hmsg_{\ell, \tau'}^{D_{\tau'}^{(\socketrole{\tau}), \tau, \mathrm{ex}}}
   \right].
\end{align}
In the single-role, noiseless-verifier ($\epsC = 0$),
noiseless-channel ($\eta \equiv 1$) case, this is exactly the standard
LDPC-BEC density-evolution recursion of \cite{luby1998analysis,
richardson2008modern}.
\end{corollary}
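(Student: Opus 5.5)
The plan is to specialize the value-conditioned recursion of \Cref{thm:de} by computing the forcing probability \eqref{eq:forcing-prob} for $f_\theta = x_1\oplus\cdots\oplus x_{d_\theta}$, and then to show by induction that the superscript $(b)$ drops out.

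\textbf{Step 1: evaluate the forcing operator for XOR.} Fix $z\in\{0,1\}$ and candidate sets $M_{-j}$. If some $M_k=\Uset$ with $k\neq j$, then for either target value $b$ we can choose $x_k$ to fix the parity to $z$. Hence $\force_{\theta,j}=\Uset$. If every $M_k=\{x_k\}$, exactly one $b$ satisfies $b\oplus\bigoplus_{k\neq j}x_k=z$. By \Cref{lem:soundness} this $b$ is the true value. So the event in \eqref{eq:forcing-prob} is exactly the event that all inbound messages $k\neq j$ are singletons.

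\textbf{Step 2: induction on $\ell$.} The claim is that $\pmsg_{\ell,\tau}^{(0)}=\pmsg_{\ell,\tau}^{(1)}\eqqcolon\pmsg_{\ell,\tau}$ for all $\tau$. The base case holds by \eqref{eq:pinit}. For the inductive step, assume the claim at round $\ell$. By \eqref{eq:pbar}, the inbound erasure probability at socket $k$ is then $1-\eta_{r_{\theta,k},s_\theta}(1-\pmsg_{\ell,(\theta,k)})$, whatever $X_k$ is. The inbound messages at distinct sockets are independent given the values, so averaging over the independent priors of the $X_k$ yields \eqref{eq:xor-forcing}. This expression does not depend on $b$. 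Substituting it into \eqref{eq:hupdate} gives \eqref{eq:xor-h}, which is therefore $b$-independent. Then \eqref{eq:pupdate} gives \eqref{eq:xor-p} and closes the induction. The terminal prediction \eqref{eq:terminal-de} also loses its $b$ dependence, since $\sum_b\PR(X=b\mid r)=1$.

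\textbf{Step 3: classical reduction.} Take a single role, $\epsC=0$ and $\eta\equiv1$. Then \eqref{eq:xor-h} becomes $1-\prod_{k\neq j}(1-\pmsg_\ell)$. For degree laws in edge perspective, and with a single socket type per template (or after summing over socket types), \eqref{eq:xor-p} becomes $\epsilon\,\lambda(1-\rho(1-p_\ell))$ with $\lambda,\rho$ the edge-perspective generating functions. This is the recursion of \cite{luby1998analysis,richardson2008modern}.

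\textbf{Main obstacle.} The main obstacle is the bookkeeping in Step 3 when there are several templates. The socket-typed state has to be aggregated into one scalar. The per-socket states are identical by the symmetry of a single role, and the size-biased mixture over templates yields $\rho$ via the socket-balance relation \eqref{eq:socket-balance}. I would verify this by writing the probability that a uniformly chosen edge lands on a degree-$d$ check as $\rho_d$. The rest is direct substitution; the value independence of Step 1 needs only soundness and the parity structure, not the prior \eqref{eq:prior}.
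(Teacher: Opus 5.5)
Your Steps 1 and 2 are correct and match the paper's own justification: XOR forces the target iff every other inbound message is a singleton, so the forcing probability is value-free and the value-conditioned recursion collapses. You make this explicit by induction. In fact $\forceprob^{(b)}$ is $b$-independent even without the inductive hypothesis, because the forcing event involves only sockets $k\neq j$, whose values are drawn independently of $X_j$.

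The gap is in Step 3, at the sentence ``the per-socket states are identical by the symmetry of a single role.'' A single role does not give this in the paper's ensemble. The socket-count vector $D^{(r)}$ may have an arbitrary joint law over the socket types $(\theta,j)$, so the excess law $D^{(r),\tau,\mathrm{ex}}$, and hence $\pmsg_{\ell,\tau}$, can depend on $\tau$. Summing over socket types does not repair this, because the recursion is nonlinear in the per-type states. For a concrete failure, take one role and two XOR templates of arities $3$ and $6$. Let half the variables place all three sockets on arity-$3$ socket types and the other half on arity-$6$ socket types; this is admissible with $\alpha=3/4$. The graph splits into a $(3,3)$-regular and a $(3,6)$-regular component, and the DE prediction averages the outputs of two separate scalar recursions. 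By strict convexity it already disagrees at the first iteration with $\epsilon\,\lambda(1-\rho(1-p))$, where $\lambda(x)=x^2$ and the pooled $\rho(x)=(x^2+x^5)/2$.

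There are two ways to repair Step 3.
\begin{itemize}
\item \emph{Add a hypothesis for the irregular case.} Assume that, conditional on its total degree, a variable's sockets receive types i.i.d.\ with probabilities proportional to the check-side socket masses. This is exactly the statement that the single-role ensemble is the standard $(\lambda,\rho)$ configuration model. Under it, size-biasing by $D_\tau$ coincides with size-biasing by total degree, and the excess socket types are again i.i.d. Induction then gives $\pmsg_{\ell,\tau}\equiv\pmsg_\ell$, and \eqref{eq:socket-balance} turns the edge-weighted average of $\hmsg_{\ell,\tau}$ into $1-\rho(1-\pmsg_\ell)$.
\item \emph{Restrict to the regular case, as the paper does in \Cref{ex:regular-ldpc}.} For the single-template $(d_v,d_c)$-regular ensemble, all $\hmsg_{\ell,\tau'}$ coincide by induction, and the total excess degree is deterministically $d_v-1$. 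Hence $\pmsg_{\ell+1,\tau}=\epsV\,\hmsg_\ell^{d_v-1}$ for every $\tau$, with no assumption on how a variable's sockets are spread over template positions.
\end{itemize}
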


\begin{example}[Regular LDPC reduction]
\label{ex:regular-ldpc}
In a single-role $(d_v, d_c)$-regular ensemble with $\epsC = 0$ and
$\eta \equiv 1$, \eqref{eq:xor-h}--\eqref{eq:xor-p} become
\[
   \hmsg_\ell = 1 - (1 - \pmsg_\ell)^{d_c - 1},
   \qquad
   \pmsg_{\ell+1} = \epsV\, \hmsg_\ell^{d_v - 1},
\]
which is the textbook BEC edge-erasure recursion
\cite{richardson2008modern}.
\end{example}

\subsection{AND specialization: certificates and verifier asymmetry}
\label{sec:de-and}

\emph{What does \emph{not} collapse in the notation.}  In contrast to
the XOR case, under AND the value superscript $(b)$ stays.  The
recursion tracks two separate per-socket erasure probabilities,
$p^{(1)}_{\ell, \tau}$ for sockets at variables holding the true
value $1$ and $p^{(0)}_{\ell, \tau}$ for those at variables holding
$0$, and these obey two different update rules.  Behind the
heavier notation is the statement that a passing AND test and a
failing AND test carry quantitatively different information: that
content is exactly what \eqref{eq:and-positive}--\eqref{eq:and-negative}
encode below.

When every check template is an AND factor,
\begin{equation}
\label{eq:and-fn}
   f_\theta(x_1, \ldots, x_d) = x_1 \wedge \cdots \wedge x_d,
\end{equation}
the recursion stays value-conditioned and exposes the
positive--negative certificate asymmetry of realistic verifiers.

\begin{proposition}[AND density evolution]
\label{prop:and-de}
For AND templates with target socket $j$, the forcing probabilities in
\eqref{eq:forcing-prob} are
\begin{align}
\label{eq:and-positive}
   \forceprob_{\ell, \theta, j}^{(1)}
   &= \prod_{k \neq j} \Vbias_{r_{\theta, k}}, \\
\label{eq:and-negative}
   \forceprob_{\ell, \theta, j}^{(0)}
   &= \prod_{k \neq j} \Vbias_{r_{\theta, k}}\,
   \eta_{r_{\theta, k}, s_\theta}\,
   \big(1 - \pmsg_{\ell, (\theta, k)}^{(1)}\big).
\end{align}
The received check-to-variable erasure $\hmsg_{\ell, (\theta, j)}^{(b)}$
is then given by \eqref{eq:hupdate} with \eqref{eq:and-positive} for
$b = 1$ and \eqref{eq:and-negative} for $b = 0$, and the variable
update is \eqref{eq:pupdate}.
\end{proposition}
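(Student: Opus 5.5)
The plan is to evaluate the forcing probability \eqref{eq:forcing-prob} directly for $f_\theta = x_1\wedge\cdots\wedge x_{d}$. On the computation tree, the other-socket values $X_k$ ($k\neq j$) are independent with laws $\Vbias_{r_{\theta,k}}$ under the product prior \eqref{eq:prior}. Conditional on $X_k$, the inbound messages $M_k$ are independent, equal to $\{X_k\}$ with probability $1-\pbar^{(X_k)}_{\ell,\theta,k}$ and to $\Uset$ otherwise. This independence is the extrinsic/tree independence already used to justify \eqref{eq:pbar}, and by \Cref{lem:soundness} no other message values occur. The check output is $z = f_\theta(X)$, and we split into the two target values.

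\emph{Case $b=1$.} If $X_j=1$, then $z = \bigwedge_{k\neq j} X_k$. We claim $\force_{\theta,j}(z;M_{-j})=\{1\}$ exactly when $z=1$. If $z=1$, no choice of the other inputs makes $f_\theta(\dots,0,\dots)=1$, so $0$ is excluded. Meanwhile $1$ survives by taking $x_k=X_k\in M_k$, so the operator returns $\{1\}$ whatever $M_{-j}$ is. If $z=0$, the value $1$ survives, again using the true inputs by soundness. The value $0$ also survives, since $f_\theta(\dots,0,\dots)=0$ for any filling. Hence the output is $\Uset$. So $\forceprob^{(1)} = \PR\{X_k=1\ \forall k\neq j\} = \prod_{k\neq j}\Vbias_{r_{\theta,k}}$, which gives \eqref{eq:and-positive}. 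The answer does not depend on $\ell$ or on the messages.

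\emph{Case $b=0$.} If $X_j=0$, then $z=0$. The value $0$ always survives. The value $1$ survives iff some filling $x_k\in M_k$ has $\bigwedge_{k\neq j}x_k=0$, that is, iff some $M_k$ contains $0$. Therefore forcing to $\{0\}$ occurs iff $M_k=\{1\}$ for every $k\neq j$. By soundness this requires $X_k=1$ together with a delivered, non-erased message. Conditioning on $X_k=1$, the latter has probability $1-\pbar^{(1)}_{\ell,\theta,k}=\eta_{r_{\theta,k},s_\theta}(1-\pmsg^{(1)}_{\ell,(\theta,k)})$. Independence across $k$ then gives the product \eqref{eq:and-negative}.

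Finally, the statements about $\hmsg$ and the variable update require no new argument. They are just \eqref{eq:hupdate} and \eqref{eq:pupdate} of \Cref{thm:de} with these $\forceprob$ values substituted. The only delicate point is justifying the joint independence of the pairs $(X_k,M_k)$ across sockets $k\neq j$, conditional on $X_j$. I would obtain this from the tree limit (\Cref{lem:tree-like}) combined with the product prior, as the text after \eqref{eq:prior} anticipates. Once that independence is in place, the rest is a direct case check of the forcing operator.
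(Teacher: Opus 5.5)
Your proposal is correct and follows the paper's proof: both run a direct case analysis of $\force_{\theta,j}$ on $X_j=1$ (the positive certificate fires exactly when every other input is $1$, whatever the messages are) and on $X_j=0$ (the negative certificate fires exactly when every other received message is the singleton $\{1\}$), and then multiply over sockets. You are slightly more explicit than the paper in checking that $X_j=1$, $z=0$ yields $\Uset$; also, the cross-socket independence you flag as delicate is already built into the definition \eqref{eq:forcing-prob}, so the tree-limit argument is needed only for \Cref{thm:de} and not for this proposition.
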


\begin{proof}
Condition on $X_j = 1$.  For an AND template, the verifier output is
$1$ iff every other input is also $1$, and only then.  When
the output is $1$, no assignment with $x_j = 0$ can satisfy the check
because $\one\{0 \cdot x_2 \cdots x_d = 1\} = 0$.  The target feasible
set $\force_{\theta, j}(1; M_{-j})$ is therefore $\{1\}$ regardless of
whether the other input messages were singletons or unresolved.
The probability that $X_j = 1$ and every other true value is $1$ is
$\Vbias_{r_{\theta, j}}\, \prod_{k \neq j} \Vbias_{r_{\theta, k}}$;
conditioning on $X_j = 1$ removes the leading factor and gives
\eqref{eq:and-positive}.

Now condition on $X_j = 0$.  The AND output is then $0$.  The target
is forced to $0$ iff the assignment $x_j = 1$ is infeasible given
$M_{-j}$; that is, every other input message is the singleton $\{1\}$
(otherwise some other variable could be $0$, making the AND output
$0$ regardless of $x_j$).  For each $k \neq j$, the probability that
the true value $X_k = 1$ and the check receives the singleton
$\{1\}$ is $\Vbias_{r_{\theta, k}}\, \eta_{r_{\theta, k}, s_\theta}\,
(1 - \pmsg_{\ell, (\theta, k)}^{(1)})$.  Multiplying over $k \neq j$
gives \eqref{eq:and-negative}.  Verifier erasure and return-channel
erasure then enter through \eqref{eq:hupdate}.
\end{proof}

\begin{remark}[Practical reading of the AND rule]
\label{rem:and-practical}
A passing local test, in this framework, is a \emph{strong} certificate:
all pieces required by the test must be valid, and the test certifies
all of them at once.  A failing local test is a \emph{weaker}
certificate: it identifies that some piece is invalid, but to pin
which one, the remaining pieces must already be known valid.  This
asymmetry is absent from XOR and is one of the structural reasons
Boolean verifier nodes are a better abstraction for agent systems
than parity-only LDPC analysis.  In the running Hilbert example
(\Cref{tab:hilbert-mapping}), this is exactly what a Lean kernel
does: a successful kernel call certifies the joint validity of every
proposed lemma in the call's scope, while a kernel failure says only
that something in the scope was off.
\end{remark}

\section{Threshold, Stability, and Non-Interchangeability}\label{sec:threshold}

Let $\bm p_\ell$ denote the vector $(\pmsg_{\ell, \tau}^{(b)})_{\tau, b}
\in [0, 1]^{2|\Sockets|}$ collecting the value-conditioned per-socket
erasure probabilities at iteration $\ell$.  Equations
\eqref{eq:pbar}--\eqref{eq:pupdate} define a continuous map
\begin{equation}
\label{eq:DE-map}
   \bm p_{\ell+1} = \DEmap_\lambda(\bm p_\ell),
\end{equation}
where $\lambda$ collects all parameters, role proportions, value
priors, degree laws, template proportions, erasure probabilities, and
channel fidelities.

\subsection{Residual fixed point and target-reliability threshold}

\rhead{What this theorem says.}  The DE iteration is monotone
from its natural initialization $\bm p_0$, so it converges to a
trajectory-selected residual fixed point
$\bm p_\infty(\lambda) \defeq \lim_{\ell \to \infty}
\DEmap_\lambda^\ell(\bm p_0)$ (other fixed points may exist but are
unreachable from $\bm p_0$; no uniqueness claim).  Along any
one-parameter worsening path, the residual $\Pdee^{(\infty)}(\lambda)$
is monotone in the path parameter and crosses any target reliability
$\delta$ at a well-defined threshold $a^*_\delta$.  The convergence
and threshold arguments use the standard monotone-density-evolution
technique; what is specific to this model is the state space on which
it runs, the value-conditioned socket-typed erasure vector of
\Cref{thm:de}, the operational reliability paths along which the
threshold is read (variable abstention, verifier erasure,
reasoning-channel loss), and the two path-monotonicity conditions of
part~(b), which hold automatically only for single-coordinate
worsening paths.

\begin{theorem}[Residual fixed point and target threshold]
\label{thm:threshold}
Consider the density-evolution iteration \eqref{eq:DE-map} with initial
condition \eqref{eq:pinit}.
\begin{enumerate}[label=(\alph*),leftmargin=2em]
\item \emph{Monotone convergence.}  $\DEmap_\lambda$ is continuous and
coordinatewise monotone increasing in its argument.  The sequence
$\{\bm p_\ell\}_{\ell \ge 0}$ is coordinatewise nonincreasing and
converges to a fixed point $\bm p_\infty(\lambda)$.  The residual
bit-erasure rate
\[
   \Pdee^{(\infty)}(\lambda) \defeq \lim_{L \to \infty} \Pdee^{(L)}(\lambda)
\]
is well-defined.
\item \emph{Target-reliability threshold.}  Let $a \in I = [a_{\min},
a_{\max}]$ and let $a \mapsto \lambda(a)$ be a one-parameter
worsening path along which (i) the initial condition
$\bm p_0(\lambda(a))$ is coordinatewise nondecreasing in $a$, and
(ii) $\DEmap_{\lambda(a)}(\bm p)$ is coordinatewise nondecreasing
in $a$ for every $\bm p$ (e.g.\ $a$ is a single $\epsV[r]$,
$\epsC[s]$, or $1 - \eta_{r,s}$ entry, in which case both
conditions hold automatically).  Then $\Pdee^{(\infty)}(\lambda(a))$
is nondecreasing in $a$.  For each target residual level
$\delta \in [0, 1]$, set $S_\delta \defeq \{a \in I :
\Pdee^{(\infty)}(\lambda(a)) \le \delta\}$ and
\[
   a_\delta^*(\lambda) \defeq
   \begin{cases}
     \sup S_\delta, & S_\delta \neq \varnothing, \\
     a_{\min}^{-}, & S_\delta = \varnothing
       \text{ (target unreachable on the path)}.
   \end{cases}
\]
\end{enumerate}
\end{theorem}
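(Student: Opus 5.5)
The plan is to reduce everything to one structural fact: the check-side forcing probability $\forceprob^{(b)}_{\ell,\theta,j}$ is coordinatewise nonincreasing in the inbound erasure probabilities. From that fact the rest follows by the standard monotone-DE induction.

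\emph{Monotonicity of forcing.} Fix $(\theta,j)$ and $b$. I would realize the inbound messages by a coupling. Draw $X_k$ from the role prior and independent uniforms $U_k$, and set $M_k=\{X_k\}$ if $U_k>\pbar^{(X_k)}_{\ell,\theta,k}$, and $M_k=\Uset$ otherwise. The forcing operator \eqref{eq:forcing} is antitone in each candidate set: enlarging some $M_k$ from a singleton to $\Uset$ only enlarges the existential set of witnesses, so $\force_{\theta,j}$ can only grow. Hence the event $\{\force_{\theta,j}=\{b\}\}$ can only shrink. Raising any $\pbar^{(x)}_{\ell,\theta,k}$ turns some singletons into $\Uset$ pointwise in the coupling, so $\forceprob^{(b)}_{\ell,\theta,j}$ is nonincreasing in every $\pbar$. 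By \eqref{eq:pbar}, each $\pbar$ is affine and nondecreasing in $\bm p$. Therefore $\hmsg$ in \eqref{eq:hupdate} is nondecreasing in $\bm p$, and \eqref{eq:pupdate} is nondecreasing in $\bm p$ as well, since it is an expectation of products of nonnegative powers of numbers in $[0,1]$. Continuity holds because, with bounded degrees and finitely many templates, $\DEmap_\lambda$ is a polynomial in the finitely many entries of $\bm p$. This settles the first claim of (a), and I expect it to be the main (though short) obstacle: it is the only place the general Boolean $f_\theta$ enters, and the coupling must be set up value-conditionally so that $X_k$ is held fixed while the erasure indicator changes.

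\emph{Convergence.} Since $\hmsg\le1$, \eqref{eq:pupdate} gives $\bm p_1\le \epsV=\bm p_0$ coordinatewise. Applying the monotone map $\DEmap_\lambda$ repeatedly gives $\bm p_{\ell+1}\le\bm p_\ell$ by induction. The sequence is bounded below by $0$, so it converges to some $\bm p_\infty$, and continuity of $\DEmap_\lambda$ makes $\bm p_\infty$ a fixed point. The terminal expression \eqref{eq:terminal-de} is a continuous function of $\hmsg_L$, which in turn is continuous in $\bm p_L$. Hence $\Pdee^{(L)}$ converges; it is in fact nonincreasing in $L$. This gives the existence of $\Pdee^{(\infty)}$.

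\emph{Part (b).} For $a\le a'$ I would show $\bm p_\ell(a)\le\bm p_\ell(a')$ by induction on $\ell$. The base case is hypothesis (i). For the inductive step,
$\bm p_{\ell+1}(a)=\DEmap_{\lambda(a)}(\bm p_\ell(a))\le\DEmap_{\lambda(a)}(\bm p_\ell(a'))\le\DEmap_{\lambda(a')}(\bm p_\ell(a'))$,
where the first inequality uses monotonicity in $\bm p$ and the second uses (ii). Passing to the limit gives $\bm p_\infty(a)\le\bm p_\infty(a')$.

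To carry this order to $\Pdee^{(\infty)}$, the terminal map must also be monotone along the path. This requires that $\hmsg$ as a function of $(\bm p,\lambda)$ and the leading $\epsV$ factor be nondecreasing in $a$. For the single-coordinate paths listed in the theorem I would verify this directly: $\epsV[r]$ enters the leading factor; $\epsC[s]$ and $1-\eta$ enter $\hmsg$ monotonically through \eqref{eq:hupdate}, \eqref{eq:pbar}, and the forcing monotonicity above; the priors $\Vbias$ and the degree laws are held fixed. For general paths I would read this monotonicity as part of hypothesis (ii), interpreting the ``DE map'' as including the $\hmsg$ stage, and I would state that interpretation explicitly.

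Finally, because $a\mapsto\Pdee^{(\infty)}(\lambda(a))$ is nondecreasing, $S_\delta$ is an initial segment of $I$. Its supremum is therefore a well-defined threshold, with the convention $a_{\min}^-$ covering the case $S_\delta=\varnothing$.
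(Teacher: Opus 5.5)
Your proposal is correct and follows the paper's proof. Both derive monotonicity of $\DEmap_\lambda$ from inclusion-monotonicity of $\force_{\theta,j}$ (the paper cites \Cref{lem:gamma-monotone}). Your step from ``$\force_{\theta,j}$ grows'' to ``the event $\{\force_{\theta,j}=\{b\}\}$ shrinks'' uses $b\in\force_{\theta,j}$, which your true-value coupling guarantees and which is exactly that lemma's sound-transcript clause. Both then argue $\bm p_1\le\bm p_0$, monotone descent to a fixed point by continuity, and the order-preserving induction along the path.

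You go beyond the paper in noticing that $\Pdee^{(L)}$ depends on $\lambda$ directly, not only through $\bm p_L$: through $\hmsg_L$, the leading $\epsV[r]$, and the mixture weights. So the paper's one-line ``repeated application of clause~(a)'' is incomplete for general paths. Your single-coordinate verification is right, and for general paths the extra hypothesis is genuinely needed. For example, shifting $\pi^V_r$ between two decoupled role/template blocks with identical degree laws leaves $\bm p_0$ and $\DEmap_\lambda$ unchanged, so (i)--(ii) hold, yet it can lower $\Pdee^{(\infty)}$.
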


\begin{remark}[Why the zero-erasure state is treated separately]
\label{rem:zero-fp-caveat}
The reader familiar with classical LDPC-BEC density evolution
\cite{richardson2008modern} may expect a third clause stating that
the zero-erasure state $\bm p =
\zero$ is a fixed point with local stability governed by the spectral
radius $\rho(\DEjac_\lambda(\zero))$.  In the present three-tier
setting, $\bm p = \zero$ is in general not a fixed point: a
check-to-variable message can remain $\Uset$ even when all incoming
variable messages are singletons, because the verifier output may be
erased ($\epsC[s] > 0$) or the return channel may have erased
($\eta_{s, r} < 1$).  Spectral-radius stability of the zero-erasure
state therefore needs an explicit condition.  We separate this fact
into \Cref{cor:zero-stability} below.
\end{remark}

\begin{definition}[Coordinate-forcing template]
\label{def:coord-forcing}
A Boolean verifier template $\theta$ is \emph{coordinate-forcing}
if, for every target socket $j$, every assignment
$(x_k)_{k \neq j} \in \{0,1\}^{d_\theta - 1}$, and the consistent
verifier output $z = f_\theta(x_1, \ldots, x_{d_\theta})$, the
target value $x_j$ is uniquely determined by $z$ together with
$(x_k)_{k \neq j}$.
\end{definition}

XOR/parity is coordinate-forcing: knowing the parity of all inputs
and any $d - 1$ of them determines the remaining input.  AND, OR,
implication, and Horn factors are not coordinate-forcing in
general: e.g., AND with $z = 0$ and one other input already $0$
admits both $x_j = 0$ and $x_j = 1$ as feasible.
In fact the property pins the admissible templates down exactly.
Writing $x_{-j} = (x_k)_{k \neq j}$, forcing at socket $j$ for every
$x_{-j}$ means the map $x_j \mapsto f_\theta(x)$ is a bijection of
$\{0, 1\}$ at each fixed $x_{-j}$, which holds iff $f_\theta$ is affine
in that coordinate, $f_\theta(x) = x_j \oplus g_j(x_{-j})$.  Imposing
this at every socket forces $f_\theta(x) = c \oplus \bigoplus_k x_k$,
a parity check up to a constant.  The coordinate-forcing templates are
therefore exactly the affine (parity-type) checks over
$\mathrm{GF}(2)$; this is the structural reason classical LDPC codes
use parity checks exclusively, since they are the only Boolean
constraints that force every coordinate.
Coordinate-forcing is a sufficient structural condition guaranteeing
that $\bm p = \zero$ is a fixed point of $\DEmap_\lambda$ at
noiseless verifier and channel.  Under full-support value priors
$\Vbias_r \in (0, 1)$ and nondegenerate socket use, failure of
coordinate-forcing produces a positive residual offset at
$\bm p = \zero$ even with noiseless verifier and channel
(\Cref{rem:coord-forcing-needed}); degenerate cases such as
$\Vbias_r = 1$ for AND can still attain the zero state, but they sit
outside the regime our threshold analysis targets.

\begin{corollary}[Local stability of the zero-erasure fixed point,
coordinate-forcing case]
\label{cor:zero-stability}
Suppose every check template is coordinate-forcing
(\Cref{def:coord-forcing}), $\epsC[s] = 0$ for all
$s \in \RolesC$, and $\eta_{r, s} = 1$ for all role pairs.  Then
$\bm p = \zero$ is a fixed point of $\DEmap_\lambda$.  For finite
template/role/degree laws, $\DEmap_\lambda$ is polynomial, hence
$C^1$ on $[0, 1]^{2|\Sockets|}$.  Let $\bm p^*$ be any fixed point of $\DEmap_\lambda$ at which
$\DEmap_\lambda$ is $C^1$ on a neighborhood of $\bm p^*$ and
$\rho\big(\DEjac_\lambda(\bm p^*)\big) < 1$.  Then the iteration is
locally exponentially attracted to $\bm p^*$: for every
$c \in (\rho(\DEjac_\lambda(\bm p^*)), 1)$ there exists a
neighborhood $U$ of $\bm p^*$ such that
\[
   \| \bm p_\ell - \bm p^* \|
   \;\le\; K c^\ell\, \| \bm p_0 - \bm p^* \|
   \qquad \text{for all } \bm p_0 \in U,
\]
for some constant $K = K(c, \lambda)$ in any fixed norm; continuity
of $\DEmap_\lambda, \DEmap_\lambda^2, \ldots, \DEmap_\lambda^{k_0-1}$
controls the finitely many intermediate iterates.  In particular,
under the coordinate-forcing, noiseless-verifier, noiseless-channel
hypotheses of the corollary, the zero-erasure fixed point
$\bm p^* = \zero$ has $\| \bm p_\ell \| \le K c^\ell \| \bm p_0 \|$
whenever $\rho(\DEjac_\lambda(\zero)) < 1$.
\end{corollary}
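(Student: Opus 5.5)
The proof has three parts: $\bm p = \zero$ is a fixed point, $\DEmap_\lambda$ is polynomial, and the spectral-radius condition gives local exponential attraction. The third part is a generic fact about $C^1$ maps and does not use the particular recursion.

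\emph{Step 1: fixed point.} Evaluate \eqref{eq:pbar}--\eqref{eq:pupdate} at $\bm p_\ell = \zero$ with $\epsC[s]=0$ and $\eta \equiv 1$. By \eqref{eq:pbar}, $\pbar_{\ell,\theta,k}^{(x_k)} = 0$, so every inbound message $M_k$ is the true singleton $\{X_k\}$ almost surely. Coordinate-forcing (\Cref{def:coord-forcing}) says that $z = f_\theta(X)$ together with $(X_k)_{k\neq j}$ determines $X_j$ uniquely. Hence $\force_{\theta,j}(z; M_{-j}) = \{X_j\}$, and $\forceprob^{(b)}_{\ell,\theta,j} = 1$ for both values of $b$. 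Then \eqref{eq:hupdate} gives $\hmsg^{(b)}_{\ell,\tau} = 0$ for every $\tau$.

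In \eqref{eq:pupdate} the product $\prod_{\tau'} 0^{D^{\mathrm{ex}}_{\tau'}}$ vanishes whenever the excess degree is at least $1$. The minimum-degree assumption $\sum_\tau D_\tau^{(r)} \ge 2$ guarantees this: after deleting the arrival socket, at least one socket remains. So $\DEmap_\lambda(\zero) = \zero$. The minimum-degree hypothesis is needed exactly here; without it a degree-one variable would leave the residual $\epsV$.

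\emph{Step 2: polynomiality.} Each $\forceprob^{(b)}_{\ell,\theta,j}$ is a finite sum over the $2^{d_\theta-1}$ joint value assignments and the $2^{d_\theta-1}$ singleton/erasure patterns. Each summand is a product of priors $\Vbias$ and factors $\pbar$ or $1-\pbar$, and each of these is affine in $\bm p$. So $\hmsg$ is polynomial in $\bm p$. The expectation in \eqref{eq:pupdate} over a finitely supported degree law is a finite sum of monomials in the $\hmsg$. Therefore $\DEmap_\lambda$ is polynomial, hence $C^1$ (indeed $C^\infty$) on a neighborhood of the cube.

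\emph{Step 3: local attraction at a general $\bm p^*$.} Write $J = \DEjac_\lambda(\bm p^*)$ and fix $c \in (\rho(J),1)$. Choose $c' \in (\rho(J), c)$. Gelfand's formula $\rho(J) = \lim_k \|J^k\|^{1/k}$ gives $k_0$ with $\|J^{k_0}\| \le c'^{k_0}$.

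I would work with $G = \DEmap_\lambda^{k_0}$. It is $C^1$ near $\bm p^*$ and fixes $\bm p^*$, and by the chain rule $DG(\bm p^*) = J^{k_0}$. By continuity of $DG$ and the mean-value inequality, there is a ball $B$ around $\bm p^*$ on which
\[
\|G(\bm p) - \bm p^*\| \le c^{k_0}\|\bm p - \bm p^*\|.
\]
In particular $G$ maps $B$ into itself.

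Next, shrink to a smaller ball $U \subseteq B$ so that the intermediate iterates stay in $B$. Continuity of $\DEmap_\lambda^i$ for $i < k_0$ allows this, and the Lipschitz bound $\|\DEmap_\lambda^i(\bm p) - \bm p^*\| \le M\|\bm p - \bm p^*\|$ holds on $U$. Writing $\ell = mk_0 + i$, this gives
\[
\|\bm p_\ell - \bm p^*\| \le M c^{mk_0}\|\bm p_0 - \bm p^*\| \le (M c^{-k_0})\, c^{\ell}\|\bm p_0 - \bm p^*\|,
\]
so $K = M c^{-k_0}$.

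The main obstacle is this step: $\rho(J) < 1$ does not make $\DEmap_\lambda$ a contraction in the given norm. The $k_0$-step (Gelfand) argument handles that, and the intermediate-iterate bookkeeping is the only care needed. One could alternatively use an adapted norm with $\|J\|_* < c$, but the Gelfand route matches the statement's wording. Specializing to $\bm p^* = \zero$ gives the final claim.
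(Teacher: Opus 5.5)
Your proof is correct and follows the paper's route: evaluate the recursion at $\bm p = \zero$ so that coordinate-forcing gives forcing probability one and $\hmsg = 0$, then invoke the spectral-radius criterion through a contracting iterate $\DEmap_\lambda^{k_0}$. You also fill in details the paper leaves implicit. You make explicit the use of the minimum-degree assumption so that the excess-degree product vanishes. You also control the intermediate iterates with a local Lipschitz bound, which is what the linear estimate actually needs, rather than mere continuity.
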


\begin{remark}[Why coordinate-forcing is needed]
\label{rem:coord-forcing-needed}
Without coordinate-forcing, $\bm p = \zero$ is in general
not a fixed point even at noiseless verifier and channel.
For AND, \Cref{prop:and-de} gives forcing probabilities
$\forceprob^{(1)} = \prod_{k \neq j} \Vbias_{r_{\theta, k}}$ and
$\forceprob^{(0)} = \prod_{k \neq j} \Vbias_{r_{\theta, k}}\,
\eta_{r_{\theta, k}, s_\theta}\,(1 - \pmsg^{(1)}_{\ell, (\theta, k)})$
which equal $1$ at $\bm p = \zero$ only when every $\Vbias = 1$;
generically $\hmsg^{(b)}\big|_{\bm p = \zero} > 0$, so $\bm p = \zero$
is not a fixed point.  The corollary therefore restricts to
coordinate-forcing primitives such as XOR.
\end{remark}

\begin{remark}[XOR/BEC specialization]
\label{rem:xor-bec-stability}
\Cref{cor:zero-stability} applies in particular to the XOR/BEC
specialization (\Cref{cor:xor-de}) with noiseless verifier and
noiseless channel.  In that regime the spectral-radius condition
$\rho(\DEjac_\lambda(\zero)) < 1$ recovers the classical LDPC-BEC
\emph{local stability} condition at $\bm p = \zero$ (see, e.g.,
\cite{richardson2008modern}).  We do not claim this recovers the
full LDPC-BEC threshold: for $d_v \geq 3$ regular ensembles the
threshold is determined by a global tangency of $\DEmap_\lambda$,
not by the linearization at $\bm p = \zero$, and for $d_v = 2$
the threshold coincides with the local-stability condition.
\end{remark}

\begin{proof}
\emph{Monotonicity (a).}  $\DEmap_\lambda$ is built from finitely many
sums and products of continuous functions, hence continuous.  For
monotonicity, observe that increasing any incoming variable-to-check
erasure probability $\pmsg_{\ell, \tau}^{(b)}$ replaces some singleton
inbound messages by $\Uset$.  By
\Cref{lem:gamma-monotone} (\Cref{app:monotonicity}), this enlarges the
feasible set $\force_{\theta, j}$, which can only convert a singleton
output to $\Uset$, never the reverse.  Hence outgoing
check-to-variable erasure probabilities cannot decrease.  The variable
update \eqref{eq:pupdate} is also monotone in its arguments.
Composing, $\DEmap_\lambda$ is coordinatewise monotone increasing.

The initial vector is $\bm p_0 = (\epsV[\socketrole{\tau}])_{\tau, b}$.
Since the variable update is multiplied by $\epsV[\socketrole{\tau}]$
in \eqref{eq:pupdate} and the inner expectation is at most $1$, we
have $\pmsg_{1, \tau}^{(b)} \le \epsV[\socketrole{\tau}] = \pmsg_{0,
\tau}^{(b)}$.  By monotonicity, $\bm p_{\ell+1} \le \bm p_\ell$
coordinatewise for all $\ell$.  The sequence is bounded below by zero,
so it converges coordinatewise.  Continuity gives that the limit is a
fixed point.  Monotone convergence in \eqref{eq:terminal-de} gives the
residual bit-erasure prediction.

\emph{Target threshold (b).}  Along a monotone reliability path, the
limiting residual is nondecreasing in $a$ by repeated application of
clause (a).  Hence the sublevel set $\{a : \Pdee^{(\infty)}(\lambda(a))
\le \delta\}$ is an interval, and the displayed supremum is
well-defined.
\end{proof}

\begin{proof}[Proof of \Cref{cor:zero-stability}]
Under $\epsC[s] = 0$, $\eta_{r,s} = 1$ everywhere, and every check
template coordinate-forcing (\Cref{def:coord-forcing}), the
check-to-variable update \eqref{eq:check-update} produces a singleton
output whenever the forcing operator does, which at $\bm p_\ell =
\zero$ occurs almost surely: every other incoming variable message
is the (true-value) singleton, and the coordinate-forcing condition
asserts that the target value is then uniquely determined by the
verifier output and the other inputs.  Hence
$\hmsg_{\ell, \tau}^{(b)} = 0$ for all $\tau, b$ at $\bm p_\ell = \zero$,
so $\bm p_{\ell+1} = \DEmap_\lambda(\zero) = \zero$ and $\zero$ is a
fixed point.

For local stability, take any fixed point $\bm p^*$ at which
$\DEmap_\lambda$ is differentiable with $\rho(\DEjac_\lambda(\bm p^*))
< 1$ (in particular $\bm p^* = \zero$ in the noiseless regime).  This
is the standard spectral-radius stability criterion: a fixed point of
a $C^1$ map whose Jacobian has spectral radius below $1$ is locally
geometrically attracting, because some iterate $\DEmap_\lambda^{k_0}$
is then a contraction on a neighborhood of $\bm p^*$.  Density
evolution started anywhere near $\bm p^*$ therefore converges to it at
a geometric rate.
\end{proof}

\subsection{The Non-Interchangeability Proposition}

The statement of \Cref{thm:threshold}~(b) hides a fact that turns out
to be one of the main structural distinctions of the framework: the
three erasure tiers $\epsV[r], \epsC[s], \eta_{r, s}$ enter the DE
Jacobian in functionally different positions and cannot be collapsed
into a single effective scalar.

\begin{proposition}[No scalar effective-noise reduction of the
DE map]
\label{prop:noninterchange}
There is no smooth change of variables $\widetilde\epsilon =
\Psi(\epsV, \epsC, \eta_{V\to C}, \eta_{C\to V})$ such that the
density-evolution map $\DEmap_\lambda$ depends on the four tier
parameters only through $\widetilde\epsilon$, even on a single-role
single-template AND ensemble.  Quantitatively, the
\emph{parameter Jacobian}
\[
   D_{\mathrm{par}}\big[\DEmap_\lambda(\zero)\big]
   \defeq \Big[\partial_{\epsV}\DEmap_\lambda,\,
   \partial_{\epsC}\DEmap_\lambda,\,
   \partial_{\eta_{V\to C}}\DEmap_\lambda,\,
   \partial_{\eta_{C\to V}}\DEmap_\lambda\Big]
   \bigg|_{\bm p = \zero}
\]
(distinct from the state Jacobian $\DEjac_\lambda(\bm p)$ used in
stability analysis) has rank at least two on a generic open subset
of parameter space.  The stronger rank-$\ge 3$ separation of the
three operational tiers ($\epsC$ versus $\eta_{C \to V}$), under
nondegenerate role structure, is established in
\Cref{prop:noninterchange-roles}.
\end{proposition}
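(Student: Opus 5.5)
The plan is to compute the parameter Jacobian explicitly on the simplest ensemble the statement allows, exhibit a nonvanishing $2\times 2$ minor, and then use the chain rule to rule out a scalar reduction. Concretely, I would take a single variable role, a single AND template of arity $d_c \ge 2$, and regular variable degree $d_v \ge 2$, with prior $\Vbias \in (0,1)$. The DE state is then the pair $(\pmsg^{(0)}, \pmsg^{(1)})$, and $\DEmap_\lambda$ has exactly two output coordinates. By \Cref{prop:and-de}, evaluating at $\bm p = \zero$ gives
\[
\forceprob^{(1)} = \Vbias^{d_c-1}, \qquad
\forceprob^{(0)} = (\Vbias\,\eta_{V\to C})^{d_c-1}.
\]
Substituting into \eqref{eq:hupdate} yields
\[
\hmsg^{(b)} = 1 - (1-\epsC)\,\eta_{C\to V}\,\forceprob^{(b)},
\]
and substituting into \eqref{eq:pupdate} yields
\[
\DEmap^{(b)}_\lambda(\zero) = \epsV\,\big(\hmsg^{(b)}\big)^{d_v-1}.
\]
All of these are explicit polynomials in the four tier parameters and $\Vbias$.

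Next I would write out the $2\times 4$ matrix $D_{\mathrm{par}}[\DEmap_\lambda(\zero)]$ and select two columns. The key structural observation is that $\eta_{V\to C}$ enters only the negative-certificate coordinate $b=0$, because the positive AND certificate \eqref{eq:and-positive} does not depend on inbound messages. Hence the $\eta_{V\to C}$ column has the form $(c_0, 0)^{\top}$ with
\[
c_0 = -\epsV (d_v-1)\big(\hmsg^{(0)}\big)^{d_v-2}(1-\epsC)\,\eta_{C\to V}\,(d_c-1)\,\Vbias^{d_c-1}\eta_{V\to C}^{d_c-2}.
\]
The $\epsV$ column is $\big((\hmsg^{(0)})^{d_v-1}, (\hmsg^{(1)})^{d_v-1}\big)^{\top}$. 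The minor formed by these two columns therefore equals $-c_0\,(\hmsg^{(1)})^{d_v-1}$. This is nonzero whenever the following all hold: $\epsV > 0$, $\epsC < 1$, $\eta_{C\to V} > 0$, $\eta_{V\to C} > 0$, $\Vbias \in (0,1)$, and $(1-\epsC)\eta_{C\to V}\Vbias^{d_c-1} < 1$. The last condition ensures $\hmsg^{(1)} > 0$, and it also gives $\hmsg^{(0)} > 0$, which is needed when $d_v > 2$. Since the minor is a polynomial that is not identically zero, its nonvanishing set is open and dense in parameter space. This is the ``generic open subset'' on which the rank is at least two.

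Finally, I would derive the no-reduction claim from the rank bound. Suppose $\DEmap_\lambda(\zero) = G(\Psi(\epsV,\epsC,\eta_{V\to C},\eta_{C\to V}))$ with $\Psi$ and $G$ smooth. The chain rule then gives $D_{\mathrm{par}} = G'(\widetilde\epsilon)\,\nabla\Psi$, an outer product of two vectors, which has rank at most one everywhere. This contradicts rank two on a nonempty open set. Because the contradiction already appears at the single point $\bm p = \zero$, it holds for the full map $\DEmap_\lambda$ as well.

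The step I expect to be delicate is the formalization of ``depends only through $\widetilde\epsilon$.'' If the reduced map $G$ is only assumed continuous, the chain rule is unavailable. My first remedy would be to require $G$ to be $C^1$, matching the ``smooth change of variables'' in the statement. As a fallback, I would argue via level sets: a rank-two smooth map cannot be constant on the three-dimensional fibres of a submersion $\Psi$, by the constant rank theorem near a point where $\nabla\Psi \ne 0$. The remaining bookkeeping is to verify that the sign and degree conditions ($d_c, d_v \ge 2$) keep every exponent nonnegative, which is routine.
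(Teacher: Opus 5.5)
Your proposal is correct and follows the paper's proof essentially step for step. You use the same single-role single-template AND slice, the same pair of columns ($\partial_{\epsV}$ and $\partial_{\eta_{V\to C}}$), the same observation that $\eta_{V\to C}$ enters only the $b=0$ branch, and the same rank-at-most-one chain-rule contradiction. Your one addition is the level-set fallback, which covers a reduced map that is only continuous in $\widetilde\epsilon$. This is a small but real tightening, since the paper tacitly assumes differentiability in $\widetilde\epsilon$ when it writes $\partial\DEmap/\partial\widetilde\epsilon \cdot \nabla\Psi$.
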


\begin{proof}
We exhibit the rank-$\ge 2$ statement on a single-role
single-template AND ensemble with separated channel directions; the
general case follows by including this submodel as a slice of the
parameter space.

\emph{Setup.}  Take $|\RolesV| = |\RolesC| = 1$, one AND template of
arity $d \ge 2$ and excess variable degree $m = d_v - 1 \ge 1$, value
prior $\Vbias \in (0, 1)$, variable-side erasure $\epsV \in [0, 1)$,
verifier-side erasure $\epsC \in [0, 1)$, and the two directional
channel fidelities $\eta_{V \to C}, \eta_{C \to V} \in (0, 1]$ kept
separate (this is the structural distinction that the proof
needs to expose; collapsing $\eta_{V \to C} = \eta_{C \to V}$ from
the start would obscure exactly the rank we want to read off).

\emph{The DE map at the zero-erasure state.}  Combining
\Cref{prop:and-de} with \eqref{eq:hupdate} and \eqref{eq:pupdate} and
evaluating at $\bm p = \zero$ (so $\pmsg^{(1)} = \pmsg^{(0)} = 0$),
\begin{align}
\label{eq:noninterchange-h1}
   \hmsg^{(1)}\big|_{\bm p = \zero}
   &= 1 - (1 - \epsC)\, \eta_{C \to V}\, \Vbias^{d - 1}, \\
\label{eq:noninterchange-h0}
   \hmsg^{(0)}\big|_{\bm p = \zero}
   &= 1 - (1 - \epsC)\, \eta_{C \to V}\, \Vbias^{d - 1}\,
   \eta_{V \to C}^{\,d - 1},
\end{align}
and the value-conditioned variable updates at $\bm p = \zero$ are
\begin{equation}
\label{eq:noninterchange-Phi}
   \DEmap_\lambda^{(b)}(\zero)
   = \epsV \cdot \big(\hmsg^{(b)}\big|_{\bm p = \zero}\big)^m,
   \qquad b \in \{0, 1\}.
\end{equation}

\emph{Two linearly independent parametric directions.}  We compute
the parametric derivatives of \eqref{eq:noninterchange-Phi} at the
zero-erasure state with respect to two of the four parameters
$(\epsV, \epsC, \eta_{V \to C}, \eta_{C \to V})$ and verify that
the resulting two columns are linearly independent.

\emph{Column 1 ($\partial / \partial \epsV$).}  By
\eqref{eq:noninterchange-Phi},
\begin{equation}
\label{eq:noninterchange-col1}
   \partial_{\epsV}\, \DEmap_\lambda^{(b)}(\zero)
   = \big(\hmsg^{(b)}\big|_{\bm p = \zero}\big)^m,
   \qquad b \in \{0, 1\}.
\end{equation}
Both entries are strictly positive on the open set
$\{\epsC \in [0, 1), \,\Vbias \in (0, 1), \,\eta_{V \to C},
\eta_{C \to V} \in (0, 1]\}$ because each $\hmsg^{(b)}|_{\bm p =
\zero} \in (0, 1)$ on that set.

\emph{Column 2 ($\partial / \partial \eta_{V \to C}$).}  Only the
$b = 0$ branch depends on $\eta_{V \to C}$:
\begin{equation}
\label{eq:noninterchange-col2}
   \partial_{\eta_{V \to C}}\, \DEmap_\lambda^{(1)}(\zero) = 0,
   \qquad
   \partial_{\eta_{V \to C}}\, \DEmap_\lambda^{(0)}(\zero)
   = -\, m\, \epsV\, \big(\hmsg^{(0)}\big|_{\bm p = \zero}\big)^{m - 1}\,
   (1 - \epsC)\, \eta_{C \to V}\, \Vbias^{d - 1}\,
   (d - 1)\, \eta_{V \to C}^{\,d - 2}.
\end{equation}
The $b = 0$ entry is strictly nonzero on the open set
$\{\epsV > 0, \,\epsC < 1, \,\Vbias > 0, \,\eta_{V \to C}, \eta_{C \to V}
> 0\}$, while the $b = 1$ entry is identically zero.

\emph{Linear independence.}  Suppose, on a neighborhood of any
parameter point in the above open set, there are scalars
$\alpha, \beta$ (not both zero) with
$\alpha \cdot \eqref{eq:noninterchange-col1} + \beta \cdot
\eqref{eq:noninterchange-col2} = 0$
componentwise.  The $b = 1$ component gives $\alpha (\hmsg^{(1)})^m =
0$, hence $\alpha = 0$ since $\hmsg^{(1)} > 0$.  The $b = 0$ component
then gives $\beta \cdot \partial_{\eta_{V \to C}}
\DEmap_\lambda^{(0)}(\zero) = 0$, hence $\beta = 0$ by
\eqref{eq:noninterchange-col2}.  The two columns are therefore
linearly independent on the open set, and the parameter Jacobian
$D_{\mathrm{par}}\big[\DEmap_\lambda(\zero)\big]$ has rank at least
$2$ on that open subregion.

\emph{Smooth-change-of-variables corollary.}  On any neighborhood
$U$ contained in the above open subregion, suppose
$\DEmap_\lambda(\bm p)$ depended on the four tier parameters only
through some smooth scalar $\widetilde\epsilon = \Psi(\epsV, \epsC,
\eta_{V \to C}, \eta_{C \to V})$.  Then the parameter Jacobian
on $U$ would factor as
$\partial \DEmap / \partial \widetilde\epsilon \cdot \nabla \Psi$,
which has rank at most $1$.  This contradicts the rank-$\ge 2$
statement above; no such $\Psi$ exists on $U$.
\end{proof}

\begin{remark}[What the rank-$\ge 2$ statement does and does not give]
\label{rem:rank-scope}
\Cref{prop:noninterchange} establishes that no smooth scalar
function of $(\epsV, \epsC, \eta_{V\to C}, \eta_{C\to V})$ summarizes
the DE map: the design problem is multi-knob, not single-knob.
This is strictly weaker than three-way mutual independence of
the tiers.  In the single-role single-template slice used in the
proof, $\epsC$ and $\eta_{C\to V}$ enter the
check-to-variable update through the product
$(1 - \epsC)\eta_{C\to V}$ in
\eqref{eq:noninterchange-h1}--\eqref{eq:noninterchange-h0}, so
locally these two knobs are confounded at the level of
$\DEmap_\lambda$; rank-$\ge 2$ is achieved by $\epsV$ versus
$\eta_{V\to C}$, not by all four parameters separately.  A stronger
rank-$\ge 3$ statement that distinguishes verifier-side erasure
from the return channel requires nondegenerate role structure;
\Cref{prop:noninterchange-roles} establishes it with two variable
roles, where $\epsC[s]$ acts as a gate shared across roles while the
return-channel fidelities act on different output blocks.
\end{remark}

\begin{proposition}[Rank-three tier separation under heterogeneous
roles]
\label{prop:noninterchange-roles}
Let one check role $s$ serve two variable roles $r_1, r_2$ through a
single arity-two AND template.  Then on a nonempty open subset of
parameter space the parameter Jacobian
$D_{\mathrm{par}}\big[\DEmap_\lambda(\zero)\big]$ has rank at least
three, with three independent directions given by the variable-side
erasure $\epsV[r_1]$, the verifier-side erasure $\epsC[s]$, and a
return-channel fidelity $\eta_{s, r_1}$.  In particular $\epsC[s]$
and $\eta_{s, r_1}$ are no longer confounded as in the single-role
slice of \Cref{prop:noninterchange}.
\end{proposition}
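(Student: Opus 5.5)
We will prove the statement by an explicit Jacobian computation on the smallest ensemble that realizes it, in the same way as the proof of \Cref{prop:noninterchange}. Take $\RolesV = \{r_1, r_2\}$ and $\RolesC = \{s\}$, with one AND template $\theta$ of arity $2$. Socket $1$ expects role $r_1$ and socket $2$ expects role $r_2$. Role-$r_i$ variables attach only to sockets $(\theta, i)$ with a regular degree $d_i \ge 2$, so the excess degree is $m_i = d_i - 1 \ge 1$. The socket-balance condition \eqref{eq:socket-balance} is met by choosing $\pi^V_{r_i}$ and $\alpha$ appropriately. The state vector then has four coordinates, $\pmsg^{(b)}_{(\theta, i)}$ for $i \in \{1,2\}$ and $b \in \{0,1\}$.

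First we evaluate $\DEmap_\lambda$ at $\bm p = \zero$ using \Cref{prop:and-de}, \eqref{eq:hupdate} and \eqref{eq:pupdate}. Write $K_1 \defeq (1-\epsC[s])\,\eta_{s,r_1}\,\Vbias_{r_2}$. The two received erasures at socket $1$ are
\[
   A_1 \defeq \hmsg^{(1)}_{(\theta,1)}\big|_{\zero} = 1 - K_1,
   \qquad
   A_0 \defeq \hmsg^{(0)}_{(\theta,1)}\big|_{\zero} = 1 - K_1\,\eta_{r_2,s}.
\]
The symmetric formulas at socket $2$ use $K_2 = (1-\epsC[s])\,\eta_{s,r_2}\,\Vbias_{r_1}$ and the factor $\eta_{r_1,s}$. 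The map is $\DEmap^{(b)}_{i}(\zero) = \epsV[r_i]\,\big(\hmsg^{(b)}_{(\theta,i)}|_{\zero}\big)^{m_i}$.

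Next we form the three parameter columns and order the rows as block $1$ (socket $1$, $b = 1, 0$) followed by block $2$.
\begin{itemize}
\item The $\epsV[r_1]$ column is $(A_1^{m_1}, A_0^{m_1}, 0, 0)$.
\item The $\eta_{s,r_1}$ column is $-m_1\epsV[r_1](1-\epsC[s])\Vbias_{r_2}\,\big(A_1^{m_1-1},\ \eta_{r_2,s}A_0^{m_1-1}, 0, 0\big)$. It vanishes on block $2$ because $\eta_{s,r_1}$ does not enter socket $2$.
\item The $\epsC[s]$ column acts as a gate shared by both sockets. Its block-$2$ entries are $m_2\epsV[r_2]\big(\hmsg^{(b)}_{(\theta,2)}\big)^{m_2-1}\eta_{s,r_2}\Vbias_{r_1}$ times $1$ or $\eta_{r_1,s}$.
\end{itemize}
These block-$2$ entries are strictly positive whenever $\epsV[r_2] > 0$, $\Vbias_{r_1} > 0$, $\eta_{s,r_2} > 0$, and $\hmsg^{(b)}_{(\theta,2)} > 0$.

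Because the first two columns vanish on block $2$ and the third does not, rank three reduces to linear independence of the first two columns restricted to block $1$. That is a $2\times 2$ determinant, which up to a nonzero factor equals
\[
   A_1^{m_1}\,\eta_{r_2,s}A_0^{m_1-1} - A_0^{m_1}A_1^{m_1-1}
   = A_1^{m_1-1}A_0^{m_1-1}\,(\eta_{r_2,s}A_1 - A_0)
   = A_1^{m_1-1}A_0^{m_1-1}\,(\eta_{r_2,s} - 1).
\]
This is nonzero exactly when $\eta_{r_2,s} < 1$ and $A_0, A_1 > 0$.

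The step we expect to need the most care is pinning down the open set. The natural choice is all $\epsV[r_i] \in (0,1)$, $\epsC[s] \in (0,1)$, $\Vbias_{r_i} \in (0,1)$, and all channel fidelities in $(0,1)$. On this set $K_1, K_2 < 1$, so every $\hmsg$ is strictly positive, and the prefactors and $1 - \eta_{r_2,s}$ are nonzero. This set is open and nonempty, so the rank is at least three there. As in \Cref{prop:noninterchange}, the rank bound also rules out any smooth reduction to two or fewer effective parameters. The contrast with \Cref{rem:rank-scope} comes precisely from $\epsC[s]$ acting on both output blocks while $\eta_{s,r_1}$ acts only on block $1$.
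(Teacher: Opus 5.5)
Your proposal is correct and follows essentially the same route as the paper. It uses the same two-role, arity-two AND slice and the same observation that $\partial_{\epsV[r_1]}$ and $\partial_{\eta_{s,r_1}}$ vanish on the role-$r_2$ block while the shared gate $\partial_{\epsC[s]}$ does not. It then uses the same $2\times 2$ determinant on the role-$r_1$ block, whose cross terms cancel to leave the factor $\eta_{r_2,s}-1$. Taking every parameter in an open interval $(0,1)$ is a slight improvement on the paper's half-open ranges, since it exhibits a genuinely open set directly.
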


\begin{proof}
Use sockets $j_1, j_2$ of roles $r_1, r_2$; role-$r$ variables have
excess degree $m_r \ge 1$ on their socket type
$\tau_r = (\theta, j_r)$, with priors $\Vbias_r \in (0, 1)$ and
parameters $\epsV[r] \in (0, 1)$, $\epsC[s] \in [0, 1)$, forward
fidelities $\eta_{r, s} \in (0, 1)$, return fidelities
$\eta_{s, r} \in (0, 1]$.  As in the proof of
\Cref{prop:noninterchange}, evaluating \eqref{eq:hupdate} with the AND
forcing probabilities \eqref{eq:and-positive}--\eqref{eq:and-negative}
at $\bm p = \zero$ gives, for a role-$r$ target,
$\hmsg^{(b)}_{\tau_r}\big|_{\zero} = 1 - g_r^{(b)}$ with
$g_r^{(b)} = (1 - \epsC[s])\,\eta_{s, r}\,A_r^{(b)}$, and
$\DEmap^{(b)}_{\tau_r}(\zero) = \epsV[r]\,(1 - g_r^{(b)})^{m_r}$ by
\eqref{eq:pupdate}; here $A_{r_1}^{(1)} = \Vbias_{r_2}$ and
$A_{r_1}^{(0)} = \Vbias_{r_2}\,\eta_{r_2, s}$.

Order the four output coordinates as
$(\tau_1^{(1)}, \tau_1^{(0)}, \tau_2^{(1)}, \tau_2^{(0)})$ and take the
columns $\partial_{\epsV[r_1]}$, $\partial_{\eta_{s, r_1}}$,
$\partial_{\epsC[s]}$.  The first two vanish on the $\tau_2$ block,
whereas
$\partial_{\epsC[s]}\DEmap^{(b)}_{\tau_2}(\zero)
= \epsV[r_2]\, m_2 (1 - g_2^{(b)})^{m_2 - 1}\,\eta_{s, r_2}\,
A_{r_2}^{(b)} > 0$ there, because the gate $(1 - \epsC[s])$ is shared
across both roles; hence any vanishing combination of the three
columns has zero coefficient on $\partial_{\epsC[s]}$.  Restricted to
the $\tau_1$ block, the remaining columns $\partial_{\epsV[r_1]}$ and
$\partial_{\eta_{s, r_1}}$ have $2 \times 2$ determinant
\[
   -\,\epsV[r_1]\, m_1 (1 - \epsC[s])\,
   (1 - g_1^{(1)})^{m_1 - 1}(1 - g_1^{(0)})^{m_1 - 1}\,
   \big(A_{r_1}^{(0)} - A_{r_1}^{(1)}\big),
\]
the $g_1$ cross-terms cancelling exactly as in
\Cref{prop:noninterchange}.  Since
$A_{r_1}^{(0)} - A_{r_1}^{(1)} = \Vbias_{r_2}(\eta_{r_2, s} - 1)
\ne 0$ whenever the forward fidelity $\eta_{r_2, s} < 1$, this
determinant is nonzero on a nonempty open set, the three columns are
linearly independent, and the rank is at least three.  The gate
$(1 - \epsC[s])$ acting across both role blocks is exactly what
breaks the single-role confound $(1 - \epsC)\eta_{C \to V}$ of
\Cref{rem:rank-scope}.
\end{proof}

\begin{remark}[Operational reading]
\label{rem:operational}
\Cref{prop:noninterchange} says that improving a verifier
($\downarrow \epsC$), adding more proposer redundancy
($\downarrow \epsV$), and improving an inter-agent communication
channel ($\uparrow \eta$) are three separate design knobs whose
marginal values cannot be folded into a single ``effective noise''
parameter.  The shadow-price KKT corollary in
\Cref{sec:optimization} (\Cref{thm:optimization}~(e)) makes this
operational: at any architecture optimum, the three tier dual
variables are typically non-degenerate and can be read directly off
the adjoint solution.  Equivalently, in contrapositive form: adding redundant LLM
proposers cannot fix a broken Lean verifier, and neither can
fix a formatting mismatch between them.  The three tiers
attack disjoint failure modes, and any single-knob policy that
treats them as substitutes leaves Pareto-optimal designs on the
table.
\end{remark}

\begin{remark}[Differentiator from MET-LDPC and noisy-MP-decoder]
\label{rem:metldpc}
Multi-edge-type LDPC \cite{richardson2008modern} supports multiple
edge-type kernels under a homogeneous parity update, and the
noisy-message-passing-decoder line
\cite{tarighati2015noisy,dupraz2021noisy} distinguishes channel
noise from decoder-side message noise.  What is new here is the
agent-operational interpretation and joint presence of three
tiers, variable-side abstention, verifier-output erasure, and
directed role-pair artifact erasure, together with value-conditioned
Boolean logical forcing on a sparse role-typed factor graph; neither
line carries a non-interchangeability statement of the kind
formalized by the rank-$\ge 2$ argument above.  This is the
structural distinction recorded in \Cref{tab:related-comparison}.
\end{remark}

\section{Certificate-Stopping Sets}\label{sec:stopping}

Density evolution predicts typical asymptotic behavior.  At finite
length, the decoder can fail because the realized graph and noise
pattern contain a local obstruction.  For XOR on the BEC, the
obstruction is a stopping set \cite{di2002finite}.  For general
Boolean verifier functions, the right obstruction is a
\emph{certificate-stopping set}.

In this section we condition on the realized 5-tuple of all the
randomness, which we call the \emph{transcript}:
\begin{equation}
\label{eq:transcript}
   \mathcal{T} \defeq \big(\Gn,\, \Xstar,\, \{\Aobs_i\}_{i \in \Vset_n},\,
   \{\Zobs_a\}_{a \in \Cset_n},\, \{B_{u \to w}\}\big),
\end{equation}
where $B_{u \to w} \in \{0, 1\}$ is the persistent directed-channel
availability indicator on each ordered edge.  ``Persistent'' means
each directed edge is either available or erased throughout the
deterministic peeling transcript, the clean finite-transcript
analogue of the stochastic density-evolution model.  Once
$\mathcal{T}$ is fixed, no randomness remains (graph, hidden vector,
observations, verifier outputs, and channel gates are all realized),
so the peeling decoder analyzed below is deterministic; this section
has thus moved from the probabilistic density evolution of
\Cref{sec:de} to a conditioned finite-instance analysis.

\rhead{Set-level peeling and the message-passing decoder
coincide.}  Let $S_t \defeq \{i \in \Vset_n : \widehat M_i^{(t)} =
\Uset\}$ be the unresolved variable set after $t$ rounds of the
set-valued message-passing decoder (messages carry candidate sets in
$\Mset = \{\{0\}, \{1\}, \Uset\}$, with $\Uset = \{0, 1\}$ marking an
unresolved variable) of \Cref{sec:model-forcing}.
Under the soundness invariant (\Cref{lem:soundness}), every
non-erased private observation is the true singleton and every
non-erased verifier output equals $T_a$, so the variable update
\eqref{eq:var-update} produces a singleton at $i$ at round $t+1$
exactly when (i) $\Aobs_i$ is non-erased, or (ii) some adjacent
check certifies $i$ relative to $S_t$ in the sense made precise
below.  Thus $S_{t+1}$ is obtained from $S_t$ by removing exactly
those variables.  The set-level peeling process below is the
unresolved-set evolution of the actual decoder, not a separate
abstraction.

\subsection{Definition and main theorem}

\rhead{Intuition.}  A certificate-stopping set is a residual group
of unresolved variables such that every adjacent verifier is unable
to certify any one of them from the information outside the group.
The obstruction is not merely graph-theoretic: it depends on the
Boolean semantics of the verifier templates and on the realized
verifier outputs, so a set $S$ may be certificate-stopping under
one verifier transcript and not under another on the same graph.
This generalizes the classical LDPC-BEC stopping-set notion, which
is recovered as the value-symmetric, exact-verifier specialization
(\Cref{cor:xor-stopping}).

For a candidate unresolved set $S \subseteq \Vset_n$, define the
message available from a variable $k$ to a neighboring check $a$
\emph{relative to $S$} by
\begin{equation}
\label{eq:relative-message}
   M_{k \to a}(S)
   = \begin{cases}
   \{\Xstar_k\}, & k \notin S \text{ and the directed channel } k \to a
   \text{ is available}, \\
   \Uset, & \text{otherwise}.
   \end{cases}
\end{equation}
A check $a$ of template $\theta$ \emph{certifies} a target variable
$i \in S$ at socket $j$ relative to $S$ if
\begin{enumerate}[label=(\roman*),leftmargin=2.5em]
\item $\Zobs_a \in \{0, 1\}$ (the verifier output is not erased),
\item the directed channel $a \to i$ is available, and
\item $\force_{\theta, j}\big(\Zobs_a;\, M_{\partial a \setminus i}(S)\big)
   = \{\Xstar_i\}$ (the forcing operator outputs the singleton at the
   true value).
\end{enumerate}

\begin{definition}[Certificate-stopping set for a realized transcript]
\label{def:certstop}
For a fixed transcript $\mathcal{T}$ as in \eqref{eq:transcript}, a
nonempty set $S \subseteq \Vset_n$ is a \emph{certificate-stopping
set for $\mathcal{T}$} if every $i \in S$ has erased private
observation $\Aobs_i = \Uset$ and no adjacent check certifies $i$
relative to $S$.
\end{definition}

\begin{remark}[Three levels of stopping-set objects]
\label{rem:stopping-levels}
The certificate-stopping property is a property of a realized
transcript, not of the graph alone: stoppedness depends on
$\Xstar$ (via soundness of the verifier outputs and forcing
operator), on $\{\Zobs_a\}$, and on the channel realization
$\{B_{u \to w}\}$.  Three levels of obstruction object recur in the
sequel: (i) realized-transcript certificate-stopping sets, the
finite-length object analyzed here; (ii) graph-only stopping sets,
recovered in the XOR / noiseless-verifier / noiseless-channel
specialization (\Cref{cor:xor-stopping}), where stoppedness depends
only on the bipartite incidence structure; (iii) worst-case or
high-probability structural stopping sets over an ensemble of
transcripts, the natural object for augmentation design
(\Cref{thm:augmentation}).
\end{remark}

\begin{theorem}[Certificate-stopping obstruction]
\label{thm:certstop}
For the deterministic set-valued peeling decoder on a fixed finite
transcript, the terminal unresolved set, when nonempty, is the
unique maximal certificate-stopping set contained in the initially
unresolved variables.  Equivalently, the decoder recovers every
variable iff there is no nonempty certificate-stopping set within
the initially unresolved variables.
\end{theorem}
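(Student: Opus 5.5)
The plan is to run the standard stopping-set argument for the BEC peeling decoder, with the parity-inversion step replaced by the certification relation. The key ingredient is monotonicity of certification in the unresolved set.

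First we set up the set-level dynamics. Let $S_0 \defeq \{i : \Aobs_i = \Uset\}$ be the initially unresolved set. By the paragraph preceding the definition, $S_{t+1} = S_t \setminus \{i \in S_t : \text{some } a \in \partial i \text{ certifies } i \text{ relative to } S_t\}$. We then prove a monotonicity lemma: if $S \subseteq S'$ and $a$ certifies $i$ relative to $S'$, then $a$ also certifies $i$ relative to $S$. Conditions (i) and (ii) do not depend on the set. For (iii), shrinking the set replaces some messages $M_{k\to a}(S')=\Uset$ by true singletons, and by \Cref{lem:gamma-monotone} this can only shrink the feasible set $\force_{\theta,j}$. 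That set always contains $\Xstar_i$ by soundness (\Cref{lem:soundness}). So if it equals $\{\Xstar_i\}$ under $S'$, it still equals $\{\Xstar_i\}$ under $S$.

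Two facts follow. \emph{(1) Union-closure.} If $S_1,S_2$ are certificate-stopping sets, so is $S_1\cup S_2$. Each $i\in S_1$ has an erased observation, and $i$ is not certified relative to $S_1$. By the contrapositive of monotonicity with $S_1 \subseteq S_1\cup S_2$, it is not certified relative to the union either. Since everything lives in the finite set $S_0$, there is a unique maximal certificate-stopping set $S^\star\subseteq S_0$, namely the union of all of them (or $\varnothing$ if there are none). \emph{(2) Invariance.} We show $S^\star\subseteq S_t$ for all $t$ by induction. The base case holds because members of $S^\star$ have erased observations. For the step, if $S^\star\subseteq S_t$ and $i\in S^\star$, then $i$ is not certified relative to $S^\star$. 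By monotonicity it is not certified relative to the larger set $S_t$, so $i\in S_{t+1}$.

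Conversely, $S_t$ is nonincreasing and lives in a finite set, so it stabilizes at some $S_\infty$ after at most $|S_0|$ rounds. At stabilization no $i\in S_\infty$ is certified relative to $S_\infty$, and all such $i$ have erased observations. Hence $S_\infty$, if nonempty, is certificate-stopping, so $S_\infty\subseteq S^\star$. Combined with invariance this gives $S_\infty=S^\star$. The equivalence (full recovery iff no nonempty certificate-stopping set in $S_0$) is immediate.

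The main obstacle is the monotonicity lemma, specifically making sure that "unresolved relative to $S$" in \eqref{eq:relative-message} faithfully matches the decoder's actual messages. The decoder uses extrinsic messages $V_{k\to a}$, which may be $\Uset$ even when $k\notin S_t$. We would handle this by checking that a resolved $k$ sends its singleton to $a$ via the private observation or via some check other than $a$. Failing that, we would define the peeling dynamics directly via the relative messages, as the paper's preceding paragraph does, and argue at that level. The channel gates are persistent, so the availability conditions are fixed and cause no trouble.
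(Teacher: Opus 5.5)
Your proposal is correct and follows essentially the paper's proof: certification is monotone under shrinking the unresolved set (via \Cref{lem:gamma-monotone} together with soundness), an induction keeps every certificate-stopping set inside each $S_t$, and the stabilized $S_\infty$ is itself certificate-stopping. Your union-closure step is harmless but redundant, since that induction already shows $S_\infty$ contains every certificate-stopping set in the initially unresolved variables. On your stated obstacle, the paper works directly with the set-level peeling defined through the relative messages, as in your fallback. Your first fix can fail when $a$ is the only certifier of $k$, but that case is harmless: once $a$ has certified $k$, every assignment consistent with $\Zobs_a$ and $a$'s other incoming messages already has $x_k = \Xstar_k$, so withholding $k$'s singleton from $a$ does not change $a$'s forcing of any other target.
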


\begin{proof}
Let $S_t$ be the set of unresolved variables after $t$ peeling steps,
where one step removes every variable that is privately observed or
certified by at least one adjacent check relative to the current set.
The sets $S_t$ are decreasing and the graph is finite, so the process
reaches a fixed point $S_\infty$.

At the fixed point, every variable in $S_\infty$ has erased private
observation; otherwise it would have been removed.  No adjacent check
certifies a member of $S_\infty$ relative to $S_\infty$; otherwise that
member would also have been removed.  Thus $S_\infty$ is a
certificate-stopping set if it is nonempty.

Conversely, let $S$ be any certificate-stopping set contained in the
initially unresolved variables.  We prove by induction that
$S \subseteq S_t$ for all $t \ge 0$.  This is true at $t = 0$ because
each $i \in S$ has erased private observation.  Suppose $S \subseteq
S_t$.  Going from $M_{-i}(S)$ to $M_{-i}(S_t)$ only enlarges incoming
candidate sets at neighbors $k \in S_t \setminus S$, replacing some
singletons by $\Uset$.  Under the soundness invariant
(\Cref{lem:soundness}) the forcing operator $\force_{\theta, j}$
satisfies $\Xstar_i \in \force(z; M_{-j})$ for every input,
so its output is either the certifying singleton $\{\Xstar_i\}$ or
the unresolved set $\Uset$ (never the contradictory singleton
$\{1 - \Xstar_i\}$ or $\emptyset$).  By
\Cref{lem:gamma-monotone}, enlarging $M_{-j}$ enlarges
$\force(z; M_{-j})$, so an output of $\Uset$ at $M_{-j}(S)$ remains
$\Uset$ at $M_{-j}(S_t)$.  Hence no check that failed to certify
$i \in S$ relative to $S$ certifies $i$ relative to $S_t$, no
variable in $S$ is removed at the next step, and
$S \subseteq S_{t+1}$.  Therefore every certificate-stopping set is
contained in $S_\infty$.  Since $S_\infty$ itself is a
certificate-stopping set when nonempty, it is the unique maximal one.
\end{proof}

\subsection{XOR specialization: the classical stopping-set condition}

\begin{corollary}[XOR / parity stopping sets]
\label{cor:xor-stopping}
For XOR templates, a residual set $S$ is certificate-stopping iff
every $i \in S$ has $\Aobs_i = \Uset$ and, for every adjacent check
$a \in \partial i$, the pair $(a, i)$ is blocked in at least one of
the following ways:
\begin{enumerate}[label=(M\arabic*),leftmargin=2.5em]
\item \emph{Verifier-erased.}  $\Zobs_a = *$.
\item \emph{Multi-input combinatorial.}  $|\partial a \cap S| \ge 2$:
the check touches the unresolved set at $i$ and at some other
$k \in \partial a \setminus \{i\}$, so the parity equation has at
least two unknowns and cannot disambiguate.
\item \emph{Reasoning-channel-erased.}  The return channel
$B_{a \to i} = 0$, or some forward channel $B_{k \to a} = 0$ for
$k \in \partial a \setminus \{i\}$.
\end{enumerate}
In the noiseless-verifier ($\epsC = 0$), noiseless-channel ($\eta
\equiv 1$) special case, the per-edge condition reduces to the
classical BEC stopping-set condition: every check that touches $S$
touches $S$ at least twice.
\end{corollary}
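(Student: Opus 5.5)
The plan is to specialize the three certification conditions (i)--(iii) preceding \Cref{def:certstop} to an XOR template and take their negation edge by edge. By \Cref{def:certstop}, $S$ is certificate-stopping iff every $i \in S$ has $\Aobs_i = \Uset$ and, for every adjacent check $a$, the pair $(a,i)$ fails at least one of (i), (ii), (iii). Failure of (i) is exactly (M1). Failure of (ii), meaning $B_{a \to i} = 0$, is the first clause of (M3). So the remaining work is to characterize when (iii) fails for XOR, given that (i) and (ii) hold.

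For (iii), I will use the value-symmetric description of $\force_{\theta,j}$ for XOR from \Cref{sec:model-forcing}. With $\Zobs_a \in \{0,1\}$, $\force_{\theta,j}(\Zobs_a; M_{-j})$ is a singleton iff every $M_k$ with $k \neq j$ is a singleton. In that case it equals the unique $b$ with $b \oplus \bigoplus_{k} x_k = \Zobs_a$. By \Cref{lem:soundness} and $\Zobs_a = T_a$, this $b$ is $\Xstar_i$. So the only question is whether a singleton appears, and a singleton appears iff every incoming $M_{k\to a}(S)$ is a singleton.

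Next I unfold \eqref{eq:relative-message}. The message $M_{k\to a}(S)$ is a singleton iff $k \notin S$ and $B_{k\to a} = 1$. Hence (iii) fails iff some $k \in \partial a \setminus\{i\}$ satisfies $k \in S$ or $B_{k \to a} = 0$. The first alternative says $|\partial a \cap S| \ge 2$, which is (M2), since $i \in S \cap \partial a$. The second alternative is the forward-channel clause of (M3). Putting the three failures together gives the stated disjunction. The argument runs in both directions, because each step above is an equivalence.

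For the noiseless specialization I set $\epsC = 0$ and $\eta \equiv 1$. Then every $\Zobs_a \neq *$ and every $B = 1$ holds on all realizations, so (M1) and (M3) never occur and only (M2) remains. The condition becomes: every check adjacent to some $i \in S$ meets $S$ at least twice, which is the classical BEC stopping-set condition. The only delicate point is multigraph bookkeeping. If $a$ is joined to $i$ through two sockets, then $k$ ranges over other sockets rather than other variables, and I will count $|\partial a \cap S|$ with socket multiplicity. I expect no real obstacle here; the argument is direct unfolding of the definitions.
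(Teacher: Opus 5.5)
Your proof is correct and follows essentially the same route as the paper's: for XOR, certification reduces to a non-erased verifier output, an open return channel, and every other socket of $a$ lying outside $S$ behind an open forward channel, and negating this edge by edge gives (M1)--(M3), with (M1) and (M3) vacuous in the noiseless case. The differences are minor: you work directly from \Cref{def:certstop} instead of invoking \Cref{thm:certstop} as the paper does (that theorem is not needed for a per-set characterization), and your socket-multiplicity reading of $|\partial a \cap S|$ correctly handles multigraph double edges.
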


\begin{proof}
For XOR, the forcing operator $\force_{\theta, j}(z; M_{-j})$ returns
a singleton iff every other input message is a singleton; relative to
a residual set $S$, this means the verifier output is available,
every other neighbor lies outside $S$, and the directed channels into
and out of the check are available.  Each of (M1) (verifier output
erased), (M2) (a second neighbor in $S$, hence unresolved), and (M3)
(a needed channel down) breaks exactly this condition, and when none
of them holds the check certifies $i$.  An adjacent check therefore
certifies $i$ relative to $S$ iff none of (M1)--(M3) applies, and
\Cref{thm:certstop} converts this into the stated set-level
equivalence.  In the noiseless special case (M1) and (M3) are
vacuous, leaving (M2): every check that touches $S$ touches it at
least twice, the classical stopping-set condition.
\end{proof}

\subsection{AND specialization: positive and negative certificates}

\begin{corollary}[AND certificate-stopping sets]
\label{cor:and-stopping}
For AND templates, an adjacent check $a$ with $\Zobs_a \in \{0, 1\}$
certifies a target variable $i \in \partial a \cap S$ relative to a
residual set $S$ in two ways:
\begin{enumerate}[label=(\roman*),leftmargin=2.5em]
\item \emph{Positive certificate} ($\Zobs_a = 1$): if $\Zobs_a = 1$
and the return channel $B_{a \to i} = 1$, then $a$ certifies
$\Xstar_i = 1$.  Consequently, when $\Zobs_a = 1$ and every return
channel $\{B_{a \to k} : k \in \partial a \cap S\}$ is available,
the single positive verifier output certifies every member of
$\partial a \cap S$ at once.
\item \emph{Negative singleton certificate} ($\Zobs_a = 0$): $a$
certifies $\Xstar_i = 0$ iff $\Zobs_a = 0$, $B_{a \to i} = 1$, and
every other input message $M_{k \to a}(S)$ is the singleton $\{1\}$
(equivalently, every $k \in \partial a \setminus \{i\}$ is outside
$S$, has true value $1$, and has available forward channel
$B_{k \to a} = 1$).
\end{enumerate}
$S$ is AND-certificate-stopping iff every $i \in S$ has $\Aobs_i =
\Uset$ and, for every adjacent check $a \in \partial i$, $a$ fails
both (i) and (ii).
\end{corollary}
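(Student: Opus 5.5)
The plan is to specialize the three-part certification definition (non-erased output, available return channel $B_{a\to i}=1$, and the forcing condition (iii)) to $f_\theta = x_1 \wedge \cdots \wedge x_{d_\theta}$. Once we know exactly when (iii) holds for AND, we can invoke \Cref{thm:certstop}. Under \Cref{def:certstop} it then suffices to show the following: for a check with $\Zobs_a \in \{0,1\}$, the check certifies $i$ relative to $S$ iff (i) or (ii) of the corollary holds. The final set-level equivalence is then just the definition of a certificate-stopping set with ``no adjacent check certifies $i$'' rewritten as ``every adjacent check fails both (i) and (ii).'' An erased output $\Zobs_a = *$ fails both conditions, consistent with $\force = \Uset$.

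For the positive case, suppose $\Zobs_a = 1$. By soundness $\Zobs_a = T_a = \bigwedge_k \Xstar_k$, so every $\Xstar_k = 1$, and in particular $\Xstar_i = 1$. The value $b = 0$ is infeasible in \eqref{eq:forcing}, since an AND with a $0$ input cannot output $1$, whatever $M_{-j}$ is. The value $b = 1$ is feasible by taking $x_k = \Xstar_k \in M_{k\to a}(S)$, which is allowed by \Cref{lem:soundness} because each relative message contains the true value. Hence $\force_{\theta,j}(1; M_{-j}(S)) = \{1\} = \{\Xstar_i\}$ unconditionally, and certification reduces to $B_{a\to i} = 1$. The ``at once'' consequence follows by applying this to each $i \in \partial a \cap S$ with an available return channel; this is the same computation as in \Cref{prop:and-de}.

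For the negative case, suppose $\Zobs_a = 0$. The value $b = 0$ is always feasible, since AND with a $0$ input outputs $0$. Certification to $\{\Xstar_i\}$ therefore requires $\Xstar_i = 0$ and the infeasibility of $b = 1$. The value $b = 1$ is infeasible iff no choice $x_k \in M_{k\to a}(S)$ for $k \ne i$ has some $x_k = 0$, that is, iff every $M_{k\to a}(S)$ equals $\{1\}$. By \eqref{eq:relative-message}, this means each $k \notin S$, $B_{k\to a} = 1$, and $\Xstar_k = 1$.

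It remains to check that ``$\Xstar_i = 0$'' need not be listed separately. If all other $\Xstar_k = 1$ and $T_a = 0$, soundness forces $\Xstar_i = 0$. Conversely, if some $M_k$ is not $\{1\}$, the output is $\Uset$. Adding the channel condition $B_{a\to i} = 1$ gives (ii). The only delicate point is this bookkeeping: showing that the singleton produced is automatically the true value, which rests entirely on \Cref{lem:soundness}, so I expect no real obstacle.
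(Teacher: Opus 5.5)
Your proposal is correct and follows essentially the same route as the paper's proof. Like the paper, you specialize $\force_{\theta,j}$ to AND and split on the verifier output: at $\Zobs_a = 1$ the value $x_j = 0$ is infeasible, so the target is forced to $\{1\}$; at $\Zobs_a = 0$ the target is forced to $\{0\}$ exactly when every other relative message is $\{1\}$; and the set-level equivalence is then read off \Cref{def:certstop}. Your extra bookkeeping (using the true values to show that $b = 1$ is actually feasible at $z = 1$, and that $\Xstar_i = 0$ is automatic at $z = 0$) makes explicit what the paper's ``regardless of $M_{-j}$'' leaves implicit on sound configurations, but it is the same argument.
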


\begin{proof}
For AND, the forcing operator $\force_{\theta, j}(z; M_{-j})$ at
$z = 1$ is $\{1\}$ regardless of $M_{-j}$, because no assignment with
$x_j = 0$ produces an AND output of $1$.  This gives (i).  At $z = 0$,
the operator outputs the singleton $\{0\}$ iff setting $x_j = 1$ is
infeasible, i.e., iff every other input message is the singleton $\{1\}$
(otherwise some other variable could be $0$, satisfying $z = 0$
without constraining $x_j$).  This gives (ii).  Combining via
\Cref{thm:certstop} and \Cref{def:certstop} gives the claim.
\end{proof}

\begin{remark}[Two-layer obstruction structure under AND]
\label{rem:two-layer}
Under AND templates the certificate-stopping condition has a richer
combinatorial structure than under XOR.  The positive-certificate
mode is a one-shot strong-recovery primitive: a single positive
verdict can free all unresolved variables in a check's neighborhood
at once.  The negative-certificate mode is closer to the XOR
singleton-neighbor rule but conditional on the boundary's known
values.  Where positive AND outputs are common, AND factors can clear
residual clusters that would be stopping sets under XOR on the same
graph; where negative outputs dominate and boundary variables are
unresolved, the negative-singleton certificate is strictly weaker
than the XOR rule.  A distributional comparison of AND and XOR
certificate-stopping-set sizes therefore depends on
$\{\Vbias_r\}$, $\epsC$, $\eta$, and the ensemble, and we claim no
universal ordering; the full combinatorial study is left to
follow-on work.  We state the two-layer structure here because the
augmentation theorem of the next section applies to both
specializations.
\end{remark}

The toy proof-checking example of \Cref{sec:intro-toy} is this AND
specialization: its successful recovery uses a positive certificate
(i) and a negative singleton certificate (ii), and its three failure
modes (M1)--(M3) are exactly the verifier-erased, combinatorial, and
channel-erased obstructions of \Cref{cor:and-stopping}.

\section{Separating Augmentation}\label{sec:augmentation}

Stopping-set characterizations are useful because they suggest
interventions.  If a small residual cluster is the dominant failure
mode, an architect can add targeted verifier nodes, route the
cluster through stronger roles, or improve the communication links
that would free the cluster.  The augmentation theorem of this section
makes that intuition precise: any augmentation that separates every
small residual pattern eliminates all certificate-stopping sets up to
the corresponding size.

\rhead{Conditioning convention.}  All statements in this section
are deterministic statements conditional on a realized baseline
transcript $\mathcal{T}_{\mathrm{base}}$ as in
\eqref{eq:transcript}.  The class $\mathcal{S}_k = \mathcal{S}_k(
\mathcal{T}_{\mathrm{base}})$ of baseline certificate-stopping sets
of size at most $k$ is a function of that transcript: for general
Boolean factors, certification depends on the realized hidden values,
verifier outputs, and channel-availability, so $k$-separation is not
a purely graph-combinatorial property in the general case.  The
results below distinguish two design regimes: \emph{adaptive
augmentation}, where the architect observes the baseline residual
pattern and then chooses or samples augmenting checks
(\Cref{thm:augmentation,cor:noisy-aug}); and \emph{non-adaptive
augmentation}, where a sampling distribution $\mu$ is fixed before
the realization of erasures, hidden values, and the residual set
(\Cref{thm:random-aug}).  In the non-adaptive case the same bounds
apply after conditioning on the realized transcript, with
system-level guarantees obtained by averaging over the
baseline-transcript
distribution.

\subsection{$k$-separating augmentations}

\begin{definition}[$k$-separating augmentation]
\label{def:ksep}
Fix a baseline transcript $\mathcal{T}_{\mathrm{base}}$ on $\Gn$ and
a class $\mathcal{S}_k \subseteq \mathcal{S}_k(\mathcal{T}_{
\mathrm{base}})$ of baseline certificate-stopping sets of size at
most $k$.  An \emph{augmentation} by additional Boolean verifier
factors is \emph{$k$-separating for $\mathcal{S}_k$} if, for every
nonempty $S \in \mathcal{S}_k$, there exists a variable $i \in S$
and an added check $a^+$ such that, whenever the added verifier
output and the required directed channels are available, $a^+$
certifies $i$ relative to $S$ (in the sense of \Cref{def:certstop}).
\end{definition}

\subsection{Main theorem and noisy-augmentation corollary}

\begin{theorem}[Stopping-set elimination by separating augmentation]
\label{thm:augmentation}
Fix a baseline transcript $\mathcal{T}_{\rm base}$ as in
\eqref{eq:transcript}.  Suppose the added verifier nodes are
noiseless and their directed channels are available.  If the
augmentation is $k$-separating for the family
$\mathcal{S}_k(\mathcal{T}_{\rm base})$ of all baseline
certificate-stopping sets of size at most $k$, then the augmented
transcript has no certificate-stopping set of size at most $k$.
Consequently, if the baseline terminal residual set has size at most
$k$, the augmented peeling decoder recovers every variable in that
residual set.
\end{theorem}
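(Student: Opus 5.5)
The plan is a contradiction argument that rests on one monotonicity observation: adding verifier factors can only destroy certificate-stopping sets, never create them.

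\emph{Step 1 (restriction lemma).} I will first show that every certificate-stopping set $S'$ of the augmented transcript $\mathcal{T}_{\rm aug}$ is also a certificate-stopping set of $\mathcal{T}_{\rm base}$. The augmentation adds only check nodes, their edges, their verifier outputs, and their channel gates. The variable set, the hidden vector $\Xstar$, the private observations $\{\Aobs_i\}$, the baseline outputs $\{\Zobs_a\}$ and the baseline channel indicators $\{B_{u\to w}\}$ are all unchanged.
\begin{itemize}
\item Each $i\in S'$ has $\Aobs_i=\Uset$ in both transcripts.
\item For a baseline check $a\in\partial i$, the relative messages $M_{k\to a}(S')$ in \eqref{eq:relative-message} depend only on $S'$, $\Xstar_k$ and $B_{k\to a}$. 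These are identical in the two transcripts.
\item Conditions (i)--(iii) of certification involve only $\Zobs_a$, $B_{a\to i}$ and $\force_{\theta,j}$ applied to those messages, so they are also unchanged.
\end{itemize}
Hence "no adjacent check certifies $i$ relative to $S'$" in the augmented graph implies the same statement for the subfamily of baseline checks. So $S'$ satisfies \Cref{def:certstop} for $\mathcal{T}_{\rm base}$. I expect this bookkeeping to be the only delicate point. One must check that $M_{k\to a}(S')$ really has no dependence on added edges; it does not, because messages are defined per directed edge and the added gates sit on new edges only. One must also confirm that nothing in the definition refers to the global graph.

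\emph{Step 2 (contradiction).} Suppose $\mathcal{T}_{\rm aug}$ had a nonempty certificate-stopping set $S'$ with $|S'|\le k$. By Step 1, $S'\in\mathcal{S}_k(\mathcal{T}_{\rm base})$. The $k$-separating property (\Cref{def:ksep}) then supplies $i\in S'$ and an added check $a^+\in\partial i$ that certifies $i$ relative to $S'$, whenever its output and the required directed channels are available. By hypothesis the added verifiers are noiseless and their channels are available. So $\Zobs_{a^+}=T_{a^+}\in\{0,1\}$ and the conditional certification is actual, which contradicts $S'$ being certificate-stopping in $\mathcal{T}_{\rm aug}$. Soundness (\Cref{lem:soundness}) ensures the forced singleton is $\{\Xstar_i\}$, as (iii) requires.

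\emph{Step 3 (consequence for the decoder).} Let $R$ be the baseline terminal residual set, with $|R|\le k$, and let $R'$ be the augmented terminal residual set. By \Cref{thm:certstop}, if $R'$ is nonempty it is the maximal certificate-stopping set of $\mathcal{T}_{\rm aug}$ within the initially unresolved variables. This initial set is the same in both transcripts, since it is determined by $\{\Aobs_i\}$. By Step 1, $R'$ is then a baseline certificate-stopping set. The uniqueness and maximality part of \Cref{thm:certstop} applied to the baseline gives $R'\subseteq R$, hence $|R'|\le k$. Step 2 then forces $R'=\varnothing$, so the augmented peeling decoder recovers every variable, in particular every variable of $R$.
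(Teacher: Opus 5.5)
Your proof is correct, and Steps~1--2 coincide with the paper's. An augmented certificate-stopping set is a fortiori a baseline one, because baseline checks, outputs, and gates are untouched. Then $k$-separation, together with availability of the added checks, yields a certifying added check and hence a contradiction.

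The difference is in Step~3. The paper gets $S^+_\infty \subseteq S^{\rm base}_\infty$ dynamically: it argues from \Cref{lem:soundness} that whatever the baseline decoder resolves stays resolved after augmentation. You get $R' \subseteq R$ statically, by feeding your restriction lemma into the maximality clause of \Cref{thm:certstop}, since every baseline certificate-stopping set lies in the baseline residual.

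Your route is the more self-contained one. The paper's soundness remark by itself only rules out contradictory intersections. Turning it into a proof needs a round-by-round coupling showing that baseline checks, now seeing possibly more resolved inputs, still certify; that is an induction through \Cref{lem:gamma-monotone}. You delegate exactly that monotonicity to \Cref{thm:certstop}, and as a by-product you get $R' \subseteq R$ for every augmentation, separating or not.

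What the dynamic version buys, once the induction is written out, is the stronger per-round inclusion $S^+_t \subseteq S^{\rm base}_t$ for every $t$. That matters if the decoder is stopped after a finite round budget rather than run to its fixed point.
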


\begin{proof}
Suppose, for contradiction, that the augmented transcript has a
nonempty certificate-stopping set $S^+$ with $|S^+| \le k$.  Every
baseline check is also present in the augmented transcript, so the
fact that no augmented adjacent check certifies any $i \in S^+$
implies, a fortiori, that no baseline adjacent check certifies any
$i \in S^+$ relative to $S^+$; combined with $\Aobs_i = \Uset$ for
every $i \in S^+$, this means $S^+$ is itself a baseline
certificate-stopping set of size at most $k$, hence
$S^+ \in \mathcal{S}_k(\mathcal{T}_{\mathrm{base}})$.  By
$k$-separation, there exists an added check $a^+$ that certifies
some $i \in S^+$ relative to $S^+$.  Because the added verifier
output and required directed channels are available, $a^+$
certifies $i$ in the augmented transcript, contradicting the
augmented certificate-stopping property of $S^+$.  Therefore no such
set exists.  Moreover, by \Cref{lem:soundness} every message in the
augmented decoder (baseline or added) at variable $j$ contains
$\Xstar_j$; the augmented variable update at $j$
(\eqref{eq:var-update}) is therefore an intersection of sets each
containing $\Xstar_j$, so any singleton recovered by the baseline
decoder remains that same singleton in the augmented decoder.  Hence
$S_\infty^+ \subseteq S_\infty^{\mathrm{base}}$, and this augmented
residual set contains no certificate-stopping subset of size at
most $k$; \Cref{thm:certstop} then forces $S_\infty^+ = \varnothing$
whenever $|S_\infty^{\mathrm{base}}| \le k$.
\end{proof}

\begin{corollary}[Noisy augmentation union bound]
\label{cor:noisy-aug}
Fix a baseline transcript $\mathcal{T}_{\mathrm{base}}$ and let
$\mathcal{S}_k = \mathcal{S}_k(\mathcal{T}_{\mathrm{base}})$.  For
each $S \in \mathcal{S}_k$, suppose there are $m_S$ added certifiers
whose success events are conditionally independent given
$\mathcal{T}_{\mathrm{base}}$, and each succeeds (i.e., is both
certifying and available) with probability at least $1 - \zeta_S$.
Then
\begin{equation}
\label{eq:noisy-aug-bound}
   \PR\!\left\{\,\exists S \in \mathcal{S}_k\text{ that remains
   certificate-stopped} \,\,\Big|\,\, \mathcal{T}_{\mathrm{base}}
   \right\}
   \;\le\; \sum_{S \in \mathcal{S}_k} \zeta_S^{m_S}.
\end{equation}
\end{corollary}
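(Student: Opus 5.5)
The plan is a union bound over the finite class $\mathcal{S}_k$, together with a per-set bound from conditional independence. Everything is conditioned on $\mathcal{T}_{\mathrm{base}}$. The class $\mathcal{S}_k = \mathcal{S}_k(\mathcal{T}_{\mathrm{base}})$ is then deterministic and finite, since there are finitely many subsets of $\Vset_n$ of size at most $k$. The only remaining randomness lives in the outputs and channel gates of the added certifiers. First we write the bad event as $\bigcup_{S \in \mathcal{S}_k} E_S$. Here $E_S$ is the event that $S$ remains certificate-stopped in the augmented transcript. Subadditivity of conditional probability reduces the claim to $\PR\{E_S \mid \mathcal{T}_{\mathrm{base}}\} \le \zeta_S^{m_S}$ for each fixed $S$.

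For fixed $S$, let $a_1^+, \ldots, a_{m_S}^+$ be its assigned certifiers. Let $G_\ell$ be the success event of $a_\ell^+$: its verifier output is non-erased, its required directed channels are available, and it certifies its designated $i_\ell \in S$ relative to $S$. The key step is the inclusion $E_S \subseteq \bigcap_{\ell} G_\ell^c$. This is the contrapositive of the argument in \Cref{thm:augmentation}. On $G_\ell$, the check $a_\ell^+$ is adjacent to $i_\ell$ in the augmented graph, and by (i)--(iii) of the certification definition it certifies $i_\ell$ relative to $S$. That contradicts \Cref{def:certstop} for $S$ in the augmented transcript.

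I must check that certification relative to $S$ depends only on $S$, $\Zobs_{a^+}$, the true values, and the channel gates on edges of $a^+$. It must not depend on the decoder trajectory, and \eqref{eq:relative-message} gives exactly this. So $G_\ell$ is a well-defined event given $\mathcal{T}_{\mathrm{base}}$ and the added randomness. Conditional independence then gives $\PR\{\bigcap_\ell G_\ell^c \mid \mathcal{T}_{\mathrm{base}}\} = \prod_\ell (1 - \PR\{G_\ell \mid \mathcal{T}_{\mathrm{base}}\}) \le \zeta_S^{m_S}$. Summing over $S$ yields \eqref{eq:noisy-aug-bound}.

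The main obstacle is interpretive rather than computational. ``Remains certificate-stopped'' must be read as $S$ itself being certificate-stopping in the augmented transcript, not as some subset of $S$ surviving. If the stronger reading (no augmented stopping set of size at most $k$) is wanted, I would repeat the reduction from \Cref{thm:augmentation}. Any augmented stopping set $S^+$ with $|S^+| \le k$ is a baseline stopping set, so it lies in $\mathcal{S}_k$ and is covered by the same union. The one subtlety is that an added check need not have noiseless output there. A check that fails to certify relative to $S^+$ is exactly the event $\bigcap_\ell G_\ell^c$ for that $S^+$, so the bound is unchanged.
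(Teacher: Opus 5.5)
Your proposal is correct and follows the paper's own argument. For each fixed $S \in \mathcal{S}_k$, $S$ can remain certificate-stopped only if all $m_S$ of its added certifiers fail; conditional independence given $\mathcal{T}_{\mathrm{base}}$ bounds that probability by $\zeta_S^{m_S}$, and a union bound over the finite class $\mathcal{S}_k$ finishes. Your two extra checks are sound elaborations of what the paper leaves implicit: that certification relative to $S$ is a trajectory-free event, and that the stronger ``no augmented stopping set of size at most $k$'' reading is covered by the same union via the reduction in the proof of \Cref{thm:augmentation}.
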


\begin{proof}
For a fixed residual pattern $S$, all $m_S$ added certifiers must
fail for $S$ to remain stopped; conditional independence given
$\mathcal{T}_{\mathrm{base}}$ gives a failure probability at most
$\zeta_S^{m_S}$.  Union bound over $\mathcal{S}_k$.
\end{proof}

\rhead{Why an additional random-augmentation bound is useful.}
\Cref{thm:augmentation} is conditional on producing a $k$-separating
augmentation, and \Cref{cor:noisy-aug} is conditional on assigning
certifiers per pattern.  Both leave open the algorithmic question of
\emph{how} to find such an augmentation.  The next theorem answers
this question for the random-augmentation regime: with
$m = O\!\big( \log(|\mathcal{S}_k|/\delta) /
[ q_*(1-\zeta) ] \big)$ independent samples drawn from any
distribution that puts non-trivial mass on separators of each small
residual pattern, the survival probability of any small
certificate-stopping set is at most $\delta$.

\begin{theorem}[Random-augmentation survival bound]
\label{thm:random-aug}
Let $\mu$ be any probability distribution on candidate added Boolean
verifier nodes and their attached channel directions.  For each
nonempty residual pattern $S \in \mathcal{S}_k$, let
\begin{equation}
\label{eq:random-aug-q}
   q_S \,\defeq\, \mu\!\left(\big\{ a^+ : a^+ \text{ certifies some
   } i \in S \text{ relative to } S \big\}\right) \,\in\, [0, 1]
\end{equation}
denote the probability that a single $\mu$-sample separates $S$.
Suppose $m$ added checks are sampled i.i.d.\ from $\mu$, and that
the availability events of the $m$ sampled checks are conditionally
independent given the sampled identities, each with conditional
probability at least $1 - \zeta$ (verifier output and required
directed channels delivered).  Then the probability that the
augmented transcript on $\Gn$ has any certificate-stopping set in
$\mathcal{S}_k$ is at most
\begin{equation}
\label{eq:random-aug-bound}
   \sum_{S \in \mathcal{S}_k}
   \big(1 - q_S (1 - \zeta)\big)^m.
\end{equation}
In particular, if $q_* = \min_{S \in \mathcal{S}_k} q_S > 0$ and
$m \ge \log(|\mathcal{S}_k| / \delta) /
\big[ q_* (1 - \zeta) \big]$, the survival probability is at most
$\delta$.
\end{theorem}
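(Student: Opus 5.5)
The plan is to bound the survival probability of each fixed pattern $S \in \mathcal{S}_k$ separately, conditional on the baseline transcript $\mathcal{T}_{\mathrm{base}}$, and then take a union bound over $\mathcal{S}_k$, exactly as in \Cref{cor:noisy-aug}. The deterministic input is the argument in the proof of \Cref{thm:augmentation}. Any certificate-stopping set of size at most $k$ in the augmented transcript is also a baseline certificate-stopping set, because baseline checks are unchanged and adding checks can only add certifiers. So the event ``the augmented transcript has a certificate-stopping set in $\mathcal{S}_k$'' is contained in $\bigcup_{S \in \mathcal{S}_k} E_S$. Here $E_S$ is the event that no sampled check that is available certifies some $i \in S$ relative to $S$. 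Conversely, if one sampled $a^+$ is a separator of $S$ and its verifier output and required directed channels are delivered, then $S$ fails the stopping condition of \Cref{def:certstop}. This uses the same availability-implies-certification step as in \Cref{thm:augmentation}.

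Next I fix $S$ and compute $\PR(E_S \mid \mathcal{T}_{\mathrm{base}})$. For sample $\ell = 1,\dots,m$, let $G_\ell$ be the event that $a^+_\ell$ lies in the separator set of \eqref{eq:random-aug-q} and is available. Conditional on the sampled identity being a separator, availability has probability at least $1-\zeta$. Hence
\[
\PR(G_\ell \mid \mathcal{T}_{\mathrm{base}}) \ge q_S(1-\zeta).
\]
The identities are i.i.d.\ from $\mu$, and the availability events are conditionally independent given the identities. Therefore the pairs (identity, availability) are independent across $\ell$. One checks this by integrating out the identities: the joint law factorizes because each availability depends only on its own sample's identity. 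It follows that
\[
\PR(E_S \mid \mathcal{T}_{\mathrm{base}}) \le \prod_\ell \bigl(1 - \PR(G_\ell)\bigr) \le \bigl(1-q_S(1-\zeta)\bigr)^m .
\]
Summing over $S$ gives \eqref{eq:random-aug-bound}.

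For the ``in particular'' clause, I replace each $q_S$ by $q_*$ and use $1-x \le e^{-x}$. This bounds the sum by $|\mathcal{S}_k|\exp(-m q_*(1-\zeta))$, which is at most $\delta$ once $m \ge \log(|\mathcal{S}_k|/\delta)/[q_*(1-\zeta)]$.

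The main obstacle is the independence step. The statement only assumes per-sample conditional availability at least $1-\zeta$ and conditional independence given identities. I must make sure the per-sample bound $1-\zeta$ holds conditional on the identity, not merely on average, so that it can be multiplied by $q_S$. That is how the hypothesis is phrased, so I will state it explicitly.

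A secondary point is that $q_S$, and the separator property itself, depend on $\mathcal{T}_{\mathrm{base}}$, since certification uses realized $\Xstar$ and outputs. The whole argument is therefore carried out conditionally on $\mathcal{T}_{\mathrm{base}}$, per the section's conditioning convention. In the non-adaptive regime, where $\mu$ is fixed beforehand, I will note that it remains independent of the sampling after this conditioning.
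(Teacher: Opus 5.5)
Your proposal is correct and matches the paper's proof essentially step for step. Both use a per-pattern bound $(1-q_S(1-\zeta))^m$ from i.i.d.\ sampling and conditional independence of availability, a union bound over $\mathcal{S}_k$, the estimate $1-x\le e^{-x}$ for the sample-complexity clause, and the observation from the proof of \Cref{thm:augmentation} that small augmented stopping sets are baseline ones. One small tightening: you do not need the (identity, availability) pairs to be independent across samples, which the hypothesis does not literally give; instead condition on all $m$ identities, bound the conditional probability that no available separator appears by $\prod_\ell\bigl(1-\one\{a^+_\ell \text{ separates } S\}(1-\zeta)\bigr)$, and average over the i.i.d.\ identities to obtain $(1-q_S(1-\zeta))^m$.
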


\begin{proof}
Fix a residual pattern $S \in \mathcal{S}_k$.  By \eqref{eq:random-aug-q},
a single $\mu$-sample certifies some $i \in S$ relative to $S$ with
probability at least $q_S$, and is then available (with both verifier
output and required directed channels delivered) with probability at
least $1 - \zeta$.  By the conditional independence of availability
events across the $m$ samples (hypothesis of the theorem) and the
i.i.d.\ sampling from $\mu$,
$\PR\{\text{$S$ remains stopping after $m$ samples}\}
\le (1 - q_S(1 - \zeta))^m$.
Union bound over $\mathcal{S}_k$ gives \eqref{eq:random-aug-bound}.
The displayed sample-complexity statement follows from $1 - x \le e^{-x}$:
$(1 - q_*(1 - \zeta))^m \le e^{-m q_* (1 - \zeta)}$, and the choice of
$m$ ensures the right-hand side is at most $\delta / |\mathcal{S}_k|$.
As observed in the proof of \Cref{thm:augmentation}, every baseline
check is present in the augmented transcript, so any
certificate-stopping set of size at most $k$ in the augmented
transcript is also a baseline certificate-stopping set of size at
most $k$ (no augmented check certifying $i \in S^+$ implies, a
fortiori, no baseline check certifies $i$ either).  Hence when
$\mathcal{S}_k$ is taken as the family of all baseline
certificate-stopping sets of size at most $k$, the bound
\eqref{eq:random-aug-bound} controls every small augmented stopping
set, not only the preselected baseline family.
\end{proof}

\rhead{Operational reading.}  \Cref{thm:random-aug} provides a
constructive design recipe: any sampling distribution whose minimal
separation probability $q_*$ over $\mathcal{S}_k$ is bounded away
from zero requires only $O(\log |\mathcal{S}_k| / q_*)$ auxiliary
checks to drive the survival probability of any small residual
pattern to vanish.  For fixed $k$ the simple counting bound
$|\mathcal{S}_k| \le \sum_{j=1}^k \binom{n}{j} = O(n^k)$ holds in
the role-typed configuration ensemble; if additionally $q_* \ge q_0
> 0$ and $\zeta \le \zeta_0 < 1$ uniformly in $n$, then $m = O(\log
n)$ samples suffice to drive the survival probability below any
prescribed $\delta$.  The uniformity hypothesis on $q_*$ is the
essential design constraint: in non-adaptive schemes that sample
candidate checks almost uniformly over many possible variable
subsets, the probability of hitting a separator for one particular
small residual pattern can scale like a negative power of $n$; in
such regimes the $O(\log n)$ count fails.  The theorem also
disentangles the augmentation argument from the property of being a
$k$-separator: the random-sampling distribution is the design knob,
and the analyst chooses $\mu$ to make $q_*$ as large as possible for
the relevant family $\mathcal{S}_k$.

\subsection{XOR specialization: two-edge-connected freeing-set
augmentation}

\rhead{XOR specialization.}  Assume added XOR checks, noiseless
added verifier outputs, and perfect added channels.  Then an added
check $a^+$ separates a residual set $S$ relative to itself iff
\begin{equation}
\label{eq:xor-separator}
   |\partial a^+ \cap S| = 1.
\end{equation}
Indeed, by \Cref{cor:xor-stopping}, the XOR forcing operator
$\force_{\theta, j}(z; M_{-j})$ is a singleton iff every entry of
$M_{-j}$ is a singleton; relative to a residual set $S$, this means
exactly one input of $a^+$ lies in $S$ and every other input is
outside $S$ with available forward channel.  Consequently, an added
XOR layer is $k$-separating for $\mathcal{S}_k$ iff for every
nonempty $S \in \mathcal{S}_k$ there exists an added check $a^+$
satisfying \eqref{eq:xor-separator}.

\begin{remark}[Architecture interpretation by primitive]
\label{rem:aug-arch}
The separating-augmentation principle reads differently across
primitives.  For \emph{XOR} checks it creates singleton parity
witnesses via \eqref{eq:xor-separator}, reducing in the single-role
$(d_v, d_c)$-regular case to the classical two-edge-connected
freeing-set construction \cite{pishro2007punctured}.  For \emph{AND}
checks, augmentation adds tests that turn a positive local condition
into a one-shot certificate or isolate a negative cause, and the
required graph property is not two-edge-connectedness but a condition
involving the boundary's known values.  For \emph{Horn} checks,
augmentation adds proof obligations that force a missing premise or
conclusion.  The common principle is semantic: add local verifier
factors that force at least one variable in every small residual
pattern.
\end{remark}

\section{Cost-Constrained Architecture Optimization}\label{sec:optimization}

The density-evolution recursion converts architecture design into an
optimization problem.  A design parameter $\lambda$ collects role
proportions $\{\pi_r^V\}$, degree-law probabilities, template
proportions $\{\pi_\theta^C\}$, verifier reliabilities
$\{\epsV[r], \epsC[s]\}$, and communication fidelities
$\{\eta_{r,s}\}$ (we treat augmentation choices in
\Cref{rem:augmentation-in-lambda} below).  The design space
$\Designspace$ is a finite union of bounded-support strata; each
stratum fixes the supports of all degree laws and the template set,
and optimizes only over simplex coordinates and continuous
reliability parameters.  On each stratum the feasible set under
budget $B$ is
\begin{equation}
\label{eq:budget-set}
   \Designspace_B
   = \big\{ \lambda \in \Designspace :
       \Cost(\lambda) \le B,\,
       g_\tau(\lambda) = 0\;\forall \tau \in \Sockets,\,
       \lambda \in \Lambda\big\},
\end{equation}
where the equality constraints
\begin{equation}
\label{eq:socket-balance-opt}
   g_\tau(\lambda) \defeq
   \pi^V_{\socketrole{\tau}} \cdot
   \E_\lambda\!\big[D^{(\socketrole{\tau})}_\tau\big]
   - \alpha\, \pi^C_\theta = 0,
   \qquad \tau = (\theta, j),
\end{equation}
encode the socket-balance condition \eqref{eq:socket-balance} of
\Cref{sec:model-ensemble}, and $\Lambda$ collects simplex sums
($\sum_r \pi_r^V = 1$, $\sum_\theta \pi_\theta^C = 1$,
$\sum_d \mathsf{P}_D^{(r)}(d) = 1$) and box constraints
($0 \le \epsV[r], \epsC[s] \le 1$, $0 \le \eta_{r,s} \le 1$).  The
cost may include role costs, verifier-invocation costs,
communication-edge costs, and augmentation costs.  For a fixed
round budget $L \in \N$, define the objective
\begin{equation}
\label{eq:JL}
   J_L(\lambda) = \Pdee^{(L)}(\lambda),
\end{equation}
or any continuous monotone function thereof (e.g., a weighted
residual across role groups).

\begin{theorem}[Cost-constrained Boolean-verifier architecture
optimization]
\label{thm:optimization}
Assume $\Designspace_B$ in \eqref{eq:budget-set} is nonempty, that
each stratum is compact under the socket-balance, simplex, box, and
budget constraints, and that $J_L$ is continuous on each stratum.
Then:
\begin{enumerate}[label=(\alph*),leftmargin=2em]
\item \emph{Existence (Weierstrass).}  For every finite $L$, there
exists a design $\lambda_L^* \in \Designspace_B$ minimizing $J_L$.
\item \emph{Asymptotic optimality (finite design class).}  For each
fixed design $\lambda \in \Designspace_B$, the empirical $L$-round
residual of the random graph ensemble converges in probability to
$J_L(\lambda)$ by \Cref{thm:de}.  We interpret
$\Pbit^{(L)}(\Gn; \lambda)$ as follows: each design $\lambda$ has its
own role-typed configuration-model ensemble (since $\lambda$ may
include role proportions, template proportions, and degree
distributions), and $\Gn(\lambda)$ denotes one draw from that
$\lambda$-parameterized ensemble; the empirical residual is
$\Pbit^{(L)}(\Gn(\lambda); \lambda)$, with draws across different
$\lambda$ values mutually independent.  Let
\[
   \widehat\lambda_n
   \,\in\, \arg\min_{\lambda \in \Designspace_B}\;
   \Pbit^{(L)}(\Gn(\lambda); \lambda)
\]
denote the resulting empirical minimizer.  When
$\Designspace_B$ is finite, a union bound over \Cref{thm:de} gives
the uniform convergence
$\sup_{\lambda \in \Designspace_B} |\Pbit^{(L)}(\Gn(\lambda); \lambda) -
J_L(\lambda)| \xrightarrow{\PR} 0$, hence the empirical-minimizer
suboptimality vanishes:
$J_L(\widehat\lambda_n) -
\inf_{\lambda \in \Designspace_B} J_L(\lambda)
\xrightarrow{\PR} 0$.  For compact-continuum strata, the same
conclusion holds under a uniform-DE-concentration assumption (uniform
in $\lambda$ over the stratum) that we do not establish here; this
extension requires finite-$n$ Lipschitz or covering-number bounds on
the empirical-residual functional beyond \Cref{thm:de}.  In the
degenerate case where $\lambda$ enters only through reliability
parameters and the graph ensemble itself is held fixed across
designs, $\Gn(\lambda) \equiv \Gn$ and only channel randomness varies
with $\lambda$; the same finite-vs-continuum dichotomy then applies
with potentially simpler uniform-concentration arguments, but this
restricted setting is not the framing pursued here.
\item \emph{Limit-points.}  If $J_L \to J_\infty$ uniformly on
$\Designspace_B$, then every limit point of finite-round minimizers
$\{\lambda_{L_m}^*\}$ is an infinite-round minimizer.  Sufficient
conditions for the uniform limit include monotone pointwise
convergence $J_L \downarrow J_\infty$ together with continuity of
$J_\infty$ on $\Designspace_B$ (Dini's theorem); the uniform limit
may fail at threshold surfaces, where $J_\infty$ is discontinuous.
\item \emph{Adjoint sensitivity (backward mode).}  On any smooth
stratum of $\Designspace$, with recursion $\bm p_{\ell+1} =
\DEmap_\lambda(\bm p_\ell)$ and objective $J_L = \psi(\bm p_L,
\lambda)$, the gradient is given by the adjoint equations
\begin{align}
\label{eq:adjoint}
   \xi_L &= \nabla_{\bm p}\, \psi(\bm p_L, \lambda), \\
   \xi_\ell &= \big(D_{\bm p}\, \DEmap_\lambda(\bm p_\ell)\big)^T\,
   \xi_{\ell+1},
   \qquad \ell = L - 1, L - 2, \ldots, 0,
\end{align}
and
\begin{equation}
\label{eq:adjoint-grad}
   \nabla_\lambda\, J_L
   = \nabla_\lambda\, \psi(\bm p_L, \lambda)
   + \sum_{\ell=0}^{L-1} \big(D_\lambda\, \DEmap_\lambda(\bm p_\ell)\big)^T\,
   \xi_{\ell+1}
   + \big(D_\lambda\, \bm p_0(\lambda)\big)^T\, \xi_0.
\end{equation}
The last term captures the dependence of the initial state on the
design parameters: when $\lambda$ includes variable-side erasure
rates, $\bm p_0$ is initialized from those rates and
$D_\lambda\, \bm p_0(\lambda) \ne 0$.  When $\lambda$ enters only
through the recursion (e.g.\ degree distributions or verifier-side
parameters that leave $\bm p_0$ fixed), the boundary term vanishes
and \eqref{eq:adjoint-grad} reduces to the standard sum-only form.
Variable-side erasure $\epsV$ enters $\nabla_\lambda J_L$ through
both terms: the boundary term (since $\bm p_0$ is initialized
from $\epsV$) and the direct sum (since $\epsV$ also appears
multiplicatively inside $\DEmap_\lambda$ in \eqref{eq:pupdate});
implementations of the gradient must include both contributions.
\item \emph{KKT conditions and shadow prices.}  At a regular local
optimum on a smooth stratum, the Karush-Kuhn-Tucker conditions hold
for the Lagrangian
\begin{equation}
\label{eq:lagrangian}
   \mathcal{L}(\lambda, \mu, \nu, \alpha, \gamma)
   = J_L(\lambda)
   + \mu\big(\Cost(\lambda) - B\big)
   + \sum_\tau \nu_\tau\, g_\tau(\lambda)
   + \sum_j \alpha_j\, a_j(\lambda)
   + \sum_k \gamma_k\, b_k(\lambda),
\end{equation}
with $g_\tau$ the socket-balance equalities, $a_j$ the simplex-sum
equalities, and $b_k(\lambda) \le 0$ the box and nonnegativity
inequalities; stationarity $\nabla_\lambda \mathcal{L} = 0$ holds
along with $\mu \ge 0$, $\mu(\Cost(\lambda) - B) = 0$, and
analogous complementary slackness on the box constraints.  The
multiplier $\mu$ is the shadow price of the cost budget.  The
partial derivatives $\partial J_L / \partial \epsV[r]$,
$\partial J_L / \partial \epsC[s]$, $\partial J_L / \partial
\eta_{r,s}$ obtained from \eqref{eq:adjoint-grad} are the
sensitivity gradients with respect to each erasure-tier parameter,
not Lagrange multipliers themselves; in formulations where
reliabilities are generated by explicit investment variables
(\Cref{rem:investment-variables}) these partials translate directly
into shadow prices on the corresponding investments.
\end{enumerate}
\end{theorem}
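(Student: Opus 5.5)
The plan treats the five parts of \Cref{thm:optimization} separately, since each rests on a different standard tool. The common groundwork is to check that $J_L = \Pdee^{(L)}$ is a finite composition of polynomial maps in $\lambda$ and $\bm p$. The maps involved are \eqref{eq:pbar}, \eqref{eq:forcing-prob}, \eqref{eq:hupdate}, \eqref{eq:pupdate} and \eqref{eq:terminal-de}. On each stratum the degree supports and the template set are fixed, so the size-biased expectations in \eqref{eq:pupdate} are rational in the degree-law coordinates. Their denominators are the socket means $\E[D_\tau^{(r)}]$, which are bounded away from zero on a compact stratum with positive socket mass. $J_L$ is therefore $C^1$ on a neighborhood of each stratum. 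This supplies the continuity assumed in (a) and the smoothness used in (d)--(e).

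Part (a) is Weierstrass on each compact stratum. Since $\Designspace$ is a finite union of strata, we take the minimum of the finitely many stratum minima.

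For (b), fix a finite $\Designspace_B$ and apply the tail bound \eqref{eq:bit-tail} to each $\lambda$ at resolution $\delta$. A union bound gives $\PR\{\sup_\lambda |\Pbit^{(L)}(\Gn(\lambda);\lambda) - J_L(\lambda)| > \delta\} \le 2|\Designspace_B| e^{-c n\delta^2} \to 0$, using the minimum of the finitely many constants $c_L(\lambda)$. On this event the standard ERM sandwich $J_L(\widehat\lambda_n) \le \widehat J(\widehat\lambda_n)+\delta \le \widehat J(\lambda^*)+\delta \le J_L(\lambda^*)+2\delta$ gives suboptimality at most $2\delta$. The continuum remark is only a caveat and needs no proof.

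For (c), take a limit point $\bar\lambda$ of $\lambda^*_{L_m}$, which lies in $\Designspace_B$ by compactness. For any competitor $\lambda$ we have $J_\infty(\bar\lambda) \le J_{L_m}(\bar\lambda) + \|J_{L_m}-J_\infty\|_\infty$. The term $J_{L_m}(\bar\lambda)$ is close to $J_{L_m}(\lambda^*_{L_m})$ by uniform convergence plus continuity of $J_\infty$, and the uniform limit of continuous functions is continuous. Also $J_{L_m}(\lambda^*_{L_m}) \le J_{L_m}(\lambda)$, which tends to $J_\infty(\lambda)$. For the Dini remark, \Cref{thm:threshold}(a) gives monotone decrease of $\bm p_\ell$, and $\Pdee^{(L)}$ is monotone in $\bm h_L$ via \eqref{eq:terminal-de}, so $J_L \downarrow J_\infty$.

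Part (d) is reverse-mode differentiation of the recursion. We introduce multipliers $\xi_{\ell+1}$ for the constraints $\bm p_{\ell+1}-\DEmap_\lambda(\bm p_\ell)=0$ and $\xi_0$ for $\bm p_0 - \bm p_0(\lambda)=0$, and differentiate the augmented functional. Choosing $\xi_\ell$ by the backward recursion \eqref{eq:adjoint} cancels every $\partial\bm p_\ell/\partial\lambda$ term, leaving \eqref{eq:adjoint-grad}. An induction on $L$ via the chain rule checks this directly. The $\epsV$ remark follows from \eqref{eq:pinit} and the multiplicative $\epsV$ in \eqref{eq:pupdate}.

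For (e), we invoke the standard KKT theorem under a constraint qualification such as LICQ. This is the meaning of ``regular,'' and it applies to the $C^1$ objective and constraints on the stratum. Interpreting $\mu$ as a shadow price is the envelope/sensitivity theorem: $dJ_L^*(B)/dB = -\mu$, provided strict complementarity and second-order conditions make the optimum locally $C^1$ in $B$. The tier partials are simply components of the gradient from (d).

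The step I expect to be the main obstacle is the smoothness and compactness bookkeeping in the common groundwork. Size-biasing introduces ratios with $\E[D_\tau]$ in the denominator, and some strata may allow socket means that vanish. I would handle this either by restricting strata to $\E[D_\tau]\ge \underline d>0$, which preserves compactness, or by noting that degenerate sockets carry zero weight in \eqref{eq:terminal-de}.
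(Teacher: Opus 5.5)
Your proposal is correct and follows the paper's argument part by part: Weierstrass on compact strata for (a), a union bound plus the three-term empirical-minimizer sandwich for (b), and the uniform-convergence limit interchange for (c), where $J_\infty$ is continuous as a uniform limit. It then uses the Lagrangian-style adjoint cancellation of the $\mathrm{d}\bm p_\ell/\mathrm{d}\lambda$ terms for (d) and standard KKT under a constraint qualification for (e).

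Your additional bookkeeping is sound and somewhat more careful than the paper, which simply assumes continuity and cites ``standard sensitivity analysis.'' That extra care covers the $\E[D_\tau]$ denominators introduced by size-biasing, the check that $J_L \downarrow J_\infty$ via \Cref{thm:threshold}(a), and the strict-complementarity and second-order hypotheses needed for $\mathrm{d}J_L^*/\mathrm{d}B = -\mu$.
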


\begin{remark}[Nonconvexity and local-optimality scope]
\label{rem:nonconvex}
The design objective $J_L(\lambda) = \Pdee^{(L)}(\lambda)$ is in
general a nonconvex function of $\lambda$, built from products and
sums of degree-law moments and reliability parameters through the
multi-round recursion.  The KKT conditions of
\Cref{thm:optimization}(e) are therefore \emph{necessary local}
optimality conditions, not a global solution method; the
shadow-price interpretation of $\mu$ and of the investment-variable
duals in \Cref{rem:investment-variables} is likewise local unless
additional convexity or monotonicity structure is imposed.  Gradient
methods on $\nabla_\lambda J_L$ converge to local minima; global
guarantees require further assumptions such as convex relaxations or
enumeration over discrete template choices.
\end{remark}

\begin{proof}
\emph{(a)} $\Designspace_B$ is compact by assumption and $J_L$ is
continuous because the finite-round recursion is built from finitely
many sums and products of continuous functions on each finite-support
stratum, with continuous extension across mixtures.  Weierstrass's
theorem applies.

\emph{(b)} For a fixed design, the empirical residual
$\Pbit^{(L)}(\Gn; \lambda)$ converges in probability to $J_L(\lambda)$
by \Cref{thm:de}.  When $\Designspace_B$ is finite, a union bound
over \Cref{thm:de} gives the uniform convergence
$\sup_{\lambda \in \Designspace_B} |\Pbit^{(L)}(\Gn; \lambda) -
J_L(\lambda)| \xrightarrow{\PR} 0$.  Standard
empirical-risk-minimization argument: for the empirical minimizer
$\widehat\lambda_n$
and any DE optimum $\lambda^*$,
$J_L(\widehat\lambda_n) - J_L(\lambda^*) \le \big[J_L(\widehat\lambda_n)
- \Pbit^{(L)}(\Gn;\widehat\lambda_n)\big] +
\big[\Pbit^{(L)}(\Gn;\widehat\lambda_n) - \Pbit^{(L)}(\Gn;\lambda^*)\big]
+ \big[\Pbit^{(L)}(\Gn;\lambda^*) - J_L(\lambda^*)\big]$, where the
middle bracket is $\le 0$ by the empirical-minimizer property and the
two outer brackets vanish in probability by uniform convergence.  For
a compact-continuum stratum, the same conclusion holds under the
stated uniform-DE-concentration assumption; we do not establish that
uniform limit here, since it would require finite-$n$ Lipschitz /
covering-number bounds on the empirical-residual functional that go
beyond \Cref{thm:de}.

\emph{(c)} Standard uniform-convergence argument.  If $\lambda_{L_m}^*
\to \bar\lambda$, then for any feasible $\lambda$,
\[
   J_\infty(\bar\lambda)
   = \lim_m J_{L_m}(\lambda_{L_m}^*)
   \le \lim_m J_{L_m}(\lambda)
   = J_\infty(\lambda),
\]
where uniform convergence justifies the interchange of limits.  The
sufficient condition $J_L \downarrow J_\infty$ with $J_\infty$
continuous follows from Dini's theorem on a compact stratum.

\emph{(d)} Reverse-mode differentiation of the finite recursion.
Differentiating $\bm p_{\ell+1} = \DEmap_\lambda(\bm p_\ell)$ with
respect to $\lambda$ and applying the chain rule backwards gives the
sum over $\ell = 0, \ldots, L-1$ in \eqref{eq:adjoint-grad}; the
$(D_\lambda\, \bm p_0(\lambda))^T\, \xi_0$ boundary term arises from
the dependence of the initial state on $\lambda$ at $\ell = 0$.  This
is the standard adjoint-state formula in finite-horizon optimal
control with parameter-dependent initial condition, or equivalently
backpropagation through a recurrent map.  Full derivation including
the explicit block structure of $D_{\bm p}\, \DEmap_\lambda$ and
$D_\lambda\, \DEmap_\lambda$ and the $\bm p_0$-dependence on the
variable-side tier is in \Cref{app:adjoint-detail}.

\emph{(e)} Standard KKT first-order necessary conditions for a
regular local optimum in a finite-dimensional smooth program with
the equality and inequality constraints displayed in
\eqref{eq:lagrangian}; complementary slackness for the budget
inequality gives the budget shadow price $\mu \ge 0$.  The
sensitivity-gradient interpretation of the partial derivatives with
respect to reliability parameters follows from \eqref{eq:adjoint-grad}
directly.
\end{proof}

\begin{remark}[Investment variables and operational shadow prices]
\label{rem:investment-variables}
Practical architecture-design problems often expose a budget through
explicit investment variables $u_r^V, u_s^C, u_{r,s}^\eta \ge 0$
with monotone reliability response curves $\epsV[r] = \epsV[r](u_r^V)$
(decreasing in $u_r^V$), $\epsC[s] = \epsC[s](u_s^C)$ (decreasing in
$u_s^C$), and $\eta_{r,s} = \eta_{r,s}(u_{r,s}^\eta)$ (increasing in
$u_{r,s}^\eta$), under a budget
$\sum_r c_r^V u_r^V + \sum_s c_s^C u_s^C + \sum_{r,s} c_{r,s}^\eta
u_{r,s}^\eta + \cdots \le B$.  The stationarity condition for an
interior active investment then reads, e.g.\ for variable-side,
$-(\partial J_L / \partial \epsV[r]) (d\epsV[r]/du_r^V) =
\mu\, c_r^V$, with analogous equations for verifier-side and
reasoning-channel investments (signs adjusted for the
residual-minimization objective).  Each such equation is a
per-unit-cost
shadow-price rule: at the optimum, the marginal residual reduction
per unit cost is equal across all active investments.  The
non-interchangeability proposition (\Cref{prop:noninterchange}) says
these shadow-price ratios depend on the design itself; an architect
cannot trade verifier reliability against proposer redundancy
one-for-one at a fixed exchange rate independent of the operating
point.
\end{remark}

\begin{remark}[Augmentation choices in $\lambda$]
\label{rem:augmentation-in-lambda}
\Cref{thm:optimization} treats $\lambda$ as an ex ante ensemble
parameter (role mix, degree laws, template proportions, reliability
tiers).  Augmentation can enter in two distinct ways: (i) as an
ex ante added-template layer, in which case the augmentation
template proportions and per-template costs are part of $\lambda$
and the framework above applies directly; or (ii) as an adaptive
or random repair policy whose distribution is realized only after
the baseline transcript is observed, in which case the optimization
problem must include the distribution of residual transcripts and
the expected repair cost.  We treat (i) within
\Cref{thm:optimization}; (ii), which is closer to a Markov decision
problem on residual transcripts, is left to follow-on work.
\end{remark}

\begin{corollary}[Budget monotonicity]
\label{cor:budget}
Increasing the available budget $B$, lowering a role cost, lowering a
verifier-template cost, or lowering a communication cost cannot worsen
the optimized residual objective.  For threshold objectives formulated
as maximization of an admissible noise level, the optimized threshold
cannot decrease when the feasible design set is enlarged.
\end{corollary}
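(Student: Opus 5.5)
The plan is to reduce every clause to a single set-inclusion argument between feasible design sets, and then apply the elementary fact that an infimum over a larger set cannot exceed the infimum over a smaller one. Fix a round budget $L$ and a stratum. Write the optimized value as $V(B, c) \defeq \inf_{\lambda \in \Designspace_B(c)} J_L(\lambda)$, where $c$ collects the unit cost coefficients (role, verifier-template, communication, augmentation) entering $\Cost$. Existence of a minimizer when the set is nonempty is \Cref{thm:optimization}(a), but the argument only needs infima, so compactness is not essential.

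\emph{Step 1 (budget).} If $B \le B'$, then $\Cost(\lambda) \le B$ implies $\Cost(\lambda) \le B'$. The socket-balance constraints $g_\tau$ and the simplex and box constraints $\Lambda$ do not depend on $B$. Hence $\Designspace_B \subseteq \Designspace_{B'}$ and $V(B') \le V(B)$.

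\emph{Step 2 (unit costs).} I will assume, as is implicit in the cost model, that $\Cost$ is nondecreasing in each unit cost coefficient for every fixed design. This holds whenever $\Cost(\lambda) = \sum_k c_k\, w_k(\lambda)$ with nonnegative usage weights $w_k(\lambda) \ge 0$ (agent counts, verifier invocations, edge counts). Then lowering any $c_k$ to $c_k' \le c_k$ gives $\Cost_{c'}(\lambda) \le \Cost_c(\lambda)$ pointwise. So every design feasible under $c$ is feasible under $c'$, and again $V$ cannot increase.

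The one real point of care is exactly this monotone dependence of $\Cost$ on the coefficients. If some usage weight could be negative, the claim would fail, so I would state nonnegativity explicitly as the standing cost assumption. This is the step I expect to need justification, and I would justify it by the operational meaning of costs as nonnegative resource counts.

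\emph{Step 3 (threshold objectives and generality).} For a threshold objective, such as maximizing the admissible noise level $a^*_\delta(\lambda)$ of \Cref{thm:threshold}(b) over feasible designs, the same inclusion argument with $\sup$ in place of $\inf$ shows that enlarging the feasible set cannot decrease the optimized threshold. No continuity of $a^*_\delta$ is needed, because only sup or inf monotonicity under set inclusion is used. Finally, I would note that the argument is uniform in $L$ and also applies to $J_\infty$. It also covers any continuous monotone transform of $\Pdee^{(L)}$, since only the ordering of feasible sets is used.
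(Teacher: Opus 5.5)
Your proposal is correct and matches the paper's proof: each listed change enlarges $\Designspace_B$, so the minimum residual cannot increase and the maximum threshold cannot decrease. Your added assumption that $\Cost$ is nondecreasing in each unit-cost coefficient (nonnegative usage weights) is exactly what the paper's one-line claim ``each such change enlarges $\Designspace_B$'' silently relies on, so stating it explicitly is a useful clarification, not a change of route.
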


\begin{proof}
Each such change enlarges $\Designspace_B$.  The minimum of a residual
objective over a larger feasible set cannot be larger; the maximum of
a threshold objective over a larger feasible set cannot be smaller.
\end{proof}

\begin{remark}[Channel-agnostic adjoint-sensitivity framework]
\label{rem:channel-agnostic-opt}
The Weierstrass existence in part~(a), the adjoint sensitivity
equations of part~(d), and the KKT shadow-price machinery of
part~(e) depend only on a continuously-differentiable DE
fixed-point map and a smooth design parametrization, not on the
message alphabet being $\Mset = \{\{0\}, \{1\}, \Uset\}$.  The same
architecture-optimization framework therefore applies, with
appropriate operator-theoretic setup, to non-erasure DE recursions
such as LLR-density DE on memoryless symmetric channels; the
role-typed shadow prices on the three erasure tiers are the
erasure-specific reading of a channel-agnostic multiplier framework.
\end{remark}

\begin{remark}[XOR/UEP convex specializations]
\label{rem:xor-water-filling}
In certain XOR/UEP specializations, under fixed code-ensemble
hypotheses and an edge-perspective parametrization, known
log-convexity results \cite{pishro2007nonuniform} yield convex or
water-filling-like allocation rules.  We do not rely on this in the
general Boolean-verifier theory: the finite-$L$ recursion
\eqref{eq:pbar}--\eqref{eq:pupdate} involves nested nonlinear products
of degree-distribution and reliability variables, so the program is in
general a non-convex NLP, for which the adjoint equations of part~(d)
and the KKT system of part~(e) give gradients and local necessary
conditions.  Closed-form characterizations are available only where
additional convexity or separability holds.
\end{remark}

\section{Converse: A Local-Soundness Bound on the Computation Tree}\label{sec:converse}

The achievability theorems \Cref{thm:de}--\Cref{thm:optimization}
specify when a particular protocol, extrinsic edge-specific message
passing on the role-typed Boolean-verifier-node ensemble, recovers
the hidden subclaim vector.  This section pairs them with a converse
direction.  We prove that, in the erasure model and within the class
of $T$-round \emph{sound} (certifying) local message-passing
protocols, no protocol can asymptotically leave fewer variables
unresolved at its terminal output than the value-conditioned
logical-forcing decoder of \Cref{thm:de}.
Throughout this section, $T$ denotes the round budget of the local
protocol class; this is distinct from the true verifier output $T_a$
introduced in \eqref{eq:check-true}; context always distinguishes the
two.  A
sound protocol is one whose non-erased outputs are correct almost
surely; soundness is the natural restriction for an erasure theory
in which messages may be missing but never wrong.  Without it, the
bound need not hold: a protocol that guesses the likeliest prior
value on undetermined transcripts can attain a smaller failure rate
than abstention, at the cost of producing wrong outputs.  That is the
regime of a BSC/absorbing-set converse, not the present erasure
converse.

\rhead{Scope of the converse.}  This is a
\emph{local-soundness} converse: a per-variable certifiability bound
within the certifying class, not an information-theoretic limit
against an unbounded-alphabet unrestricted class.  A stronger
Fano-cut-set converse, in which the per-edge alphabet is bounded by
$Y$ and the matching $T \to \infty$, $Y \to \infty$ joint limit is
taken, is sketched as a future direction in \Cref{sec:outlook}.

\rhead{Round and radius convention.}  Throughout this section,
one \emph{round} of message passing consists of one
variable-to-check update followed by one check-to-variable update.
After $T$ such rounds the terminal estimator at variable $i$ is, by
unrolling the recursion, a measurable function of the channel-gated
depth-$R_{\mathrm{node}}(T) = 2T + 2$ rooted variable neighborhood,
where depth is measured in graph distance (consistent with the
radius convention of \Cref{app:de-proof}: edge messages
$V^{(T)}_{i \to a}$ use the depth-$R_{\mathrm{edge}}(T) = 2T+1$
edge-rooted neighborhood, but the terminal node estimate at $i$
gains one additional hop in order to read all incoming round-$T$
check-to-variable messages).  The round index $T$ on the protocol
side is aligned with the iteration index $T$ in the achievability
decoder (\Cref{thm:de}), so the converse compares a $T$-round
protocol's per-variable estimate to $\Pdee^{(T)}(\lambda)$ at the
matching node depth $2T + 2$.

\subsection{The \texorpdfstring{$T$}{T}-round local protocol class}
\label{sec:converse-class}

\begin{definition}[Sound local message-passing protocol class
$\mathcal{P}_T^{\mathrm{snd}}$]
\label{def:protocol-class}
A \emph{$T$-round sound local message-passing protocol} on the
role-typed configuration ensemble of \Cref{sec:model-ensemble}
consists of:
\begin{enumerate}[label=(\alph*),leftmargin=2em]
\item For each directed edge $(i \to a)$ and each round
$t = 0, 1, \ldots, T$, a measurable update function
\[
   \Phi^{(t)}_{i \to a} : \mathcal{V}^{(t-1)}_i \times \Omega_i
   \to \mathcal{Y},
\]
where the round-$(t-1)$ local view $\mathcal{V}^{(t-1)}_i$ at
variable $i$ consists of the private observation $\Aobs_i$, the
role label $r_i = \rolemap(i)$, the incident socket types and
template labels of all checks $c \in \partial i$, and the messages
$\widetilde Y_{c \to i}^{(s)}$ received from each check neighbor in
rounds $s \le t - 1$ in the delivered alphabet defined below; at
the round-$0$ initialisation step the index set $\{s : s \le -1\}$
is empty, so $\mathcal{V}^{(-1)}_i$ contains only the observation,
role, and graph-structure information and $\Phi^{(0)}_{i \to a}$
plays the role of the achievability decoder's initial variable
message $V^{(0)}_{i \to a} = \Aobs_i$ of \eqref{eq:var-update}.
$\Omega_i$ is a private-randomness space and $\mathcal{Y}$
is an arbitrary standard Borel measurable message space (distinct
from the logical-forcing alphabet $\Mset = \{\{0\},\{1\},\Uset\}$
of \Cref{sec:model-forcing}; the converse holds against protocols
with arbitrary standard-Borel message alphabets, not only those
whose messages take values in $\Mset$, the standard-Borel hypothesis
guaranteeing existence of regular conditional probabilities used in
the proof).  The persistent role-pair channels of
\Cref{sec:model-reasoning-channel} extend to the alphabet
$\mathcal{Y}$ as follows: the delivered alphabet is
$\widetilde{\mathcal{Y}} \defeq \mathcal{Y} \cup \{\bot\}$, and for
each directed edge $u \to v$ a persistent gate $B_{u \to v}
\sim \mathrm{Bernoulli}(\eta_{\rolemap(u), \rolemap(v)})$ is drawn
once at $t = 0$; the received message at round $t$ is
$\widetilde Y_{u \to v}^{(t)} = Y_{u \to v}^{(t)}$ when $B_{u \to v}
= 1$ and $\bot$ otherwise.  This reduces to the logical-forcing
channel \eqref{eq:channel} when $\mathcal{Y} = \Mset$ and $\bot
\equiv \Uset$.
\item For each directed edge $(a \to i)$ and each round
$t = 0, 1, \ldots, T$, a measurable update function
\[
   \Psi^{(t)}_{a \to i} : \mathcal{V}^{(t)}_a \times \Omega_a
   \to \mathcal{Y},
\]
where the round-$t$ local view $\mathcal{V}^{(t)}_a$ at check
$a$ consists of the verifier output $\Zobs_a$, the template label
$\theta_a$ and its socket ordering $(j_1, \ldots, j_{d_a})$, the
role labels and incident socket types of all neighbors
$k \in \partial a$, and the messages $\widetilde Y_{k \to a}^{(s)}$
received from each variable neighbor $k \in \partial a$ in rounds
$s \le t$ in the delivered alphabet $\widetilde{\mathcal{Y}}$
(\emph{within-round ordering}: the round-$t$ check update may
read the round-$t$ output of clause~(a) of the same round, in
addition to all earlier variable outputs; see the alignment
paragraph following this definition).
The persistent-gate channel of clause (a) also gates the outgoing
$\mathcal{Y}$-valued message from $a$: the gate $B_{a \to i}$
replaces $Y_{a \to i}^{(t)}$ by $\bot$ with probability
$1 - \eta_{\rolemap(a), \rolemap(i)}$.
\item A measurable terminal estimator
$\widehat X_i^{\Pi} : \mathcal{V}^{(T)}_i \times \Omega_i \to
\{0, 1, ?\}$ for each $i \in \Vset_n$ satisfying the
\emph{soundness} (certifiability) constraint
\begin{equation}
\label{eq:soundness}
   \PR\!\left\{\widehat X_i^{\Pi} \notin \{?,\, \Xstar_i\}\right\}
   = 0
   \qquad \text{for every $i \in \Vset_n$,}
\end{equation}
that is, every non-erased terminal output must equal the true value
almost surely under the joint distribution of the graph,
observations, hidden values, and private randomness.
\end{enumerate}
The class of all such sound protocols, for fixed $T$, is denoted
$\mathcal{P}_T^{\mathrm{snd}}$.
\end{definition}

\rhead{Within-round ordering and alignment with the achievability
decoder.}
Within each round $t \in \{0, 1, \ldots, T\}$ the variable update
of clause~(a) fires first and the check update of clause~(b) fires
second, so the round-$t$ check update $\Psi^{(t)}_{a \to i}$ may
use the round-$t$ variable messages produced by its neighbors in
addition to all earlier variable outputs.  Under this ordering the
protocol-class messages $\Phi^{(t)}_{i \to a}$ and
$\Psi^{(t)}_{a \to i}$ play the roles of the achievability
extrinsic messages $V^{(t)}_{i \to a}$ and $C^{(t)}_{a \to i}$ of
\eqref{eq:var-update}--\eqref{eq:check-update} respectively for
$t = 0, 1, \ldots, T$: round~$0$ is the initialisation step in
which $\Phi^{(0)}_{i \to a}$ sees only the private observation
$\Aobs_i$ (no check has spoken yet) and $\Psi^{(0)}_{a \to i}$
then reads those round-$0$ variable initialisations from
$\partial a \setminus \{i\}$, paralleling the achievability pair
$V^{(0)}_{i \to a} = \Aobs_i$ and $C^{(0)}_{a \to i} =
\force_{\theta, j}(\Zobs_a; (\widetilde V^{(0)}_{k \to a})_{k \neq i})$
of \eqref{eq:check-update}.  Iterating for $t = 1, \ldots, T$, the
round-$T$ local view $\mathcal{V}^{(T)}_i$ that the terminal
estimator $\widehat X_i^{\Pi}$ may consult is, by unrolling the
recursion, a measurable function of the channel-gated rooted
variable neighborhood of $i$ at graph depth $R_{\mathrm{node}}(T)
= 2T+2$ (the radius convention of \Cref{app:de-proof}), exactly
matching the achievability terminal estimator $\widehat M_i^{(T)}$
of \eqref{eq:final-est}.  The converse comparison to
$\Pdee^{(T)}(\lambda)$ stated in \Cref{thm:converse} below is
therefore against the achievability $T$-iterate at the matching
node depth, and the round count $T$ on the two sides refers to the
same number of message-passing iterations counted from initialisation.

The class $\mathcal{P}_T^{\mathrm{snd}}$ is much larger than the
specific extrinsic decoder analyzed in \Cref{thm:de}: it includes
any local update that is a measurable function of the local view,
including non-extrinsic updates, randomized updates,
soft-information updates, and any sound decoder that exploits
messages from earlier rounds in non-trivial ways.  The constraints
are locality (messages pass along graph edges), the round budget
$T$, and the certifiability requirement \eqref{eq:soundness}.
Soundness is the natural matching constraint for an erasure model
\emph{(\Cref{lem:soundness})}: the achievability decoder of
\Cref{thm:de} is itself sound, and the converse measures whether
any sound local protocol can asymptotically leave fewer variables
unresolved at its terminal output than that specific sound decoder.

\subsection{The local-soundness converse}
\label{sec:converse-theorem}

\begin{theorem}[Local-soundness converse on the computation tree]
\label{thm:converse}
Fix any $T \in \N$.  Assume full-support value priors
$\Vbias_r \in (0, 1)$ for every $r \in \RolesV$.  Under the
erasure-only observation model (\Cref{sec:model-observations}) on
the bounded-degree role-typed configuration ensemble of
\Cref{sec:model-ensemble} with persistent role-pair channel gates,
every sound protocol $\Pi \in \mathcal{P}_T^{\mathrm{snd}}$
satisfies the ensemble-average lower bound
\begin{equation}
\label{eq:local-map-converse}
   \liminf_{n \to \infty}\, \E_{\Gn}\!\left[\frac{1}{n}
   \sum_{i \in \Vset_n}
   \PR\!\left\{\widehat X_i^{\Pi} = \,?\,\right\}\right]
   \;\ge\; \Pdee^{(T)}(\lambda),
\end{equation}
where $\Pdee^{(T)}(\lambda)$ is the per-variable terminal
unresolved fraction of the value-conditioned logical-forcing
decoder of \Cref{thm:de}, evaluated at iteration $T$.  The bound is
on the joint expectation over the graph ensemble and the protocol's
randomness, not a high-probability statement for almost every
graph.
\end{theorem}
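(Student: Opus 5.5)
The plan is to reduce the ensemble statement to a per-variable statement on the typed Galton--Watson computation tree, and then to show that on the tree the logical-forcing output is exactly the set of target values that remain consistent with everything a $T$-round protocol can possibly see. Soundness \eqref{eq:soundness} then forbids any protocol from committing whenever that set is $\Uset$.

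\emph{Step 1 (localization).} By the round/radius convention following \Cref{def:protocol-class}, $\widehat X_i^{\Pi}$ is a measurable function of the private randomness and of the \emph{gate-reachable data} $\mathcal{F}_i$ in the depth-$2T+2$ rooted neighborhood of $i$. These data are the graph structure, the role and template labels, the observations $\Aobs_k$ and outputs $\Zobs_a$, and the gates $B_{u\to v}$, restricted to information that can actually travel to $i$ along delivered directed paths within $T$ rounds. By \Cref{lem:tree-like} and its variable-rooted form (\Cref{cor:tree-tv-node}), the law of this neighborhood converges in total variation to the typed tree law. The per-variable erasure probability is bounded by $1$, so both $\E_{\Gn}[\frac1n\sum_i \PR\{\widehat X_i^\Pi=?\}]$ and the forcing decoder's expected erasure fraction differ from their tree-limit values by $o_n(1)$.

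One point needs care: soundness is imposed on the finite graph, not on the tree. I would handle it by working on the event that the neighborhood is a tree, which has probability $1-O(1/n)$. On that event the conditional law of the values and erasures in the ball, given the ball's structure, is the product law \eqref{eq:prior}, \eqref{eq:var-obs}, \eqref{eq:check-obs}. Soundness of $\Pi$ then forces $\widehat X_i^\Pi=?$ almost surely on the tree-event intersected with any positive-probability set where $\Xstar_i$ is not determined by $\mathcal{F}_i$. The $O(1/n)$ exceptional mass disappears in the $\liminf$.

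\emph{Step 2 (exactness of forcing on the gated tree).} On a tree with product prior and $\Vbias_r\in(0,1)$, I will show by induction on depth that the set-valued messages \eqref{eq:check-update}--\eqref{eq:var-update} compute exactly
\[
\{b : \PR\{\Xstar_i=b \mid \mathcal{F}_i\}>0\}.
\]
This is the classical exactness of arc-consistency/set-BP on trees. The induction is as follows.
\begin{itemize}[leftmargin=2em]
\item Subtrees hanging below distinct sockets of a check are conditionally independent given the check's inputs.
\item A descendant subtree whose message is $\Uset$, or whose forward gate is closed, leaves both values of that input with positive conditional probability. This uses full support of the prior and the fact that erasures are independent of values.
\item Hence the values $x_k\in M_k$ are exactly the jointly feasible ones, and \eqref{eq:forcing} is precisely the feasibility projection onto the target.
\end{itemize}
Closed return gates and erased $\Zobs_a$ remove the corresponding constraint from $\mathcal{F}_i$ entirely, matching the conventions $\force=\Uset$ and the channel kernel \eqref{eq:channel}. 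Leaves beyond the depth horizon are unconstrained, again by full support. The depth-$T$ decoder must see exactly the same horizon as $\mathcal{F}_i$. I will check this against the alignment paragraph: round-$0$ variable messages are $\Aobs_k$, and each round adds two hops.

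\emph{Step 3 (conclusion).} On the tree, the forcing estimate $\widehat M_i^{(T)}=\Uset$ therefore means that both values retain positive posterior given $\mathcal{F}_i$. Any sound $\Pi$ must then output $?$ almost surely on that event, since otherwise it would be wrong with positive probability. This gives
\[
\PR\{\widehat X_i^\Pi=?\}\ge \PR\{\widehat M_i^{(T)}=\Uset\}
\]
in the tree limit. Averaging over roles and values yields \eqref{eq:terminal-de}, and combining with Step 1 gives \eqref{eq:local-map-converse}.

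I expect Step 2 to be the main obstacle, specifically showing that $\mathcal{F}_i$ is the right $\sigma$-field. A non-extrinsic or soft protocol might route information around a closed gate by echoing, or use earlier-round messages. The claim to verify is that every such route is itself a delivered path in the tree, so it is already accounted for by the extrinsic set-BP. My first attempt would be to define $\mathcal{F}_i$ as generated by all data at nodes joined to $i$ by a gate-open directed path of admissible length. I would then prove by induction on $t$ that every round-$t$ message is measurable with respect to the corresponding path-$\sigma$-field.
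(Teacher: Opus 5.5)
Your proposal is correct and follows essentially the same route as the paper's proof. Both arguments localize the terminal estimator to the channel-gated depth-$(2T+2)$ transcript, use soundness to force abstention whenever the posterior support of $\Xstar_i$ is $\Uset$, pass to the typed Galton--Watson tree via \Cref{cor:tree-tv-node}, identify exact support propagation with the logical-forcing recursion, and average to $\Pdee^{(T)}$. Your version adds two details the paper only asserts: applying soundness on the finite graph restricted to the tree event, conditioned on the ball structure, and an induction on $t$ showing each round-$t$ message is measurable with respect to the gate-open path $\sigma$-field; note that an arbitrary protocol has no well-defined tree-limit abstention value, so the comparison is carried by that finite-graph tree-event argument rather than the TV remark in your first paragraph.
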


\rhead{In words.}  Within the certifying class
$\mathcal{P}_T^{\mathrm{snd}}$, no $T$-round local protocol can
asymptotically leave fewer variables unresolved at its terminal
output than the value-conditioned logical-forcing decoder of
\Cref{thm:de}.  Equivalently, the logical-forcing decoder is
asymptotically optimal in the \emph{certifiability} sense: any sound
protocol that leaves fewer variables unresolved on a positive
fraction of instances would, by soundness,
already commit to the true value on those instances; but the proof
below shows there is no measurable function of the depth-$(2T+2)$
local view that does so almost surely beyond what logical forcing
already achieves.

\rhead{Terminal abstention vs.\ primitive erasure.}  The quantity
bounded in \eqref{eq:local-map-converse} is the protocol's
\emph{terminal abstention}: the event $\widehat X_i^{\Pi} = \,?$
records that variable $i$ is still uncertified \emph{after} all
observations, verifier outputs, channel gates, and $T$ rounds of
local computation have been used.  It is distinct from the model's
primitive erasure events, the variable-side abstention
$\Aobs_i = \Uset$, the verifier-side erasure $\Zobs_a = *$, and the
role-pair channel erasures $B_{u \to v} = 0$, which generate the
uncertainty rather than constitute the protocol's verdict.  We
accordingly read $\tfrac{1}{n}\sum_{i} \PR\{\widehat X_i^{\Pi} = \,?\}$
as the average \emph{terminal abstention probability}.

\begin{proof}
The proof has three steps: localization, tree representation, and
optimal sound estimator on the tree.

\emph{Step 1 (observable transcript dominance).}  By
\Cref{def:protocol-class}, every variable update
$\Phi^{(t)}_{i \to a}$ at round $t \ge 0$ is computed from the
round-$(t-1)$ local view at $i$ (with $\mathcal{V}^{(-1)}_i$
containing only the observation, role, and graph-structure
information at the round-$0$ initialisation step), every check
update $\Psi^{(t)}_{a \to i}$ at round $t \ge 0$ is computed from
the round-$t$ local view at $a$ (which includes the round-$t$
variable outputs of clause~(a) under the within-round ordering),
and each directed $\mathcal{Y}$-valued message is gated by its
persistent Bernoulli edge variable $B_{u \to v}$.  Iterating, the
round-$T$
local view at $i$ is a measurable function of the channel-gated
observable transcript
\begin{multline}
\label{eq:obs-transcript}
   \mathsf{Obs}_i^{(T)}
   \defeq \Big\{
   \big(\Aobs_j, r_j\big)_{j},\;
   \big(\Zobs_a, \theta_a, (j_1, \ldots, j_{d_a})\big)_{a},\;
   \big(B_{u \to v}\big)_{u \to v}\, :\\
   (j, a, u \to v)\text{ in the channel-gated depth-}(2T+2)\text{
   rooted variable neighborhood of }i,\\
   \text{accessible through open channels at the relevant round}
   \Big\},
\end{multline}
i.e., the channel-gated rooted typed factor-graph structure to
depth $2T+2$ together with the private observations $\Aobs_j$,
verifier outputs $\Zobs_a$, and forward/return channel gates
$B_{u \to v}$ encountered along open channels.  We pair this with
the protocol's private randomness $\Omega_i$.  Verifier outputs
behind erased return channels and incoming messages on erased
forward channels do not enter $\mathsf{Obs}_i^{(T)}$.  The terminal
estimator $\widehat X_i^{\Pi}$ is therefore a measurable function
of $\mathsf{Obs}_i^{(T)}$ and the private randomness.  Because $\mathcal{Y}$ and $\Omega_i$ are standard Borel, the
regular conditional probability
$\PR\{\Xstar_i \mid \mathsf{Obs}_i^{(T)}\}$ exists and the posterior
support is well-defined for almost every realization of
$\mathsf{Obs}_i^{(T)}$.  Soundness \eqref{eq:soundness} forces
$\widehat X_i^{\Pi} = \,?$ almost surely on every
$\mathsf{Obs}_i^{(T)}$ for which the posterior support
$\operatorname{supp}\, \PR\{\Xstar_i \mid \mathsf{Obs}_i^{(T)}\}$
has cardinality $\geq 2$: any positive conditional probability of
outputting a value $b \in \{0, 1\}$ on such a transcript would,
under full-support priors $\Vbias_r \in (0, 1)$, give a positive
unconditional probability of $\widehat X_i^{\Pi} = b \neq \Xstar_i$
on the complementary value branch, contradicting soundness; consequently, for every
sound $\Pi$,
\begin{equation}
\label{eq:bayes-bound}
   \PR\!\left\{\widehat X_i^{\Pi} = \,?\,\right\}
   \;\ge\; \PR\!\left\{\widehat X_i^{\,\mathrm{snd}}
   = \,?\,\right\},
\end{equation}
where $\widehat X_i^{\,\mathrm{snd}}$ is the sound Bayes-optimal
estimator on $\mathsf{Obs}_i^{(T)}$: declare $b$ when the posterior
support is the singleton $\{b\}$ and declare $?$ otherwise.

\emph{Step 2 (tree representation).}  By the variable-rooted
locally-tree-like result \Cref{cor:tree-tv-node} (a corollary of
\Cref{lem:tree-tv}), the depth-$(2T+2)$ neighborhood of $i$
converges in total variation, as $n \to \infty$, to the typed
Galton-Watson tree $\widetilde{\mathcal{N}}^{r}_{2T+2}$ at rate
$C_T / n + \Delta_{n, T}$, where $r$ is the role of $i$ and the
offspring law at the root is the full node degree law
$D^{(r),\mathrm{node}}$.
Total-variation convergence implies the corresponding convergence
of expected per-variable Bayes risk, with an
$O(C_T / n + \Delta_{n, T})$ correction; since both terms vanish as
$n \to \infty$, the correction vanishes in the
$\liminf_{n \to \infty}$ and the converse conclusion is unaffected.

\emph{Step 3 (support-BP equals logical forcing on the tree).}  On
the typed Galton-Watson tree $\widetilde{\mathcal{N}}^{r}_{2T+2}$,
exact posterior-support propagation under the channel-gated
erasure-only observation model satisfies the recursion
\begin{equation}
\label{eq:support-recursion}
   S_{a \to i}
   = \begin{cases}
   \Uset, & \Zobs_a = *\;\text{or}\;B_{a \to i} = 0, \\
   \force_{\theta, j}\!\big(\Zobs_a;\, (\widetilde S_{k \to a})_{k
   \in \partial a \setminus i}\big), & \text{otherwise},
   \end{cases}
   \qquad
   S_{i \to a} = \Aobs_i \cap
   \!\!\bigcap_{c \in \partial i \setminus a}\!\!
   S_{c \to i},
\end{equation}
where the channel-gated variable-to-check input is
\begin{equation}
\label{eq:gated-input}
   \widetilde S_{k \to a}
   \defeq
   \begin{cases}
   S_{k \to a}, & B_{k \to a} = 1, \\
   \Uset, & B_{k \to a} = 0.
   \end{cases}
\end{equation}
The forward gating mirrors the channel-gated observable transcript
$\mathsf{Obs}_i^{(T)}$ in \eqref{eq:obs-transcript}: incoming
messages on erased forward channels deliver $\Uset$ regardless of
the sender's support, matching the way the achievability decoder of
\Cref{thm:de} applies the forcing operator to the received messages
$\widetilde V_{k \to a}$ in \eqref{eq:check-update}.
With terminal node-support $S_i = \Aobs_i \cap \bigcap_{a \in
\partial i} S_{a \to i}$, this is a fact about the posterior
support of $\Xstar_i$ given $\mathsf{Obs}_i^{(T)}$ on a tree, not
about any specific protocol's message alphabet: the branches of the
tree rooted at distinct neighbors of any node are conditionally
independent given that node's hidden value, because the role-typed
Galton-Watson construction draws subtree degrees, observations, and
channel gates independently across branches conditional on the root's
role and value.  The posterior support therefore factorizes exactly
as the logical-forcing recursion of \Cref{sec:model-forcing}.  Under full-support priors
$\Vbias_r \in (0, 1)$, $S_i = \{\Xstar_i\}$ iff some chain of
singleton evidence forces $\Xstar_i$ from the tree's leaves to the
root, and $S_i = \Uset$ otherwise.  The terminal node estimate
$\widehat M_i^{(T)} = \Aobs_i \cap \bigcap_{a \in \partial i}
\widetilde C_{a \to i}^{(T)}$ produced by the value-conditioned
logical-forcing decoder of \Cref{thm:de} computes precisely $S_i$
in $T$ rounds; therefore
\begin{equation}
\label{eq:bayes-equals-de}
   \PR\!\left\{\widehat X_i^{\,\mathrm{snd}} = \,?
   \;\Big|\; \widetilde{\mathcal{N}}^{r}_{2T+2}\right\}
   \;=\; \PR\!\left\{\widehat M_i^{(T)} = \Uset
   \;\Big|\; \widetilde{\mathcal{N}}^{r}_{2T+2}\right\}.
\end{equation}

\emph{Step 4 (averaging to $\Pdee^{(T)}$).}  Averaging the
right-hand side of \eqref{eq:bayes-equals-de} over
$\widetilde{\mathcal{N}}^{r}_{2T+2}$, hidden values, observations,
and channel gates, with the root variable drawn uniformly under the
full node degree law $D^{(r), \mathrm{node}}$ as in
\eqref{eq:terminal-de}, gives the per-variable terminal
unresolved fraction $\Pdee^{(T)}(\lambda)$ of the logical-forcing
decoder.

Combining \eqref{eq:bayes-bound}, the convergence in Step 2, the
support-BP identity \eqref{eq:bayes-equals-de}, and the averaging
in Step 4 yields \eqref{eq:local-map-converse}.
\end{proof}

\subsection{Operational reading}
\label{sec:converse-operational}

\Cref{thm:converse} certifies the value-conditioned logical-forcing
decoder as asymptotically optimal among $T$-round
\emph{sound} (certifying) local protocols on the role-typed
configuration ensemble in the erasure model: no non-extrinsic,
randomized, or soft-information sound local protocol can
asymptotically leave fewer variables unresolved at its terminal
output than $\Pdee^{(T)}$.  Combined
with \Cref{thm:de} and \Cref{thm:threshold}, this gives a clean
operational picture: an architect committed to certifiable outputs,
the natural posture for verifier-style agent systems, cannot
outperform the iterate of the value-conditioned density-evolution
map at the corresponding round budget.  Dropping soundness allows a
protocol to guess the likeliest prior value on undetermined transcripts and
attain a smaller failure rate than abstention; that regime is the
natural domain of a non-erasure (BSC/absorbing-set) converse and is
outside the scope of the present erasure theory.

A separate, stronger converse direction, a Fano-cut-set lower
bound that allows the per-edge alphabet to be unbounded and that
matches \Cref{thm:threshold}'s asymptotic threshold surface in the
joint limit $T, Y \to \infty$, requires book-keeping of role-typed
cut-set capacities and matching with the spectral-radius condition.
We sketch this direction in \Cref{sec:outlook} and pursue it in
follow-on work.

\rhead{Relation to decision-theoretic converses on multi-agent
DAGs.}  A complementary converse direction is developed by Ao, Gao,
and Simchi-Levi \cite{ao2026reliability}, who treat LLM-based
multi-agent planning as a delegated decision problem on a finite
acyclic decision network with finite-capacity language interfaces.
They prove that any delegated DAG is decision-theoretically
dominated by a centralized Bayes decision-maker observing the same
evidence and characterize the gap as an expected posterior
divergence, which reduces to conditional mutual information under
logarithmic loss.  The two converses are complementary rather than
overlapping: \cite{ao2026reliability} bounds the
\emph{decentralization loss} of an arbitrary delegated DAG against a
centralized oracle under general decision losses, while
\Cref{thm:converse} bounds the \emph{message-passing reach} of the
certifying-protocol class on the role-typed configuration ensemble
against the value-conditioned density-evolution decoder under the
abstention loss.  The mutual-information characterization in
\cite{ao2026reliability} is closer in spirit to the
Fano-cut-set direction sketched in \Cref{sec:outlook-fano} and is a
natural reference point for the unbounded-alphabet limit.

\section{Calibration of Reliability Parameters}\label{sec:calibration}

The framework is useful only if its parameters can be related to real
agent traces.  We give a calibration protocol that maps each model
parameter to an operational quantity logged or readily instrumented
in deployed multi-agent systems.

\begin{itemize}[leftmargin=2em]
\item \emph{Architecture statistics $\{\pi_r^V\}, \{\pi_\theta^C\},
\alpha, \{\mathsf{P}_D^{(r)}\}$.}  Read off directly from the
deployed system's role labels, template taxonomy, and incidence
graph: $\widehat\pi_r^V = \#\{i : \rolemap(i) = r\} / |\Vset|$,
$\widehat\pi_\theta^C = \#\{a : \theta(a) = \theta\} / |\Cset|$,
$\widehat\alpha = |\Cset| / |\Vset|$, and the empirical socket-count
distribution $\widehat{\mathsf P}_D^{(r)}(d)$.  Verify the empirical
socket-balance $\widehat\pi_r^V \cdot
\widehat{\E}[D_\tau^{(r)}] \approx
\widehat\alpha \widehat\pi_\theta^C$ (cf.\
\eqref{eq:socket-balance-opt}) before running density evolution.
\item \emph{Variable-side erasure $\epsV[r]$.}  Estimate as the rate
at which role-$r$ variable agents abstain, time out, or produce
artifacts that cannot be used as local evidence on a held-out subset
of tasks.  In Hilbert this is the per-step type-checker abstention
rate; in CodeR it is the rate at which a code-fragment proposer fails
to commit; in a sensor network it is the per-modality
``signal-not-confident'' rate.
\item \emph{Verifier-side erasure $\epsC[s]$.}  Estimate as the rate
of \emph{no-verdict} events (verifier timeouts, sandbox crashes,
missing dependencies, malformed invocations, or ``cannot
determine'' / abstention outputs) among role-$s$ verifier calls
attempted.  Crucially, distinguish a no-verdict event ($\Zobs_a =
*$) from a definite negative verdict ($\Zobs_a = 0$): a failing
unit test or a Lean kernel rejection that runs to completion is a
non-erased negative output, not an erasure.  Confusing the two
inflates $\widehat\epsC[s]$ and erases the AND
positive-versus-negative certificate asymmetry the framework relies
on.
\item \emph{Reasoning-channel fidelity $\eta_{r, s}$.}  Estimate by
a controlled artifact-usability test: sample artifacts whose
source-side status is known to be non-erased and correct, transmit
them from role $r$ to role $s$, and measure the fraction the
receiver can parse and map to the intended local input object.  This
isolates the channel: format/translation success, not downstream
verifier success, not receiver competence.  In Hilbert this is the
refactor success rate (proposer's draft format $\to$ Lean kernel
input format).  Mixing $\eta_{r,s}$ with downstream verifier success
double-counts $\epsC[s]$ and is a common pitfall.
\item \emph{Value priors $\{\Vbias_r\}$.}  Estimate from held-out
traces or from the empirical fraction of valid subclaims in a
role-specific decomposition.  The DE recursion samples through
sockets, so if higher-degree subclaims are systematically harder
(value-degree dependence), the socket-level prior $\widehat\beta_{r,
\tau} = \big(\sum_{i : \rolemap(i)=r} D_{i,\tau} \one\{\Xstar_i =
1\}\big) / \big(\sum_{i : \rolemap(i)=r} D_{i,\tau}\big)$ differs
from $\widehat\beta_r$ and the socket-level prior should be used (cf.\
\Cref{sec:model-observations}).  When the role-level prior suffices,
report the diagnostic $\widehat\beta_{r,\tau} \approx
\widehat\beta_r$ for all relevant $\tau$.
\end{itemize}

\rhead{Soundness diagnostics.}  The erasure-only theory assumes
that non-erased outputs are correct: $\Aobs_i \neq \Uset \Rightarrow
\Aobs_i = \{\Xstar_i\}$ and $\Zobs_a \neq * \Rightarrow \Zobs_a =
f_\theta(\Xstar_{\partial a})$.  Empirically check this with
audit-labelled subsets:
\begin{equation}
\label{eq:wrong-non-erased}
\begin{aligned}
   \widehat\delta_r^V &\defeq
   \frac{\#\{\text{non-erased role-}r\text{ variable outputs that are
   wrong}\}}
   {\#\{\text{non-erased role-}r\text{ variable outputs}\}},
   \\[6pt]
   \widehat\delta_s^C &\defeq
   \frac{\#\{\text{non-erased role-}s\text{ verifier outputs that
   are wrong}\}}
   {\#\{\text{non-erased role-}s\text{ verifier outputs}\}}.
\end{aligned}
\end{equation}
The erasure-only theory is appropriate only when
$\widehat\delta_r^V$ and $\widehat\delta_s^C$ are negligible
relative to the target reliability level.  When they are not, the
calibrated system belongs to the hybrid erasure-and-flip extension
(\Cref{sec:outlook-bsc}) rather than the present BEC-style theory.

\rhead{Empirical fault-injection benchmarks as calibration
inputs.}  Recent fault-injection benchmarks for LLM-based multi-agent
systems, notably MAS-FIRE \cite{jia2026masfire}, can supply empirical
inputs for the calibration protocol above.  MAS-FIRE defines a
taxonomy of fifteen fault types covering intra-agent cognitive errors
and inter-agent coordination failures, injected through prompt
modification, response rewriting, and message-routing manipulation.
Its fault categories can be mapped, as a modeling step, onto the
M1 (variable-side, $\epsV[r]$) / M2 (verifier-side, $\epsC[s]$) /
M3 (reasoning-channel, $\eta_{r,s}$) decomposition that drives the
certificate-stopping-set theorem (\Cref{thm:certstop}); this mapping
is a calibration layer rather than MAS-FIRE's own taxonomy.  Such
empirical grounding supports the view that the three erasure tiers
correspond to distinct failure modes that engineers can observe and
intervene on separately.

After calibration, one runs the density-evolution recursion
(\Cref{thm:de}) and compares the predicted residual $\Pdee^{(L)}$
with held-out executions of finite agent networks.  The same
calibrated model can then be used for design: change a role mixture,
add cross-verification checks, improve a communication channel, or
allocate more budget to a verifier role, and evaluate the predicted
reliability change \emph{before} running a large benchmark.

The boundaries within which this calibration is meaningful, and the
failure modes the erasure-only model does not capture, are discussed
in \Cref{sec:outlook-limitations,sec:outlook-bsc}.

\section{Numerical Validation and DE Illustrations}\label{sec:numerical}

This section reports finite-graph Monte-Carlo concentration tests of
\Cref{thm:de} in both the canonical XOR specialization
(\Cref{sec:numerical-xor}) and the value-conditioned AND
specialization (\Cref{sec:numerical-and}); a deterministic-graph
validation of \Cref{thm:de-deterministic} on fixed graphs of
controlled local-cycle density (\Cref{sec:numerical-deterministic});
and a deterministic DE illustration of the non-interchangeability of
the three reliability tiers (\Cref{sec:numerical-3tier}).  The
simulator and the recursions used in these experiments are available
from the authors upon request.

\rhead{Reproducibility conventions.}  All Monte-Carlo
experiments use multigraph configuration-model instances (no
simple-graph conditioning, per \Cref{sec:notation}); each trial
independently resamples the matching, the hidden vector, the
variable- and verifier-side erasures, and the persistent
directed-channel gates of \Cref{sec:model-reasoning-channel};
random seeds are deterministic functions of the trial index and
$(\epsV, n)$ for the XOR sweep (full seed sequences available from
the authors); finite-$n$ socket-balance enforcement
uses the standard rounding-and-matching step of
\Cref{sec:model-ensemble} with $O(\sqrt{n})$ rounding fluctuations.
For the XOR ensemble the all-zero hidden vector is used without
loss of generality (\Cref{cor:xor-de}).

\subsection{XOR DE vs.\ Monte-Carlo}\label{sec:numerical-xor}

\rhead{Setup.}
Variable role and check role are both singletons.  Templates are XOR
factors of arity $d_c$.  Variable-side erasure is $\epsV[r] = \epsV$
for the unique variable role; verifier-side erasure is $\epsC[s] =
\epsC$ for the unique check role; reasoning-channel fidelity is
$\eta_{r,s} = \eta$ for the unique role pair.  We pick $(d_v, d_c) =
(3, 6)$ (the textbook $(3, 6)$-regular LDPC-BEC ensemble, with
noiseless BP threshold ${\epsV}^{*} \approx 0.4294$, below the BEC
capacity threshold $0.5$ at rate~$1/2$), fix $\epsC = 0.05$ and
$\eta = 0.95$ to make the verifier-side and channel-side tiers
visibly active, and sweep the variable-side erasure $\epsV$ across
$\{0.20, 0.25, \ldots, 0.55\}$.  For each
$\epsV$, we generate $30$ independent draws of the role-typed
configuration model graph at each of three problem sizes
$n \in \{200,\, 1000,\, 5000\}$.  Each instance is decoded by
extrinsic edge-specific message passing for $L = 50$ iterations, and
the empirical bit-erasure rate $\Pbit^{(L)}(\Gn)$ is averaged across
the $30$ trials.  In parallel, we evaluate the deterministic DE
recursion of \Cref{cor:xor-de} for $L$ iterations to obtain
$\Pdee^{(L)}(\lambda)$.

\rhead{Result.}
\Cref{fig:numerical-validation} plots $\Pbit^{(L)}(\Gn)$ versus
$\epsV$.  The DE prediction (solid blue curve) and the empirical
points are in tight agreement for every $\epsV$ (max absolute
deviation across the swept grid and the three problem sizes is
$0.033$ at $n = 200$ and decreases monotonically with $n$).  The
empirical spread shrinks visibly with $n$: at $\epsV = 0.40$, the
standard deviation of the empirical bit-erasure rate is $0.075$ at
$n = 200$, $0.032$ at $n = 1000$, and $0.016$ at $n = 5000$,
consistent with the $1/\sqrt{n}$ McDiarmid-rate prediction of
\Cref{thm:de}.  The plotted error bars are one-standard-deviation
across graph/noise realizations and measure finite-instance
dispersion; the standard error of the displayed empirical mean is
smaller by a factor $1/\sqrt{30}$.

\begin{figure}[t]
\centering
\begin{tikzpicture}
\begin{axis}[
    width=0.95\linewidth, height=6.6cm,
    xlabel={Variable-side erasure rate $\epsV$},
    ylabel={Bit-erasure rate $\Pbit^{(L)}$},
    xmin=0.18, xmax=0.58,
    ymin=0, ymax=0.6,
    grid=major,
    grid style={dashed, gray!30},
    legend pos=north west,
    legend style={font=\footnotesize, fill=white, draw=black!50},
    tick label style={font=\footnotesize},
    label style={font=\small},
    cycle list name=color list,
]
\addplot[thick, blue, mark=none, smooth] table {
0.1800 0.010998
0.1900 0.012429
0.2000 0.014073
0.2100 0.015983
0.2200 0.018227
0.2300 0.020899
0.2400 0.024133
0.2500 0.028116
0.2600 0.033126
0.2700 0.039569
0.2800 0.048031
0.2900 0.059296
0.3000 0.074171
0.3100 0.092979
0.3200 0.115029
0.3300 0.138888
0.3400 0.163221
0.3500 0.187201
0.3600 0.210438
0.3700 0.232784
0.3800 0.254219
0.3900 0.274771
0.4000 0.294496
0.4100 0.313452
0.4200 0.331700
0.4300 0.349297
0.4400 0.366294
0.4500 0.382740
0.4600 0.398679
0.4700 0.414151
0.4800 0.429191
0.4900 0.443832
0.5000 0.458105
0.5100 0.472037
0.5200 0.485653
0.5300 0.498975
0.5400 0.512026
0.5500 0.524824
0.5600 0.537388
0.5700 0.549734
0.5800 0.561879
};
\addlegendentry{DE prediction $\Pdee^{(L)}$};
\addplot+[only marks, mark=o, mark size=2pt, color=red,
          error bars/.cd, y dir=both, y explicit] table[y error index=2] {
0.2000 0.016167 0.014358
0.2500 0.028500 0.025986
0.3000 0.107167 0.076275
0.3500 0.171500 0.084923
0.4000 0.300000 0.075299
0.4500 0.381833 0.043040
0.5000 0.457833 0.044753
0.5500 0.534333 0.039974
};
\addlegendentry{Empirical, $n = 200$};
\addplot+[only marks, mark=square, mark size=2pt, color=orange,
          error bars/.cd, y dir=both, y explicit] table[y error index=2] {
0.2000 0.013300 0.007412
0.2500 0.033433 0.016649
0.3000 0.082467 0.035219
0.3500 0.194967 0.036397
0.4000 0.287600 0.031869
0.4500 0.383100 0.026072
0.5000 0.454300 0.022181
0.5500 0.527700 0.018921
};
\addlegendentry{Empirical, $n = 1000$};
\addplot+[only marks, mark=triangle, mark size=2.5pt,
          color=black!60!green,
          error bars/.cd, y dir=both, y explicit] table[y error index=2] {
0.2000 0.014133 0.003359
0.2500 0.028907 0.007327
0.3000 0.075947 0.015239
0.3500 0.185707 0.023040
0.4000 0.299373 0.015778
0.4500 0.385673 0.009698
0.5000 0.458047 0.010157
0.5500 0.524467 0.009933
};
\addlegendentry{Empirical, $n = 5000$};
\end{axis}
\end{tikzpicture}
\caption{Density-evolution prediction (\Cref{thm:de},
\Cref{cor:xor-de}) vs.\ empirical bit-erasure rate from Monte-Carlo
simulation of the role-typed configuration ensemble.  Single-role
$(d_v, d_c) = (3, 6)$-regular XOR ensemble with three-tier erasure
$(\epsV, \epsC, \eta) = (\text{swept}, 0.05, 0.95)$ and $L = 50$
message-passing rounds.  Empirical points are averages over $30$
independent graph realizations at each of three problem sizes
$n \in \{200, 1000, 5000\}$, with one-standard-deviation error bars.
The empirical means track the DE prediction across the full $\epsV$
range, and the empirical spread shrinks with $n$ at the McDiarmid
rate $O(1/\sqrt{n})$ predicted by \Cref{thm:de}: the figure
validates the finite-$n$ accuracy of the DE recursion under the
model assumptions of the theorem.}
\label{fig:numerical-validation}
\end{figure}

\rhead{Verifier-side and channel-side tiers are active.}
At $\epsC = 0$, $\eta = 1$ (the noiseless-LDPC-BEC special case of
\Cref{cor:xor-de}), the threshold of the $(3, 6)$-regular ensemble
is the textbook value ${\epsV}^{*} \approx 0.4294$, at which
$\Pdee^{(\infty)}$ jumps from $0$ to a positive limit.  In the
presence of a verifier-side erasure $\epsC = 0.05$ and a channel
loss rate $1 - \eta = 0.05$, the threshold shifts left and the
above-threshold residual rises.  The next two subsections quantify
the two ways in which this rise is more than a single-scalar effect.

\subsection{AND value-conditioned DE: the certificate asymmetry}
\label{sec:numerical-and}

\rhead{Setup.}  We exercise the AND specialization
(\Cref{prop:and-de}) on a single-role $(d_v, d_c) = (3, 4)$-regular
AND ensemble with $\epsV = 0.20$, $\epsC = 0.05$, and directional
fidelities $\eta_{V \to C} = \eta_{C \to V} = 0.95$.  Value prior
$\Vbias = 0.7$ is realistic for verifier-style settings in which
most subclaims are valid.  We iterate the value-conditioned DE
recursion of \Cref{prop:and-de} for $L = 30$ rounds and plot
$\pmsg^{(1)}_\ell$ and $\pmsg^{(0)}_\ell$ separately.

\rhead{Result.}  \Cref{fig:and-history} shows that the two
value-conditioned curves diverge after one round and converge to
distinct fixed points.  The finite-graph Monte-Carlo overlay at
$n = 200$ ($40$ trials, $\pm 1\sigma$ error bars) lies within one
standard deviation of the DE prediction at every iteration,
consistent with \Cref{thm:de}.  The value-1 branch drops from $0.20$ to
$0.095$ in one round and stays there: a positive certificate
(verifier outputs $z = 1$) certifies all inputs at once, so a single
positive verdict on a check resolves every variable in its
neighborhood, and a positive verdict happens with probability
$\Vbias^{d_c - 1} = 0.7^3 \approx 0.343$ per neighbor, independent
of message-passing progress.  The value-0 branch drops more slowly:
$0.20 \to 0.149 \to 0.129$, then plateaus.  A negative certificate
(verifier outputs $z = 0$) certifies a target only when every
other input is already known to be $1$, which requires those
other variables to be resolved first, hence depends on
$\pmsg^{(1)}$.  The asymptotic gap $\pmsg^{(0)}_\infty -
\pmsg^{(1)}_\infty \approx 0.034$ visualizes the certificate
asymmetry of \Cref{prop:and-de} and the practical reading of
\Cref{rem:and-practical}.

\begin{figure}[t]
\centering
\begin{tikzpicture}
\begin{axis}[
    width=0.95\linewidth, height=6.6cm,
    xlabel={Iteration $\ell$},
    ylabel={Per-edge erasure rate $\pmsg^{(b)}_\ell$},
    xmin=0, xmax=30,
    ymin=0.08, ymax=0.25,
    grid=major,
    grid style={dashed, gray!30},
    legend pos=north east,
    legend style={font=\footnotesize, fill=white, draw=black!50},
    tick label style={font=\footnotesize},
    label style={font=\small},
]
\addplot[thick, blue, mark=none] table {
0 0.200000
1 0.095342
2 0.095342
5 0.095342
10 0.095342
20 0.095342
30 0.095342
};
\addlegendentry{$\pmsg^{(1)}_\ell$ DE (positive certificate)};
\addplot[thick, red, dashed, mark=none] table {
0 0.200000
1 0.149338
2 0.129122
5 0.129122
10 0.129122
20 0.129122
30 0.129122
};
\addlegendentry{$\pmsg^{(0)}_\ell$ DE (negative singleton)};
\addplot[only marks, mark=o, mark size=1.6pt, color=blue!80!black,
         error bars/.cd, y dir=both, y explicit] table[y error index=2] {
0 0.1962 0.0307
1 0.0965 0.0262
2 0.0965 0.0262
5 0.0965 0.0262
10 0.0965 0.0262
};
\addlegendentry{$\pmsg^{(1)}_\ell$ MC ($n\!=\!200$)};
\addplot[only marks, mark=square, mark size=1.6pt, color=red!80!black,
         error bars/.cd, y dir=both, y explicit] table[y error index=2] {
0 0.1938 0.0443
1 0.1439 0.0407
2 0.1253 0.0395
5 0.1253 0.0395
10 0.1253 0.0395
};
\addlegendentry{$\pmsg^{(0)}_\ell$ MC ($n\!=\!200$)};
\end{axis}
\end{tikzpicture}
\caption{AND value-conditioned DE recursion (\Cref{prop:and-de}) on
a single-role $(d_v, d_c) = (3, 4)$-regular AND ensemble with
$\epsV = 0.20$, $\epsC = 0.05$, $\eta_{V \to C} = \eta_{C \to V} =
0.95$, and value prior $\Vbias = 0.7$.  Solid blue and dashed red
lines are the DE prediction; circle and square markers ($\pm 1\sigma$
error bars) are empirical Monte-Carlo means over $40$ trials at
$n = 200$.  The asymptotic gap visualizes the positive-versus-negative
certificate asymmetry of \Cref{prop:and-de}.}
\label{fig:and-history}
\end{figure}

\rhead{Concentration across the sweep.}  Beyond the single operating
point of \Cref{fig:and-history}, \Cref{fig:and-concentration} sweeps
the variable-side erasure $\epsV$ and overlays the value-conditioned
terminal bit-erasure rates $P^{(0)}$ and $P^{(1)}$ against direct
Monte-Carlo at three problem sizes.  The empirical rates concentrate
onto the DE prediction of \Cref{prop:and-de} (maximum absolute
deviation $\le 0.003$ at $n = 5000$), with the spread shrinking in $n$
at the $1/\sqrt{n}$ rate of \Cref{thm:de}, and the certificate
asymmetry $P^{(0)} > P^{(1)}$ persists across the whole sweep.  This is
the AND counterpart of the XOR concentration test of
\Cref{fig:numerical-validation}, on a non-value-symmetric verifier
factor that has no LDPC analog.

\begin{figure}[t]
\centering
\begin{tikzpicture}
\begin{axis}[
    width=0.95\linewidth, height=6.6cm,
    xlabel={Variable-side erasure rate $\epsV$},
    ylabel={Bit-erasure rate $\Pbit^{(L)}$},
    xmin=0.12, xmax=0.58, ymin=0, ymax=0.536,
    grid=major, grid style={dashed, gray!30},
    legend pos=north west,
    legend style={font=\scriptsize, fill=white, draw=black!50},
    legend columns=2,
    tick label style={font=\footnotesize}, label style={font=\small},
]
\addplot[thick, red, mark=none] table {
0.130000 0.061916
0.140000 0.067533
0.150000 0.073271
0.160000 0.079129
0.170000 0.085106
0.180000 0.091203
0.190000 0.097417
0.200000 0.103749
0.210000 0.110199
0.220000 0.116764
0.230000 0.123445
0.240000 0.130240
0.250000 0.137150
0.260000 0.144172
0.270000 0.151307
0.280000 0.158553
0.290000 0.165909
0.300000 0.173375
0.310000 0.180948
0.320000 0.188630
0.330000 0.196417
0.340000 0.204309
0.350000 0.212306
0.360000 0.220406
0.370000 0.228607
0.380000 0.236909
0.390000 0.245310
0.400000 0.253810
0.410000 0.262406
0.420000 0.271098
0.430000 0.279885
0.440000 0.288764
0.450000 0.297736
0.460000 0.306798
0.470000 0.315949
0.480000 0.325187
0.490000 0.334513
0.500000 0.343923
0.510000 0.353416
0.520000 0.362992
0.530000 0.372648
0.540000 0.382384
0.550000 0.392198
0.560000 0.402088
0.570000 0.412053
};
\addlegendentry{$P^{(0)}$ DE};
\addplot[thick, blue, mark=none] table {
0.130000 0.042788
0.140000 0.046080
0.150000 0.049371
0.160000 0.052663
0.170000 0.055954
0.180000 0.059245
0.190000 0.062537
0.200000 0.065828
0.210000 0.069120
0.220000 0.072411
0.230000 0.075703
0.240000 0.078994
0.250000 0.082285
0.260000 0.085577
0.270000 0.088868
0.280000 0.092160
0.290000 0.095451
0.300000 0.098742
0.310000 0.102034
0.320000 0.105325
0.330000 0.108617
0.340000 0.111908
0.350000 0.115199
0.360000 0.118491
0.370000 0.121782
0.380000 0.125074
0.390000 0.128365
0.400000 0.131657
0.410000 0.134948
0.420000 0.138239
0.430000 0.141531
0.440000 0.144822
0.450000 0.148114
0.460000 0.151405
0.470000 0.154696
0.480000 0.157988
0.490000 0.161279
0.500000 0.164571
0.510000 0.167862
0.520000 0.171154
0.530000 0.174445
0.540000 0.177736
0.550000 0.181028
0.560000 0.184319
0.570000 0.187611
};
\addlegendentry{$P^{(1)}$ DE};
\addplot[only marks, mark=*, mark size=1.5pt, color=red,
         error bars/.cd, y dir=both, y explicit] table[y error index=2] {
0.150000 0.074804 0.007040
0.200000 0.102290 0.006359
0.250000 0.139171 0.011623
0.300000 0.175112 0.009629
0.350000 0.209248 0.011337
0.400000 0.255410 0.011920
0.450000 0.298134 0.013527
0.500000 0.345507 0.014224
0.550000 0.390301 0.014042
};
\addlegendentry{$P^{(0)}$ MC $n{=}5000$};
\addplot[only marks, mark=*, mark size=1.5pt, color=blue,
         error bars/.cd, y dir=both, y explicit] table[y error index=2] {
0.150000 0.048874 0.003525
0.200000 0.065875 0.004615
0.250000 0.082926 0.006673
0.300000 0.098106 0.005643
0.350000 0.115718 0.007189
0.400000 0.132034 0.007467
0.450000 0.146394 0.008864
0.500000 0.163886 0.008269
0.550000 0.180540 0.008622
};
\addlegendentry{$P^{(1)}$ MC $n{=}5000$};
\addplot[only marks, mark=o, mark size=1.8pt, color=red!65!black,
         error bars/.cd, y dir=both, y explicit] table[y error index=2] {
0.150000 0.069162 0.034411
0.200000 0.110352 0.049665
0.250000 0.145412 0.061865
0.300000 0.159300 0.052061
0.350000 0.221382 0.061032
0.400000 0.255258 0.073942
0.450000 0.311052 0.051737
0.500000 0.350910 0.078325
0.550000 0.398515 0.063482
};
\addlegendentry{$P^{(0)}$ MC $n{=}200$};
\addplot[only marks, mark=square, mark size=1.5pt, color=blue!65!black,
         error bars/.cd, y dir=both, y explicit] table[y error index=2] {
0.150000 0.056240 0.020607
0.200000 0.062915 0.025427
0.250000 0.083171 0.036015
0.300000 0.102618 0.034875
0.350000 0.120794 0.041576
0.400000 0.115678 0.029679
0.450000 0.152846 0.036453
0.500000 0.176178 0.040676
0.550000 0.177190 0.043796
};
\addlegendentry{$P^{(1)}$ MC $n{=}200$};
\end{axis}
\end{tikzpicture}
\caption{AND Monte-Carlo concentration on the single-role $(d_v,d_c)=(3,4)$-regular
AND ensemble ($\epsC=0.05$, $\eta=0.95$, $\Vbias=0.7$, $L=30$).  Solid curves are the
value-conditioned DE prediction (\Cref{prop:and-de}); markers are empirical
Monte-Carlo means ($30$ trials, $\pm1\sigma$ bars) at $n=5000$ (filled) and $n=200$
(open).  The empirical rates concentrate onto the DE prediction (max
$|\mathrm{DE}-\mathrm{MC}|\le 0.003$ at $n=5000$), with spread shrinking in $n$, and
the gap $P^{(0)}>P^{(1)}$ is the positive-versus-negative certificate asymmetry of
\Cref{prop:and-de}, which has no value-symmetric XOR counterpart.}
\label{fig:and-concentration}
\end{figure}

\subsection{Deterministic-graph validation}
\label{sec:numerical-deterministic}

\rhead{Setup.}  \Cref{thm:de-deterministic} predicts that the same
bit-erasure DE prediction holds on a \emph{fixed} graph sequence, with
no appeal to graph randomness, once the local neighborhoods are
tree-like (vanishing local-cycle density).  We test this on the
$(3, 6)$-XOR ensemble at $\epsV = 0.25$ (below the noiseless threshold,
where stopping sets from short cycles are most visible), $\epsC =
0.05$, $\eta = 0.95$, $L = 50$, $n = 2400$.  Each graph is generated
once and then \emph{frozen}; only the hidden values and the three
erasure tiers are resampled across trials, so no matching exposure is
involved.  Graphs are built as disjoint unions of independently wired
blocks, so the block size is a clean knob for short-cycle density (a
single block of size $n$ is the ordinary configuration model).

\rhead{Result.}  \Cref{fig:det-errorfloor} plots the empirical
bit-erasure rate against the measured short-cycle density.  Locally
tree-like graphs reproduce the DE prediction $\Pdee^{(L)}$, while short
cycles raise the rate as an error floor, up to $4.7\times\,\Pdee^{(L)}$
at the densest; the operative hypothesis is thus local tree-likeness,
not graph randomness.  On a single frozen tree-like graph the
empirical rate concentrates onto $\Pdee^{(L)}$ as the graph grows, with
one-standard-deviation spread $0.026$, $0.010$, $0.005$ at $n = 600,
2400, 9600$ (the $1/\sqrt{n}$ rate), the deterministic-graph
counterpart of \Cref{fig:numerical-validation}.

\begin{figure}[t]
\centering
\begin{tikzpicture}
\begin{axis}[
    width=0.95\linewidth, height=6.2cm,
    xmode=log,
    xlabel={Short-cycle density (cycles per node)},
    ylabel={Bit-erasure rate $\Pbit^{(L)}$},
    xmin=0.008, xmax=2.6, ymin=0, ymax=0.158,
    grid=major, grid style={dashed, gray!30},
    legend pos=north west,
    legend style={font=\footnotesize, fill=white, draw=black!50},
    tick label style={font=\footnotesize}, label style={font=\small},
]
\addplot[black, dashed, thick, mark=none] coordinates {(0.008,0.028116) (2.6,0.028116)};
\addlegendentry{$\Pdee^{(L)}$ (locally tree-like)};
\addplot[thick, red!70!black, mark=*, mark size=1.8pt,
         error bars/.cd, y dir=both, y explicit] table[y error index=2] {
1.883889 0.131897 0.010531
1.090833 0.098911 0.012327
0.463889 0.069022 0.012990
0.255972 0.053369 0.011734
0.095278 0.038222 0.011036
0.050278 0.034047 0.010890
0.025833 0.030228 0.008175
0.012778 0.029694 0.009955
};
\addlegendentry{fixed graphs, MC $\Pbit^{(L)}$};
\end{axis}
\end{tikzpicture}
\caption{Deterministic-graph validation of \Cref{thm:de-deterministic} on the
$(3,6)$-XOR ensemble ($\epsV=0.25$, $\epsC=0.05$, $\eta=0.95$, $L=50$, $n=2400$).
Each point is a \emph{fixed} graph (only the noise is resampled; $3$ graphs $\times$
$50$ trials, $\pm1\sigma$ bars), built as a disjoint union of blocks so that block
size controls short-cycle density.  Locally tree-like graphs (left) sit on the DE
prediction $\Pdee^{(L)}$; as short cycles proliferate the empirical rate rises as an
error floor (up to $4.7\times\,\Pdee^{(L)}$), confirming that the operative
hypothesis is local tree-likeness, not graph randomness.  On a frozen tree-like
graph the empirical rate concentrates onto $\Pdee^{(L)}$ with spread
$0.026/0.010/0.005$ at $n=600/2400/9600$, the $1/\sqrt{n}$ rate.}
\label{fig:det-errorfloor}
\end{figure}

\subsection{A naive constant-sum effective-erasure heuristic fails}
\label{sec:numerical-3tier}

\rhead{Setup.}  We illustrate the operational content of
\Cref{prop:noninterchange} by ruling out the simplest one-scalar
candidate: the additive sum $\epsV + \epsC + (1 - \eta)$.  Fix the
$(3, 6)$-XOR ensemble with $\epsV = 0.40$ (just below the noiseless
threshold $\approx 0.4294$).  Fix a side budget $B_{\rm side} =
\epsC + (1 - \eta) = 0.10$.  Sweep the split $t \in [0, 1]$
between verifier and channel: $\epsC(t) = t \cdot B_{\rm side}$,
$1 - \eta(t) = (1 - t) \cdot B_{\rm side}$.  Every allocation has
the same naive effective-erasure sum $\epsV + \epsC + (1 - \eta) =
0.50$; if the three tiers were collapsible into the additive scalar,
the residual $\Pdee^{(L)}(\lambda(t))$ would be flat in $t$.

\rhead{Result.}  \Cref{fig:non-interchange} shows that the
residual varies from $0.332$ at $t = 0$ (heavy channel loss; $\epsC
= 0$, $1 - \eta = 0.10$) to $0.146$ at $t = 1$ (heavy verifier
erasure; $\epsC = 0.10$, $\eta = 1$): a $2.3\times$ ratio under the
same additive sum.  The sweep therefore numerically rejects the
additive effective-erasure scalar.  This is consistent with
the rank-$\ge 2$ content of \Cref{prop:noninterchange} but is not a
formal numerical proof of the rank statement: the proposition
forbids any smooth scalar reduction, not only the additive one.  A
direct check would evaluate the parameter Jacobian
$D_{\mathrm{par}}[\DEmap_\lambda(\zero)]$ and confirm that it has
rank at least two on this single-role slice (left to follow-on
work); the stronger rank-$\ge 3$ separation of the three tiers is
established on a multi-role ensemble in
\Cref{prop:noninterchange-roles}.

\rhead{Operational reading along this path.}  The sweep also
shows that, along this constant-sum path in the $(3,6)$-XOR
ensemble, channel loss is more damaging than verifier erasure.
Under a cost model in which the two improvements have comparable
marginal cost, this suggests improving the channel as the better
local intervention.  A true KKT shadow-price conclusion requires
the full optimization problem of \Cref{thm:optimization} with
explicit cost functions and active constraints; the constant-sum
sweep is an exploration along one specific path, not a substitute
for that calculation.

\begin{figure}[t]
\centering
\begin{tikzpicture}
\begin{axis}[
    width=0.95\linewidth, height=6.6cm,
    xlabel={Allocation parameter $t$ (0 = heavy $1{-}\eta$;
            1 = heavy $\epsC$)},
    ylabel={Residual $\Pdee^{(L)}$},
    xmin=0, xmax=1,
    ymin=0.10, ymax=0.36,
    grid=major,
    grid style={dashed, gray!30},
    tick label style={font=\footnotesize},
    label style={font=\small},
]
\addplot[thick, black!70!purple, mark=*, mark size=1.5pt] table {
0.0000 0.332155
0.0500 0.329310
0.1000 0.326311
0.1500 0.323145
0.2000 0.319795
0.2500 0.316243
0.3000 0.312467
0.3500 0.308439
0.4000 0.304129
0.4500 0.299497
0.5000 0.294496
0.5500 0.289063
0.6000 0.283119
0.6500 0.276558
0.7000 0.269231
0.7500 0.260920
0.8000 0.251284
0.8500 0.239730
0.9000 0.225042
0.9500 0.203748
1.0000 0.146077
};
\end{axis}
\end{tikzpicture}
\caption{XOR residual $\Pdee^{(L)}$ on a constant-additive-sum
surface: $\epsV = 0.40$ fixed, side budget $\epsC + (1 - \eta) =
0.10$ split as $\epsC = t \cdot 0.10$ and $1 - \eta = (1 - t) \cdot
0.10$.  $(d_v, d_c) = (3, 6)$, $L = 50$.  Residuals range over
$[0.146, 0.332]$ along the same additive sum.  The
two design points at the endpoints have the same additive
effective-erasure score but different residual erasure: the DE map
is sensitive to where erasures enter the message flow.  A flat
curve would be required if the three tiers collapsed into the
additive effective-erasure scalar; the observed monotone variation
numerically rules out that particular scalar.  The figure
illustrates the phenomenon and rules out the additive scalar; a
direct rank check of the parameter Jacobian would test the full
rank-$\ge 2$ content of \Cref{prop:noninterchange}; the stronger
rank-$\ge 3$ separation holds on a multi-role ensemble
(\Cref{prop:noninterchange-roles}).}
\label{fig:non-interchange}
\end{figure}

\section{Applications}\label{sec:applications}

We trace the framework through five representative deployed
agent systems.  Each can be seen as a calibration pilot for a
future companion paper.  \emph{Scope.}  The mappings below are
architectural correspondences, not empirical validation: they
illustrate how the variables, verifier nodes, and erasure tiers
could be instantiated in existing agent-system families.  Except in truly certifying layers (proof-kernel
checks, deterministic validators), the mapping is an idealized
abstraction and should be used only after calibration and
model-checking of the exact-or-erased assumption
(\Cref{sec:calibration}).  We do not claim that the present theory
causally explains any one benchmark result: deployed systems mix
verifier-driven certifying behavior with prompt-context effects,
aggregation effects, and confidently-wrong outputs not covered by
the erasure-only first-order theory.  What the framework does is
identify the graph-level mechanisms that can support reliable
recovery when tasks decompose into coupled subclaims and local
verifiers, and the operationally observable quantities that
calibrate the three erasure tiers.  A deployed system need not run
logical forcing for these limits to apply: because logical forcing is
the best sound local rule (\Cref{thm:converse}), a system that
certifies soundly can only match or fall short of the reliability the
theory predicts.

\subsection{Multi-agent formal theorem proving (Hilbert)}

Hilbert \cite{varambally2025hilbert} reaches $99.2\%$ on miniF2F via
a verifier-centered workflow orchestrating an informal-reasoning LLM,
a prover LLM, a Lean verifier, and a theorem retriever, with recursive
subgoal decomposition and verifier feedback.  The framework abstracts
this by mapping proof obligations to variable nodes, Lean kernel calls
to AND-monotone check nodes (the kernel accepts iff every step in its
scope type-checks together), and formatting or retrieval failures to
channel- or verifier-side erasures.  \Cref{tab:hilbert-mapping} and
\Cref{fig:hilbert-arch} give the role mapping; both are an idealized
factor-graph abstraction, not a faithful rendering of Hilbert's
control flow.  Candidate calibration: $\epsV[\mathrm{proposer}]$ from
per-step abstention rates, $\epsC[\mathrm{verifier}]$ from Lean-kernel
timeout and dependency-failure rates,
$\eta_{\mathrm{proposer}, \mathrm{verifier}}$ from refactor success
rates.  One workflow round corresponds to one extrinsic
message-passing iteration, and the round budget $T$ of \Cref{thm:de}
is the depth at which the verifier-feedback loop is truncated.

\begin{table}[t]
\centering
\renewcommand{\arraystretch}{1.25}
\caption{Mapping the Hilbert architecture
\cite{varambally2025hilbert} onto the role-typed Boolean-verifier-%
node framework of this paper.  Each row of the right column
identifies the operational quantity to estimate from agent traces.}
\label{tab:hilbert-mapping}
\small
\begin{tabular}{p{0.27\linewidth}p{0.30\linewidth}p{0.40\linewidth}}
\toprule
\textbf{Hilbert role} & \textbf{Maps to} & \textbf{Operational meaning} \\
\midrule
Decomposer / proposer
   & Variable agents $v_i$ of role ``proposer''
   & Drafts a candidate proof step.  Subclaim $\Xstar_i$ = ``does this
   step type-check?''  Abstention $\Aobs_i = \Uset$ rate is
   $\epsV[\mathrm{proposer}]$. \\
Refactorer
   & Reasoning-channel transformation
   & Reformulates a proposer-to-verifier message so the verifier can
   use it (notation, lemma inlining).  Failed refactor is an
   $\eta_{\mathrm{proposer}, \mathrm{verifier}}$ event. \\
Type-checker (Lean kernel)
   & Check agents $a_j$ of role ``verifier''
   & Runs the Lean kernel on a small group of proposed steps; constraint
   $C_a = X_{i_1} \wedge \cdots \wedge X_{i_d}$ (AND-monotone).
   Timeout / dependency failure is the $\epsC[\mathrm{verifier}]$
   event. \\
Aggregator
   & Final estimator
   & Takes terminal incoming messages at each variable and emits the
   per-step verdict (the variable-by-variable estimator of
   \Cref{eq:final-est}). \\
\bottomrule
\end{tabular}
\end{table}

\begin{figure}[t]
\centering
\begin{tikzpicture}[
  scale=0.95,
  every node/.style={font=\small},
  role/.style={draw, rounded corners=3pt, minimum width=2.0cm,
               minimum height=0.9cm, align=center, fill=blue!5},
  vnode/.style={circle, draw, minimum size=0.55cm, inner sep=0pt,
                fill=white},
  cnode/.style={draw, minimum size=0.5cm, inner sep=0pt, fill=white},
  arr/.style={-{Latex[length=2mm]}, thick},
  tier/.style={font=\footnotesize, color=red!60!black}
]
  \node[role] (D) at (0, 0)    {Decomposer\\\emph{(proposer)}};
  \node[role] (R) at (3, 0)    {Refactorer\\\emph{(channel)}};
  \node[role] (V) at (6, 0)    {Type-checker\\\emph{(Lean kernel)}};
  \node[role] (A) at (9, 0)    {Aggregator\\\emph{(estimator)}};
  \draw[arr] (D) -- (R);
  \draw[arr] (R) -- (V);
  \draw[arr] (V) -- (A);
  \node[tier, below=2pt of D] {$\epsV[\mathrm{prop}]$};
  \node[tier, below=2pt of R] {$\eta_{\mathrm{prop},\mathrm{ver}}$};
  \node[tier, below=2pt of V] {$\epsC[\mathrm{ver}]$, AND factor};
  \node[tier, below=2pt of A] {variable-wise MAP};
  \begin{scope}[shift={(0, -3.0)}]
    \node[anchor=west] at (-0.6, 0.6)
      {\emph{Underlying factor graph:}};
    \node[vnode] (v1) at (0.6, 0)   {$v_1$};
    \node[vnode] (v2) at (1.9, 0)   {$v_2$};
    \node[vnode] (v3) at (3.2, 0)   {$v_3$};
    \node[vnode] (v4) at (4.5, 0)   {$v_4$};
    \node[cnode] (a1) at (1.25, -1.4) {$a_1$};
    \node[cnode] (a2) at (3.85, -1.4) {$a_2$};
    \draw (v1) -- (a1); \draw (v2) -- (a1);
    \draw (v3) -- (a2); \draw (v4) -- (a2);
    \draw (v2) -- (a2); \draw (v3) -- (a1);
    \node[anchor=west, font=\footnotesize] at (5.4, 0)
      {variable agents = proposed steps,};
    \node[anchor=west, font=\footnotesize] at (5.4, -0.4)
      {private observation $\Aobs_i$;};
    \node[anchor=west, font=\footnotesize] at (5.4, -1.0)
      {check agents = Lean kernel calls,};
    \node[anchor=west, font=\footnotesize] at (5.4, -1.4)
      {AND template
       $C_a = \bigwedge_{i \in \partial a} X_i$;};
    \node[anchor=west, font=\footnotesize] at (5.4, -2.0)
      {check observation $\Zobs_a$.};
  \end{scope}
\end{tikzpicture}
\caption{The Hilbert architecture \cite{varambally2025hilbert} mapped
onto the role-typed Boolean-verifier-node framework.  \emph{Top:}
the four cooperating LLM roles, decomposer, refactorer,
type-checker, aggregator, each annotated with the erasure-tier
parameter it controls.  \emph{Bottom:} the underlying bipartite
factor graph that the framework analyzes.  Variable agents are
proposed proof steps; check agents are Lean kernel invocations
performing AND-monotone joint type-checks of step groups.  Refactor
failures between a proposer's output format and a verifier's
expected input format appear as reasoning-channel erasures
$1 - \eta_{\mathrm{prop}, \mathrm{ver}}$.}
\label{fig:hilbert-arch}
\end{figure}

\subsection{Multi-agent code generation (CodeR, SWE-bench)}

CodeR \cite{chen2024coder} reaches $28.33\%$ on SWE-bench-lite via a
multi-agent task graph in which generators write code, test runners
execute unit tests, and patch aggregators combine fragments.
Test-runner verdicts are AND-monotone Boolean factors at the
file/test-suite layer (the suite passes iff every test passes).  Here
$\Xstar_i$ should be defined as ``artifact $i$ satisfies local
validator $v$'' (a test outcome, type-check, or static-analysis
pass), not ``artifact $i$ is semantically correct,'' which a finite
test suite does not certify; the theory thus models recoverability of
local-validator outcomes, and full semantic correctness would need a
test suite complete for the claimed property.  SWE-bench
\cite{jimenez2024swebench}, SWE-agent \cite{yang2024sweagent}, CLEVER
\cite{thakur2025clever}, and FVAPPS \cite{dougherty2025proving} admit
the same reading.  Design questions such as ``add more test runners
or more code generators?'' map onto \Cref{thm:optimization}~(d)--(e),
and one CodeR cycle is one extrinsic DE iteration.

\subsection{Massive decomposition with cross-verification (MAKER)}

The MAKER framework \cite{meyerson2025maker} executes a
one-million-step LLM task with zero observed errors by combining
maximal decomposition of the task into single-step subtasks with a
first-to-ahead-by-$K$ voting rule on each subtask.  The deployment
is a coupled-subclaim instantiation at scale: decomposition lifts a
depth that would be catastrophic for an autoregressive single agent
into a regime where the depth-$L$ density-evolution recursion of
\Cref{thm:de} contracts to a near-zero error floor.  Per-subtask
voting supplies a cross-verification structure of the kind
\Cref{thm:augmentation} formalizes for separating augmentations.  The match
is structural rather than mechanism-level (MAKER's voting is
stochastic redundancy, not deterministic Boolean forcing), but it
indicates that the systems-engineering community has arrived
independently at the architectural principle the framework predicts.

\subsection{LLM debate on structured-output benchmarks}

Debate systems \cite{du2024debate,khan2024debate,kenton2024oversight}
alternate critiques between agents.  General LLM debate is not covered
by the erasure-only model, since debate messages are soft, persuasive,
and sometimes confidently wrong; it would need the soft-belief or
likelihood-ratio extension of \Cref{sec:outlook-bsc}.  The framework
applies as-is only to the narrow case of structured-output debate
where each round produces a deterministic local-validator outcome (a
parse check or constraint test) attached to a Boolean factor, in which
case each round is one extrinsic iteration.  The Choi--Zhu--Li
martingale analysis \cite{choi2025debate} is complementary: it studies
belief dynamics under soft exchange, whereas we study sound local
certification.

\subsection{Multi-modality classification by sensor and drone networks}

Sensor and drone networks motivate a non-erasure extension.  General
sensor classification involves soft observations, correlated noise,
false positives and negatives, and continuous measurements, none of
which are sound non-erased certificates.  A Boolean-erasure
approximation fits only when sensors produce abstaining local
certificates (validated detections or no-detection outputs after a
calibrated thresholding layer with deterministic consistency rules
between overlapping views); soft-classification and correlated-noise
regimes belong to the extensions of \Cref{sec:outlook}.  Under such a
thresholded-certificate calibration, role-typed reliabilities
correspond to per-platform sensing modalities and compute budgets and
$\eta$ to inter-platform communication loss, and the adjoint
machinery of \Cref{thm:optimization} could compare adding platforms
with improving links.  Without that calibration step the present
theorems should not be applied to soft-classification settings.

\section{Outlook}\label{sec:outlook}

\subsection{Limitations of the present model}
\label{sec:outlook-limitations}

The framework developed here is intentionally an erasure-only,
first-order reliability layer.  It does not model
confidently wrong messages (these would require the absorbing-set /
BSC machinery sketched below), correlated agent failures induced by
shared training data or shared inputs, adaptive graph construction
in which the protocol observes intermediate outcomes and reroutes
work to different roles, long-range memory or belief accumulation
across rounds beyond the once-at-$t=0$ erasure model, semantic
drift across iterations of an agent collective, or compounding
prompt-context failures specific to LLM-style verifiers.  These
omissions are not defects of the framework; they identify the
boundaries of what the first-order theory is responsible for and
the next layers of analysis (continuous-message DE, value-conditioned
absorbing sets, adaptive-routing converses, dependence-aware
ensembles).  The companion calibration claims in
\Cref{sec:calibration} should be read in this light: the parameters
$\epsV[r], \epsC[s], \eta_{r,s}$ correspond to operationally
observable failure modes and can be estimated from logs as
first-order signals, but the estimates themselves may be biased
by task difficulty (failure rates correlate with input distribution)
or by correlated failures (shared training data or shared inputs
violate the conditional-independence assumption underlying the
configuration ensemble).  The protocol is a starting point for
empirical instantiation, not a complete description of every reason
an agent system might be unreliable.

\subsection{Non-erasure channels: BSC absorbing sets and hybrid models}
\label{sec:outlook-bsc}

The erasure model is natural for abstention, timeout, dependency
failure, and unusable artifact, all empirically observable failure
modes.  Real systems also produce confident wrong messages.  The
hybrid erasure-and-flip extension introduces flip components
$\delta_r^V, \delta_s^C, \zeta_{r,s}$ alongside the erasure tiers,
with
\begin{equation*}
   \PR(\Aobs_i = \{\Xstar_i\}) = 1 - \epsV[r] - \delta_r^V,
   \quad
   \PR(\Aobs_i = \{1 - \Xstar_i\}) = \delta_r^V,
   \quad
   \PR(\Aobs_i = \Uset) = \epsV[r],
\end{equation*}
and analogous decompositions on the verifier and channel sides.
Once flips are admitted, $\Aobs_i \neq \Uset$ no longer implies
$\Aobs_i = \{\Xstar_i\}$ and $\Zobs_a \neq *$ no longer implies
$\Zobs_a = f_\theta(\Xstar_{\partial a})$, so the soundness lemma
(\Cref{lem:soundness}), the certificate-stopping-set theorem
(\Cref{thm:certstop}), the separating-augmentation theorem
(\Cref{thm:augmentation}), and the local-soundness converse
(\Cref{thm:converse}) do not apply in their present form.  The
DE state must track a full distribution over correct, incorrect,
and erased messages, not only erasure probabilities; the
finite-length obstruction combines stopping-set, absorbing-set
\cite{dolecek2010absorbing}, and trapping-set / pseudocodeword
phenomena, and is not a direct import of any one of these.  These
extensions are companion-paper material rather than
straightforward specializations of the present theory.

\subsection{Belief memory and dependence relaxation}
\label{sec:outlook-belief}

The current framework injects all noise once at $t = 0$.  A natural
extension introduces per-round noise injection with a role-specific
exponential moving-average update on a continuous belief state:
$\bm b_v^{(t)} = (1 - \alpha_{r_v})\, \bm b_v^{(t-1)} + \alpha_{r_v}\,
\bm e_{\xi_v^{(t)}}$, with $\alpha_{r_v} \in (0, 1]$ as a
temporal-smoothing / test-time-compute parameter (it is not a fourth
reliability tier alongside $\epsV, \epsC, \eta$, but a separate
axis governing how evidence accumulates across rounds).  This
matches LLM self-consistency aggregation
\cite{wang2023selfconsistency} and is the natural setting for
relaxing the conditional-independence assumption on agent failures.
The DE recursion would replace scalar erasure probabilities by
probability laws on the simplex $\Delta^{K-1}$, i.e.\ by a
continuous-state \emph{distributional} recursion
$\mathcal{L}(\bm b_v^{(t)} \mid \Xstar_v, \rolemap(v) = r)$.
Tractability may be retained under quantization, parametric
closure, or independence/latent-variable assumptions, but
correlated agent failures induced by shared training data, shared
inputs, or persistent belief states require new DE state variables
or a dependence-aware ensemble; tractability is not automatic.

\subsection{Richer Boolean-factor classes}

The abstract value-conditioned DE recursion (\Cref{thm:de}) and the
abstract certificate-stopping-set theorem (\Cref{thm:certstop})
already apply to every bounded-arity Boolean template through the
forcing operator $\force_{\theta,j}$.  What remains for OR, Horn,
implication, and other monotone or non-monotone primitives is to
derive compact closed-form forcing probabilities and interpretable
finite-length obstruction criteria analogous to
\Cref{cor:xor-stopping} and \Cref{cor:and-stopping}; the abstract
framework is in place, but the explicit corollaries that translate
the generic statements into reader-friendly criteria for each
primitive are future work.  Mixed-template ensembles (some checks
XOR, some AND, some Horn) are handled by the value-conditioned
framework with no further conceptual machinery.

\subsection{Density evolution on deterministic locally-tree-like
graphs}
\label{sec:outlook-deterministic}

\Cref{thm:de} concentrates the empirical bit-erasure rate on the
random configuration-model ensemble, via a Doob martingale over the
matching.  The same bit-erasure DE prediction holds on a
\emph{deterministic} graph sequence, with no appeal to graph
randomness, once the local
neighborhood statistics are controlled: at iteration $L$ a node's
estimate depends on its depth-$(2L{+}2)$ neighborhood, so the marginal
role-degree law alone does not suffice; what is needed is convergence
of the full local-neighborhood law.

\begin{theorem}[Deterministic-graph density-evolution concentration]
\label{thm:de-deterministic}
Let $\{G_n\}_{n \geq 1}$ be a deterministic sequence of role-typed
bipartite factor graphs, with variable degrees and check arities
uniformly bounded by $D_{\max}$, carrying role and template marks
$(\rolemap, \theta)$.  Draw the hidden values, the variable-side and
verifier-side erasure indicators, and the persistent reasoning-channel
gates mutually independently, with laws fixed by the node role, the
check role and template, and the directed role pair (the priors
$\{\Vbias_r\}$ of \eqref{eq:prior} and the erasure parameters
$\epsV[r], \epsC[s_\theta], \eta_{r,s}$), all independently of
$\{G_n\}$; the observations $\Aobs_i$ and $\Zobs_a$ are then the
derived quantities of \eqref{eq:var-obs} and \eqref{eq:check-obs}.
Suppose that for every fixed $L \in \N$:
\begin{enumerate}[label=\textup{(\alph*)},leftmargin=2.5em,itemsep=0pt]
  \item \textbf{Marginal role-degree typicality:} the empirical
    role-degree distribution of $G_n$ converges to the prescribed
    limit $\Lambda$;
  \item \textbf{Vanishing local-cycle density:} the fraction of
    variable nodes whose depth-$(2L{+}2)$ neighborhood contains a
    cycle vanishes as $n \to \infty$;
  \item \textbf{Marked Benjamini--Schramm convergence:} the empirical
    law of the rooted, socket-type-marked depth-$(2L{+}2)$
    neighborhood of a uniformly chosen variable node, each check
    half-edge carrying its socket type $\tau = (\theta, j)$ so that the
    ports of asymmetric templates are resolved, converges weakly
    to the law of the corresponding ball of the socket-type-marked
    unimodular Galton--Watson computation tree whose root and
    offspring degree, role, and socket-type marks match the laws
    $\{\pi_r^V\}, \{\pi_\theta^C\}, \Lambda$ of the DE recursion of
    \Cref{thm:de}.
\end{enumerate}
Then for every fixed $L$ the empirical bit-erasure rate
$\Pbit^{(L)}(G_n)$ converges in probability to $\Pdee^{(L)}$.
Quantitatively, with $b_n^{\mathrm{det}} \defeq
\big|\E\,\Pbit^{(L)}(G_n) - \Pdee^{(L)}\big| \to 0$, there is a
constant $a_L > 0$ depending only on $L$ and $D_{\max}$ such that for
every $t > 0$ and every $n$,
\begin{equation}
\label{eq:det-tail}
   \PR\big\{\big|\Pbit^{(L)}(G_n) - \Pdee^{(L)}\big|
   > b_n^{\mathrm{det}} + t\big\}
   \le 2\exp(-a_L\, n\, t^2).
\end{equation}
\end{theorem}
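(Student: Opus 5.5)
The proof splits the error exactly as in the outline for \Cref{thm:de}:
\[
\big|\Pbit^{(L)}(G_n) - \Pdee^{(L)}\big|
\le \big|\Pbit^{(L)}(G_n) - \E\,\Pbit^{(L)}(G_n)\big| + b_n^{\mathrm{det}},
\]
where $\E$ is now over the node and edge randomness only, since $G_n$ is fixed. The plan is to show that the bias $b_n^{\mathrm{det}}$ vanishes, using hypotheses (b)--(c) in place of \Cref{lem:tree-like}. Concentration then comes from McDiarmid's inequality over the independent local ingredients. Because the graph is not random, there is no matching exposure, and this is what makes the constant depend only on $L$ and $D_{\max}$.

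\emph{Step 1 (locality).} By unrolling \eqref{eq:check-update}--\eqref{eq:final-est}, the indicator $\one\{\widehat M_i^{(L)} = \Uset\}$ is a fixed measurable function $F_L$ of two things: the marked rooted ball $B_{2L+2}(i)$ of $G_n$, and the hidden values, erasures, and gates attached to nodes and directed edges in that ball. Hence
\[
\E\,\Pbit^{(L)}(G_n) = \frac1n\sum_i g_L\big(B_{2L+2}(i)\big),
\qquad
g_L(B) \defeq \E[F_L(B,\cdot)].
\]
Here $g_L$ depends on the ball only through its isomorphism class as a socket-type-marked rooted graph. It takes finitely many values on balls of bounded degree, so it is a bounded continuous function in the local topology.

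\emph{Step 2 (bias).} Hypothesis (c) gives weak convergence of the empirical ball law to the Galton--Watson ball law. Because $g_L$ is bounded and continuous on this discrete space, $\frac1n\sum_i g_L(B_{2L+2}(i)) \to \E\,g_L(\mathcal{T}_{2L+2})$. On a tree ball the conditional-independence computation already done for \Cref{thm:de} applies verbatim: branches are independent given the root's value, because the hidden values are drawn i.i.d.\ from the role prior independently of the graph, as in (A5), and soundness holds by \Cref{lem:soundness}. That computation evaluates $\E\,g_L(\mathcal{T}_{2L+2})$ to exactly $\Pdee^{(L)}$ via \eqref{eq:pbar}--\eqref{eq:terminal-de}. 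Therefore $b_n^{\mathrm{det}} \to 0$.

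Hypothesis (b) is used to make sure the limiting object seen by the evaluation is a tree. A cyclic ball is never isomorphic to a tree ball, so the fraction of roots with cyclic neighborhoods is absorbed into the weak convergence in (c). (b) can also be used directly to bound that mass by $\|g_L\|_\infty = 1$ times a vanishing fraction. Hypothesis (a) is what guarantees that the root law in \eqref{eq:terminal-de} is the node law and not the size-biased one.

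I expect the main obstacle to be checking that the marks in (c) carry exactly enough information for $g_L$. Asymmetric templates such as AND or Horn need their socket ordering, and role-pair channels need the directed role labels, so I would check explicitly that $F_L$ is invariant under marked isomorphism. Unimodularity is needed so that the excess-degree law at depth one and beyond is the size-biased law used in \eqref{eq:pupdate}.

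\emph{Step 3 (concentration).} The independent ingredients are $\Xstar_i$ and $\Aobs$-erasure per variable, $\Zobs$-erasure per check, and one gate $B_{u\to w}$ per directed edge. Their number is $N \le n(2+2D_{\max}) + |\Cset_n|$. Since $|\Cset_n| \le n D_{\max}$ by bounded degree, $N = O(D_{\max} n)$.

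Changing one ingredient alters $F_L$ only at roots within distance $2L+2$ of it. There are at most $K_L \le D_{\max}^{2L+3}$ such roots, so $\Pbit^{(L)}$ has bounded differences $K_L/n$. McDiarmid's inequality gives
\[
\PR\{|\Pbit^{(L)} - \E\,\Pbit^{(L)}| > t\} \le 2\exp\!\big(-2t^2 n^2/(N K_L^2)\big) = 2\exp(-a_L n t^2),
\]
with $a_L$ depending only on $L$ and $D_{\max}$. Combining this with the triangle inequality yields \eqref{eq:det-tail}. Convergence in probability then follows because $b_n^{\mathrm{det}} \to 0$.
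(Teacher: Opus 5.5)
Your proposal is correct and follows essentially the same route as the paper's proof: McDiarmid's bounded-differences inequality over the independent node and edge ingredients on the fixed graph, with no matching exposure, combined with a bias step. In the bias step you write $\E\,\Pbit^{(L)}(G_n)$ as the average of a bounded function of the finite-valued socket-type-marked $(2L{+}2)$-ball, pass to the limit by atom-wise convergence under (c), and identify the tree value with $\Pdee^{(L)}$ via the Galton--Watson recursion, exactly as the paper does. One attribution should be corrected: the node (rather than size-biased) root law and the size-biased offspring laws both come from hypothesis (c) itself, through its uniformly chosen variable root and its stipulated tree marks. Since (c) also implies (a) and (b), you do not need to invoke (a) or unimodularity separately.
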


\rhead{Discussion.}  Hypotheses (a)--(c) are the socket-type-marked
Benjamini--Schramm convergence of $\{G_n\}$ to the socket-type-marked
unimodular Galton--Watson computation tree
\cite{aldous2004objective,bordenave2010resolvent}; under bounded
degree (c) is the operative condition and implies (a) (its degree
marginal) and (b) (a tree-supported limit forces vanishing cyclic
mass), which we list separately for readability.  Because the hidden
values are drawn from the priors independently of $G_n$, the
value-conditioning of the non-XOR recursion is automatic and no value
mark is needed in (c); the XOR specialization (\Cref{cor:xor-de}) is
value-symmetric in any case.  The proof (\Cref{app:de-deterministic})
has two parts: a bounded-differences concentration of $\Pbit^{(L)}$
around its mean on the \emph{fixed} graph (no matching exposure is
needed, since the graph is not random), and a bias estimate
$b_n^{\mathrm{det}} \to 0$ that holds because bounded degree makes the
depth-$(2L{+}2)$ socket-type-marked balls range over a finite set, on which (c) is
atom-wise convergence and the per-root unresolved probability is a
bounded function whose Galton--Watson average is exactly $\Pdee^{(L)}$
(\Cref{app:de-recursion}).  Like \Cref{thm:de}, the bound is centered
at $\Pdee^{(L)}$ only up to the deterministic offset $b_n^{\mathrm{det}}$,
which inherits the (here unspecified) rate of (a)--(c); a quantitative
neighborhood-law rate would make it explicit.  The theorem concentrates
the global bit-erasure rate $\Pbit^{(L)}$; the per-socket
value-conditioned message rates $\widehat\pmsg_{L, \tau}^{(b)}$ of
\eqref{eq:phat-conv} concentrate by the identical argument applied to a
uniformly chosen directed socket of type $\tau$ at radius $2L{+}1$,
which we do not spell out.

Biregular bipartite graphs realize the role-typed degree structure of
(a), and biregular families of large girth additionally satisfy the
locally-tree-like hypotheses (b)--(c).  Large girth is what (b)--(c)
require, and it is distinct from the spectral-expansion (Ramanujan)
property \cite{marcus2015interlacing}, which a biregular family may
have without large girth.  Because the
hidden values are drawn from the priors, no value-typicality of the
graph is required.  Explicit constructions for irregular role-typed
degree laws are open.

\Cref{thm:de-deterministic} complements the random-ensemble DE
concentration of the multi-edge-type LDPC \cite{richardson2008modern}
and noisy-message-passing \cite{tarighati2015noisy,dupraz2021noisy}
literatures, connecting the framework to deterministic local-weak
limits, and it applies to deployed systems whose task graph is fixed.
\Cref{thm:threshold}, \Cref{thm:certstop}, and \Cref{thm:augmentation}
are proved per-graph and apply directly to graphs in class (a)--(c);
the architecture-optimization theorem (\Cref{thm:optimization})
optimizes over distributional designs and remains a separate question.

\subsection{Fano-cut-set converse against unbounded-alphabet local
protocols}
\label{sec:outlook-fano}

\Cref{thm:converse} is a local-soundness converse: it bounds, via
the sound Bayes-optimal (support) estimator on the depth-$2T$
computation tree, the asymptotic per-variable terminal abstention
rate of any $T$-round sound local protocol on the role-typed
configuration
ensemble in the erasure model.  We pose the following open
problem: under a per-edge alphabet bound $|\mathcal{Y}| \le Y$ on
local messages, in the joint limit $T \to \infty$, $Y \to \infty$,
does an information-theoretic lower bound match the threshold
surface of \Cref{thm:threshold}~(b)?

The natural starting point is a role-pair-capacity cut-set bound
\cite{elgamalkim2011network}
on the transcript mutual information of the form
\[
   I\!\left(\Xstar;\, \mathrm{transcript}_T \,\Big|\, \{\Aobs_i\},
   \{\Zobs_a\}\right)
   \;\le\; \sum_{(r, s)} N_{r, s}\, T\, C_{r, s}(Y),
\]
where $N_{r, s}$ counts role-$r$-to-role-$s$ directed edges and
$C_{r,s}(Y)$ is the per-edge capacity of the persistent erasure
channel $\eta_{r,s}$ at alphabet bound $Y$ (private variable and
verifier observations are conditioned out as side information),
combined with Fano's inequality \cite{coverthomas2006elements}
on the per-variable estimator.
Whether such a role-typed cut-set bound matches the
spectral-radius threshold $\rho(\DEjac_\lambda(\zero))$ in the joint
$T, Y \to \infty$ limit is open; matching a global
mutual-information cut-set bound to a local BP-stability condition is
not automatic, and in general may hold only under
additional symmetry, regularity, or optimality assumptions on the
ensemble.  Repeated transmission over persistent open edges,
correlations introduced by the matching, and the order of limits
in $n, T, Y$ are technical complications that a future analysis
must address.  The expected-posterior-divergence /
conditional-mutual-information characterization for delegated
multi-agent DAGs in \cite{ao2026reliability} is a natural
reference point for the information-theoretic side of this
problem.

\section{Conclusion}\label{sec:conclusion}

This paper developed a density-evolution and finite-length
reliability theory for the certifying layer of sparse agent networks.
The central modeling step was to replace a single hidden answer with
a vector of coupled binary subclaims and to represent local
verification by role-typed Boolean verifier nodes.  Each component
carries out a distinct information-theoretic operation: variable
agents acquire partial evidence about the hidden subclaim vector,
verifier nodes process local constraints, reasoning channels transmit
certificates between roles, and the decoder combines the surviving
evidence into a per-subclaim certification or a safe abstention.  In
this representation, variable-side abstention, verifier-side failure,
and role-pair reasoning-channel loss are three structurally distinct
erasure mechanisms, not one scalar noise level
(\Cref{prop:noninterchange}).

For bounded-degree role-typed configuration ensembles, the
value-conditioned logical-forcing decoder admits a density-evolution
recursion with fixed-round concentration (\Cref{thm:de}).  The XOR
specialization recovers the LDPC-BEC baseline.  For non-linear factors
such as AND, the value-conditioned recursion reveals
positive-versus-negative certificate asymmetries absent from
parity-only models.  At finite length, failures are characterized
by certificate-stopping sets (\Cref{thm:certstop}), which gives a
direct route to separating augmentation (\Cref{thm:augmentation})
and to cost-constrained architecture optimization with adjoint
sensitivities and KKT shadow prices (\Cref{thm:optimization},
\Cref{rem:investment-variables}).  Within fixed-round sound local
protocols on the observable channel-gated computation tree, the
local-soundness converse (\Cref{thm:converse}) identifies logical
forcing as the asymptotically optimal certifying local rule.  The
bounded-difference concentration step and the adjoint/KKT framework
are channel-agnostic; this paper specializes them to the
certifying-layer logical-forcing recursion.  Numerical experiments
(\Cref{fig:numerical-validation}, \Cref{sec:numerical-and},
\Cref{sec:numerical-3tier}) illustrate the finite-$n$ concentration
predicted by \Cref{thm:de} and the value-conditioned and three-tier
asymmetries predicted by \Cref{prop:and-de} and
\Cref{prop:noninterchange}.

The theory is intentionally first-order.  Non-erased certificates are
assumed sound, and missing or unusable evidence is modeled as erasure.
Confidently wrong messages, correlated agent failures, adaptive graph
construction, and soft belief exchange lie outside the present erasure
theory.  Natural follow-on directions include hybrid erasure-and-flip
DE with absorbing-set obstructions (\Cref{sec:outlook-bsc}),
deterministic-graph DE under marked Benjamini--Schramm convergence
(\Cref{sec:outlook-deterministic}), dependence-aware ensembles for
correlated agent failures, finite-compute constraints, richer
Boolean-template analyses (OR, Horn, implication, mixed templates),
empirical calibration on formal-proof and verified-code benchmarks,
and stronger information-theoretic converses including the open
Fano-cut-set problem of \Cref{sec:outlook-fano}.  These directions
preserve the central viewpoint: certifying agent networks are sparse
systems of noisy local verification whose reliability can be studied
by the same asymptotic and finite-length tools that made sparse
graphical codes analyzable, provided the model assumptions are made
explicit and empirically checked.

\appendices
\crefalias{section}{appendix}
\crefalias{subsection}{appendix}
\section{A Useful Monotonicity Fact}\label{app:monotonicity}

The proofs above repeatedly use a monotonicity property of the
logical-forcing operator.

\begin{lemma}[Monotonicity of feasible sets]
\label{lem:gamma-monotone}
Fix a template $\theta$, target socket $j$, and verifier output
$z \in \{0, 1\}$.  If $M_k \subseteq M'_k$ for all $k \neq j$, then
\begin{equation}
\label{eq:gamma-monotone}
   \force_{\theta, j}(z;\, M_{-j})
   \subseteq
   \force_{\theta, j}(z;\, M'_{-j}).
\end{equation}
Moreover, under the \emph{sound-transcript} condition that
$z = f_\theta(\Xstar_{\partial a})$ and $\Xstar_k \in M_k$ for every
$k \neq j$, if $\force_{\theta, j}(z;\, M_{-j}) = \Uset$, then
$\force_{\theta, j}(z;\, M'_{-j}) = \Uset$ for every enlargement
$M_{-j} \subseteq M'_{-j}$.  In particular, on sound message
configurations, replacing singleton incoming messages by unresolved
messages cannot create a new singleton certificate.
\end{lemma}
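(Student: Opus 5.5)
The first claim follows directly from the definition \eqref{eq:forcing} of $\force_{\theta,j}$ as an existential projection. Take $b \in \force_{\theta,j}(z; M_{-j})$. By definition there is a witness assignment $(x_k)_{k \neq j}$ with $x_k \in M_k$ and $f_\theta(x_1,\ldots,b,\ldots,x_{d_\theta}) = z$. Since $M_k \subseteq M'_k$, each $x_k$ also lies in $M'_k$. The same witness therefore certifies $b \in \force_{\theta,j}(z; M'_{-j})$. This gives \eqref{eq:gamma-monotone} with no further work: the existential quantifier ranges over a product set, so the projection is monotone under enlarging any factor.

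For the second claim, I first use the sound-transcript hypothesis to show that the feasible set always contains the true value. Because $z = f_\theta(\Xstar_{\partial a})$ and $\Xstar_k \in M_k$ for every $k \neq j$, the assignment $x_k = \Xstar_k$ is admissible. With $b = \Xstar_j$ it reproduces $z$, so $\Xstar_j \in \force_{\theta,j}(z; M_{-j})$, and the set is nonempty. Next, suppose $\force_{\theta,j}(z; M_{-j}) = \Uset = \{0,1\}$. The first part then gives $\{0,1\} \subseteq \force_{\theta,j}(z; M'_{-j}) \subseteq \{0,1\}$, so the enlarged output is again $\Uset$.

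For the ``in particular'' clause, I restrict to the three-element alphabet $\Mset$ with $\Xstar_k \in M_k$. There, replacing a singleton $\{\Xstar_k\}$ by $\Uset$ is exactly an enlargement $M_k \subseteq M'_k$, and the enlarged configuration is still sound. On a sound configuration the output is either $\{\Xstar_j\}$ or $\Uset$, since it is nonempty and contains the true value. If the original output was $\Uset$, it stays $\Uset$ by the previous step. If it was a singleton, it becomes either the same singleton or $\Uset$, because it must contain the original set and is still sound. So no new singleton can appear, and in particular no contradictory one.

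I do not expect a real obstacle here. The one point that needs care is the role of the soundness hypothesis. Monotonicity alone does not exclude an output $\emptyset$ or a wrong singleton, so the argument must verify that the true assignment stays admissible after enlargement. It does, because enlargement only adds candidates and never removes $\Xstar_k$.
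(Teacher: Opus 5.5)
Your proof is correct and follows the paper's argument. Monotonicity comes from the existential quantification over the product of candidate sets, and under soundness the output always lies in $\{\{\Xstar_j\}, \Uset\}$, so an output of $\Uset$ is preserved under enlargement. As you implicitly observe, the $\Uset$-preservation step needs only monotonicity and the maximality of $\Uset$. Soundness is what rules out the $\emptyset$-to-singleton transition in the ``in particular'' clause, which the paper also flags.
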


\begin{proof}
The feasible set $\force_{\theta, j}$ is defined by existential
quantification over the Cartesian product of the incoming candidate
sets in \eqref{eq:forcing}.  Enlarging any candidate set only enlarges
the product over which the existential ranges, so the feasible set
can only grow, giving \eqref{eq:gamma-monotone}.  Without further
hypotheses, an enlargement may move $\force$ from $\emptyset$ to a
singleton (when the original input configuration was inconsistent);
the second clause therefore requires the sound-transcript invariant.
Under that invariant, $\Xstar_j \in \force(z; M_{-j})$ at every
input (the true assignment is always feasible by definition of $z$),
so $\force \in \{\{\Xstar_j\}, \Uset\}$ at every input.  Enlargement
preserves $\Uset$ as the largest set in $\{0, 1\}$, completing the
proof.
\end{proof}

\section{Detailed Proof of Theorem \ref{thm:de}}\label{app:de-proof}

This appendix gives the full proof of \Cref{thm:de} in three steps.
\Cref{app:de-tree} shows that the directed depth-$2L{+}1$ neighborhood
of a uniformly chosen socket converges in total variation to the
corresponding typed Galton-Watson tree.  \Cref{app:de-recursion}
identifies the message distribution at the root of the typed tree
with the value-conditioned recursion of
\eqref{eq:pbar}--\eqref{eq:pinit}.  \Cref{app:de-mcdiarmid} applies
McDiarmid's bounded-differences inequality to obtain the exponential
concentration bound \eqref{eq:bit-tail}.

\rhead{Radius convention.}  The proof distinguishes two
neighborhood radii: the directed edge-message
$V^{(L)}_{i \to a}$ is a function of the depth-$R_{\mathrm{edge}}(L) =
2L + 1$ directed neighborhood rooted at the edge, while the terminal
node estimate $\widehat M_i^{(L)} = \Aobs_i \cap \bigcap_{a \in
\partial i} \widetilde C_{a \to i}^{(L)}$ is a function of the
depth-$R_{\mathrm{node}}(L) = 2L + 2$ rooted variable neighborhood
(one extra hop is needed to read all incoming check messages at the
final round).  Per-socket message convergence uses
$R_{\mathrm{edge}}(L)$; the empirical bit-erasure rate
$\Pbit^{(L)}$ uses $R_{\mathrm{node}}(L)$.  All bounded-difference
constants below scale with the relevant radius.

\subsection{Computation neighborhoods and tree convergence}
\label{app:de-tree}

Fix $L \in \N$ and a socket type $\tau \in \Sockets$.  The directed
depth-$2L{+}1$ neighborhood of an edge $(i, a)$ is the subgraph
spanned by all directed paths of length at most $2L{+}1$ starting at
$(i, a)$ and alternating between variable and check nodes.  Because
all degrees are bounded by $D_{\max}$ and the template arities are
bounded by the same $D_{\max}$ (without loss of generality, by taking
the larger of the two), the number of nodes in such a neighborhood is
at most
\begin{equation}
\label{eq:nbhd-size}
   N_L \defeq 2 \sum_{r=0}^{2L+1} D_{\max}^r \;\le\; 4\, D_{\max}^{2L+1},
\end{equation}
and the neighborhood is independent of $n$.

\begin{lemma}[Typed tree convergence, quantitative form]
\label{lem:tree-tv}
Let $\mathcal{N}^{\tau}_{2L+1}(\Gn)$ denote the directed depth-%
$2L{+}1$ neighborhood of a uniformly chosen socket of type $\tau$ in
the role-typed configuration ensemble of \Cref{sec:model-ensemble},
and let $\widetilde{\mathcal{N}}^{\tau}_{2L+1}$ denote the corresponding typed
Galton-Watson tree truncated at depth $2L{+}1$ under the limiting
typed degree/socket law $\bm\pi^{\infty}$.  For each fixed $L$,
\begin{equation}
\label{eq:tv-rate}
   d_{\mathrm{TV}}\!\Bigl(
      \mathcal{L}\bigl(\mathcal{N}^{\tau}_{2L+1}(\Gn)\bigr),\;
      \mathcal{L}\bigl(\widetilde{\mathcal{N}}^{\tau}_{2L+1}\bigr)
   \Bigr)
   \;\le\; \frac{C_L}{n}\;+\;\Delta_{n, L},
\end{equation}
where $C_L = c_0\, N_L^2$ for an absolute constant $c_0$ depending
only on the role and template structure (not on $n$), and the
empirical-law discrepancy
\begin{equation}
\label{eq:Delta-n}
   \Delta_{n, L}
   \,\defeq\,
   d_{\mathrm{TV}}\!\Bigl(
      \mathcal{L}\bigl(\widetilde{\mathcal{N}}^{\tau}_{2L+1}\,
        ;\,\bm\pi_n\bigr),\;
      \mathcal{L}\bigl(\widetilde{\mathcal{N}}^{\tau}_{2L+1}\,
        ;\,\bm\pi^{\infty}\bigr)
   \Bigr)
\end{equation}
captures the difference between the typed Galton-Watson tree built
from the finite-$n$ empirical typed degree/socket law $\bm\pi_n$
(realized by the configuration model after the
$O(\sqrt n)$-rounded socket-balance step) and the same tree built
from the limiting law $\bm\pi^{\infty}$.  When the finite-$n$ ensemble
construction enforces deterministic socket-balance up to $O(1)$
counts, $\Delta_{n,L} = O(1/\sqrt n) \cdot \mathrm{poly}(N_L)$; in
the standard $O(\sqrt n)$-rounded construction, $\Delta_{n,L}
= O(1/\sqrt n) \cdot \mathrm{poly}(N_L)$ holds under any standard
$\bm\pi_n \to \bm\pi^\infty$ rate, and $\Delta_{n,L} \to 0$ as
$n \to \infty$ for any consistent estimator $\bm\pi_n$.
\end{lemma}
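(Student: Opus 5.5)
We split the distance with the triangle inequality through an intermediate object: the typed Galton--Watson tree built from the finite-$n$ empirical typed degree/socket law $\bm\pi_n$. The second piece, the distance between the $\bm\pi_n$-tree and the $\bm\pi^\infty$-tree, is $\Delta_{n,L}$ by definition \eqref{eq:Delta-n}, so nothing needs proving there beyond the rate remarks. For the rate remarks, we would bound $d_{\mathrm{TV}}$ between the two truncated trees by a union bound over at most $N_L$ offspring draws. Each draw contributes at most $d_{\mathrm{TV}}$ between the one-step typed laws, which is $O(1/\sqrt n)$ under $O(\sqrt n)$ rounding. The size-biasing step contributes at most a $\mathrm{poly}(D_{\max})$ factor, which gives the stated $O(\mathrm{poly}(N_L)/\sqrt n)$.

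The main work is the first piece: $d_{\mathrm{TV}}\bigl(\mathcal L(\mathcal N^\tau_{2L+1}(\Gn)),\mathcal L(\widetilde{\mathcal N}^\tau_{2L+1};\bm\pi_n)\bigr)\le C_L/n$. We would prove it by a breadth-first exploration coupling, conditional on the realized finite-$n$ typed socket multiset.

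Start from a uniformly chosen socket of type $\tau$ and reveal its partner. Then reveal in BFS order, type by type, the partners of every unrevealed socket of each discovered node, up to depth $2L+1$. At each step the partner of a socket of type $\tau'$ is uniform among the remaining free sockets of the compatible type. The host node of that partner therefore has, up to depletion, the size-biased excess law, and its template and remaining sockets are deterministic given the socket, exactly as in \Cref{lem:tree-like}. We couple each revealed step with the corresponding Galton--Watson offspring draw under $\bm\pi_n$.

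The coupling fails at a given step only in two cases:
\begin{enumerate}
\item the drawn socket belongs to an already-discovered node, which creates a cycle or multi-edge;
\item depletion of the $O(N_L)$ already-used sockets shifts the sampling law.
\end{enumerate}
Each type has $\Theta(n)$ sockets by admissibility and positive socket mass. Hence at step $k\le N_L$ both failure probabilities are at most $c\,k/n$ for a constant $c$ depending only on the role and template structure and $D_{\max}$. Summing over $k\le N_L$ gives a failure probability of at most $c_0 N_L^2/n = C_L/n$. Since total variation is bounded by the probability that a coupling fails, this piece follows.

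We expect this collision and depletion accounting to be the main obstacle. The type-by-type bookkeeping is delicate: socket types with small limiting mass still give $\Theta(n)$ counts, but the constant $c_0$ must absorb $1/\min_\tau q_\tau$. If needed, we would state $c_0$ as depending on this minimum, which is fixed by the ensemble parameters. Combining the two pieces yields \eqref{eq:tv-rate}.
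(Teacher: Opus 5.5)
Your proposal is correct and follows the paper's proof. Both go through the triangle inequality via the Galton--Watson tree built from $\bm\pi_n$: the collision term is handled by a breadth-first exploration coupling with per-step failure probability $O(N_L/(m_{\min} n))$, summed over at most $N_L$ pairings, and $\Delta_{n,L}$ is handled by $d_{\mathrm{TV}}(\bm\pi_n,\bm\pi^\infty)$ times a polynomial factor in $N_L$. Separating depletion from collisions and conditioning on the realized socket multiset are slightly more careful than the paper's version, but the argument is the same.
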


\begin{proof}
We assume positive limiting socket-type mass: $m_\tau \defeq \lim_n
n^{-1} \#\{\text{sockets of type }\tau\} > 0$ for every socket type
$\tau \in \Sockets$ that appears in the DE state, with
$m_{\min} \defeq \min_\tau m_\tau > 0$.  The TV bound
\eqref{eq:tv-rate} decomposes by triangle inequality into two
contributions:
\[
   d_{\mathrm{TV}}\!\Bigl(
      \mathcal{L}\bigl(\mathcal{N}^{\tau}_{2L+1}(\Gn)\bigr),\;
      \mathcal{L}\bigl(\widetilde{\mathcal{N}}^{\tau}_{2L+1};
        \bm\pi^{\infty}\bigr)
   \Bigr)
   \le
   \underbrace{d_{\mathrm{TV}}\!\Bigl(
      \mathcal{L}\bigl(\mathcal{N}^{\tau}_{2L+1}(\Gn)\bigr),\;
      \mathcal{L}\bigl(\widetilde{\mathcal{N}}^{\tau}_{2L+1};
        \bm\pi_n\bigr)
   \Bigr)}_{\text{collision error: } \le C_L/n}
   + \underbrace{\Delta_{n,L}}_{\text{empirical-law error}}.
\]
\emph{Collision error ($C_L/n$).}  We construct a coupling between
the configuration-model neighborhood and the typed Galton-Watson
neighborhood \emph{built from $\bm\pi_n$}, which fails to be valid
only when, while exploring the depth-$2L{+}1$ neighborhood
breadth-first, the configuration model proposes a socket pairing
already used in the partial exploration.  Each new pairing is
chosen uniformly from the remaining unpaired sockets of the
matching type; the number of remaining same-type sockets is
$\Theta(m_\tau\, n) \ge \Theta(m_{\min}\, n)$ and the number of
already-explored sockets is at most $N_L$.  The probability that
the next pairing collides with the partial exploration is
therefore at most $N_L / \Theta(m_{\min}\, n)$.  Union-bounding
over the $N_L$ pairings constructed during the breadth-first
exploration gives the collision-error bound $C_L/n$ with $c_0$
depending only on $m_{\min}^{-1}$ and the absolute constants
hidden in the $\Theta(\cdot)$.
\smallskip\par
\emph{Empirical-law error ($\Delta_{n,L}$).}  The Galton-Watson
tree under $\bm\pi_n$ uses finite-$n$ empirical socket-type
proportions and degree distributions; the tree under
$\bm\pi^{\infty}$ uses the limiting proportions.  By a standard
typed-tree TV stability argument (the tree law is a continuous
function of the degree law in the TV topology on truncated
neighborhoods), $\Delta_{n,L}$ is bounded by the TV distance
$d_{\mathrm{TV}}(\bm\pi_n, \bm\pi^{\infty})$ times a polynomial
factor in $N_L$.  Under the standard $O(\sqrt n)$-rounded
socket-balance construction $d_{\mathrm{TV}}(\bm\pi_n,
\bm\pi^{\infty}) = O(1/\sqrt n)$, so $\Delta_{n,L} =
O(\mathrm{poly}(N_L)/\sqrt n)$.  Combining the two terms yields
\eqref{eq:tv-rate}.
\end{proof}

\begin{corollary}[Variable-rooted typed tree convergence]
\label{cor:tree-tv-node}
Fix a role $r \in \RolesV$ and a depth $R \ge 1$.  Let
$\mathcal{N}^{r}_{R}(\Gn)$ denote the depth-$R$ rooted variable
neighborhood of a uniformly chosen variable of role $r$ in
the role-typed configuration ensemble of \Cref{sec:model-ensemble},
and let $\widetilde{\mathcal{N}}^{r}_{R}$ denote the typed
Galton-Watson tree truncated at depth $R$ with root offspring law
the full node degree law $D^{(r), \mathrm{node}}$ at the root and
the size-biased excess-degree laws at all subsequent levels.  Then
\begin{equation}
\label{eq:tv-rate-node}
   d_{\mathrm{TV}}\!\Bigl(
      \mathcal{L}\bigl(\mathcal{N}^{r}_{R}(\Gn)\bigr),\;
      \mathcal{L}\bigl(\widetilde{\mathcal{N}}^{r}_{R}\bigr)
   \Bigr)
   \;\le\; \frac{C_R}{n} + \Delta_{n, R},
\end{equation}
with the same collision constant $C_R$ (up to a factor bounded by the
maximum role-$r$ variable degree) and the same empirical-law term
$\Delta_{n, R}$ as in \Cref{lem:tree-tv}; in particular
$\Delta_{n, R} = 0$ under the deterministic socket-balance
construction, in which case the rate is the sharper $C_R/n$.
\end{corollary}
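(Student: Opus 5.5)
The plan is to deduce \Cref{cor:tree-tv-node} from the proof of \Cref{lem:tree-tv} rather than from its conclusion. The collision-coupling argument used there does not depend on the root being a socket. We run the same breadth-first exploration, but start it from a uniformly chosen variable $i$ of role $r$, and we couple it with the typed Galton--Watson tree $\widetilde{\mathcal{N}}^{r}_{R}$. First we fix the root's socket-count vector. A uniformly chosen role-$r$ node has, under the empirical law $\bm\pi_n$, exactly the finite-$n$ full degree vector. So at the root we couple the configuration-model degree vector with $D^{(r),\mathrm{node}}$ drawn from $\bm\pi_n$, and this step is exact. Next, for each of the at most $D_{\max}$ root sockets of type $\tau'$, the partner is a uniform unpaired socket of the matching type. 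From that point on the exploration below each root edge is exactly an edge-rooted exploration of the kind treated in \Cref{lem:tree-tv}. Every later variable is reached through a socket and therefore carries the size-biased excess law $D^{(\cdot),\tau',\mathrm{ex}}$, while every check reached through socket $(\theta,j)$ has deterministic remaining sockets, as stated in \Cref{lem:tree-like}.

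For the collision term, the depth-$R$ variable-rooted ball contains at most $N'_R \le D_{\max}\cdot 2\sum_{k=0}^{R} D_{\max}^k$ nodes. This is the edge-rooted count with $R$ in place of $2L+1$, multiplied by at most the root degree. Each new pairing collides with the partial exploration with probability at most $N'_R/\Theta(m_{\min} n)$. A union bound over at most $N'_R$ pairings then gives $c_0 (N'_R)^2/n$, which equals $C_R/n$ up to a factor bounded by the maximum role-$r$ degree, exactly as the corollary claims. On the event of no collision, the explored ball is a tree whose conditional law matches the $\bm\pi_n$-Galton--Watson tree, so the coupling holds.

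For the empirical-law term we use the triangle inequality against the tree built from $\bm\pi^\infty$. The two trees differ only in the root degree law (node-level, $\bm\pi_n$ versus $\bm\pi^\infty$) and in the level-wise excess and socket-matching laws. These are the same ingredients that appear in $\Delta_{n,R}$, and the typed-tree TV stability step of \Cref{lem:tree-tv} bounds the contribution by $d_{\mathrm{TV}}(\bm\pi_n,\bm\pi^\infty)\cdot\mathrm{poly}(N'_R)$. Under deterministic socket balance, $\bm\pi_n=\bm\pi^\infty$ on the relevant atoms, so this term vanishes and only $C_R/n$ remains.

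The main obstacle is the root step. A uniformly chosen variable node is not sampled through a socket, so the size-biasing used throughout \Cref{lem:tree-tv} does not apply at the root, and we have to check that the first-generation partners are still uniform over same-type sockets and independent of the root's degree. I would handle this by conditioning on the root's degree vector and noting that the configuration matching is uniform given all degree sequences. Once the root is exposed, the remaining exposures reduce to the socket-rooted case with at most $D_{\max}$ additional root sockets, and the only consequence is the extra degree factor in $C_R$.
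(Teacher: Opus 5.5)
Your argument is correct, but it is organized differently from the paper's. The paper treats \Cref{lem:tree-tv} as a black box. Conditional on the root degree $d$ and its socket types, it writes the variable-rooted ball as a union of $d$ socket-rooted depth-$(R-1)$ neighborhoods, applies the lemma to each, and union-bounds to get $D_{\max}(C_{R-1}/n + \Delta_{n,R-1})$, which it then absorbs into $C_R/n + \Delta_{n,R}$. You instead re-run the breadth-first collision coupling from the variable root: you couple the root's unbiased degree vector exactly under $\bm\pi_n$ and then explore all root branches in one joint exploration.

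The paper's route is shorter and cites the lemma's statement directly. Read literally, however, it sums marginal total-variation bounds, and these do not by themselves control the joint law of the $d$ branches. In the configuration model the branches draw from a common pool of sockets and can collide with one another, so a rigorous version needs branch-by-branch conditioning. Your single exploration does this automatically, because you compute each collision probability against the whole partial exploration, which counts cross-branch collisions. Your route also makes explicit why the root step is harmless (the matching is uniform given the degree sequence), at the cost of redoing rather than citing the lemma's argument.

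One bookkeeping correction: your bound $N'_R \le D_{\max}\cdot 2\sum_{k=0}^{R} D_{\max}^k$ gives $(N'_R)^2 \le D_{\max}^2\,\bigl(2\sum_{k=0}^{R} D_{\max}^k\bigr)^2$. That is a factor $D_{\max}^2$, not the single degree factor you claim. The fix is easy: the radius-$R$ ball around one vertex has at most $\sum_{k=0}^{R} D_{\max}^k$ nodes, so $N'_R$ is already below the corresponding edge-rooted count and no extra factor is needed.
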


\begin{proof}
A uniformly chosen role-$r$ variable has the unbiased node-degree law
$D^{(r), \mathrm{node}}$; this is why the root offspring law differs
from the size-biased excess-degree law at deeper levels, which arises
from following a uniformly chosen socket rather than a uniformly
chosen node.  Conditional on the root degree $d \le D_{\max}$ and its
incident socket types, the depth-$R$ variable-rooted neighborhood is
the union of $d$ socket-rooted depth-$(R-1)$ neighborhoods.  Applying
\Cref{lem:tree-tv} to each and a union bound over the $d$ socket
explorations gives joint TV distance at most
$D_{\max} \cdot \bigl(C_{R-1}/n + \Delta_{n, R-1}\bigr)$; the
collision term absorbs into $C_R/n$, and the empirical-law term
absorbs into $\Delta_{n, R}$ (it is the same
$\bm\pi_n \to \bm\pi^{\infty}$ degree-law discrepancy as in
\Cref{lem:tree-tv}, up to a polynomial factor in the root degree).
\end{proof}

\subsection{The recursion on the typed tree}
\label{app:de-recursion}

We compute the message-erasure probability at the root of the typed
Galton-Watson tree given the root's hidden value, and verify that the
result matches \eqref{eq:pbar}--\eqref{eq:pinit}.

Let the root be a socket of type $\tau = (\theta, j)$ attached to a
variable $i$ of role $r = \socketrole{\tau}$ with hidden value $X_i =
b$.  The tree has $L$ alternating layers; we compute by induction on
$L$.

\emph{Base case $L = 0$.}  At iteration $0$ no messages have been
exchanged; the only message available at the root is $V_{i \to a}^{(0)}
= \Aobs_i$, which equals $\Uset$ with probability $\epsV[r]$ by
\eqref{eq:var-obs}.  Hence
$\pmsg_{0, \tau}^{(b)} = \epsV[r]$, matching \eqref{eq:pinit}.

\emph{Inductive step.}  Assume that after $\ell$ rounds, every socket
type $\tau' \in \Sockets$ and value $b' \in \{0, 1\}$ satisfy
\eqref{eq:p-def} with the value $\pmsg_{\ell, \tau'}^{(b')}$ given by
the recursion.  Consider iteration $\ell + 1$.  The variable $i$ at
the root of the tree sends $V_{i \to a}^{(\ell+1)}$ along socket $j$
to its parent check $a$ of template $\theta$.  By \eqref{eq:var-update},
\[
   V_{i \to a}^{(\ell+1)}
   = \Aobs_i \cap \bigcap_{c \in \partial i \setminus \{a\}}
   \widetilde C_{c \to i}^{(\ell)}.
\]
Conditional on $X_i = b$, the soundness lemma (\Cref{lem:soundness})
implies each incoming message lies in $\{\{b\}, \Uset\}$, so the
intersection equals $\{b\}$ as soon as any one of
$\{\Aobs_i\} \cup \{\widetilde C_{c \to i}^{(\ell)} : c \in \partial i
\setminus \{a\}\}$ is the singleton $\{b\}$, and equals $\Uset$
otherwise.  Equivalently,
\[
   V_{i \to a}^{(\ell+1)} = \Uset
   \;\Longleftrightarrow\;
   \Aobs_i = \Uset \;\text{ and every extrinsic inbound check
   message is }\Uset.
\]
On the typed tree, $\Aobs_i$ and the children $c$ of $i$ are
conditionally independent given the role types and incoming socket
type, so the product factors over excess-degree socket types $\tau'$:
\begin{equation}
\label{eq:V-equals-U}
   \PR\{V_{i \to a}^{(\ell+1)} = \Uset \mid X_i = b\}
   = \epsV[r]\, \E\!\left[
   \prod_{\tau' \in \Sockets_r}
   \big(\hmsg_{\ell, \tau'}^{(b)}\big)^{D_{\tau'}^{(r), \tau,
   \mathrm{ex}}}\right],
\end{equation}
which is \eqref{eq:pupdate}.  The complementary singleton probability
is $\PR\{V_{i \to a}^{(\ell+1)} = \{b\} \mid X_i = b\} = 1 - $
\eqref{eq:V-equals-U}.

It remains to verify the form of $\hmsg_{\ell, \tau'}^{(b)}$.  Fix one
of the root's incident excess-degree sockets $\tau' = (\theta', k) \in
\Sockets_r$, which connects the root variable $i$ to a descendant
check $a'$ of template $\theta'$ at socket $k$; since $i$ has role $r$,
the socket role is $\socketrole{\tau'} = r$.  By \eqref{eq:check-update},
the inbound message $\widetilde C_{a' \to i}^{(\ell)}$ equals $\Uset$
unless (i) the verifier observation $\Zobs_{a'}$ is not erased, (ii)
the return channel $a' \to i$ delivers, and (iii) the forcing operator
$\force_{\theta', k}$ at the target socket $k$ outputs the singleton at
value $b$.  Independence on the tree gives the product form
$(1 - \epsC[s_{\theta'}]) \cdot \eta_{s_{\theta'}, r}$ for (i)--(ii).
For (iii), conditional on the root value $X_i = b$ and on independent
draws of the descendant true values from the role priors $\Vbias$,
the inbound messages from the other sockets $k' \neq k$ of $a'$ are
singletons or $\Uset$ according to \eqref{eq:pbar}.  The probability
that the operator outputs the singleton at the target value $b$ is
exactly the forcing probability $\forceprob_{\ell, \theta', k}^{(b)}$
of \eqref{eq:forcing-prob}.  Composing,
\[
   \PR\{\widetilde C_{a' \to i}^{(\ell)} = \Uset \mid X_i = b\}
   = 1 - (1 - \epsC[s_{\theta'}])\, \eta_{s_{\theta'}, r}\,
   \forceprob_{\ell, \theta', k}^{(b)},
\]
which is \eqref{eq:hupdate}.  This closes the induction.

By \Cref{lem:tree-tv}, the message at the root of
$\mathcal{N}^{\tau}_{2L+1}(\Gn)$ has total-variation distance at most
$C_L / n$ from the message at the root of
$\widetilde{\mathcal{N}}^{\tau}_{2L+1}$, so
\begin{equation}
\label{eq:exp-conv}
   \E\big[\one\{V^{(L)} = \Uset, X = b\}\big]
   \,\xrightarrow[n \to \infty]{}\, \PR(X = b)\, \pmsg_{L, \tau}^{(b)},
\end{equation}
which is the convergence of expectations promised by \Cref{thm:de}.

\subsection{Bounded-differences concentration via sequential
exposure}
\label{app:de-mcdiarmid}

\rhead{Why this argument needs care.}  The role-typed
configuration model is the uniform measure on a finite set of socket
matchings, together with independent draws of all observation and
channel variables.  The observation and channel variables are
mutually independent and admit a direct McDiarmid bound, but the
socket-pairing matching $\Pi$ is \emph{not} a tuple of independent
coordinates: its $\binom{|\text{sockets}|}{2}$ pair indicators are
strongly correlated.  We therefore expose $\Pi$ sequentially through
a Doob martingale, a standard device for permutation- and
matching-valued random objects \cite{mcdiarmid1989}, and use
McDiarmid for the remaining independent ingredients.

\rhead{Random ingredients.}  Index the random ingredients of
one realization in canonical order.  Let $S$ be the set of all
sockets (variable-side and check-side); fix a canonical enumeration
$s_1, s_2, \ldots, s_{|S|}$ that proceeds, say, lexicographically.
We expose the matching $\Pi$ as a sequence of pair revelations
\begin{equation}
\label{eq:matching-exposure}
   E_1, E_2, \ldots, E_{|S|/2},
\end{equation}
where $E_t = (s_{i_t}, \pi(s_{i_t}))$ is the partner of the smallest
not-yet-matched socket $s_{i_t}$ at step $t$ (chosen uniformly among
the remaining unmatched sockets).  After the matching is exposed, we
expose the independent ingredient block:
\begin{itemize}[leftmargin=2em]
\item the hidden values $\Xstar_i$ for $i \in \Vset_n$ ($M_X = n$
independent draws under the role prior $\Vbias$);
\item the variable-side erasure indicators $\one\{\Aobs_i = \Uset\}$
($M_A = n$ Bernoulli draws with parameter $\epsV[r_i]$);
\item the verifier-side erasure indicators $\one\{\Zobs_a = *\}$ for
$a \in \Cset_n$ ($M_Z = \alpha n$ Bernoulli draws with parameter
$\epsC[s_a]$);
\item the channel-erasure indicators $\one\{\widetilde M = \Uset \mid
M = \{b\}\}$ for each directed edge ($M_\eta = \Theta(n D_{\max})$
Bernoulli draws with parameters $1 - \eta_{r, s}$).
\end{itemize}

\rhead{Doob martingale on the matching.}  Let $f(\Gn) =
\Pbit^{(L)}(\Gn)$ denote the empirical bit-erasure rate after $L$
rounds.  Define the Doob martingale
\begin{equation}
\label{eq:doob}
   Z_t \,\defeq\, \E\!\left[ f \,\Big|\, E_1, \ldots, E_t,\,
   \Xstar,\, \{\one\{\Aobs_i = \Uset\}\}_i,\,
   \{\one\{\Zobs_a = *\}\}_a,\, \{\text{channel indicators}\} \right],
\end{equation}
filtered by the matching-exposure and ingredient $\sigma$-algebras
in canonical order.  The conditioning is on the independent primitive
ingredients only, namely the hidden values, the variable- and
verifier-side erasure indicators, and the channel-erasure indicators.
The observations $\Aobs_i$ and $\Zobs_a$ themselves are \emph{not}
conditioned on as free coordinates: they are derived from these
primitives together with the matching ($\Zobs_a = f_\theta(\Xstar_{\partial a})$
when its erasure indicator does not fire, and $\Aobs_i \in
\{\{\Xstar_i\}, \Uset\}$ according to its variable-side indicator), so
exposing the matching updates them automatically and consistently.  We
show below that the martingale increment $|Z_t - Z_{t - 1}|$ is bounded
by a constant $\widetilde\Delta_L$ independent of the realization.

\begin{lemma}[Type-preserving single-pair switching coupling]
\label{lem:switching}
Fix a step $t \in \{1, \ldots, |S|/2\}$ in the matching-exposure
sequence \eqref{eq:matching-exposure}.  Let $\Pi, \Pi'$ be two
configuration-model perfect matchings agreeing on $E_1, \ldots,
E_{t-1}$ and differing at $E_t$ only in the partner of $s_{i_t}$.
Then there exists a coupling between $\Pi$ and $\Pi'$ under which
the completed matchings differ by exactly two paired edges:
\[
   \Pi \,=\, \Pi'_{0} \cup \{(s_{i_t}, u), (s', v)\},
   \qquad
   \Pi' \,=\, \Pi'_{0} \cup \{(s_{i_t}, v), (s', u)\},
\]
for some sockets $u, v$ of the matching type required by
$s_{i_t}$, some socket $s'$ outside $\{s_{i_t}, u, v\}$, and a
common matching $\Pi'_{0}$ on the remaining sockets.  In
particular, $\Pi$ and $\Pi'$ are related by a single
\emph{type-preserving 2-edge switch}.
\end{lemma}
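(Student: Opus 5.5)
The plan is the standard switching coupling for uniform matchings, run within one socket type. Condition on the exposed pairs $E_1,\ldots,E_{t-1}$. Let $R$ be the set of sockets still unmatched after step $t-1$, so $s_{i_t}\in R$. Type compatibility constrains the pairing: $s_{i_t}$ can only be paired with sockets of its compatible type, namely the check-side or variable-side sockets of the same socket type $\tau$. Write $R_\tau\subseteq R$ for this set of admissible partners. Conditional on the history, the remaining matching is uniform over all type-compatible perfect matchings of $R$. Under $\Pi$ the partner of $s_{i_t}$ is some $u\in R_\tau$, and under $\Pi'$ it is some $v\in R_\tau$, with $u\neq v$.

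The coupling is an explicit bijection $\sigma_{u,v}$. It maps type-compatible perfect matchings of $R$ that contain the pair $(s_{i_t},u)$ onto those that contain $(s_{i_t},v)$. Take a matching $\Pi$ with $(s_{i_t},u)\in\Pi$, and let $s'$ be the partner of $v$ in $\Pi$. Then $s'\notin\{s_{i_t},u,v\}$: the socket $s_{i_t}$ is already matched to $u$, and $v$ is matched to $s'$. Define $\sigma_{u,v}(\Pi)$ by replacing the two pairs $(s_{i_t},u)$ and $(s',v)$ with $(s_{i_t},v)$ and $(s',u)$.

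Type preservation comes first. Since $u$ and $v$ both lie in $R_\tau$, they have the same socket type and the same side of the bipartition. Because $s'$ was a valid partner of $v$, it is also a valid partner of $u$. Hence $\sigma_{u,v}(\Pi)$ is again a valid configuration-model perfect matching, and it agrees with $\Pi$ on all earlier exposures. The same swap with the roles of $u$ and $v$ exchanged inverts $\sigma_{u,v}$, so $\sigma_{u,v}$ is a bijection.

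Both conditional laws, given $E_t=(s_{i_t},u)$ and given $E_t=(s_{i_t},v)$, are uniform on their respective fibres. A bijection between the fibres therefore pushes one uniform law onto the other, which gives a valid coupling. By construction the coupled matchings share the common part $\Pi'_0=\Pi\setminus\{(s_{i_t},u),(s',v)\}$ and differ exactly by the displayed two-edge switch.

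The one point that needs care, and which I expect to be the main obstacle, is the partner $s'$ of $v$ in the bipartite typed setting. If $s_{i_t}$ is a variable-side socket of type $\tau$, then $u$ and $v$ are check-side sockets of type $\tau$, and $s'$ is a variable-side socket of type $\tau$, so it can be paired with $u$. If $s_{i_t}$ is check-side, the argument is symmetric. I would spell out this side/type bookkeeping once. I would also note the degenerate case in which $R_\tau$ contains only the partner of $s_{i_t}$: then $u=v$ is forced, the fibre is a single point, and the increment is zero, so there is nothing to couple.
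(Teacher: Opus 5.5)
Your proposal is correct and follows the same route as the paper. Both arguments take $s'$ to be the partner of $v$ under $\Pi$, perform the type-preserving switch $(s_{i_t},u),(s',v)\leftrightarrow(s_{i_t},v),(s',u)$, and rely on the conditional law given $E_1,\ldots,E_{t-1}$ being uniform on type-compatible completions. Your version is slightly more complete: you check explicitly that this switch is an involutive bijection between the two conditional fibres, which is what makes it a measure-preserving coupling, whereas the paper's proof leaves that step implicit.
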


\begin{proof}
The configuration-model perfect matching is the uniform measure on
type-compatible perfect matchings of $S$.  Conditional on
$E_1, \ldots, E_{t-1}$, the remaining matching is uniform over
type-compatible perfect matchings of the unexposed sockets.  Let
$u$ and $v$ be the round-$t$ partners of $s_{i_t}$ under $\Pi$ and
$\Pi'$ respectively, and let $s'$ be the round-$(>t)$ partner of $v$
under $\Pi$ (which is the partner of $u$ under $\Pi'$ by exchange).
Type-compatibility of $\Pi$ at the $(s_{i_t}, u)$ pair forces $u$ to
have the type required by $s_{i_t}$; similarly for $v$.  The
2-edge switch $(s_{i_t}, u), (s', v) \leftrightarrow (s_{i_t}, v),
(s', u)$ exchanges $\Pi$ and $\Pi'$ on these two pairs while
preserving all other pairs and all socket-type counts.  This is
the standard switching lemma for configuration-model matchings.
\end{proof}

\begin{lemma}[Local-statistic Lipschitz constant under switching]
\label{lem:switch-lipschitz}
Let $h: \mathcal{G}_R \to [0, B]$ be a bounded depth-$R$
rooted-variable local statistic, and let $g(\Gn) \defeq n^{-1}
\sum_{i \in \Vset_n} h\bigl(\mathcal{N}^{r_i}_R(\Gn, i)\bigr)$ be
its empirical mean.  Define
\begin{equation}
\label{eq:CDmaxR}
   C(D_{\max}, R) \,\defeq\, 4 \cdot 2\, D_{\max}^{R}
   \,=\, 8\, D_{\max}^{R},
\end{equation}
the maximum number of variables within graph distance $R$ of any
fixed set of four sockets in a graph with maximum degree
$D_{\max}$.  Then:
\begin{enumerate}[label=(\roman*),leftmargin=2em]
\item Under a type-preserving 2-edge switch (\Cref{lem:switching})
the empirical mean satisfies
$|g(\Gn) - g(\Gn')| \le B \cdot C(D_{\max}, R) / n$.
\item Replacing a single hidden value $\Xstar_j$, variable-side
erasure indicator $\one\{\Aobs_j = \Uset\}$, verifier-side erasure
indicator $\one\{\Zobs_a = *\}$, or channel-erasure indicator
changes $g$ by at most $B \cdot C(D_{\max}, R+1) / n$, where the
$(R+1)$-radius accounts for the propagation of a $\Xstar_j$ change
through the true verifier values $T_a = f_\theta(\Xstar_{\partial
a})$ at adjacent checks.
\end{enumerate}
\end{lemma}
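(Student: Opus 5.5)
The plan is a locality argument: identify which roots can see the perturbation, and count them. The key fact is that $h$ evaluated at root $i$ depends only on the rooted depth-$R$ neighborhood $\mathcal{N}^{r_i}_R(\Gn, i)$, together with the marks carried inside it: hidden values, erasure indicators, channel gates, and the derived verifier outputs. So a perturbation can change the summand at $i$ only if it touches something inside that ball. Each changed summand moves by at most $B$, since $h \in [0, B]$. Dividing by $n$, it therefore suffices to show that the number of roots whose ball is affected is at most $C(D_{\max}, R)$ in case (i) and at most $C(D_{\max}, R+1)$ in case (ii).

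For (i), I will use the explicit form of the switch from \Cref{lem:switching}. The matchings $\Pi$ and $\Pi'$ agree outside the four sockets $s_{i_t}, u, v, s'$, and the same is true of the graphs $\Gn$ and $\Gn'$. If the ball of radius $R$ around root $i$ contains none of the four edges incident to these sockets, then the ball is identical in $\Gn$ and $\Gn'$. This holds with the same explored edges, types and marks, because breadth-first exploration from $i$ never reaches a changed pairing. Hence $h$ is unchanged at $i$.

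Conversely, an affected root must lie within graph distance $R$ of the node owning one of the four sockets, in $\Gn$ or in $\Gn'$. For each socket I bound the number of variables within distance $R$ of its owner by the number of vertices reachable by alternating paths of length at most $R$. With degree at most $D_{\max}$ this is at most $\sum_{\rho \le R} D_{\max}^\rho \le 2 D_{\max}^R$ for $D_{\max} \ge 2$; the case $D_{\max} = 1$ is trivial. I will take a single count that covers both graphs. The cleanest way is to count vertices within distance $R$ in the union graph $\Gn \cup \Gn'$ restricted to paths that use at most one switched edge, or simply to note that any path into a changed edge in either graph uses only unchanged edges before it. Summing over the four sockets gives $4 \cdot 2 D_{\max}^R = C(D_{\max}, R)$ affected roots, which yields $|g(\Gn) - g(\Gn')| \le B\,C(D_{\max}, R)/n$.

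For (ii), the perturbation is a single mark. A variable-side erasure indicator or hidden value sits at one variable $j$. A verifier-side indicator sits at one check $a$. A channel indicator sits on one directed edge, and I will treat it as located at its endpoint nodes. Changing $\Xstar_j$ also changes the derived outputs $T_a = f_\theta(\Xstar_{\partial a})$, and hence $\Zobs_a$, at every check $a \ni j$. So the effective support of the change is the closed one-hop neighborhood of $j$. A root is affected only if its depth-$R$ ball meets this support, which requires the root to be within distance $R+1$ of $j$. Repeating the count gives at most $C(D_{\max}, R+1)$ affected roots; the constant $8$ covers the at most two endpoints of a channel edge with room to spare.

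The main obstacle is the bookkeeping in (i). Distances must be measured consistently across the two graphs, and a root whose ball reaches a switched socket in $\Gn'$ but not in $\Gn$ must still be counted. I will handle this by observing that the first switched edge met by any exploration is preceded only by common edges, so it is reached at the same distance in both graphs; this lets a single count over the common subgraph $\Pi'_0$ serve for both. The remaining constants are routine geometric sums.
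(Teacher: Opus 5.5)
Your proposal is correct and follows essentially the same locality-plus-volume-counting argument as the paper. Only roots within distance $R$ of the owners of the four switched sockets (resp.\ within distance $R+1$ of $j$, via the $T_a$ propagation) can change; each changed summand moves by at most $B$; and the ball-volume bound $\sum_{\rho \le R} D_{\max}^{\rho} \le 2D_{\max}^{R}$, summed over the four sockets, gives $C(D_{\max},R)$. Your remark that the distance from a root to the set of switched-socket owners is the same in $\Gn$, in $\Gn'$, and in the common subgraph $\Pi'_0$ is a step the paper leaves implicit, and it correctly handles the two-graph bookkeeping.
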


\begin{proof}
\emph{(i)}  The 2-edge switch alters $\Gn$ only on the two pairs
$(s_{i_t}, u)$ and $(s', v)$, whose endpoints lie at four sockets
incident to at most four variables.  A variable $i$'s depth-$R$
rooted neighborhood includes the swapped edges only if $i$ is
within graph distance $R$ of one of these four sockets; the maximum
such count is $C(D_{\max}, R)$ by the bounded-degree volume bound.
For variables $i$ outside this set, $h(\mathcal{N}^{r_i}_R(\Gn, i))
= h(\mathcal{N}^{r_i}_R(\Gn', i))$.  Summing over the at-most
$C(D_{\max}, R)$ affected variables and dividing by $n$ gives
the bound.

\emph{(ii)}  A variable-side erasure-indicator, verifier-side
erasure-indicator, or channel-indicator change affects $g$ only
through variables $i$ whose depth-$R$
neighborhood contains the changed ingredient; the bound follows
as in (i) at depth $R$.  An $\Xstar_j$ change additionally flips
the verifier truth $T_a = f_\theta(\Xstar_{\partial a})$ at each
adjacent non-erased check $a$, and these $T_a$-changes propagate
through the depth-$R$ neighborhoods of variables one hop further
out from $j$.  This enlarges the radius from $R$ to $R+1$ and
yields the $C(D_{\max}, R+1)$ bound.
\end{proof}

\begin{lemma}[Bounded martingale increment]
\label{lem:bd-constant}
Let $\widetilde\Delta_L \defeq C(D_{\max}, R_{\mathrm{node}}(L)+1)
/ n = 8\, D_{\max}^{2L+3} / n$.  For every matching-exposure step
$t$ in \eqref{eq:matching-exposure}, the Doob martingale $Z_t$
defined in \eqref{eq:doob} satisfies $|Z_t - Z_{t-1}| \le
\widetilde\Delta_L$ almost surely.  For every independent
observation, hidden-value, or channel ingredient $W_j$, replacing
$W_j$ while holding all others fixed changes $f$ by at most
$\widetilde\Delta_L$ in absolute value.
\end{lemma}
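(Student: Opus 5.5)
The plan has two parts, matching the two claims of \Cref{lem:bd-constant}: the matching-exposure increments and the independent-ingredient replacements. In both, $f = \Pbit^{(L)}(\Gn)$ is treated as the empirical mean of the local statistic $h = \one\{\widehat M_i^{(L)} = \Uset\}$, which takes values in $[0,1]$, so $B = 1$. By the radius convention of \Cref{app:de-proof}, $h$ at $i$ is a measurable function of the depth-$R_{\mathrm{node}}(L) = 2L+2$ rooted variable neighborhood. That neighborhood carries the hidden values, erasure indicators, and channel gates on its vertices and edges. Hence \Cref{lem:switch-lipschitz} applies with $R = 2L+2$.

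For the independent ingredients, the claim follows directly from \Cref{lem:switch-lipschitz}(ii). Replacing one hidden value, variable-side indicator, verifier-side indicator, or channel indicator changes $f$ by at most $C(D_{\max}, R+1)/n$. Here $R+1 = 2L+3$, which gives $8D_{\max}^{2L+3}/n = \widetilde\Delta_L$. I will note one point about the hidden values: the verifier values $T_a$ at checks adjacent to $j$ are derived quantities, so a change of $\Xstar_j$ propagates only one extra hop. This is what the radius increment accounts for.

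For the matching increments, I will write $Z_{t-1}$ as an average of $Z_t$ over the conditional law of the step-$t$ partner $E_t$. It then suffices to show that for any two admissible values $u, v$ of that partner, the conditional expectations of $f$ given $E_t = (s_{i_t},u)$ and given $E_t = (s_{i_t},v)$ differ by at most $\widetilde\Delta_L$. Then $|Z_t - Z_{t-1}| \le \sup_{u,v}|\E[f\mid u] - \E[f\mid v]|$. To compare the two, I will use the switching coupling of \Cref{lem:switching}. Conditional on $E_1,\dots,E_{t-1}$ and $E_t$, the remaining matching is uniform over type-compatible completions. The map that swaps $u$ and $v$ in every completion is a bijection between completions with $E_t = (s_{i_t},u)$ and completions with $E_t = (s_{i_t},v)$; this holds because $u$ and $v$ have the same required type. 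Under this bijection, the paired matchings differ by exactly one type-preserving 2-edge switch. The independent ingredients are held fixed, since the Doob filtration conditions on them, and they are indexed by vertices and directed edges in a way unaffected by the switch. I will check that this indexing convention is consistent. Then \Cref{lem:switch-lipschitz}(i) bounds the pointwise difference by $8D_{\max}^{2L+2}/n \le \widetilde\Delta_L$. Averaging the pointwise difference over the uniform bijection bounds the difference of conditional expectations.

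I expect the main obstacle to be making the bijection rigorous for multigraph matchings. In particular, I have to handle the degenerate case where $v$'s partner $s'$ under the first completion coincides with $s_{i_t}$ or $u$. I would deal with this by defining the swap as the transposition of $u$ and $v$ on socket labels, which is always a measure-preserving involution on uniform completions. A second concern is that channel indicators attached to directed edges must be re-attached consistently. I would index them by ordered socket pairs and note that the switch alters at most four such directed edges, which are already inside the four-socket neighborhood counted by $C(D_{\max},R)$.
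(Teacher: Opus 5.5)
Your proposal is correct and follows the paper's proof. For the independent ingredients you apply \Cref{lem:switch-lipschitz}(ii) at radius $R_{\mathrm{node}}(L)+1 = 2L+3$, and for the matching increments you combine \Cref{lem:switch-lipschitz}(i) with the switching coupling of \Cref{lem:switching} and average over completions; your explicit $u \leftrightarrow v$ transposition bijection and socket-level indexing of channel indicators simply make rigorous what the paper compresses into ``averaging over the swap-coupling.''
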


\begin{proof}
The empirical bit-erasure rate $f(\Gn) = \Pbit^{(L)}(\Gn) =
n^{-1} \sum_i \one\{\widehat M_i^{(L)} = \Uset\}$ is the empirical
mean of a depth-$R_{\mathrm{node}}(L) = 2L+2$ rooted-variable local
statistic with $B = 1$.  By \Cref{lem:switch-lipschitz}(i),
adjacent matching realizations differing by a single type-preserving
2-edge switch (which is the form of any two completions admissible
in the Doob filtration at step $t$, by \Cref{lem:switching}) yield
$|f(\Gn) - f(\Gn')| \le C(D_{\max}, 2L+2)/n \le \widetilde\Delta_L$;
averaging over the swap-coupling gives $|Z_t - Z_{t-1}| \le
\widetilde\Delta_L$ almost surely.  By \Cref{lem:switch-lipschitz}(ii),
replacing any independent ingredient (including $\Xstar_j$, with the
$T_a$-propagation effect) changes $f$ by at most $C(D_{\max},
2L+3)/n = \widetilde\Delta_L$.
\end{proof}

\begin{proof}[Proof of \Cref{thm:de}]
By \Cref{lem:bd-constant}, $Z_t - Z_{t-1}$ is a bounded martingale
increment for the matching-exposure block, and the independent
observation and channel ingredients form a bounded-differences
sequence with the same constant $\widetilde\Delta_L$.  Combining the
Azuma--Hoeffding inequality on the matching martingale with
McDiarmid's bounded-differences inequality on the independent
ingredients \cite{mcdiarmid1989},
\begin{align*}
   \PR\big\{|f - \E f| > \delta\big\}
   &\le 2 \exp\!\left(-\frac{2 \delta^2}{N\,
   \widetilde\Delta_L^2}\right),
\end{align*}
where $N = |S|/2 + M_X + M_A + M_Z + M_\eta = O(n D_{\max})$ is the
total number of exposure steps.  Substituting $\widetilde\Delta_L
= 8\, D_{\max}^{2L+3} / n$ (\Cref{lem:bd-constant}),
\begin{align*}
   \PR\big\{|f - \E f| > \delta\big\}
   &\le 2 \exp\!\left(-\frac{2 \delta^2 n^2}
   {O(n D_{\max}) \cdot \big(8 D_{\max}^{2L+3}\big)^2}\right) \\
   &\le 2 \exp\!\left(-\frac{n\, \delta^2}{C\, D_{\max}^{4L+7}}\right)
\end{align*}
for a positive constant $C$ depending only on absolute constants in
the $O(\cdot)$ above; this is the around-mean concentration with rate
$a_L = 1 / (C\, D_{\max}^{4L+7})$, the displayed radius $\delta$
playing the role of $t$ in \eqref{eq:bit-tail}.  Combined with the
bias bound $|\E f - \Pdee^{(L)}| \le b_{n, L}$ (\Cref{lem:tree-tv},
which gives the limit \eqref{eq:exp-conv}), the triangle inequality
yields the offset tail \eqref{eq:bit-tail} and convergence of
$f = \Pbit^{(L)}(\Gn)$ in probability to $\Pdee^{(L)}$.

The edge-message convergence statement is identical with the
depth-$2L{+}1$ neighborhood and the excess-degree socket law in place
of the node-degree law; \Cref{lem:bd-constant} carries through with
the same constant up to absorbed factors, and the Azuma--McDiarmid
tail goes through unchanged.
\end{proof}

\rhead{Standing assumptions used by \Cref{thm:de}.}  For the
record, the precise standing assumptions of \Cref{thm:de} are: (i)
finite role and template sets ($|\RolesV|, |\RolesC|, |\Templates|
< \infty$); (ii) finite second moment of the role-typed degree
distribution and a uniform-in-$n$ degree bound $D_{\max} < \infty$;
(iii) socket-balance equalities holding asymptotically with
$O(\sqrt{n})$ rounding fluctuations; (iv) hidden values drawn
independently from a role-typed prior with $\Vbias_r \in (0, 1)$
under value-degree independence
$\Xstar_i \perp\!\!\!\perp D_i \mid \rolemap(i)$ (or,
equivalently, deterministic hidden vectors whose
\emph{socket-conditional} empirical value frequencies converge to
$\Vbias_r$); (v) directed-edge erasure variables independent across
edges and persistent in time, so
that any per-round erasure is chosen at $t = 0$ and held thereafter
(the alternative i.i.d.-per-round model is a separate setting and not
analyzed here).

\subsection{Deterministic-graph concentration}
\label{app:de-deterministic}

This appendix proves \Cref{thm:de-deterministic}.  Fix $L$, write
$R = 2L + 2$, and let $U_i \in \{0, 1\}$ indicate that variable node
$i$ is unresolved after $L$ rounds, so that
$\Pbit^{(L)}(G_n) = \frac1n \sum_i U_i$ summed over the $n$ variable
nodes.  The graph $G_n$ is deterministic; the independent primitive
random variables are the hidden values $\{\Xstar_i\}$, the
variable-side erasure indicators, the verifier-side erasure
indicators, and the persistent channel gates $\{B_e\}$, mutually
independent (with role-, template-, and role-pair-dependent laws) and
independent of $G_n$.  The observations $\Aobs_i$ of \eqref{eq:var-obs}
and $\Zobs_a$ of \eqref{eq:check-obs} are \emph{derived} from the
hidden values and these erasure indicators (e.g.\ $\Zobs_a = *$ when
its erasure indicator fires and $\Zobs_a = f_\theta(\Xstar_{\partial a})$
otherwise), so they are not themselves among the independent
ingredients.

\emph{Part A: concentration around the mean.}  $\Pbit^{(L)}(G_n)$ is a
function of the independent ingredients above (hidden values,
variable- and verifier-side erasure indicators, and channel gates).
Because $U_i$ depends only
on the depth-$R$ neighborhood of $i$ and the ingredients inside it,
changing one ingredient alters $U_i$ only for roots $i$ whose
depth-$R$ neighborhood contains the changed item; the degree/arity
bound $D_{\max}$ caps the number of such roots by
$\kappa_L = O(D_{\max}^{R})$, independent of
$n$.  Bounded degrees and arities also make the numbers of variables,
checks, and edges all $\Theta(n)$, so there are $\Theta(n)$ independent
ingredients.  Thus $\Pbit^{(L)}$ is a bounded-differences function with
per-ingredient Lipschitz constant $\kappa_L / n$, and McDiarmid's
inequality \cite{mcdiarmid1989} gives a
constant $a_L > 0$ (depending only on $L$ and $D_{\max}$) with
\begin{equation}
\label{eq:det-mean-tail}
   \PR\big\{\big|\Pbit^{(L)}(G_n) - \E\,\Pbit^{(L)}(G_n)\big| > t\big\}
   \le 2\exp(-a_L\, n\, t^2), \qquad t > 0.
\end{equation}
An equivalent local-dependence concentration argument
\cite{janson2004poisson, chatterjee2007stein}, using that each $U_i$
depends only on a bounded-radius neighborhood, yields the same tail.
No matching exposure is needed: the graph is deterministic, so the
sequential-exposure martingale of \Cref{thm:de} is replaced by direct
bounded differences over the independent ingredients.

\emph{Part B: the mean converges to $\Pdee^{(L)}$.}  For a finite
rooted socket-type-marked ball $\mathcal{T}$ of depth $R$, let
$\psi_L(\mathcal{T}) \in [0, 1]$ be the probability that the root is
unresolved after $L$ rounds when message passing is run on
$\mathcal{T}$ with hidden values and erasures drawn from the model;
$\psi_L$ is a fixed bounded function of the marked ball.  Since the
depth-$R$ estimate is a deterministic function of the depth-$R$ ball
and its ingredients, $\E[U_i] = \psi_L(\mathcal{T}_i)$ for the marked
ball $\mathcal{T}_i$ rooted at $i$, whence
\[
   \E\,\Pbit^{(L)}(G_n)
   = \frac1n \sum_i \psi_L(\mathcal{T}_i)
   = \E_{\mathcal{T} \sim \mu_n}\!\big[\psi_L(\mathcal{T})\big],
\]
with $\mu_n$ the empirical law of the depth-$R$ marked ball of a
uniformly chosen variable root.  Bounded degrees and arities and the
finite role and socket-type mark sets make the set $\mathcal{B}_R$ of
possible depth-$R$ marked balls \emph{finite}; on a finite set the weak
convergence $\mu_n \Rightarrow \mu_{\mathrm{GW}}$ of hypothesis~(c) is
atom-wise, $\mu_n(\mathcal{T}) \to \mu_{\mathrm{GW}}(\mathcal{T})$ for
every $\mathcal{T} \in \mathcal{B}_R$, so for the bounded $\psi_L$,
\[
   \E_{\mu_n}[\psi_L]
   = \sum_{\mathcal{T} \in \mathcal{B}_R}
     \psi_L(\mathcal{T})\,\mu_n(\mathcal{T})
   \;\longrightarrow\;
   \sum_{\mathcal{T} \in \mathcal{B}_R}
     \psi_L(\mathcal{T})\,\mu_{\mathrm{GW}}(\mathcal{T})
   = \E_{\mu_{\mathrm{GW}}}[\psi_L].
\]
The limit $\E_{\mu_{\mathrm{GW}}}[\psi_L]$ is the probability that the
root of the marked computation tree is unresolved after $L$ rounds.
The tree's degree, role, and socket-type marks match the DE recursion
(hypothesis~(c)) and the hidden values are drawn from the priors, so
\Cref{app:de-recursion} identifies this probability with
$\Pdee^{(L)}$.  Hence
$b_n^{\mathrm{det}} = |\E\,\Pbit^{(L)}(G_n) - \Pdee^{(L)}| \to 0$.

\emph{Combining.}  For every $t > 0$,
$\{|\Pbit^{(L)}(G_n) - \Pdee^{(L)}| > b_n^{\mathrm{det}} + t\}
\subseteq \{|\Pbit^{(L)}(G_n) - \E\,\Pbit^{(L)}(G_n)| > t\}$ because
$|\E\,\Pbit^{(L)}(G_n) - \Pdee^{(L)}| \le b_n^{\mathrm{det}}$, so
\eqref{eq:det-mean-tail} gives \eqref{eq:det-tail}, and
$b_n^{\mathrm{det}} \to 0$ gives convergence in probability.  If the
neighborhood-law convergence in (a)--(c) is quantified, then
$b_n^{\mathrm{det}} \le \sum_{\mathcal{T} \in \mathcal{B}_R}
|\mu_n(\mathcal{T}) - \mu_{\mathrm{GW}}(\mathcal{T})|$ makes the offset
rate explicit.

\section{Adjoint Equations: Derivation Detail}\label{app:adjoint-detail}

This appendix supplies the full reverse-mode chain-rule derivation of
the adjoint equations \eqref{eq:adjoint}--\eqref{eq:adjoint-grad} of
\Cref{thm:optimization}~(d), gives the explicit block-sparse forms of
the Jacobians $D_{\bm p}\, \DEmap_\lambda$ and $D_\lambda\, \DEmap_\lambda$,
and traces the KKT multiplier interpretation of part (e) back to unit
increments of the three erasure-tier knobs.

\subsection{Reverse-mode chain rule}
\label{app:adjoint-chain}

Fix a finite-horizon objective $J_L(\lambda) = \psi(\bm p_L, \lambda)$
and the recursion $\bm p_{\ell+1} = \DEmap_\lambda(\bm p_\ell)$ with
$\bm p_0 = \bm p_0(\lambda)$.  By the chain rule,
\begin{equation}
\label{eq:total-grad}
   \nabla_\lambda\, J_L
   = \nabla_\lambda\, \psi(\bm p_L, \lambda)
   + \big(D_{\bm p}\, \psi(\bm p_L, \lambda)\big)\,
   \frac{\mathrm{d} \bm p_L}{\mathrm{d} \lambda},
\end{equation}
where $\mathrm{d} \bm p_L / \mathrm{d} \lambda$ is the total derivative.
Iterating the recursion's chain rule,
\begin{equation}
\label{eq:tan}
   \frac{\mathrm{d} \bm p_{\ell+1}}{\mathrm{d} \lambda}
   = D_{\bm p}\, \DEmap_\lambda(\bm p_\ell)\,
   \frac{\mathrm{d} \bm p_\ell}{\mathrm{d} \lambda}
   + D_\lambda\, \DEmap_\lambda(\bm p_\ell),
   \qquad \ell = 0, \ldots, L - 1,
\end{equation}
with initial condition $\mathrm{d} \bm p_0 / \mathrm{d} \lambda =
D_\lambda\, \bm p_0(\lambda)$.  Substituting \eqref{eq:tan}
recursively into \eqref{eq:total-grad} gives forward-mode
sensitivity; the result is a sum of $L+1$ terms, the $\ell$-th term
being a product of $L - \ell$ Jacobians.  Forward mode is computationally
expensive when $\dim \lambda$ is large.

The reverse-mode (adjoint) computation is dual.  Define the adjoint
state $\xi_\ell \in \R^{2|\Sockets|}$ by the backward recursion
\begin{align}
\label{eq:adjoint-recur}
   \xi_L &= \big(\nabla_{\bm p}\, \psi(\bm p_L, \lambda)\big), \\
   \xi_\ell &= \big(D_{\bm p}\, \DEmap_\lambda(\bm p_\ell)\big)^\top\,
   \xi_{\ell+1}, \qquad \ell = L-1, L-2, \ldots, 0. \notag
\end{align}
We claim that
\begin{equation}
\label{eq:adjoint-sub}
   \nabla_\lambda\, J_L
   = \nabla_\lambda\, \psi(\bm p_L, \lambda)
   + \sum_{\ell=0}^{L-1} \big(D_\lambda\, \DEmap_\lambda(\bm p_\ell)\big)^\top\,
   \xi_{\ell+1}
   + \big(D_\lambda\, \bm p_0(\lambda)\big)^\top\, \xi_0,
\end{equation}
which is \eqref{eq:adjoint-grad} (the last term vanishes when the
initial condition does not depend on $\lambda$, e.g., when $\bm p_0$
is the zero-erasure state $\zero$).

\emph{Verification.}  Define the Lagrangian-style functional
\[
   \Lambda(\lambda)
   = \psi(\bm p_L, \lambda)
   + \sum_{\ell=0}^{L-1} \xi_{\ell+1}^\top\,
   \big(\DEmap_\lambda(\bm p_\ell) - \bm p_{\ell+1}\big),
\]
which equals $\psi(\bm p_L, \lambda)$ on the trajectory satisfying
the recursion.  The sign convention in the constraint term is
chosen so that the parameter Jacobian contribution enters
$\nabla_\lambda \Lambda$ with a $+$ sign (matching
\eqref{eq:adjoint-sub}); the alternative convention
$\xi_{\ell+1}^\top(\bm p_{\ell+1} - \DEmap_\lambda)$ would flip the
sign of every $D_\lambda \DEmap$ term and the corresponding adjoint
formula.  Differentiating with respect to $\lambda$, the sensitivity
terms involving $\mathrm{d} \bm p_\ell / \mathrm{d} \lambda$ group
into
\[
   -\sum_{\ell=1}^{L} \xi_\ell^\top\,
   \frac{\mathrm{d} \bm p_\ell}{\mathrm{d} \lambda}
   + \sum_{\ell=0}^{L-1}
   \xi_{\ell+1}^\top\, D_{\bm p}\, \DEmap_\lambda(\bm p_\ell)\,
   \frac{\mathrm{d} \bm p_\ell}{\mathrm{d} \lambda}
   + \nabla_{\bm p}\, \psi(\bm p_L, \lambda)\,
   \frac{\mathrm{d} \bm p_L}{\mathrm{d} \lambda},
\]
plus the explicit parameter contribution $\sum_{\ell=0}^{L-1}
\xi_{\ell+1}^\top D_\lambda \DEmap_\lambda(\bm p_\ell)$ from the
$\DEmap_\lambda$ term.  Reindexing the first sum by
$\ell \to \ell - 1$ and using the adjoint recursion
\eqref{eq:adjoint-recur} to identify $\xi_\ell = (D_{\bm p}\,
\DEmap_\lambda(\bm p_\ell))^\top\, \xi_{\ell+1}$ for $\ell = 0,
\ldots, L - 1$, all the $\mathrm{d} \bm p_\ell / \mathrm{d} \lambda$
terms cancel except the boundary $\xi_0^\top\, \mathrm{d} \bm p_0 /
\mathrm{d} \lambda$.  What remains is exactly the right-hand side
of \eqref{eq:adjoint-sub} with the correct $+$ sign on the
$D_\lambda \DEmap$ sum.  Since $\Lambda(\lambda) \equiv J_L(\lambda)$
on the trajectory, this is $\nabla_\lambda\, J_L$, which is the
gradient formula \eqref{eq:adjoint-grad} of
\Cref{thm:optimization}~(d).

\subsection{Block-sparse structure of the state Jacobian
\texorpdfstring{$D_p\, \DEmap_\lambda$}{D\_p Phi}}
\label{app:adjoint-Dp}

The state $\bm p_\ell \in [0, 1]^{2|\Sockets|}$ has coordinates
indexed by socket type $\tau$ and value branch $b$.  The recursion
$\bm p_{\ell+1} = \DEmap_\lambda(\bm p_\ell)$ decomposes coordinatewise as
$\pmsg_{\ell+1, \tau}^{(b)}$ given by \eqref{eq:pupdate}, which is a
function of $\{\hmsg_{\ell, \tau'}^{(b)}\}_{\tau' \in \Sockets_{\socketrole{\tau}}}$.
Each $\hmsg_{\ell, (\theta, k)}^{(b)}$ in turn depends on the forcing
probability $\forceprob_{\ell, \theta, k}^{(b)}$, which depends on
$\{\pmsg_{\ell, (\theta, k')}^{(b')}\}_{k' \neq k, b' \in \{0, 1\}}$
through the inbound erasure probabilities $\pbar_{\ell, \theta, k'}^{(x_{k'})}$
of \eqref{eq:pbar}.

The chain rule gives
\begin{equation}
\label{eq:Dp-element}
   \frac{\partial \pmsg_{\ell+1, \tau}^{(b)}}
   {\partial \pmsg_{\ell, \tau'}^{(b')}}
   = \sum_{(\theta, k):\, (\theta, k') \in \Sockets, k' = k}
   \frac{\partial \pmsg_{\ell+1, \tau}^{(b)}}
   {\partial \hmsg_{\ell, (\theta, k)}^{(b)}}\,
   \frac{\partial \hmsg_{\ell, (\theta, k)}^{(b)}}
   {\partial \forceprob_{\ell, \theta, k}^{(b)}}\,
   \frac{\partial \forceprob_{\ell, \theta, k}^{(b)}}
   {\partial \pmsg_{\ell, \tau'}^{(b')}},
\end{equation}
with the summation only over check templates $\theta$ and target
sockets $k$ such that $(\theta, k')$ matches $\tau'$ for some
$k' \neq k$.  The block sparsity arises because:
\begin{itemize}[leftmargin=2em]
\item $\partial \pmsg_{\ell+1, \tau}^{(b)} / \partial \pmsg_{\ell, \tau'}^{(b')}$
is nonzero only if $\socketrole{\tau} = \socketrole{\tau'}$ (variable
update preserves variable role) and $b = b'$ (variable update is
diagonal in the value branch);
\item $\partial \hmsg_{\ell, (\theta, k)}^{(b)} / \partial
\forceprob_{\ell, \theta, k}^{(b)} = -(1 - \epsC[s_\theta])\,
\eta_{s_\theta, r_{\theta, k}}$, scalar;
\item $\partial \forceprob_{\ell, \theta, k}^{(b)} / \partial
\pmsg_{\ell, (\theta, k')}^{(b')}$ is nonzero only if $k' \neq k$
(socket asymmetry of the forcing rule) and depends on the Boolean
primitive $f_\theta$ through the value-conditioned form of
\eqref{eq:forcing-prob}.
\end{itemize}
For \emph{XOR} templates the forcing probability factors as
\eqref{eq:xor-forcing}, so
\[
   \frac{\partial \forceprob_{\ell, \theta, k}^{\mathrm{XOR}}}
   {\partial \pmsg_{\ell, (\theta, k')}}
   = -\eta_{r_{\theta, k'}, s_\theta}\,
   \prod_{k'' \neq k, k'} \eta_{r_{\theta, k''}, s_\theta}\,
   (1 - \pmsg_{\ell, (\theta, k'')}),
\]
which is independent of $b$.  For \emph{AND} templates with $b = 1$,
the partial derivative is zero (positive certificates do not depend
on the inbound erasure rates by \eqref{eq:and-positive}).  For AND
templates with $b = 0$,
\[
   \frac{\partial \forceprob_{\ell, \theta, k}^{(0)}}
   {\partial \pmsg_{\ell, (\theta, k')}^{(1)}}
   = -\Vbias_{r_{\theta, k'}}\,
   \eta_{r_{\theta, k'}, s_\theta}\,
   \prod_{k'' \neq k, k'} \Vbias_{r_{\theta, k''}}\,
   \eta_{r_{\theta, k''}, s_\theta}\,
   (1 - \pmsg_{\ell, (\theta, k'')}^{(1)}),
\]
nonzero only on the value-$1$ inbound branch.  This last asymmetry
is the analytical fingerprint of the verifier asymmetry of
\Cref{rem:and-practical}.

The matrix $D_{\bm p}\, \DEmap_\lambda(\bm p_\ell) \in
\R^{2|\Sockets| \times 2|\Sockets|}$ is therefore block-diagonal in
the value branch under XOR (the $b = 0$ and $b = 1$ blocks are
identical), and block-coupled-but-triangular in the value branch
under AND (the $b = 0$ block reads from the $b = 1$ block but not
vice versa).  Mixed-template ensembles inherit the union of these
sparsity patterns.

\subsection{Block-sparse structure of the parameter Jacobian
\texorpdfstring{$D_\lambda\, \DEmap_\lambda$}{D\_lambda Phi}}
\label{app:adjoint-Dlambda}

The design parameter $\lambda$ collects role proportions
$\{\pi_r^V\}, \{\pi_s^C\}$, template proportions $\{\pi_\theta^C\}$,
degree distributions $\{P_{\deg \mid r}\}$, value priors $\{\Vbias_r\}$,
erasure probabilities $\{\epsV[r]\}, \{\epsC[s]\}$, and channel
fidelities $\{\eta_{r, s}\}$.  Differentiating
\eqref{eq:hupdate}--\eqref{eq:pupdate} with respect to each
parameter gives:

\noindent\emph{Variable-side erasure $\epsV[r]$.}  The variable update
\eqref{eq:pupdate} factors $\epsV[\socketrole{\tau}]$ multiplicatively, so
\[
   \frac{\partial \pmsg_{\ell+1, \tau}^{(b)}}{\partial \epsV[r]}
   = \one\{\socketrole{\tau} = r\} \,
   \E\!\left[\prod_{\tau' \in \Sockets_r}
   \big(\hmsg_{\ell, \tau'}^{(b)}\big)^{D_{\tau'}^{(r), \tau, \mathrm{ex}}}\right].
\]
This is the multiplicative-pre-factor signature recorded in
\Cref{prop:noninterchange}.

\noindent\emph{Verifier-side erasure $\epsC[s]$.}  Through \eqref{eq:hupdate},
\[
   \frac{\partial \hmsg_{\ell, (\theta, k)}^{(b)}}{\partial \epsC[s]}
   = \one\{s_\theta = s\}\, \eta_{s, r_{\theta, k}}\,
   \forceprob_{\ell, \theta, k}^{(b)},
\]
acting symmetrically across both value branches with a $\forceprob^{(b)}$-dependent
prefactor.  The variable update then propagates this through
$\partial \pmsg / \partial \hmsg$ as in \eqref{eq:Dp-element}.

\noindent\emph{Reasoning-channel fidelity $\eta_{r, s}$.}  Through
\eqref{eq:pbar} and \eqref{eq:hupdate}, $\eta$ enters in two distinct
positions: once on the inbound side (modulating $\pbar^{(x_k)}$ and
hence $\forceprob$) and once on the return side (multiplying
$\forceprob$ to form $\hmsg$).  These two entry positions act
asymmetrically across the value branches under non-symmetric
Boolean factors, under AND, the $b = 1$ branch's $\forceprob$ does
not depend on the inbound channel $\eta_{V \to C}$ (cf.\
\eqref{eq:and-positive}), while the $b = 0$ branch's $\forceprob$
depends on it as a $(d-1)$-fold product (cf.\ \eqref{eq:and-negative}).
The return channel acts on both branches identically.

This is the structural reason the channel-side column of the
parameter Jacobian acts asymmetrically across $b$ in the proof of
\Cref{prop:noninterchange}.

\noindent\emph{Other parameters.}  Partial derivatives with respect to
$\Vbias_r, \pi_r^V, \pi_s^C, \pi_\theta^C$, and the degree-law
parameters follow by direct differentiation of
\eqref{eq:terminal-de} and \eqref{eq:pupdate}.  In each case, the
partial is supported on the sockets and value branches whose role
or template index matches the parameter index, a sparsity pattern
inherited directly from the role-typed structure of the recursion.

\subsection{KKT shadow prices, traced to unit erasure-tier increments}
\label{app:adjoint-kkt}

Consider the constrained optimization
\begin{equation}
\label{eq:opt-program}
   \minimize_{\lambda \in \Designspace}\; J_L(\lambda)
   \quad \text{subject to}\quad
   \Cost(\lambda) \le B,
\end{equation}
with a smooth $\Cost$ and a smooth $J_L$ on a smooth stratum of
$\Designspace$ (which is the case wherever the support of the
degree law is fixed; see \Cref{thm:optimization}~(b)).  The
Lagrangian is
\[
   \mathcal{L}(\lambda, \mu)
   = J_L(\lambda) + \mu \cdot \big(\Cost(\lambda) - B\big),
   \qquad \mu \ge 0.
\]
First-order necessary conditions for $\lambda^*$ to be a regular
local minimum are
\begin{equation}
\label{eq:kkt-firstorder}
   \nabla_\lambda\, J_L(\lambda^*) + \mu^*\, \nabla_\lambda\,
   \Cost(\lambda^*) = 0,
   \qquad
   \mu^* \ge 0,
   \qquad
   \mu^* \cdot \big(\Cost(\lambda^*) - B\big) = 0.
\end{equation}
By \eqref{eq:adjoint-sub}, $\nabla_\lambda\, J_L$ is computable in
$O(L)$ adjoint steps once $\bm p_L$ is forward-propagated.

\emph{Interpretation of the budget multiplier $\mu^*$.}  When the
budget constraint is active ($\Cost(\lambda^*) = B$), the standard
sensitivity analysis gives
\[
   \frac{\mathrm{d}}{\mathrm{d} B}\, J_L(\lambda^*(B))
   = -\mu^*,
\]
i.e., $\mu^*$ is the marginal reliability gain per unit additional
budget, the shadow price of agent-system budget.

\emph{Interpretation of the tier-parameter sensitivities.}  Each of
the three erasure-tier parameters $\epsV[r], \epsC[s], 1 - \eta_{r, s}$
appears as one (or several) coordinates of $\lambda$.  The corresponding
component of $\nabla_\lambda\, J_L(\lambda^*)$ is the local
sensitivity of the residual objective to that parameter (\emph{not}
a Lagrange multiplier; multipliers attach to constraints), computed
via the adjoint formula \eqref{eq:adjoint-sub}:
\begin{align}
\label{eq:adjoint-tier}
   \frac{\partial J_L}{\partial \epsV[r]}\bigg|_{\lambda^*}
   &= \sum_{\ell=0}^{L-1}
   \big(D_{\epsV[r]}\, \DEmap_\lambda(\bm p_\ell^*)\big)^\top\,
   \xi_{\ell+1}^*
   \;+\; \big(D_{\epsV[r]}\, \bm p_0(\lambda^*)\big)^\top\, \xi_0^*, \\
   \frac{\partial J_L}{\partial \epsC[s]}\bigg|_{\lambda^*}
   &= \sum_{\ell=0}^{L-1}
   \big(D_{\epsC[s]}\, \DEmap_\lambda(\bm p_\ell^*)\big)^\top\,
   \xi_{\ell+1}^*, \notag\\
   \frac{\partial J_L}{\partial \eta_{r, s}}\bigg|_{\lambda^*}
   &= \sum_{\ell=0}^{L-1}
   \big(D_{\eta_{r, s}}\, \DEmap_\lambda(\bm p_\ell^*)\big)^\top\,
   \xi_{\ell+1}^*. \notag
\end{align}
The variable-side formula carries the boundary term
$(D_{\epsV[r]}\, \bm p_0(\lambda^*))^\top \xi_0^*$ because the
initial state $\bm p_0$ is initialized from the variable-side
erasure rates (\eqref{eq:pinit}); the verifier-side and
reasoning-channel formulas have no boundary contribution because
$\epsC$ and $\eta$ do not enter $\bm p_0$.  By
\Cref{prop:noninterchange} and \Cref{rem:rank-scope}, the
parameter Jacobian $D_{\mathrm{par}}[\DEmap_\lambda(\zero)]$ has
rank at least two on a generic open subregion of parameter space,
so on that subregion the three components above are not all
collinear: no single-scalar reduction summarizes them.  In a
single-role single-template slice, $\epsC$ and $\eta_{C \to V}$
are locally confounded through the product $(1-\epsC)\eta_{C\to V}$;
distinguishing all three operational tiers in a rank-$\ge 3$ sense
requires nondegenerate role structure and is established in
\Cref{prop:noninterchange-roles}.
For shadow-price interpretations of these sensitivities, an
explicit cost model with investment variables $u_r^V, u_s^C,
u_{r,s}^\eta$ and reliability response curves is needed, as
detailed in \Cref{rem:investment-variables}; without that model,
$\nabla_\lambda J_L$ supplies sensitivities, not multipliers.

In practice, the architect reads off the three tier sensitivities
at the optimum $\lambda^*$ and identifies the locally most
influential intervention as the tier whose sensitivity is largest in
absolute value relative to its per-unit cost; under an explicit
investment model, the corresponding stationarity equation
identifies the active investment.  When more than one tier
sensitivity is large, the optimum reallocates across all such tiers
simultaneously, a phenomenon that has no
single-effective-noise-parameter analog in MET-LDPC or in the
noisy-message-passing-decoder line, by \Cref{prop:noninterchange}.

\section*{Acknowledgments}

This work was supported in part by the U.S.\ National Science
Foundation under Grants CNS-2528914 and CNS-2150832.

\bibliographystyle{IEEEtran}
\bibliography{paper}

\end{document}